\setlist{itemsep=0mm}
\crefname{case}{case}{cases}
\definecolor{refcol}{RGB}{0,0,205}
\newcommand{\inlineitem}[1][]{%
\ifnum\enit@type=\tw@
    {\descriptionlabel{#1}}
  \hspace{\labelsep}%
\else
  \ifnum\enit@type=\z@
       \refstepcounter{\@listctr}\fi
    \quad\@itemlabel\hspace{\labelsep}%
\fi}
\newcommand{\ksum}{k\textsc{-sum}}
\newcommand{\blockksum}{\textsc{Block} \; k\textsc{-sum}}
\newcommand{\bkkfull}{\blockksum_N \circ \ksum_N}
\newcommand{\bkk}{\textsc{Bkk}}
\newcommand{\orfunc}{\textsc{Or}}
\newcommand{\andfunc}{\textsc{And}}
\newcommand{\andor}{\textsc{And} \circ \textsc{Or}}
\theoremstyle{plain}
\newtheorem{theorem}{Theorem}
\newtheorem*{theorem*}{Theorem}
\newtheorem{proposition}[theorem]{Proposition}
\newtheorem{lemma}[theorem]{Lemma}
\newtheorem{corollary}[theorem]{Corollary}
\newtheorem{conjecture}[theorem]{Conjecture}
\newtheorem{definition}[theorem]{Definition}
\newtheorem{prob}[theorem]{Problem}
\theoremstyle{remark}
\newtheorem{observation}[theorem]{Observation}
\crefname{enumi}{part}{parts}
\Crefname{enumi}{Part}{Parts}
\crefname{algocf}{algorithm}{algorithms}
\Crefname{algocf}{Algorithm}{Algorithms}
\crefname{algocfline}{algorithm}{algorithms} 
\Crefname{algocfline}{Algorithm}{Algorithms}
\newcommand{\defeq}{\coloneqq}
\newcommand{\bs}{\textsf{bs}}
\newcommand{\cert}{\textsf{C}}
\newcommand{\cheatsheet}{\textsf{cs}}
\newcommand{\canonicalcheatsheet}{\textsf{ccs}}
\newcommand{\djana}{\textsf{djana}}
\newcommand{\forrelationana}{\textsf{forana}}
\newcommand{\Det}{\textsf{D}}
\newcommand{\Rand}{\textsf{R}}
\newcommand{\Quant}{\textsf{Q}}
\newcommand{\NN}{\textsf{NN}}
\newcommand{\Adv}{\textsf{Adv}}
\newcommand{\concat}{\frown}
\newcommand{\oracle}{\mathcal{O}}
\newcommand{\dist}{\mathcal{P}}
\newcommand{\hybrid}{\text{hybrid}}
\newcommand{\correlated}{\textsc{cor}}
\newcommand{\domain}{\mathcal{D}}
\newcommand{\Add}{\textsc{add}}
\newcommand{\Dt}{\textsc{dt}}
\newcommand{\Bc}{\textsc{bc}}
\newcommand{\tg}{\textsc{tg}}
\newcommand{\In}{\textsc{in}}
\newcommand{\Ip}{\textsc{ip}}
\newcommand\floor[1]{\left\lfloor#1\right\rfloor}
\newcommand\ceil[1]{\left\lceil#1\right\rceil}
\newcommand{\problemtitle}[1]{\gdef\@problemtitle{#1}}% Store problem title
\newcommand{\probleminput}[1]{\gdef\@probleminput{#1}}% Store problem input
\newcommand{\problempromise}[1]{\gdef\@problempromise{#1}}% Store problem promise
\newcommand{\problemquestion}[1]{\gdef\@problemquestion{#1}}% Store problem question
\newcommand{\problempremise}[1]{\gdef\@problempremise{#1}}% Store problem premise
  \par\addvspace{.5\baselineskip}
  \par\addvspace{.5\baselineskip}
  \par\addvspace{.5\baselineskip}
  \par\addvspace{.5\baselineskip}
  \par\addvspace{.5\baselineskip}
  \par\addvspace{.5\baselineskip}
\crefname{enumi}{problem}{problems}
\crefname{enumi}{Problem}{Problems}
\crefname{enumi}{part}{parts}
\Crefname{enumi}{Part}{Parts}
\title{\bfseries\Large

Quantum advantage and lower bounds in parallel query complexity}
\author{Joseph Carolan\footnote{\texttt{jcarolan@umd.edu}}, Amin Shiraz Gilani\footnote{\texttt{asgilani@umd.edu}}, Mahathi Vempati\footnote{\texttt{mahathi@umd.edu}} \smallskip \\
\small Department of Computer Science, University of Maryland \\
\small Institute for Advanced Computer Studies, University of Maryland \\
\small Joint Center for Quantum Information and Computer Science, University of Maryland}
\date{\today}
\newcommand\thefontsize{The current font size is: \f@size pt}
\begin{document}
\addtocontents{toc}{\protect\setcounter{tocdepth}{2}}

\date{}

\fontsize{11}{13.2}
\selectfont
\sloppy

\maketitle

\begin{abstract}

It is well known that quantum, randomized and deterministic (sequential) query complexities are polynomially related for total boolean functions. We find that significantly larger separations between the parallel generalizations of these measures are possible. In particular, 
\vspace{0.6em}

\begin{enumerate}[label=(\arabic*)]
    \item \label{item:result1} We employ the cheatsheet framework to obtain an unbounded parallel quantum query advantage over its randomized analogue for a total function, falsifying a conjecture of [\href{https://arxiv.org/abs/1309.6116}{Jeffery et al. 2017}].
    \item \label{item:result2} We strengthen \ref{item:result1} by constructing a total function which exhibits an unbounded parallel quantum query advantage despite having no sequential advantage, suggesting that genuine quantum advantage could occur entirely due to parallelism. 

    \item \label{item:result3} We construct a total function that exhibits a polynomial separation between 2-round quantum and randomized query complexities, contrasting a result of [\href{https://arxiv.org/abs/1001.0018}{Montanaro. 2010}] that there is at most a constant separation for 1-round (nonadaptive) algorithms.
    \item  We develop a new technique for deriving parallel quantum lower bounds from sequential upper bounds. We employ this technique to give lower bounds for Boolean symmetric functions and read-once formulas, ruling out large parallel query advantages for them.
\end{enumerate} 
\vspace{0.6em}

We also provide separations between randomized and deterministic parallel query complexities analogous to items \ref{item:result1}-\ref{item:result3}.

\end{abstract}
\newpage
\tableofcontents

\listoftodos

\newpage

%%%%%%%%%%%%%%%%%%%%%%%%%%%%%%%%%%%%%%%%%%%%%%%%%%
\section{Introduction} \label{section:introduction}

Quantum query complexity is a widely studied model for understanding the capabilities and limitations of quantum computers. Many of the important quantum algorithms and quantum-classical separation results are expressed in this model. For instance, the quantum period finding algorithm, a major ingredient in Shor's factoring algorithm \cite{Shor94}, was first developed in the query model. There are many other examples of query problems for which quantum algorithms provably achieve exponential query advantage over their classical counterparts \cite{Simon97, Childs03}.

However, these problems have partial domains, meaning they are not defined on most inputs. For functions whose domain includes all inputs (total functions), it is known that the quantum, randomized and deterministic query complexities are all polynomially related \cite{Beals98,BetterbealsAaronson21}. Grover's seminal algorithm \cite{Grover96} for unstructured search achieves a quadratic speed-up over classical algorithms, and Ambainis et al. \cite{Ambainis17} and Aaronson et al. \cite{Aaronson16} construct functions that separate quantum and deterministic query complexities by a $4$th power, and quantum and randomized query complexities by a $3$rd power respectively.\footnote{\cite{Aaronson16} originally showed a $2.5$th power separation, which was boosted to a $3$rd power via tight lower bounds on $k$-fold Forrelation shown in \cite{Bansal21,Sherstov21}.} These separations cannot be significantly improved in the sequential query model \cite{Beals98,BetterbealsAaronson21}. 

A natural question is whether these algorithmic limitations are robust against generalizations of sequential query complexity. 
In this work, we consider a parallel model where an algorithm is allowed to make many queries at each step, as introduced in \cite{Jeffery17}. In particular, the $p$-parallel quantum query complexity, denoted by $\Quant^{p \parallel}$, of a function is the minimum number of steps needed to compute it with bounded error, if at each step the algorithm is allowed to make $p$ non-adaptive quantum queries.\footnote{We define the deterministic parallel query complexity $\Det^{p \parallel}$ and randomized bounded error parallel query complexity $\Rand^{p \parallel}$ similarly.} We will also refer to this quantity by \textit{$p$-query depth}, \textit{$p$-query rounds} and \textit{$p$-query layers} interchangeably. Notice that, for any function $f$ and any parallelism $p$, $\Quant(f)/p \leq \Quant^{p \parallel}(f) \leq \Quant(f)$ since any $q$-round $p$-parallel algorithm can be simulated by a $pq$-query sequential algorithm, and any $q$-query sequential algorithm can be simulated by a $q$-round $p$-parallel algorithm.\footnote{Similar bounds hold for $\Rand^{p \parallel}$ and $\Det^{p \parallel}$.}

This model is motivated by the fact that fault tolerant quantum computers must be inherently parallel: a large scale quantum computer which cannot do many gates at once cannot correct local errors faster than they occur. Therefore, quantum computers will necessarily be able to perform operations in parallel, meaning one should design algorithms to take maximum advantage of this capability. Further, near term quantum computations are limited to low depth due to short decoherence times. These factors make the parallel query model a more natural abstraction of real devices than a purely sequential model.

Prior work addressing parallel query complexity has indicated that quantum advantage tends to disappear as parallelism increases. %the story given a somewhat pessimistic \todo{This is a biased word for unifying previous works} picture of large separations for total functions. 
Zalka \cite{Zalka99} showed the parallel complexity for searching an unstructured list of size $N$ is $\Theta\left(\sqrt{N/p}\right)$. This was generalized by Grover and Radhakrishnan \cite{Grover04}, who proved that $\Tilde{\Theta}\left(\sqrt{Nt/ (p \cdot \min \{t, p\})}\right)$ $p$-parallel queries are necessary and sufficient to find $t$ marked elements in a list of size $N$. Along with providing a systematic study of parallel quantum query complexity, Jeffery, Magniez and de Wolf \cite{Jeffery17} showed that the $p$-parallel quantum query complexity of $k$-\textsc{sum} is $\tilde{\Theta}\left((N/p)^{k/k+1}\right)$. 

Furthermore, \cite{Jeffery17} introduce the following conjecture:

\begin{conjecture} [\cite{Jeffery17}]
    For any total function $f:\{0,1\}^N \rightarrow \{0,1\}$ and any $p$, there is at most a polynomial quantum advantage for $p$-parallel query algorithms. That is, 
    $$\emph{\Det}^{p\parallel} (f) = \poly(\emph{\Quant}^{p\parallel} (f)).$$\label{conj:JefferyTotalFunctions}
    \vspace{-1pc}
\end{conjecture}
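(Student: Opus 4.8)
This conjecture turns out to be \emph{false}, and the plan is to refute it within the cheat-sheet framework of \cite{Aaronson16}. The guiding principle is that quantum parallelism is far more powerful than classical parallelism for algorithms assembled from many \emph{mutually non-adaptive, shallow} quantum subroutines. The cleanest example is a period-finding/Simon-type problem \cite{Simon97}: its textbook quantum algorithm consists of $O(n)$ independent single-query Fourier-sampling subroutines followed by query-free linear algebra, so with $p\ge n$ it runs in $O(1)$ rounds, whereas its randomized query complexity stays $2^{\Omega(n)}$ even when $\poly(n)$ queries may be made per round. Such a problem is only partial, so to obtain a \emph{total} function I would wrap it in a cheat sheet. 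The goal is to produce, for a suitable parallelism $p = p(N)$, a total $f:\{0,1\}^N\to\{0,1\}$ with $\Quant^{p\parallel}(f) = \tilde O(1)$ (polylogarithmic in $N$) but $\Rand^{p\parallel}(f) = N^{\Omega(1)}$; since $\Det^{p\parallel}(f)\ge\Rand^{p\parallel}(f)$ and $(\text{polylog }N)^{c} = o(N^{\Omega(1)})$ for every constant $c$, this contradicts the claimed polynomial relation.

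Concretely: (i) take $g$ to be the decision version of Simon's problem --- the input encodes a function $h:\{0,1\}^n\to\{0,1\}^n$ promised to be two-to-one with hidden period $s\neq 0$, and the output is a designated bit of $s$; then $\Rand(g) = \tilde\Theta(2^{n/2})$, the algorithm above gives $\Quant^{p\parallel}(g) = O(1)$ for $p = \poly(n)$, and, crucially, \emph{both} output values of $g$ are certified by a single revealed collision $(y,\,y\oplus s)$, so $g$ has $O(1)$ certificate complexity. (ii) Let $f = g_{\cheatsheet}$: it consists of $C = \Theta(\log N)$ copies of $g$ together with $2^C$ cells, where cell $\ell^\ast$ must contain a certificate for the $g$-value of each copy and $\ell^\ast\in\{0,1\}^C$ is the concatenation of those $C$ values; choosing $C$ and the cell size to be $\text{polylog}(N)$ keeps the total input length $N$ polynomial in that of $g$. (iii) Parallel quantum upper bound: the cheat-sheet evaluator runs copies of $g$'s algorithm on the $C$ inputs --- a total of $\text{polylog}(N)$ mutually independent single-query subroutines, hence $O(1)$ rounds as soon as $p\ge\text{polylog}(N)$ --- then reads cell $\ell^\ast$ and checks the $C$ revealed collisions, a further $\text{polylog}(N)$ queries in $O(1)$ rounds; so $\Quant^{p\parallel}(f) = \tilde O(1)$. (iv) Parallel randomized lower bound: the sequential cheat-sheet lower bound of \cite{Aaronson16} yields $\Rand(f) = \tilde\Omega(\Rand(g)) = \tilde\Omega(\sqrt N)$, and since a $q$-round $p$-parallel algorithm is in particular a $pq$-query sequential one, $\Rand^{p\parallel}(f)\ge\Rand(f)/p = \tilde\Omega(\sqrt N / p)$. (v) Taking $p$ anywhere in the window $\text{polylog}(N)\le p\le N^{1/4}$, say $p = N^{1/4}$, gives $\Quant^{p\parallel}(f) = \tilde O(1)$ while $\Det^{p\parallel}(f)\ge\Rand^{p\parallel}(f) = \tilde\Omega(N^{1/4})$, which is superpolynomial in $\Quant^{p\parallel}(f)$, refuting the conjecture.

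The routine parts are checking that each phase of the cheat-sheet quantum algorithm --- address computation, cell readout, certificate verification, and any final amplification --- is either $O(1)$ rounds or a fixed number of independent subroutines, so that the whole procedure is $\tilde O(1)$ rounds at $\poly(N)$ parallelism, together with confirming that the classical cheat-sheet bound really delivers $\tilde\Omega(\Rand(g))$ for this $g$. The genuine obstacle is step (i). The tempting candidate --- Forrelation, with $\Quant = O(1)$ and $\Rand = \tilde\Omega(\sqrt N)$ --- does \emph{not} work: Forrelation has no small certificates, so the certificate-verification phase of its cheat sheet would cost $\tilde\Omega(N)$ queries and wipe out the quantum advantage. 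One therefore needs a partial function that simultaneously (a) has a large quantum speedup, (b) parallelizes into shallow independent subroutines, and (c) has $O(1)$ certificate complexity for both output values; requirement (c), which Forrelation fails, is exactly what pushes the construction toward a Simon/period-finding problem, and pinning down a clean such $g$ is the heart of the argument.
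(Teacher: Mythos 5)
Your high-level plan is the paper's plan: the conjecture is false, and the refutation goes through the cheat-sheet framework, with the crux being that the cheat-sheet certificate must be readable and verifiable within the chosen parallelism while the sequential randomized lower bound must exceed that parallelism. You also correctly diagnose why bare \textsc{Forrelation} fails. However, your proposed resolution in step (i) has a genuine gap. In the cheat-sheet framework (\Cref{def:cheatsheet}) the cell $y^{(\ell)}$ must certify not only the value $f(x^{(i)})$ but also that each $x^{(i)}$ lies in the \emph{domain} of the partial function --- this is what makes the resulting function total. For Simon's problem, a single revealed collision $(y, y\oplus s)$ certifies the period only \emph{conditional on} the two-to-one promise; certifying the promise itself requires essentially the entire truth table. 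So Simon's decision problem does not have $O(1)$ certificate complexity in the sense the framework needs: either your certificate is the whole input of size $m$, in which case the required parallelism $p=\Omega(m)$ swamps the sequential lower bound $\tilde\Omega(\sqrt m)$ and the trivial bound $\Rand^{p\parallel}\ge \Rand/p$ gives nothing (the exact failure mode you identified for \textsc{Forrelation}), or you drop domain certification, in which case the function is ill-defined on non-promise inputs (an adversary can plant a single fake collision in a far-from-two-to-one function) and none of the cheat-sheet lower bounds you invoke apply as stated.

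The missing idea, which is the paper's actual mechanism, is to keep \textsc{Forrelation} but \emph{compose it with a total function of low certificate complexity} before applying the cheat sheet: $f' = \textsc{For}_N \circ \textsf{And}_N\circ\textsf{Or}_N$ on $N^3$ bits. A certificate for $f'$ consists of an $\andor$-certificate of size $\tilde O(N)$ for each of the $N$ inputs to \textsc{Forrelation}, i.e.\ $\tilde O(N^2)$ bits total; since this reveals the entire $N$-bit input to \textsc{Forrelation}, it certifies domain membership and value simultaneously, with no need for the partial function's promise to be cheaply certifiable. Simultaneously, composition multiplies the randomized complexity: $\Rand(f') = \tilde\Omega(\sqrt N\cdot N^2) = \tilde\Omega(N^{5/2})$, so at parallelism $p=\tilde\Theta(N^2)$ (enough to read one whole $\andor$ block per coherent query, read the cell, and verify it, giving $\Quant^{p\parallel}\le 3$) the same trivial division you use in step (iv) still leaves $\Rand^{p\parallel}=\tilde\Omega(N^{1/2})$. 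It is this decoupling --- certificate size $\tilde O(N^2)$ versus randomized complexity $\tilde\Omega(N^{5/2})$, achieved by the inner total function rather than by the partial function's own certificates --- that your construction lacks and that makes the paper's \Cref{thm:cheatsheetQpvsRpSeparation} go through.
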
 
In fact, \cite{Jeffery17} proved this conjecture for all $p$ polynomially smaller than the block sensitivity ($\bs$, defined in \Cref{section:preliminaries}) of $f$, generalizing a result of Beals et al. \cite{Beals98}.

The notion of $p$-parallel query complexity captures the minimum ``depth'' needed to compute some function for a given ``width'' (see \Cref{fig:illustrationParallelQuery}). An alternative complexity measure relevant to parallelism is the minimum ``width'' needed to compute some function for a fixed ``depth'', which we call the $k$-adaptive query complexity, and denote by $\Quant^{k \perp}(f)$ (respectively $\Rand^{k\perp}(f)$, $\Det^{k\perp}(f)$). Precisely, we define $k$-adaptive query complexity as the minimum $p$ such that there is a $p$-parallel quantum (respectively randomized, deterministic) algorithm which makes $k$ many $p$-parallel queries and computes $f$. This complexity measure is not widely studied for total functions. However, a result of Montanaro \cite{montanaro2010nonadaptive} fully characterizes the relevant $1$-adaptive (or non-adaptive) query complexities, up to a factor of $2$: for any total $f:\{0,1\}^{N} \rightarrow \{0,1\}$, $\Quant^{1\perp}(f)\geq \Det^{1\perp}(f))/2$. In the context of partial functions, a result by Girish et al. \cite{girish2023power} characterized the maximal separation between $\Quant^{k\perp}$ and $\Quant^{k+1\perp}$.

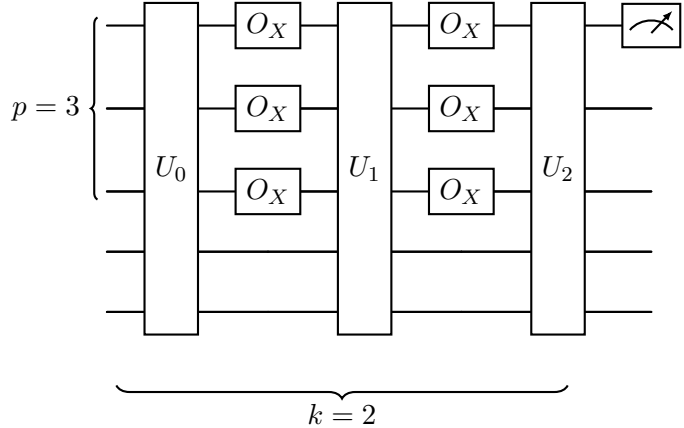
\begin{figure}
\begin{center}
        \begin{quantikz}
            \lstick[wires=3]{$p=3$} \qw & \gate[wires=5]{U_0} & \gate{O_X} & \gate[wires=5]{U_1} & \gate{O_X} & \gate[wires=5]{U_2} & \meter{} \\
            \qw & \qw & \gate{O_X} & \qw & \gate{O_X} & \qw &\qw \\
            \qw & \qw & \gate{O_X} & \qw & \gate{O_X} & \qw &\qw \\
            \qw & \qw & \qw & \qw & \qw & \qw & \qw \\
            \qw & \qw & \qw & \qw & \qw & \qw & \qw \\
        \end{quantikz}
        
        \begin{tikzpicture}[thick]
            \draw [decorate,decoration={brace,amplitude=5pt,mirror,raise=0ex}]
  (2,0) -- (8,0) node[midway,yshift=-1em]{$k=2$};
        \end{tikzpicture}
\end{center}
    \caption{A parallel quantum algorithm with $k=2$ adaptive steps and $p=3$ queries per step; the last two wires are for workspace. In general, $p$ denotes the number of non-adaptive queries per step, and $k$ denotes the number of steps. The measure $Q^{p\parallel}$ denotes the minimal $k$ needed to compute a function for a fixed $p$, while the measure $Q^{k\perp}$ denotes the minimal $p$ needed to compute a function for a fixed $k$. }
    \label{fig:illustrationParallelQuery}
\end{figure}

\subsection{Main results}
\label{subsec:MainResults}

\subsubsection{Unbounded parallel separations for total functions}
\label{sec:IntroSep}
Our first result involves adapting the cheatsheet framework of Aaronson et al. \cite{Aaronson16} to the parallel query setting, and using it to exhibit total function separations between parallel query complexity measures. In particular, we show that the canonical cheatsheet function (see \Cref{def:Canonicalcheatsheet}) exhibits unbounded separation between quantum and randomized parallel query complexities, falsifying \Cref{conj:JefferyTotalFunctions}. Further, we observe that this function demonstrates that $3$-adaptive quantum algorithms can be more efficient in terms of total queries than even fully sequential randomized algorithms. More formally, we prove the following theorem.

\begin{theorem}[Restatement of \Cref{thm:cheatsheetQpvsRpSeparation}] \label{thm:informalTotalAdaptiveSep}
    There exists a (total) Boolean function $h:\{0,1\}^N \rightarrow \{0,1\}$ such that for some $p=\Tilde O(\bs(h))$ and some constants $\epsilon, \delta > 0$, we have 
    \begin{enumerate} [label=(\roman*)]
        \item $\emph{\Rand}^{p \parallel}(h) = \tilde{\Omega}(N^{\epsilon})$, 
        \inlineitem $\emph{\Quant}^{p \parallel}(h) \leq 3$,  
        \inlineitem $\emph{\Quant}^{3 \perp}(h) = \tilde O\left(\emph{\Rand}(h)^{1-\delta}\right)$.
    \end{enumerate}
\end{theorem}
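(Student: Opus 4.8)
The plan is to take $h$ to be the canonical cheat-sheet function of \Cref{def:Canonicalcheatsheet}: the result of applying a parallelized version of the cheat-sheet construction of Aaronson et al.\ \cite{Aaronson16} to an inner function $g$ that admits a quantum algorithm using only $N^{o(1)}$ queries in a constant number of adaptive rounds, yet has $\Rand(g)=N^{\Omega(1)}$ and certificate complexity $\cert(g)$ a polynomial factor below $\Rand(g)$. The prototypes for $g$ are promise problems whose quantum algorithms are (nearly) non-adaptive — Simon's problem or $k$-fold Forrelation — composed with a small gadget so that the cheat-sheet cells can certify the value of $g$ cheaply; the wrapper makes $h$ total. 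I would first fix the parameters: the input length of $g$, the number $t=\Theta(\log M)$ of independent copies $Y_1,\dots,Y_t$ of $g$ in the input, and the number $M$ of cells, with $M$ at least the number of address tuples $\ell=(g(Y_1),\dots,g(Y_t))$, so that cell $A_\ell$ is well defined and holds $\tilde{O}(\cert(g)\cdot t)$ bits — certificates for $g(Y_i)=\ell_i$ for every $i$ — and $h=1$ iff the cell addressed by $\ell$ contains valid such certificates.

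Second, I would prove the upper bounds (ii) and (iii). Set $p=\tilde{\Theta}(\cert(g)\cdot t)$; by the cheat-sheet identity $\bs(h)=\tilde{\Theta}(\cert(g))$ (a minimal $1$-certificate of $h$ is a bundle of $t$ certificates for $g$ plus one whole cell) this is $p=\tilde{O}(\bs(h))$. The $3$-round $p$-parallel quantum algorithm is: round $1$, run the constant-round, $N^{o(1)}$-width quantum algorithm for $g$ on all $t$ copies in parallel (it fits, since $t\cdot N^{o(1)}\le p$) with success probability amplified to $1-o(1/t)$ per copy, then measure to obtain the address $\ell$; round $2$, read the $\tilde{O}(\cert(g)\cdot t)\le p$ bits of cell $A_\ell$; round $3$, verify the $t$ certificates of $A_\ell$ in parallel. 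A union bound over copies gives bounded error, so $\Quant^{p\parallel}(h)\le 3$, which is (ii). The same algorithm witnesses $\Quant^{3\perp}(h)\le p=\tilde{O}(\cert(g))$, so (iii) follows from the sequential cheat-sheet identity $\Rand(h)=\tilde{\Theta}(\Rand(g))$ and the choice $\cert(g)=\tilde{O}(\Rand(g)^{1-\delta})$ for a constant $\delta>0$, which the gadget makes possible down to the unavoidable $\cert(g)=\tilde{\Omega}(\sqrt{\Rand(g)})$.

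Third — the step I expect to be the main obstacle — I would prove the randomized lower bound (i): $\Rand^{p\parallel}(h)=\tilde{\Omega}(N^{\epsilon})$. For $p=\tilde{\Theta}(\bs(h))$ this reduces via the trivial inequality $\Rand^{p\parallel}(h)\ge\Rand(h)/p$ to the sequential statement $\Rand(h)=\tilde{\Omega}(\Rand(g))=N^{\Omega(1)}$, so $\Rand^{p\parallel}(h)\ge\tilde{\Omega}(\Rand(g)/\cert(g))=\tilde{\Omega}(\Rand(g)^{\delta})=\tilde{\Omega}(N^{\epsilon})$; the substance is therefore the cheat-sheet lower bound itself, which I would establish via the hybrid argument of \cite{Aaronson16}. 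On a hard distribution on which $h=1$, one shows that a randomized algorithm which has not effectively solved a $1-\Omega(1)$ fraction of the $t$ copies has negligible information about the true address $\ell$, while the cells $A_{\ell'}$ with $\ell'\neq\ell$ are filled (e.g.\ with fresh randomness, or with internally consistent but incorrect certificates) so as to be information-theoretically useless, hence querying cells cannot substitute for solving copies — and solving one copy of $g$ requires $\tilde{\Omega}(\Rand(g))$ queries. The delicate points are making ``the other cells are useless'' rigorous while keeping $h$ total, so that no algorithm can combine partial information across many cells or otherwise shortcut the address computation; and, should one want the lower bound for $p$ larger than $\bs(h)$ (a strengthening beyond what is claimed), re-running the whole hybrid argument directly in the parallel model and tracking the extra factor of $p$ throughout — for the stated theorem the trivial bound already suffices.
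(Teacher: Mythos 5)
Your proposal follows essentially the same route as the paper: take the canonical cheat-sheet function of \Cref{def:Canonicalcheatsheet} built from \textsc{Forrelation} composed with a cheaply-certifiable gadget (the paper uses $\andor$ on $N^2$ bits, giving \Cref{prob:forcancs}); give the three-round $p$-parallel quantum algorithm (solve all $c$ copies in parallel, read the addressed cheatsheet, verify) for $p=\tilde\Theta(N^2)$; and obtain the parallel randomized lower bound from the trivial inequality $\Rand^{p\parallel}(h)\ge \Rand(h)/p$ combined with the sequential cheat-sheet lower bound of Aaronson et al.\ (\Cref{thm:cheatsheetcomplexities}) and the composition theorem, exactly as in \Cref{thm:cheatsheetQpvsRpSeparation}. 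The paper does not re-derive the hybrid argument you sketch in your third step; it cites it, and your worry about "other cells being useless" is already handled there.

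The one clause your justification does not actually support is $p=\tilde O(\bs(h))$. You assert $\bs(h)=\tilde\Theta(\cert(g))$ and justify it by describing a small $1$-certificate of $h$, but a certificate upper bound only bounds $\bs(h)$ from \emph{above}; the direction you need is $\bs(h)=\tilde\Omega(p)$, i.e.\ an input of $h$ with $\tilde\Omega(N^2)$ disjoint sensitive blocks. This is not automatic: it is exactly why the inner gadget must have large block sensitivity on both $0$- and $1$-inputs, and the paper proves it separately in \Cref{thm:blockSensitivity} and \Cref{cor:bsForCanonicalCheatsheet} by explicitly constructing such an input (each of the $\tilde\Omega(N)$ gadget instances per copy contributes $\tilde\Omega(N)$ sensitive blocks, and flipping any one of them falsifies the cheatsheet). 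You should add this construction; otherwise the statement "$p=\tilde O(\bs(h))$" is unproven in your write-up.
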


It is worth noting that the large separation in parallel query complexity occurs when the parallelism almost exactly equals the block sensitivity \(\bs(f)\), contrasting a result of Jeffery et al (\cite{Jeffery17}, Theorem 12) which eliminates this possibility for $p=O(\bs(f)^{1-\epsilon})$ for any constant $\epsilon>0$. Furthermore, the separation between $\Quant^{3\perp}(f)$ and $\Rand(f)$ contrasts Montanaro's result (\cite{montanaro2010nonadaptive}, Theorem 1) that for total functions, $1$-adaptive quantum algorithms cannot be more than twice as efficient as the trivial $1$-adaptive deterministic algorithm.

We also show a total function that achieves a similar separation between randomized and deterministic parallel query complexities (see \Cref{thm:cheatsheetRpvsDpSeparation}). We do this by modifying the canonical cheatsheet function to instead embed a partial function with a large randomized-deterministic separation.

It is worth noting that this result is also sufficient to falsify \Cref{conj:JefferyTotalFunctions}.

These results provide a recipe for achieving exponential parallel query advantages (from quantumness or randomness) for total functions by amplifying a polynomial sequential advantage for such functions. 
However, it is unclear whether an advantage can originate entirely from a quantum or randomized algorithm's ability to utilize parallelism more effectively than a randomized or deterministic algorithms. In other words, is there a function which has no sequential quantum query advantage, yet a large parallel quantum query advantage? We answer this question in the affirmative, even for total functions.

\subsubsection{Unbounded genuine parallel separations}
\label{sec:IntroGenuineSep}
Our starting point is a partial function with the desired property. That is, we construct a partial function with (almost) equal randomized and quantum sequential query complexities, that does not admit any parallel advantage for randomized algorithms, yet admits near-maximal parallel advantage for quantum algorithms. In particular, we show the following theorem.  

\begin{theorem}[Restatement of \Cref{thm:non-seq_RvsQ_parallel_separation}] \label{thm:partialexpsep}
    There exists a (partial) Boolean function $f:\mathcal{D} \rightarrow \{0,1\}$ with $\mathcal{D} \subset \{0,1\}^N$ such that for some $p = O(\emph{\Quant}(f))$ and some constant $\epsilon > 0$, we have
    \begin{enumerate} [label=(\roman*)]
        \item $\emph{\Rand}(f) = \tilde\Theta(N^\epsilon)$, 
        \inlineitem $\emph{\Quant}(f) = \tilde\Omega(\emph{\Rand}(f))$,
        \inlineitem $\emph{\Rand}^{p \parallel}(f) = \Tilde \Omega(\emph{\Rand}(f))$, 
        \inlineitem $\emph{\Quant}^{p \parallel}(f) = 1$.
    \end{enumerate}
\end{theorem}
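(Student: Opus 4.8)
The plan is to take $f$ to be the composed function $\bkkfull$ for a suitable choice of the inner arity $k=\Theta(\log N)$, where $\ksum_N$ is $k$-sum on $N$ numbers and $\blockksum_N$ (Block $k$-sum) is a layered gadget defined precisely in the body. Two features drive the construction. First, the logarithmic choice of $k$ suppresses the \emph{sequential} quantum advantage of the inner gadget: by the tight bound of Jeffery et al., $\Quant(\ksum_N)=\tilde\Theta\!\left(N^{k/(k+1)}\right)$, and $N^{1/(k+1)}=\Theta(1)$ when $k=\Theta(\log N)$, so $\Quant(\ksum_N)=\tilde\Theta(N)=\tilde\Theta(\Rand(\ksum_N))$. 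Second, Block $k$-sum is engineered so that its $\ksum_N$ instances are arranged in roughly $\Rand(\blockksum_N)$ ``layers'' so that (a) $\Quant(\blockksum_N)=\tilde\Theta(\Rand(\blockksum_N))$, (b) a quantum algorithm can evaluate it in a single parallel round using $O(\Quant(\blockksum_N))$ queries by running the parallel $\ksum_N$ routine on all layers at once, and (c) any classical algorithm is forced to descend the layers one at a time, so parallelism does not help it. These properties, combined with the inner bound, give exactly (i)--(iv).

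For (i) and (ii) I would invoke the composition theorems $\Rand(g\circ h)=\tilde\Theta(\Rand(g)\Rand(h))$ and $\Quant(g\circ h)=\Theta(\Quant(g)\Quant(h))$ together with the constituent bounds $\Rand(\ksum_N)=\Quant(\ksum_N)=\tilde\Theta(N)$ and $\Rand(\blockksum_N)=\Quant(\blockksum_N)$, the quantum lower bound for the gadget coming from an adversary-method argument on its layered structure and the randomized one from a hybrid argument. This yields $\Rand(f)=\tilde\Theta(\Quant(f))$, which becomes a fixed power $N^{\epsilon}$ of the total input length $\poly(N)$ once the free parameters are fixed; since $\Quant(f)\le\Rand(f)$ always, the same equality is (ii).

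For (iv) I would build the single-round quantum algorithm by \emph{nesting} two one-round routines: the outer routine issues $O(\Quant(\blockksum_N))$ parallel meta-queries in one round, and each meta-query (resolving one $\ksum_N$ instance) is itself implemented in one parallel round using the $p$-parallel $\ksum_N$ algorithm of Jeffery et al.\ at the parallelism $\tilde\Theta(N)$ at which it runs in a single round (which, for $k=\Theta(\log N)$, equals $O(\Quant(\ksum_N))$). A one-round routine nested inside a one-round routine is one round, and the total parallelism is $p=O(\Quant(\blockksum_N))\cdot\tilde\Theta(N)=O(\Quant(\blockksum_N)\Quant(\ksum_N))=O(\Quant(f))$. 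For (iii) I would reduce to the parallel lower bound for the gadget and then prove that directly: conditioned on the transcript after round $i$, the $\ksum_N$ instances that are \emph{relevant} in layer $i+1$ are (pseudo)random over a large pool, so the queries placed in round $i+1$ overlap them in only $\tilde O(1)$ places in expectation; since evaluating the gadget requires fully resolving $\tilde\Omega(\Rand(\blockksum_N))$ successive layers of relevant instances, $\tilde\Omega(\Rand(\blockksum_N))$ rounds are necessary, and pulling this back through the composition gives $\Rand^{p\parallel}(f)=\tilde\Omega(\Rand(f))$.

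The main obstacle is (iii): one must rule out \emph{all} classical parallel-and-adaptive strategies, in particular those that hedge by probing instances before learning whether they will be relevant, which requires tight control of the conditional distribution of the relevant layers given an arbitrary partial transcript, and care that the reduction through the composition does not lose the factor coming from the per-instance classical query cost. A secondary delicate point is ensuring that the quantum routine of (iv) is genuinely nonadaptive and that its parallelism stays $O(\Quant(f))$ rather than a larger polynomial --- this is precisely the balance the definition of Block $k$-sum must strike against the lower bound of (iii) --- together with bookkeeping the free parameters so that $\Rand(f)$ lands at the claimed power $N^{\epsilon}$ of the input length.
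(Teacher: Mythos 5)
Your approach does not work, and the obstruction is a simple counting argument rather than a technical difficulty in item (iii). Write $M$ for the input length of your $f=\bkkfull$, so $M=\tilde\Theta(N^2)$ and, by the composition theorem and the choice $k=\Theta(\log N)$, $\Quant(f)=\tilde\Theta(\Rand(f))=\tilde\Theta(M)$ --- this near-maximality is exactly what $\bkk$ was engineered for in \cite{Aaronson16}. Now a one-round $p$-parallel quantum algorithm makes $p$ queries in total, so item (iv) forces $p\ge \Quant(f)=\tilde\Omega(M)$. But with parallelism $p=\tilde\Omega(M)$ a \emph{classical} algorithm reads the entire input in $O(1)$ rounds, giving $\Rand^{p\parallel}(f)=O(1)$ and destroying item (iii). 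More generally, (ii)--(iv) together require $\Quant(f)=\tilde O(\sqrt{M})$, since $\Rand^{p\parallel}(f)\le\lceil M/p\rceil$ and (iii)+(ii) demand $\lceil M/p\rceil=\tilde\Omega(\Quant(f))$ while (iv) demands $p\ge\Quant(f)$. Any candidate whose quantum query complexity is nearly linear in its input length is therefore ruled out before one even examines the lower-bound argument. A secondary problem is that $\blockksum$ as actually defined has no layered or pointer-like structure, so the claimed ``classical algorithms must descend the layers one at a time'' property has no basis; but even a redefined layered gadget cannot escape the counting bound above.

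The paper's construction is designed precisely to avoid this trap: it takes the $\correlated$ function of two \emph{different} problems on disjoint inputs, promised to have equal outputs --- pointer chasing with chain length $N^{1/3}$ (classically easy sequentially, but inherently adaptive for everyone) and $\bigoplus_{N^{1/3}}\circ\textsc{for}_{N^{2/3}}$ (solvable quantumly in one round with parallelism $\tilde\Theta(N^{1/3})$, but costing $\Omega(N^{2/3})$ total randomized queries). Here $\Rand(f)=\tilde\Theta(N^{1/3})$ and $p=\tilde\Theta(N^{1/3})$, so $p\cdot\Rand^{p\parallel}(f)\ll M$ and the counting obstruction does not arise. The lower bounds then come from two lemmas about $\correlated(f,g)$: a hybrid argument showing $\Rand^{p\parallel}(\correlated(f,g))=\Omega(\min(\Rand^{p\parallel}(f),\Rand^{p\parallel}(g)))$, and a tensor-product adversary matrix showing $\Quant(\correlated(f,g))=\Omega(\min(\Quant(f),\Quant(g)))$. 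If you want to salvage your proposal, you need a function whose quantum query complexity is at most roughly the square root of its input length, together with a genuine mechanism forcing classical adaptivity; composing a single near-maximally-hard total function with itself provides neither.
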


Note that the function $f$ in the above theorem perfectly parallelizes quantumly but does not parallelize at all classically. We discuss the construction of $f$ in more detail in the technical overview. We totalize $f$ using the cheatsheet framework to give a total function with no sequential quantum advantage but unbounded parallel quantum advantage, giving rise to the following theorem. 

\begin{theorem}[Restatement of \Cref{thm:no_seq_parallel_sepa_QvsR}] \label{thm:informalTotalExpSep}
    There exists a (total) Boolean function $h:\{0,1\}^N \rightarrow \{0,1\}$ such that for some $p$ and some constants $\epsilon, \delta > 0$ (with $\delta < \epsilon$), we have 
    \begin{enumerate} [label=(\roman*)]
        \item $\emph{\Rand}(h) = \tilde\Theta(N^\epsilon)$, 
        \inlineitem $\emph{\Quant}(h) = \tilde{\Omega}(\emph{\Rand}(h))$,
        \inlineitem $\emph{\Rand}^{p \parallel}(h) = \Tilde \Omega(N^\delta)$, 
        \inlineitem $\emph{\Quant}^{p \parallel}(h) \leq 3$.
    \end{enumerate}
\end{theorem}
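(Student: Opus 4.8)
The plan is to obtain $h$ by totalizing the partial function $f$ of \Cref{thm:partialexpsep} through the canonical cheatsheet construction of \Cref{def:Canonicalcheatsheet}, in the same spirit as \Cref{thm:informalTotalAdaptiveSep}, but with one extra ingredient: before applying the cheatsheet I would first amplify $f$ by a direct-sum type operation $f' \defeq f^{\oplus m}$, the parity of $m$ independent copies of $f$ on disjoint blocks of inputs, for a suitable polynomial number of copies $m$. The point of the amplification is precisely to open the gap $\delta<\epsilon$: taking a parity of $m$ copies multiplies the \emph{sequential} randomized (and quantum) query complexity by $m$, but a randomized algorithm with enough parallelism can handle the $m$ copies simultaneously, so the \emph{parallel} randomized complexity is not multiplied; the cheatsheet then inherits both behaviours. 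Without an ingredient of this flavour the canonical cheatsheet of $f$ alone would have $\Rand^{p\parallel}(h)$ and $\Rand(h)$ equal up to polylog factors.

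\textbf{Key steps.} (1) Pick $m$ polynomial in the input length $n$ of $f$ and set $f'\defeq f^{\oplus m}$. Using \Cref{thm:partialexpsep} together with standard direct-sum/XOR facts for query complexity, establish: $\Rand(f')=\tilde\Theta(m\,\Rand(f))$; $\Quant(f')=\tilde\Theta(m\,\Quant(f))=\tilde\Theta(\Rand(f'))$; $\Quant^{mp\parallel}(f')=1$ (run the $m$ one-round quantum algorithms for $f$ in parallel, amplifying each to error $1/\poly$); and $\Rand^{mp\parallel}(f')=\tilde\Omega(\Rand(f))$ by a parallel direct-sum argument. (2) Let $h\defeq\canonicalcheatsheet(f')$, which is total on $\{0,1\}^N$ with $N=\poly(n)$ and embeds $c=\tilde\Theta(1)$ copies of $f'$ along with $2^c$ cheatsheet cells. (3) Quantum upper bound: exhibit a $3$-round algorithm using $p'=\tilde\Theta(mp)$ parallel queries per round — round $1$ evaluates $f'$ on all $c$ embedded copies in parallel (amplified so all $c$ answers are simultaneously correct) to learn the index $\ell$; round $2$ reads the cheatsheet cell $Y_\ell$; round $3$ checks the certificates listed in $Y_\ell$ against the $f'$-inputs. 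This yields (iv), $\Quant^{p'\parallel}(h)\le 3$. (4) Sequential bounds: the canonical cheatsheet preserves sequential complexity up to polylog, so $\Rand(h)=\tilde\Theta(\Rand(f'))$ and $\Quant(h)=\tilde\Theta(\Quant(f'))$; writing $\Rand(h)=\tilde\Theta(N^{\epsilon})$ fixes the constant $\epsilon$, and $\Quant(h)=\tilde\Theta(\Rand(h))$ gives (i) and (ii). (5) Parallel randomized lower bound: apply the parallel cheatsheet lower bound (analogous to the one underlying \Cref{thm:informalTotalAdaptiveSep}) to obtain $\Rand^{p'\parallel}(h)=\tilde\Omega\big(\Rand^{p'\parallel}(f')\big)=\tilde\Omega(\Rand(f))=\tilde\Omega(N^{\delta})$, and verify $0<\delta<\epsilon$ for the chosen $m$; this is (iii).

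\textbf{Main obstacle.} The crux is step (5) together with the parameter balancing in steps (1)--(3). For the lower bound I need a \emph{parallel} cheatsheet lemma: any $p'$-parallel randomized algorithm deciding $h$ in $r$ rounds must in effect evaluate $f'$ on essentially all $c$ embedded copies, since the unique accepting cell is the one indexed by $\ell=f'(x_1)\cdots f'(x_c)$ and a $0$-input may plant a satisfied ``decoy'' cell at any other index, whence $r=\tilde\Omega(\Rand^{p'\parallel}(f'))$; making this rigorous when the algorithm may probe many cheatsheet cells and many blocks of the $f'$-inputs at once is the delicate point, and is where the general cheatsheet machinery from the earlier sections is invoked. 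The second difficulty is a tension over the parallelism: the quantum algorithm wants $p'$ as large as $\tilde\Theta(mp)$ (to collapse round $1$ to a single parallel query), whereas the bound $\Rand^{mp\parallel}(f')=\tilde\Omega(\Rand(f))$ must still hold at exactly that level of parallelism, which caps how large $m$ may be taken; one must also keep $\cert(f')=\tilde\Theta(m\cdot\cert(f))$ under control, e.g. by ensuring $\cert(f)=\tilde O(\Quant(f))$, so that the verification round stays within the same budget $p'$. Choosing $m$ so that simultaneously $\delta=\log_N\Rand(f)>0$ and $\delta<\epsilon=\log_N\big(m\,\Rand(f)\big)$ while all four conditions hold is the heart of the argument.
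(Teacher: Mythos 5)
Your overall plan (totalize the partial \textsc{ana} function of \Cref{thm:partialexpsep} via a cheatsheet) matches the paper's, but there is a genuine flaw in the choice of inner function. You set $h$ to be the \emph{canonical} cheatsheet of $f'$, and the canonical cheatsheet (\Cref{def:Canonicalcheatsheet}) composes with $\andor$ before applying the cheatsheet. Grover search gives $\Quant(\andor_{N^2}) = O(N)$ while $\Rand(\andor_{N^2}) = \Theta(N^2)$, so by the composition theorem (\Cref{thm:QueryComplexityComposition}) your $h$ would satisfy $\Quant(h) = \tilde O(N\cdot\Quant(f'))$ versus $\Rand(h) = \tilde\Omega(N^2\cdot\Rand(f'))$ --- a polynomial \emph{sequential} quantum advantage that contradicts condition (ii). This is exactly the obstruction the paper flags; it is resolved by replacing $\andor$ with $\bkk_{N^2}$ (\Cref{prob:bkk}), which has $\Quant(\bkk_{N^2}) = \tilde\Omega(N^2)$ (no quantum speedup on the inner layer) yet certificate complexity only $\tilde O(N)$, so the cheatsheet and the $3$-round quantum upper bound still go through. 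Your remark about controlling $\cert(f')$ does not address this: what matters is the certificate complexity and the quantum hardness of the \emph{inner} function of the composition, not of $f'$.

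Your XOR-amplification layer $f' = f^{\oplus m}$ is also both unnecessary and under-supported. The gap $\delta < \epsilon$ does not need to be manufactured: in the paper it falls out of the inner composition itself, since each inner evaluation costs $\tilde\Theta(N^2)$ sequential queries (giving $\Rand(h) = \tilde\Theta(N^{1/3}\cdot N^2)$) but only one round at parallelism $p = \tilde\Theta(N^{7/3})$, so the parallel lower bound degrades only to $\Rand^{\lfloor p/N^2\rfloor\parallel}(f) = \tilde\Omega(N^{1/3})$ via the parallel composition theorem (\Cref{thm:rand_comp}) and the cheatsheet lower bound (\Cref{thm:cheatsheetrandparallelimpact}). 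Your contrary claim that without amplification $\Rand^{p\parallel}(h)$ and $\Rand(h)$ would coincide is therefore mistaken. Moreover, your step (1) quietly assumes a strong direct-sum statement $\Rand^{mp\parallel}(f^{\oplus m}) = \tilde\Omega(\Rand^{p\parallel}(f))$ for a partial function; the naive reduction of fixing $m-1$ copies only yields the single-copy lower bound at parallelism $mp$, which for $f$ built on $\textsc{pointer}_{N,N^{1/3}}$ degrades to $\Omega(N^{1/3}/m)$ by \Cref{thm:pointer_chasing} and can become trivial for polynomial $m$. So the amplification both fails to fix the real problem (the $\andor$ inner layer) and introduces a new unproven lemma.
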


We also obtain analogous separations between the randomized and deterministic query complexities using the same techniques (see \Cref{thm:non-seq_DvsR_parallel_separation,thm:no_seq_parallel_sepa_RvsD}).

\subsubsection{Separations with two layers of adaptivity} \label{sec:Intro2Adaptive}
We earlier noted (in \Cref{thm:informalTotalAdaptiveSep}) that for some total function $h$, $\Quant^{3 \perp}(h)$ can be polynomially smaller than $\Rand^{3 \perp}(h)$ (and even $\Rand(h)$), however $\Quant^{1 \perp}(h) = \Theta(\Det^{1 \perp}(h))$ for all total functions $h$. A natural question to ask is whether it is possible for $\Quant^{2 \perp}(h)$ to be much smaller than $\Rand^{2 \perp}(h)$ for some total $h$? We answer this question in the affirmative by presenting a general framework that transforms an (at least) polynomial separation between $\Quant^{1 \perp}(f)$ and $\Rand^{1 \perp}(f)$ for some partial $f$ to a polynomial separation between $\Quant^{2 \perp}(h)$ and $\Rand^{2 \perp}(h)$ for some total $h$ constructed from $f$. Since there exists a partial function $f$ (such as \textsc{Forrelation} defined in \Cref{prob:Forrelation}) with $\Quant(f) = 1$ (so $\Quant^{1 \perp}(f) = 1$) and $\Rand(f) = \Omega(N^{\epsilon})$ (so $\Rand^{1 \perp}(f) = \Omega(N^\epsilon)$) for some $\epsilon > 0$, we get a total function $h$ exhibiting a polynomial separation between $\Quant^{2 \perp}(h)$ and $\Rand^{2 \perp}(h)$. Precisely, we show the following theorem:

\begin{theorem}[Restatement of \Cref{thm:twoAdaptiveRvsQ}]
\label{thm:informal_twoAdaptiveRvsQ}
Let $f: \mathcal{D} \rightarrow \{0,1\}$ with $\mathcal{D} \subseteq \{0,1\}^N$ be a (partial) function that satisfies $\emph{\Quant}^{1 \perp}(f) = \tilde O(N^\epsilon)$ and $\emph{\Rand}^{1 \perp}(f) = \tilde \Omega(N^\delta)$
for some constants $\epsilon < \delta$.
Then there exists a total \(h: \{0,1\}^{M} \rightarrow \{0,1\}\) (with $M = \Theta(N^3)$) and constants $\epsilon' < \delta'$ such that
$\emph{\Quant}^{2 \perp}(h) = \tilde O(M^{\epsilon'})$ and $\emph{\Rand}^{2 \perp}(h) = \tilde \Omega(M^{\delta'})$.
\end{theorem}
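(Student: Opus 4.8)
The plan is to obtain $h$ by applying the (parallel version of the) canonical cheatsheet construction of \Cref{def:Canonicalcheatsheet} to $f$, after first amplifying the success probability of $f$'s optimal $1$-adaptive algorithms. Concretely, let $g$ be the majority vote of $\tilde\Theta(1)$ independent copies of $f$, so that $g:\mathcal D' \to \{0,1\}$ lives on $\tilde\Theta(N)$ bits, still satisfies $\Quant^{1\perp}(g) = \tilde O(N^\epsilon)$ and $\Rand^{1\perp}(g) = \tilde\Omega(N^\delta)$, and moreover admits a $1$-adaptive quantum algorithm with error $\le 1/\poly(N)$ and a matching hardness for $\Rand^{1\perp}$. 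Then set $h := g_{\canonicalcheatsheet}$, where the cheatsheet uses $c = \Theta(\log N)$ copies of $g$ and $2^c = \poly(N)$ cells, each cell being a short string (the purported address together with a consistency checksum), with verification performed by re-running the efficient quantum algorithm for $g$. A routine accounting of the copies, cells, and cell sizes shows $M = \Theta(N^3)$; here I would also double-check whether $g$ must be a slightly more elaborate gadget built from $f$ (rather than just an amplified $f$) in order to make the lower bound below go through — see the obstacle paragraph.

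For the upper bound I would exhibit a $2$-adaptive quantum algorithm with $\tilde O(N^\epsilon)$ queries per round. In round $1$, run the $1$-adaptive quantum algorithm for $g$ on all $c = \Theta(\log N)$ copies in parallel, using $\tilde O(c \cdot N^\epsilon)$ non-adaptive queries; by the $1/\poly(N)$ error bound and a union bound this yields, with high probability, the true address $z = (g(x_1),\dots,g(x_c))$. In round $2$, read the (short) cell $Y_z$ and again run the $1$-adaptive quantum algorithm for $g$ on all $c$ copies — another $\tilde O(c \cdot N^\epsilon)$ non-adaptive queries — accepting iff the recomputed values are consistent with $z$ and with the checksum in $Y_z$. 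Correctness is inherited from the design of the canonical cheatsheet (a spuriously-addressed cell cannot pass the checksum with more than negligible probability), giving $\Quant^{2\perp}(h) = \tilde O(N^\epsilon) = \tilde O(M^{\epsilon/3})$, i.e.\ $\epsilon' = \epsilon/3$.

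For the lower bound I would use a hard distribution: draw $x_1,\dots,x_c$ independently from the non-adaptive hard distribution for $g$, set $z = (g(x_1),\dots,g(x_c))$, place the honest cell contents at $Y_z$, and place ``trap'' contents in every other cell, each internally consistent with a different (wrong) address. One first shows, as in the standard cheatsheet analysis, that any bounded-error $2$-adaptive randomized algorithm for $h$ must in effect determine $z$, hence correctly compute $g(x_i)$ for a constant fraction of the indices $i$. The main obstacle is the final step: a naive round-by-round simulation only shows that computing these $g(x_i)$'s costs $\tilde\Omega(\Rand^{2\perp}(g))$ per round, whereas $\Rand^{2\perp}(g)$ could a priori be far smaller than $\Rand^{1\perp}(g) = \tilde\Omega(N^\delta)$. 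The point I would have to make rigorous is that the cheatsheet forces the two rounds to be ``uncoordinated'' for the purpose of learning $z$: at the start of round $1$ the algorithm has no information singling out $Y_z$, so any round-$2$ queries into the $x_i$'s that are genuinely guided by $Y_z$ must be preceded by a round $1$ that has \emph{already} pinned down enough of $(g(x_1),\dots,g(x_c))$ to locate $Y_z$ — making round $1$ effectively a single non-adaptive round of computation on the $x_i$'s, and thus invoking $\Rand^{1\perp}(g)$ rather than $\Rand^{2\perp}(g)$. Formalizing this via a hybrid/direct-sum argument over the $c$ independent copies (and, if needed, engineering $g$ so that its $f$-instances are only ``addressable'' after the address is known) is where the real work lies. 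Granting it, $\Rand^{2\perp}(h) = \tilde\Omega(N^\delta) = \tilde\Omega(M^{\delta/3})$, so $\delta' = \delta/3 > \epsilon' = \epsilon/3$, completing the proof.
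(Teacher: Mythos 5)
Your plan has a genuine gap at both ends, and the root cause is the same: you try to use the unmodified canonical cheatsheet (\Cref{def:Canonicalcheatsheet}) on (an amplified) $f$, but that construction inherently needs \emph{three} rounds of adaptivity, not two. In the cheatsheet framework the value of $h$ is defined via certificate verification: the cell $y^{(\ell)}$ must certify that each $x^{(i)}$ lies in the promise domain, and verifying this means querying the specific address-section positions that $y^{(\ell)}$ points to --- positions you only learn \emph{after} reading $y^{(\ell)}$, which you can only locate \emph{after} computing $\ell$. That forces the round structure solve $\rightarrow$ read cell $\rightarrow$ verify. Your attempted shortcut --- replacing verification by re-running the $1$-adaptive quantum algorithm for $g$ and checking a ``checksum'' stored in the cell --- does not work: a bounded-error algorithm for a \emph{partial} function has arbitrary behavior on inputs outside the promise, so it cannot certify domain membership, and a checksum written in the (adversarially chosen) input cannot make $h$ a well-defined total function whose value your algorithm computes. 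This is exactly why the paper does not use $g_{\canonicalcheatsheet}$ here but instead defines the new function of \Cref{prob:twoAdaptiveFunction}: the certificates (in fact \emph{bi-certificates}, to prevent their shape from leaking the answer) sit at a \emph{fixed, known} location so they can be read in round $1$ concurrently with solving $f\circ\andor$, and the answer is moved out to a separate uniformly random Data array indexed by the computed target address; verification and the single Data query then both fit in round $2$ (\Cref{lem:twoAdaptiveQUpperBound}).

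The second gap is the randomized $2$-adaptive lower bound, which you explicitly leave as an ``obstacle''; this is the crux of the theorem, and your hybrid/direct-sum sketch over the $c$ copies does not by itself rule out a round-$1$/round-$2$ coordination that beats $\Rand^{1\perp}(g)$. The paper's argument needs the specific structure of \Cref{prob:twoAdaptiveFunction}: because each $\Add[i,j]$ is fixed by its bi-certificate except at a uniformly random intersection point, a Markov bound shows round $1$ reveals only $N^{l-\epsilon/2}$ of the relevant bits of each $\In[i]$, so by the $\Rand^{1\perp}(f)$ hardness the target address remains nearly uniform over $N^3$ possibilities after round $1$ (\Cref{eqn:uniformity-target-bound}); since Data is uniformly random, the round-$2$ queries miss the target with high probability and the algorithm succeeds with probability $1/2+o(1)$ (\Cref{lem:twoAdaptiveRLowerBound}). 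Your own closing remark that one may need to ``engineer $g$ so that its $f$-instances are only addressable after the address is known'' is essentially an admission that the construction itself must be changed --- which is precisely the content of the paper's \textsc{2-Adaptive-F} framework, so as written your proposal does not establish the theorem.
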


We also obtain a similar separation between $\Rand^{2 \perp}$ and $\Det^{2 \perp}$ (see \Cref{thm:twoAdaptiveDvsR}). 

\subsubsection{Quantum parallel lower bound framework and applications}
\label{sec:IntroLowerbound}
Finally, we develop a new method for deriving parallel quantum query lower bounds from sequential quantum query upper bounds. It is simpler to reason about sequential algorithms than parallel algorithms, and upper bounds are often easier to give than lower bounds. Thus, our result allows reducing the hard problem of lower bounding parallel quantum query complexity to the potentially easier problem of upper bounding its sequential analogue. Our technique is based on the parallel spectral adversary method \cite{Jeffery17}, and is especially well suited to problems with optimal adversary matrices that only distinguish input pairs that differ at a single index. We also show that, for any total function $f$, the combinatorial parallel adversary method \cite{Grover04, Burchard19} (see \Cref{thm:parallel_comb_adv}) fails to show a lower bound better than $\Omega\left(\sqrt{\ceil{\frac{\cert_0(f)}{p}}\ceil{\frac{\cert_1(f)}{p}}}\right)$ where $\cert_b$ denotes the $b$th certificate complexity of $f$ (see \Cref{thm:parallel_comb_adv_barrier}). Moreover, we show that our method surpasses this barrier and provides a better lower bound, for instance, to the $\andor$ problem.

We let $\lambda(f)$ denote the spectral sensitivity of $f$, as defined in \Cref{theorem:spectral-sens-equals-adv}. 
Let $\mathcal{F}_{p-\text{res}}^{(f)}$ denote the set of all functions obtained from restricting $f$ to $p$ input bits, and fixing the rest. We call this a $p-$restriction of $f$.

\begin{theorem}[Restatement of \Cref{theorem:parallelLowerBoundNN}]
    For any total Boolean function $f: \{0,1\}^N \rightarrow \{0,1\}$, we have 
    \begin{align*}
        \emph{\Quant}^{p\parallel}(f) = \Omega\left(\lambda(f) \cdot \frac{1}{\max_{g \in \mathcal{F}_{p-\text{res}}^{(f)}} \lambda(g)}\right)
    \end{align*}
    \label{theorem:informalNNLowerBound}
\end{theorem}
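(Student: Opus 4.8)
The plan is to run the parallel spectral adversary method of Jeffery, Magniez and de Wolf~\cite{Jeffery17} with the \emph{sensitivity‑graph} adversary matrix of $f$, and to notice that restricting this matrix to any block of $p$ query coordinates splits, along the fixed‑coordinate slices, into the sensitivity‑graph matrices of the $p$‑restrictions of $f$. Concretely, let $A=A_f$ be the symmetric $2^N\times 2^N$ matrix with $A[x,y]=1$ exactly when $x$ and $y$ differ in a single coordinate and $f(x)\neq f(y)$, and $A[x,y]=0$ otherwise; this is the adjacency matrix of the Boolean sensitivity graph of $f$, so $\|A\|=\lambda(f)$ by \Cref{theorem:spectral-sens-equals-adv}, and since $A[x,y]=0$ whenever $f(x)=f(y)$ it is a legal adversary matrix for $f$.

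First I would invoke the parallel spectral adversary lower bound: for any adversary matrix $\Gamma$ for $f$ and any parallelism $p$,
\[
\Quant^{p\parallel}(f)=\Omega\!\left(\frac{\|\Gamma\|}{\max_{S\subseteq[N],\,|S|\le p}\ \|\Gamma_S\|}\right),
\]
where $\Gamma_S$ is obtained from $\Gamma$ by zeroing every entry $(x,y)$ with $x|_S=y|_S$. This is the natural $p$‑parallel generalization of the spectral adversary bound; if a statement in this exact form is not already on the shelf, it follows from the usual potential‑function argument. Fix a unit top eigenvector $v\ge 0$ of the nonnegative matrix $\Gamma$ and track $W_t=\sum_{x,y}\Gamma[x,y]\,v_x v_y\,\langle\psi_t^x|\psi_t^y\rangle$, where $|\psi_t^x\rangle$ is the algorithm's state on input $x$ after $t$ query rounds: then $W_0=\|\Gamma\|$, a correct bounded‑error algorithm forces $|W_T|\le(1-\Omega(1))\|\Gamma\|$, and one round of $p$ non‑adaptive queries — a controlled operation acting on a register carrying an index multiset whose set of distinct entries is some $S$ with $|S|\le p$ — changes $W_t$ by at most $O(\max_{|S|\le p}\|\Gamma_S\|)$. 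Applying this with $\Gamma=A$ reduces the theorem to showing $\max_{|S|\le p}\|A_S\|\le\max_{g\in\mathcal{F}_{p-\text{res}}^{(f)}}\lambda(g)$.

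The heart of the argument is this last inequality. Fix $S$ with $|S|\le p$ and split the coordinates of $\{0,1\}^N$ into $S$ and $\bar S=[N]\setminus S$. Since $A$ is supported on Hamming‑distance‑one pairs, every nonzero entry of $A_S$ connects two inputs that agree on $\bar S$; hence $A_S$ is block diagonal with one block for each assignment $z\in\{0,1\}^{\bar S}$, and the $z$‑block is exactly the adjacency matrix of the sensitivity graph of the function $g_z\colon w\mapsto f(w,z)$ on the $|S|$ bits indexed by $S$. Thus $\|A_S\|=\max_{z}\|A_{g_z}\|=\max_z\lambda(g_z)$. If $|S|=p$ then each $g_z\in\mathcal{F}_{p-\text{res}}^{(f)}$ directly; if $|S|<p$, extend $S$ to any $T\supseteq S$ with $|T|=p$ and fix $\bar T$ consistently with $z$, obtaining a genuine $p$‑restriction $g\in\mathcal{F}_{p-\text{res}}^{(f)}$ of which $g_z$ is a further restriction, so that the sensitivity graph of $g_z$ is an induced subgraph of that of $g$ and $\lambda(g_z)\le\lambda(g)$, because passing to a principal submatrix of a nonnegative symmetric matrix cannot increase its top eigenvalue. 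Either way $\|A_S\|\le\max_{g\in\mathcal{F}_{p-\text{res}}^{(f)}}\lambda(g)$, and combining with $\|A\|=\lambda(f)$ and the displayed bound yields the theorem.

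The step I expect to be the main obstacle is pinning down the parallel spectral adversary bound in exactly the needed form — in particular that the per‑round cost is governed by $\max_{|S|\le p}\|\Gamma_S\|$ with $\Gamma_S$ the coordinate‑set restriction (equivalently, the Hadamard product of $\Gamma$ with the indicator of $x|_S\neq y|_S$), and not by something weaker such as $\sum_{i\in S}\|\Gamma_i\|$, which would cost a spurious factor of $p$. Once the method is in that form, the block decomposition above is essentially forced by our choice of $A$ as a matrix supported on single‑bit flips: $\max_{|S|\le p}\|A_S\|$ is then literally the largest spectral sensitivity of a $p$‑restriction of $f$, and no further work is required. As a sanity check, $p=1$ gives $\max_g\lambda(g)=1$ for non‑constant $f$ and recovers $\Quant(f)=\Omega(\lambda(f))$.
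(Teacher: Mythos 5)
Your proposal is correct and follows essentially the same route as the paper: both apply the parallel spectral adversary bound of Jeffery et al.\ to a nearest-neighbor adversary matrix, observe that $\Gamma_S$ block-diagonalizes over assignments to $[N]\setminus S$ with each block an adversary (indeed sensitivity-graph) matrix of a $p$-restriction, and bound the blocks by $\max_g \lambda(g)$ via \Cref{theorem:spectral-sens-equals-adv}. Your choice of the unweighted adjacency matrix $A_f$ (rather than a general $\Gamma^{\NN}$ normalized by $\max_i\norm{\Gamma^{\NN}_i}$) and your explicit handling of $|S|<p$ by principal-submatrix monotonicity are only cosmetic simplifications, and the parallel adversary bound you hedge about is exactly the paper's \Cref{theorem:parAdv}.
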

\vspace{-1pc}

Recalling that $\lambda(f) = O(\Quant(f))$, as shown by Aaronson et al. \cite{BetterbealsAaronson21}, we can write this as $\Quant^{p \parallel}(f) = \Omega\left({\lambda(f)} \cdot ({\max_{g \in \mathcal{F}_{p-\text{res}}^{(f)}} {\Quant}(g)})^{-1}\right)$, where we note that the second term can be lower bounded by giving a (sequential) quantum algorithm for any $p-$restriction of $f$. We apply \Cref{theorem:informalNNLowerBound} to the following classes of well-studied functions.
\begin{enumerate}
    \item \textit{Read-once formulas (\emph{\Cref{subsec:readOnceformulas}}}): While the combinatorial adversary method \cite{Ambainis02} is sufficient to lower bound the sequential quantum query complexity of the two-layer $\andor$ tree by $\Omega(\sqrt{N})$ \cite{Barnum04}, the best lower bound the parallel combinatorial adversary method can give for the Read-once formula problem $\textsf{And}_{\sqrt{N}} \circ \textsf{Or}_{\sqrt{N}}$ is $\Omega(\sqrt{N}/p)$ (see \Cref{corr:AND_OR_Barrier}). Using \Cref{theorem:informalNNLowerBound}, we show a $\Omega\left(\sqrt{N/p}\right)$ lower bound for any read-once formula (\Cref{theorem:lowerBoundReadOnce}).
    \item \textit{Symmetric functions (\emph{\Cref{subsec:SymmetricFunctions})}}: From \Cref{theorem:informalNNLowerBound}, we get the tight lower bound of $\Omega\left(\sqrt{Nt/p\cdot\min\{t,p\}}\right)$ where $t$ is the largest hamming weight less than $N/2$ such that the function value either differs on inputs of hamming weight $t$ and $t+1$ or differs on inputs of hamming weight $N-t$ and $N-t-1$ (see \Cref{thm:symmetric_functions}). This result recovers the combinatorial adversary lower bound implicit in \cite{Grover04} using an arguably simpler proof.
\end{enumerate}

\subsection{Open questions}
\label{subsec:ConclOpenQuestions}

\textit{A priori}, it seems unintuitive that an exponential advantage in number of rounds is possible for total functions, and also seems like lower bounding a simple function like \(\andor\) in the parallel setting should be possible by standard techniques. Our results in \Cref{sec:IntroSep} and \Cref{sec:IntroLowerbound} show that former is indeed true, and the latter is not the case and in fact requires ``parallel-specific" thinking. Thus, one of our contributions is highlighting that there are still several fundamental query complexity questions unanswered in the parallel setting. We discuss a couple of these:

\paragraph{Limitations of parallel speedup.} Can classical (randomized or deterministic) query algorithms simulate quantum query algorithms when allowed \textit{both} a polynomial overhead in parallelism as well as number of rounds (as opposed to just a polynomial overhead in number of rounds as in \Cref{conj:JefferyTotalFunctions})? We conjecture that this is true:

\begin{conjecture} For all total functions \(f: \{0,1\}^M \rightarrow \{0,1\}\) and parallelism \(p\), we have
\begin{equation*}
    \emph{\Det}^{\poly(p)\parallel} (f) = \poly(\emph{\Quant}^{p\parallel} (f)).
\end{equation*}
\label{conj:NewBeals}
\end{conjecture}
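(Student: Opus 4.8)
The plan is to prove the conjecture by a case analysis on how the parallelism $p$ compares with the block sensitivity $\bs(f)$, reducing it to a single hard regime.

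Begin with two easy reductions. Fix the polynomial $p \mapsto p^{c}$ that we intend to use for the parallelism overhead. If $p^{c} \ge N$ then $\Det^{p^{c}\parallel}(f) = 1$, since one round of $p^{c}$ parallel queries reads the whole input; so we may assume $p$ is polynomially smaller than $N$. Next, for $p = O(\bs(f)^{1-\epsilon})$ with $\epsilon > 0$ a fixed constant, Jeffery et al.\ (Theorem~12 of \cite{Jeffery17}) already give $\Det^{p\parallel}(f) = \poly(\Quant^{p\parallel}(f))$, and since more parallelism never increases the number of rounds, $\Det^{p^{c}\parallel}(f) \le \Det^{p\parallel}(f)$ is then bounded as needed. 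So it suffices to treat $\bs(f)^{1-\epsilon} < p \le N^{1/c}$. In this regime $\bs(f) \le p^{1/(1-\epsilon)} = \poly(p)$, hence by the standard bound $\Det(f) = \poly(\bs(f))$ we also have $\Det(f) = \poly(p)$.

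Within this remaining regime I would split once more, on $q := \Quant^{p\parallel}(f)$. If $q \ge \bs(f)^{\gamma}$ for a small fixed $\gamma > 0$, then already $\Det^{p^{c}\parallel}(f) \le \Det(f) = \poly(\bs(f)) \le \poly(q)$, with no parallelism overhead needed. The hard case is $q < \bs(f)^{\gamma}$; here a parallel adversary bound of the form $\Quant^{p\parallel}(f) = \tilde\Omega(\sqrt{\bs(f)/p})$ (obtainable from the parallel adversary methods of \cite{Jeffery17}) forces $p$ to be polynomially --- in fact nearly linearly --- close to $\bs(f)$, which is precisely the regime in which \Cref{thm:informalTotalAdaptiveSep} produces its large quantum--randomized parallel separation. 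What must be shown is that $\Det^{\poly(p)\parallel}(f) = \poly(q)$ even though $q = \bs(f)^{o(1)}$ while $\bs(f), \Det(f) = \poly(p)$. The intuition is that once $p \gtrsim \bs(f)$ a deterministic algorithm with $\poly(p) \ge \Det(f)$-wide rounds can fully resolve whatever small, separately certifiable sub-computations the shallow quantum algorithm exploits, and thereby imitate its round structure with only polynomial blow-up --- exactly as happens for cheatsheet functions, where a deterministic algorithm evaluates all embedded sub-functions in one round, looks up the cheatsheet in the second, and verifies in the third. Thus the conjecture holds for all the currently known witnesses of the hard regime; the difficulty is a general argument.

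The main obstacle is making this last step rigorous, and it appears genuinely hard: the hard case is morally a constant-round generalization of Montanaro's nonadaptive theorem (\cite{montanaro2010nonadaptive}: $\Quant^{1\perp}(f) = \Theta(\Det^{1\perp}(f))$ for total $f$), namely that a total function computable by $O(1)$ rounds of width-$p$ quantum queries is computable by $O(1)$ rounds of width-$\poly(p)$ deterministic queries, and even the two-round version of this is open --- the two-round separations of \Cref{thm:informal_twoAdaptiveRvsQ} exhibit only a polynomial gap, consistent with but not implying it. The techniques at hand for adaptive multi-round quantum query algorithms --- the polynomial method, whose degree bound loses a factor of $p$ per round, and adversary methods, which give lower bounds rather than classical simulations --- do not obviously yield a \emph{width-efficient} deterministic simulation, so a successful attack would likely need a new structural handle on low-round quantum query algorithms, such as a round-wise analogue of Reichardt's adversary characterization or a deterministic simulation of constant-round quantum algorithms that spends its budget on width rather than on rounds. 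One would further want such a technique to be robust enough to also settle the randomized-versus-deterministic and $k$-adaptive analogues, which are likewise open in the parallel setting.
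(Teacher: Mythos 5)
The statement you are addressing is \Cref{conj:NewBeals}, which the paper states as an \emph{open conjecture} and does not prove; the only ``proof content'' the paper offers is the bulleted discussion immediately following it, which observes that the claim is known for $p = O(\bs(f)^{1-\epsilon})$ by Theorem~12 of \cite{Jeffery17}, is trivial when $\bs(f)$ (and hence $p$) is polynomially large in the input length $M$, and is genuinely open when $\bs(f)$ and $p$ are subpolynomial but superconstant. Your proposal is therefore not, and does not claim to be, a proof --- and that is the correct assessment. Your preliminary reductions are sound and in fact slightly sharper than the paper's own triage: the reduction to $p \le M^{1/c}$, the invocation of \cite{Jeffery17} for $p = O(\bs(f)^{1-\epsilon})$ together with the monotonicity $\Det^{p^c\parallel}(f) \le \Det^{p\parallel}(f)$, the observation that in the remaining regime $\Det(f) \le \bs(f)^3 = \poly(p)$, the easy sub-case $q \ge \bs(f)^{\gamma}$ via $\Det^{\poly(p)\parallel}(f) \le \Det(f) = \poly(q)$, and the use of the parallel $\textsc{Or}$ embedding $\Quant^{p\parallel}(f) = \Omega(\sqrt{\bs(f)/p})$ to pin the hard case to $p$ nearly equal to $\bs(f)$ are all correct. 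This refines the paper's third bullet by additionally constraining $\Quant^{p\parallel}(f)$ in the hard regime.

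The gap is the one you name yourself: the case $q = \Quant^{p\parallel}(f) < \bs(f)^{\gamma}$ with $p = \Omega(\bs(f)^{1-2\gamma})$ and $\bs(f)$ subpolynomial in $M$ is precisely where the conjecture lives, and nothing in the paper (or in your proposal) resolves it. Your appeal to the cheatsheet examples only shows the conjecture is consistent with the known witnesses of large $\Quant^{p\parallel}$-vs-$\Rand^{p\parallel}$ separations (the paper makes the same point via $\Det^{p^{3/2}\parallel} = O(\Quant^{p\parallel}(f)^3)$ for the canonical cheatsheet function); it does not give a general width-efficient deterministic simulation of constant-round parallel quantum algorithms. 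Your diagnosis that such a simulation would be a multi-round strengthening of Montanaro's nonadaptive theorem, and that the existing tools (polynomial method, adversary bounds) do not obviously supply it, matches the paper's framing of why the problem is open. In short: your write-up is an accurate and somewhat more detailed restatement of the paper's own partial analysis of an unproven conjecture, not a proof of it.
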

\vspace{-1em}
\begin{itemize}
\item When \(p < \bs(f)^{1-\epsilon}\) for any \( \epsilon > 0 \), \cite{Jeffery17} prove that this is true.
\item When \(p \geq \bs(f) = \Omega(M^\epsilon)\) for any \(\epsilon > 0 \), clearly this is true. This is the case in our result as well, where the canonical cheat sheet function (\Cref{def:Canonicalcheatsheet}) used to show \Cref{thm:informalTotalAdaptiveSep} has \(\bs(f) = \tilde \Omega(M^{1/6})\) and $\Det^{p^{3/2}\parallel} = O(\Quant^{p\parallel}(f)^3)$ for any \(p\).
\item Thus, the conjecture must be proven/falsified in the regime where \(\bs\) and \(p\) are subpolynomial but superconstant. Moreover, what is the smallest power of \(p\) required to show this for all total \(f\)? Analogous questions remain open for $\Det^{p\parallel}$ vs $\Rand^{p \parallel}$ as well.
\end{itemize}

\paragraph{Parallel composition theorem.} It is known that $\Quant(f \circ g) = \Theta(\Quant(f) \cdot \Quant(g))$ for boolean decision functions, a widely applicable and powerful result \cite{Reichardt11b}. Is there an analogous theorem for $p$-parallel query complexity? In particular, do we have \begin{align*}
    \Quant^{p\Vert}(f \circ g) =& \Theta\left(\min_{q \in [p]} \Quant^{q\Vert}(f) \cdot \Quant^{\lceil p/q \rceil \Vert}(g)\right)
\end{align*}
for boolean $f, g$? Such a result would suffice to reproduce our lower bound for $\andor$, and likely be widely applicable for understanding parallel quantum query complexity. A straightforward attempt to show composition of the $p$-parallel adversary quantity using the reduction in \cite{Grover04} fails. This is because the partial function $f'$ whose sequential query complexity characterizes the $p$-parallel complexity of $f$ is no longer boolean.

\paragraph{Natural functions giving unbounded separation in rounds.} Our results in \Cref{sec:IntroSep} and \Cref{sec:IntroGenuineSep} answer affirmatively whether it is possible for total functions to have unbounded separations in rounds, and whether this is still possible when no sequential separation exists. However, finding more natural functions to answer both these questions remains open and could help in understanding the interplay between quantum algorithms and parallel queries better.

\subsection{Related work}

Prior work on parallel quantum algorithms in the query model has primarily demonstrated the need for quantum depth in solving certain problems.

\vspace{-0.5pc}

\paragraph{Cryptography}
There has been a recent line of work studying problems which can only be efficiently solved by quantum algorithms with high depth \cite{Chia23, arora23depth, chia2022classical, chia23noninteractive, coudron20depth}, constructing cryptographic proofs of quantum depth. In this vein, Chung et al \cite{Chung21} and Blocki et al \cite{blocki21posw} show parallel quantum lower bounds against producing the output of an iterated hash function to give a cryptographic proof of sequential work that is secure against quantum computers.\vspace{-0.5pc}

\paragraph{Partial functions}

Burchard \cite{Burchard19} gives a characterization of parallel quantum approximate counting, showing that, approximating to multiplicative error $\epsilon$, the number $K$ of marked elements in a list of size $N$ requires $\Omega\left(\frac{\sqrt{N}}{\epsilon \sqrt{pK}}\right)$ $p$-parallel quantum queries. Girish et al \cite{girish2023power} constructs a partial function which exhibits a large total query separation between $r$-adaptive and $r-1$-adaptive quantum algorithms for any constant $r$. 
\vspace{-0.5pc}

\paragraph{Circuit complexity}
Another setting one could study the intersection of quantum and parallelism is in the circuit model. For instance, Cleve and Watrous \cite{cleve00parallel} show how to implement the quantum fourier transform on $n$ qubits in $O(\log n)$ depth, allowing for a highly parallel implementation of Shor's factoring algorithm. There are known examples of relational problems which can be computed in constant quantum depth, yet require at least logarithmic depth for classical circuits \cite{BGK18, watts19unbounded, watts2023unconditional}. More generally, quantum circuit classes such as $\textsf{QAC}_n$  and $\textsf{QNC}_n$ have been extensively studied \cite{green02circuits, hoyer05fanout, takahashi16collapse}. 
\vspace{-0.5pc}

\subsection{Organization}

Our paper consists of 4 main results, one corresponding to each point in the abstract. For each of our results: 
\begin{itemize}
\item The theorem statement is given in the Main Results subsection (\Cref{subsec:MainResults}), 
\item An intuitive explanation of the technical aspects of each result is given in the Technical Overview section (\Cref{sec:technicalOverview})
\item The required definitions are given, the main theorem is restated, more detailed explanations and the lemmas and proofs for each result are given in their respective sections (\Cref{sec:cheatsheet,sec:unbounded_genuine_separations,sec:2Adaptive,sec:LowerBounds}).
\end{itemize}

We provide links to each of the above for convenience:

\vspace{10pt}

    \begin{tabular}{|p{5cm}|c|c|c|}
    \hline
        \textbf{Result Name} & \textbf{Result Statement} & \textbf{Technical Overview} & \textbf{Full Details}\\ \hline \hline
        {\small Unbounded parallel separations for total functions} & \Cref{sec:IntroSep} & \Cref{sec:technicalOverviewUnboundedSep} & \Cref{sec:cheatsheet} \\ \hline
        {\small Unbounded genuine parallel separations } & \Cref{sec:IntroGenuineSep}& \Cref{sec:TechnicalOverviewGenuinePartial} & \Cref{sec:unbounded_genuine_separations}\\ \hline
         {\small Separations with two layers of adaptivity}& \Cref{sec:Intro2Adaptive} & \Cref{sec:TechnicalOverviewSeparationsWithTwoLayers} &  \Cref{sec:2Adaptive} \\ \hline
         {\small Parallel quantum lower bound framework and applications}& \Cref{sec:IntroLowerbound} & \Cref{sec:TechnicalOverviewLowerBounds} & \Cref{sec:LowerBounds}\\ \hline
    \end{tabular}

\vspace{10pt}

Preliminaries are provided in \Cref{section:preliminaries} and  auxiliary proofs are provided in \Cref{sec:AppendixCheatsheetFramework,sec:AppendixCompositionTheorem,sec:AppendixPointer-chasing-proofs}

\section{Technical overview}\label{sec:technicalOverview}

In this section, we outline the techniques used in showing our main results. 
For our results involving separations between query complexity measures, we will primarily describe the techniques involved in separating quantum from randomized query complexity measures unless the techniques for separating randomized and deterministic query complexity measures are significantly different.

\addtocontents{toc}{\protect\setcounter{tocdepth}{0}} 
\subsection{Unbounded parallel separations for total functions}
\addtocontents{toc}{\protect\setcounter{tocdepth}{2}} 
\label{sec:technicalOverviewUnboundedSep}

\begin{figure}[H]
\begin{center}
\begin{tikzpicture}[scale=0.47, every node/.style={transform shape}]
\tikzstyle{block} = [rectangle, draw, minimum width = 4cm, minimum height = 1cm]
\tikzstyle{data} = [rectangle, draw, minimum width = 4cm, minimum height = 1cm]
\tikzstyle{patterned_data} = [rectangle, draw, minimum width = 4cm, minimum height = 1cm, pattern = north east lines]

\foreach \n in {0, 4, 8}{
    \node[block] (a\n) at (\n, 0) {};
    \node (bkk\n) at (\n, -3) {\Large \(f \circ \andor\)};
    \draw (a\n.south east) -- (bkk\n);
    \draw (a\n.south west) -- (bkk\n);
}

\foreach \n in {4}{
\node [block] (simons\n) at (\n, -6) {};

    \pgfmathsetmacro{\previous}{int(\n-4)}
    \pgfmathsetmacro{\next}{int(\n+4)}
    
    \draw (bkk\previous) -- (simons\n);
    \draw (bkk\n) -- (simons\n);
    \draw (bkk\next) -- (simons\n);
}

\foreach \n in {11, 15, ..., 30}{
    \node[block] (data\n) at (\n+3, 0){};
}
\node[patterned_data] (pat_data) at (23+3, 0) {};
\node (cert) at (23+3, -1) {\LARGE Certificate};

\draw [
    decoration={
        brace,
        raise=0.5cm,
        amplitude=15pt
    },
    decorate,
] (a0.west) -- (a8.east);

\draw [
    decoration={
        brace,
        raise=0.5cm,
        amplitude=15pt
    },
    decorate,
] (data11.west) -- (data27.east);

\node (data_tag) at (17+ 3 + 2, 2.7) {\LARGE Data};
\node (add_tag) at (4 , 2.7) {\LARGE Address};

\draw [->] (simons4) -- (23+3, -6) -- (cert);

\end{tikzpicture}
\end{center}
\caption{The canonical cheat sheet function \(f_{ccs}\) which lifts a partial function \(f\) to a total function, while retaining some of the speedup of \(f\).}
\label{fig:unboundedSpeedup}
\end{figure}
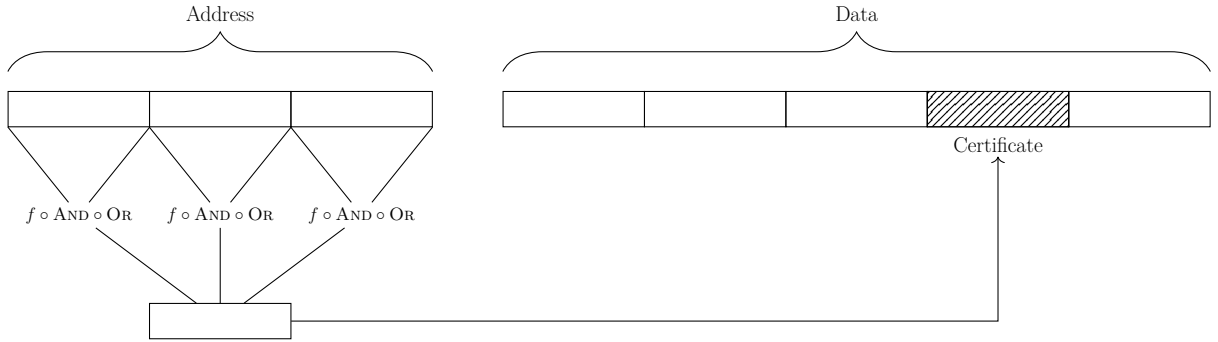

The cheatsheet framework \cite{Aaronson16} lifts a partial function \(f\), potentially with an exponential quantum query advantage to a total function that retains some of the advantage (although now polynomial). The natural way to do this is to define a total function such that when the input is in the domain of \(f\), answer as \(f\) does, and when it is not answer \(0\). The cheat sheet framework does this ``domain check" in a clever way so that the quantum advantage is not lost in the process. As shown in \Cref{fig:unboundedSpeedup}, the input for the total function \(f_\canonicalcheatsheet\) is divided into two components, the Address section and the Data section. The partial function \(f\) is composed with a total function that has low certificate complexity (defined in \Cref{section:preliminaries}), which in this case is \(\andor\). This composition \(f \circ \andor\) produces a single bit. This is repeated such that enough bits to address a block in the data component are obtained. If this block contains the certificate (or ``cheatsheet") for the \(\andor\)'s, and these certify that the input of \(f\) is in its domain, then the  output of \(f_\canonicalcheatsheet\) is \(1\), else, the output is \(0\). As the data component is large, an algorithm is forced to solve the partial function \(f\) to obtain the address of the block, and can use the certificate present there to check whether the input of \(f\) is in the domain (and thus does not need to read the full Address to perform the domain check).

Thus, computing a cheatsheet function involves 
\begin{enumerate}[label=(\arabic*)]
    \item Solving \(f \circ \andor\) to get the certificate location,
    \item Reading out the certificate,
    \item Checking the validity of the input of \(f\).
\end{enumerate}

For concreteness, consider the canonical cheatsheet function defined by Aaronson et al \cite{Aaronson16} composing \(f = \textsc{Forrelation}\) \cite{Aaronson10, Aaronson15} on $N$ bits with $\andor$ on $N^2$ bits. This function was originally constructed to exhibit a superquadratic separation between sequential quantum and randomized query complexities. Observe that with $p=N^2$ parallelism, solving any given instance of the internal $\andor$ can be done in a single $p$-parallel query by reading out all the inputs. A quantum algorithm can utilize this to quickly solve the whole composition, as Forrelation is quantumly easy. However, any classical algorithm would seem to need $\Rand(\textsc{Forrelation})$ $p$-parallel queries to solve this composition since it requires at least $p\cdot \Rand(\textsc{Forrelation})$ sequential queries. Thus, with $p$ parallelism, step (1) is quantum efficient but not classically efficient. We show that for the canonical choice of parameters, steps (2) and (3) are also possible with few $p$-parallel queries, resulting in the desired separation. 

The same technique can be used to give an exponential total function separation between randomized and deterministic parallel query complexity.

Our result, combined with an aforementioned result of \cite{Jeffery17}, which states that $\Det^{p\parallel}(h)=\poly(\Quant^{p\parallel}(h))$ for all total $h$ and $p = O(\bs(h)^{1-\epsilon})$ with $\epsilon > 0$, provides a new way of upper bounding the block sensitivity of total Boolean functions. In particular, using our separation and a lower bound argument for the canonical cheatsheet function, we characterize its block sensitivity.

\addtocontents{toc}{\protect\setcounter{tocdepth}{0}} 
\subsection{Unbounded genuine parallel separations}
\addtocontents{toc}{\protect\setcounter{tocdepth}{2}} 
\label{sec:TechnicalOverviewGenuinePartial}

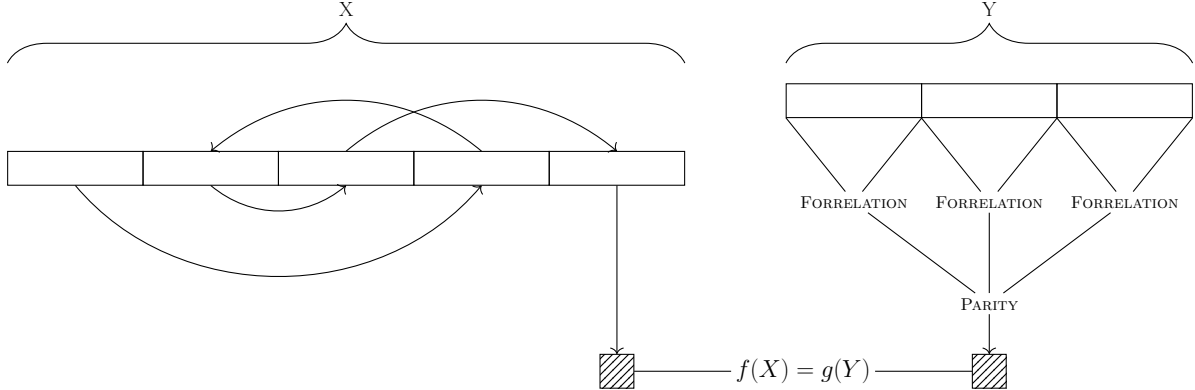
\begin{figure}[H]
\begin{center}
\begin{tikzpicture}[scale=0.45, every node/.style={transform shape}]
\tikzstyle{block} = [rectangle, draw, minimum width = 4cm, minimum height = 1cm]
\tikzstyle{data} = [rectangle, draw, minimum width = 4cm, minimum height = 1cm]
\tikzstyle{small_block} = [rectangle, draw, minimum width = 1cm, minimum height = 1cm, pattern = north east lines]

\foreach \n in {31, 35, 39}{
    \node[block] (a\n) at (\n, 0) {};
    \node (bkk\n) at (\n, -3) {\Large \(\textsc{Forrelation}\)};
    \draw (a\n.south east) -- (bkk\n);
    \draw (a\n.south west) -- (bkk\n);
}

\foreach \n in {35}{
\node (simons\n) at (\n, -6) {\Large \(\textsc{Parity}\)};
\node[small_block] (g) at (\n, -8) {};
    \pgfmathsetmacro{\previous}{int(\n-4)}
    \pgfmathsetmacro{\next}{int(\n+4)}
    
    \draw (bkk\previous) -- (simons\n);
    \draw (bkk\n) -- (simons\n);
    \draw (bkk\next) -- (simons\n);
}

\foreach \n in {5, 9, ..., 24}{
    \node[block] (data\n) at (\n+3, -2){};
}

\draw [->] (data5.south) to [out=-50,in=-130] (data17.south);
\draw [->] (data17.north) to [out=140,in=40] (data9.north);
\draw [->] (data9.south) to [out=-40,in=-140] (data13.south);
\draw [->] (data13.north) to [out=40,in=140] (data21.north);

\node[small_block] (f) at (24, -8) {};

\draw[->] (data21.south) -- (f.north);
\draw[->] (simons35.south) -- (g.north);

\draw [
    decoration={
        brace,
        raise=0.5cm,
        amplitude=15pt
    },
    decorate,
] (a31.west) -- (a39.east);

\draw [
    decoration={
        brace,
        raise=1.4cm,
        amplitude=15pt
    },
    decorate,
] (data5.west) -- (data21.east);

\node (data_tag) at (16, 2.7) {\LARGE X};
\node (add_tag) at (35 , 2.7) {\LARGE Y};

\node (equality) at (29.5, -8) {\huge \(f(X) = g(Y)\)};
\draw (f) -- (equality);
\draw (g) -- (equality);

\end{tikzpicture}
\end{center}
\caption{Partial function that admits unbounded genuine parallel separation}
\label{fig:ANA}
\end{figure}

Is some polynomial sequential separation necessary for unbounded parallel quantum speedup, or can the advantage emerge purely from a quantum algorithm's ability to utilize parallelism (which we call a genuine parallel advantage)? We argue that the latter can indeed be the case, by first describing our construction for the partial function $h$ that allows proving \Cref{thm:partialexpsep}. As shown in \Cref{fig:ANA}, the input to \(h\) here involves two components \(X\) and \(Y\). The function $f$ requires a sequence of adaptive queries to solve (in either model) whereas the function $g$ achieves quantum-randomized separation and can be fully parallelized. We are promised that the output of both will be the same, so any algorithm can choose to solve either. We then argue that any algorithm will require solving at least one of $f$ and $g$.

For concreteness, let $\epsilon > 0$ and suppose that $\Quant(f) = \Theta(\Rand(f)) = \Theta(N^{\epsilon})$ and $\Rand^{p \parallel}(f) = \Omega(N^{\epsilon})$, and $\Quant^{p \parallel}(g) = \Theta(N^{\epsilon})/p$ and $\Rand^{p \parallel}(g) = \Theta(N^{2\epsilon})/p$ for small enough $p$\footnote{For more concreteness, one may think of $f$ as the pointer chasing function (see \Cref{prob:pointer-chasing}) with chain length $N^\epsilon$ and $g$ as the composition of parity on $N^\epsilon$ bits composed with \textsc{Forrelation} on $N^{2\epsilon}$ bits as shown in \Cref{fig:ANA}.}. Then, a randomized sequential algorithm can choose to solve $f$, while a quantum sequential algorithm will not benefit from choosing to solve either $f$ or $g$. Thus, $\Rand(h) = O(N^\epsilon)$ and we would have $\Quant(h) = \Omega(N^\epsilon)$. Similarly, for $p = \Quant(g)$, a $p$-parallel quantum algorithm can choose to solve $g$, while a $p$-parallel randomized algorithm will not benefit from choosing to solve either $f$ or $g$. Therefore, $\Quant^{p \parallel}(h) = 1$ and we would have $\Rand^{p \parallel} = \Omega(N^\epsilon)$. 

To show our desired randomized parallel and quantum lower bounds, we consider the general problem, which we call $\correlated$, where we are given inputs $X$ and $Y$ to arbitrary functions $f$ and $g$ respectively along with the promise that $f(X) = g(Y)$, our goal is to output $f(X)=g(Y)$ (see \Cref{prob:correlated}). We use the hybrid argument to establish that any randomized parallel algorithm that solves $\correlated(f,g)$ will be able to either distinguish an input sampled from a hard $0$-distribution for $f$ from an input sampled from a hard $1$-distribution for $f$, or will succeed at a similar distinguishing task for $g$. Therefore, we must have $\Rand^{p \parallel}(\correlated(f,g)) = \Omega(\min(\Rand^{p \parallel}(f),\Rand^{p \parallel}(g)))$. For the quantum lower bound, we show that that for any adversary matrices $\Gamma^{(f)}$ and $\Gamma^{(g)}$ for $f$ and $g$ respectively, the matrix $\Gamma = \Gamma^{(f)} \otimes \Gamma^{(g)}$ is an adversary matrix for $\correlated(f,g)$ and $\max_i \norm{\Gamma_i} = \max\left(\max_i \norm{\Gamma^{(f)}_i}, \max_i \norm{\Gamma^{(g)}_i}\right)$\footnote{All the $\max$ are over relevant indices.} (see \Cref{subsection:adv_method_parallel} and \Cref{lem:quant_correlated_functions}). It follows that $\Adv\left(\correlated(f,g)) = \Omega(\min(\Adv(f), \Adv(g))\right)$, which implies the desired lower bound (see \cref{equation:parallelSpectralAdversary} and \Cref{theorem:parAdv}). 

Next, we describe a way to totalize the partial function $h$ that we constructed in the previous section while maintaining some of its properties. We will use the cheatsheet framework. However, we will need a function with relatively small certificate complexity that does not admit any significant quantum speed-up so $\andor$ will not work. Fortunately, $\cite{Aaronson16}$ found a function, which they called $\bkk$ (see \Cref{prob:bkk}), that satisfies $\Quant(\bkk_{N}) = \tilde \Theta(\bkk_{N}) = \tilde \Theta(N)$ and $\cert(\bkk_{N}) = \tilde{O}(\sqrt{N})$. We compose $h$ on $N$ bits with $\bkk$ on $N^2$ bits, and then plug it into the cheatsheet framework. The desired upper bounds for $\Quant^{p \parallel}$ and $\Rand$ in \Cref{thm:informalTotalExpSep} are relatively straightforward and follows from the discussion in the previous sections. The quantum lower bound follows from quantum query complexity composition theorem and results in \cite{Aaronson16}. For the randomized parallel lower bound, we show a composition theorem where the inner function is $\bkk$ (see \Cref{thm:rand_comp}).

The same lower bound techniques do not follow for the analogous separation between the randomized and deterministic query complexity measures (see \Cref{thm:no_seq_parallel_sepa_RvsD}). In particular, there is no general randomized composition theorem. Fortunately, for our constructions, the known randomized composition theorems suffice (see \Cref{thm:QueryComplexityComposition}). For the deterministic parallel lower bound, we show a general composition theorem when the degree of the inner function is almost full, which might be of independent interest.    

\addtocontents{toc}{\protect\setcounter{tocdepth}{0}} 
\subsection{Separations with two layers of adaptivity}
\addtocontents{toc}{\protect\setcounter{tocdepth}{2}} 
\label{sec:TechnicalOverviewSeparationsWithTwoLayers}

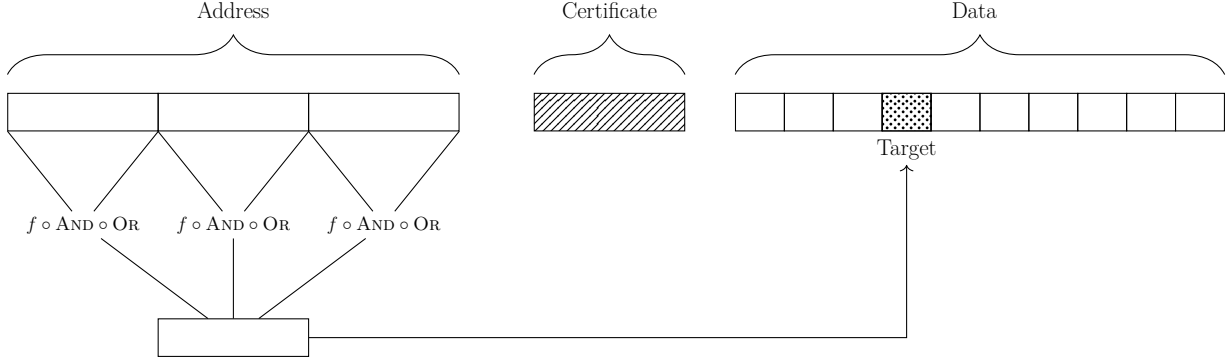
\begin{figure}[H]
\begin{center}
\begin{tikzpicture}[scale=0.5, every node/.style={transform shape}]
\tikzstyle{block} = [rectangle, draw, minimum width = 4cm, minimum height = 1cm]
\tikzstyle{data} = [rectangle, draw, minimum width = 4cm, minimum height = 1cm]
\tikzstyle{small_block} = [rectangle, draw, minimum width = 1.3cm, minimum height = 1cm]
\tikzstyle{patterned_data} = [rectangle, draw, minimum width = 4cm, minimum height = 1cm, pattern = north east lines]
\tikzstyle{patterned_small_block} = [rectangle, draw, minimum width = 1.3cm, minimum height = 1cm, pattern = crosshatch dots]

\foreach \n in {0, 4, 8}{
    \node[block] (a\n) at (\n, 0) {};
    \node (bkk\n) at (\n, -3) {\Large \(f \circ \andor\)};
    \draw (a\n.south east) -- (bkk\n);
    \draw (a\n.south west) -- (bkk\n);
}

\foreach \n in {4}{
    \node [block] (simons\n) at (\n, -6) {};

    \pgfmathsetmacro{\previous}{int(\n-4)}
    \pgfmathsetmacro{\next}{int(\n+4)}
    
    \draw (bkk\previous) -- (simons\n);
    \draw (bkk\n) -- (simons\n);
    \draw (bkk\next) -- (simons\n);
}

\node (below_brace) at (18.3, -9.3){};

\foreach \n in {0, 1, ..., 9}{
    \node[small_block] (data\n) at (15+3 +1.3*\n, 0){};
}
\node[patterned_data] (pat_data) at (11+3, 0) {};

\node[patterned_small_block] (pat_block) at (18.9+3, 0) {};
\node (target) at (18.9+3, -1) { \LARGE Target};

\draw [->] (simons4) -- (18.9+3, -6) -- (target);

\draw [
    decoration={
        brace,
        raise=0.5cm,
        amplitude=15pt
    },
    decorate,
] (a0.west) -- (a8.east);

\draw [
    decoration={
        brace,
        raise=0.5cm,
        amplitude=15pt
    },
    decorate,
] (pat_data.west) -- (pat_data.east);

\draw [
    decoration={
        brace,
        raise=0.5cm,
        amplitude=15pt
    },
    decorate,
] (data0.west) -- (data9.east);

\node (data_tag) at (15+ 3 + 4*1.30 + 0.5, 2.7) {\LARGE Data};
\node (cert_tag) at (11 + 3 , 2.7) {\LARGE Certificate};
\node (add_tag) at (4 , 2.7) {\LARGE Address};

\end{tikzpicture}
\end{center}
\caption{A framework that converts a partial function separation with one layer of adaptivity to a total function with two layers of adaptivity.}
\label{fig:TwoAdaptive}
\end{figure}

As we noted earlier (see \Cref{thm:informalTotalAdaptiveSep}), the cheat sheet framework can be employed to show a polynomial separation between \(\Quant^{3 \perp}\) and \(\Rand^{3 \perp}\). We modify this framework to show a polynomial separation between \(\Quant^{2 \perp}\) and \(\Rand^{2 \perp}\). As shown in \Cref{fig:TwoAdaptive}, our strategy is essentially to separate the cheatsheet into two parts: the first part contains the certificate, and the second just contains a long data string with one relevant index. Let $f$ be a partial function with a polynomial separation between \(\Quant^{1 \perp}(f)\) and \(\Rand^{1 \perp}(f)\). Informally, embedding $f$ in our framework results in a function that requires (1) verifying the input to the $f$ is in the domain (using the certificate) and if so, (2) outputting the bit of the data string present at the address obtained by solving the $f$.

For the formal problem description, see \Cref{prob:twoAdaptiveFunction}.

The quantum algorithm easily succeeds as follows. In the first round, it can read the certificate as well as solve $f$ to obtain the address. In the second round, it verifies the certificate and queries the bit at the right address in the data string, hence is able to output the result. The randomized algorithm is unable to succeed in the same amount of parallelism because after the first round, it cannot compute $f$, so it would not find the right address by the first round, and can only make a query to the right address by the second round if it guesses the location correctly, which happens with very low probability. We also note that there is a caveat: the structure of a zero-certificate might be easy to distinguish from the structure of a one-certificate, allowing the algorithm to infer the inputs to the partial function from just the structure of the certificates. In that case, the randomized algorithm can use this to compute the partial function in the first round itself. To prevent this from happening, we use ``bi-certificates" instead of certificates, where a ``bi-certificate" corresponds to a set of indices that could certify both a zero and a one instance. This prevents the randomized algorithm from learning any information just by knowing the certificate.

Therefore, \Cref{prob:twoAdaptiveFunction} presents a framework to lift a partial function \(\Quant^{1 \perp}\) vs \(\Rand^{1 \perp}\) separation to a total function \(\Quant^{2 \perp} \) vs \(\Rand^{2 \perp}\) separation (see \Cref{thm:twoAdaptiveRvsQ}). 
The analogous result holds for randomized vs deterministic algorithms (see \Cref{thm:twoAdaptiveDvsR}).

\addtocontents{toc}{\protect\setcounter{tocdepth}{0}} 
\subsection{Quantum parallel lower bound framework and applications}
\addtocontents{toc}{\protect\setcounter{tocdepth}{2}} 
\label{sec:TechnicalOverviewLowerBounds}

There are few known techniques for lower bounding parallel quantum query complexity. Consider for instance lower bounding the quantum parallel query complexity for the balanced \(\andor\) function, 

which is an example of a read-once formula. It is well known that this function has sequential quantum query complexity $\Theta(\sqrt{N})$ \cite{Barnum04,Reichardt11}. Since both $\Quant^{p \parallel}(\textsc{And})$ and $\Quant^{p \parallel}(\textsc{Or})$ are \(\Omega \left (\sqrt{N/p}\right ) \), a natural guess for $\Quant^{p \parallel}(\andor)$ is $\Omega\left(\sqrt{N/p}\right)$. 

The bound $\Quant(\andor) = \Omega(\sqrt{N})$ was shown using the (sequential) combinatorial adversary method. However, as mentioned in \Cref{sec:IntroLowerbound}, the corresponding parallel version fails to show a lower bound better than $\Omega \left (\sqrt { \ceil{\frac {\cert_0(f)}{p}} \ceil{\frac {\cert_1(f)}{p}}} \right )$ for any function $f$ (see \Cref{thm:parallel_comb_adv_barrier}). This means that the best lower bound this method could help prove for \(\andor\) is $\Omega(\sqrt{N}/p)$, since $\cert_0(\andor) = \cert_1(\andor) = \sqrt{N}$.

We use the parallel spectral adversary method of \cite{Jeffery17} (see \Cref{theorem:parAdv}), which is known to be optimal, to derive a method that is easier to apply and is sometimes stronger than the combinatorial adversary method (see \Cref{theorem:informalNNLowerBound}). For the case of read-once formulas, we use the adversary sets of \cite{Barnum04} to construct an adversary matrix $\Gamma$. It is easy to lower bound $\norm{\Gamma}$ by a simple counting argument, but upper bounding $\norm{\Gamma_S}$ for all $S \subseteq [N]$ with $|S| = p$ can be challenging. We show that if \(\Gamma\) satisfies the property that $\Gamma[x,y] = 0$ for all $x, y$ that differs in more than 1 bit, then all the induced \(\Gamma_S\) can be rearranged to form block-diagonal matrices with blocks of size \(2^p \times 2^p\). Moreover, each of these blocks are adversary matrices for some restricted function \(g \in \mathcal{F}_{p-\text{res}}^{(f)}\) (\Cref{fig:lower-bound-intuition}), where a restricted version of $f$ is one where all but $p$ input bits are fixed and known by the algorithm. Thus, we know that the spectral norm of these blocks is upper bounded by \(\Quant(g)\) (up to the normalizing factor of $\max_{i \in [2^p]}\norm{\Gamma_i}$, and with a max taken over all $g \in \mathcal{F}_{p-\text{res}}^{(f)}$), which we can use to upper bound the spectral norm of any \(\Gamma_S\). In our case, noting that \(g\) is a read-once formula of size $p$, we have \(Q(g) = O(\sqrt p)\) and we obtain the desired lower bound of \(\Omega \left ( \sqrt{N/p}\right )\). Hence, we are able to reduce the task of finding parallel lower bounds to the potentially easier task of finding sequential upper bounds.

\begin{figure}[H]
    \centering
    \begin{tikzpicture}[scale=0.5, every node/.style={transform shape}]
    \tikzstyle{GammaF} = [rectangle, draw, minimum width = 15cm, minimum height = 15cm]
    \tikzstyle{GammaG} = [rectangle, draw, minimum width = 3cm, minimum height = 3cm, pattern = north east lines]
    
    \node (label) at (7.5, 16) {\huge \(\Gamma(f)\)};
    \node [GammaF] (GammaF) at (7.5, 7.5){};
    \node [GammaG] (GammaG0) at (1.5+0*3, 1.5+4*3) {};
    \node [GammaG] (GammaG1) at (1.5+1*3, 1.5+3*3) {};
    \node [GammaG] (GammaG2) at (1.5+2*3, 1.5+2*3) {};
    \node [GammaG]  (GammaG3) at (1.5 + 3*3, 1.5+1*3) {};
    \node [GammaG] (GammaGn) at (1.5+4*3, 1.5+0*3) {};
    \node[fill = white] at (GammaG0.center) {\huge \(\Gamma(g_0)\)};
    \node[fill = white] at (GammaG1.center) {\huge \(\Gamma(g_1)\)};
    \node[fill = white] at (GammaG2.center) {\huge \(\Gamma(g_2)\)};
    \node[fill = white] at (GammaG3.center) {\Huge \rotatebox{-25}{\scalebox{1.3}{\(\ddots\)}} };
    \node[fill = white] at (GammaGn.center) {\huge \(\Gamma(g_n)\)};
    
    \end{tikzpicture}
    \caption{Nearest neighbor adversaries can be block-diagonalized, where the blocks are adversary matrices for restricted versions of the function $f$.}
    \label{fig:lower-bound-intuition}
\end{figure}
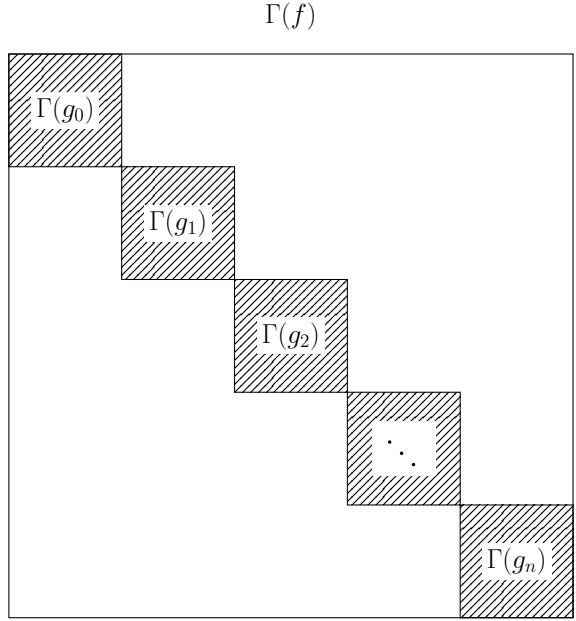

%%%%%%%%%%%%%%%%%%%%%%%%%%%%%%%%%%%%%%%%%%%%%%%%%%
\section{Preliminaries} \label{section:preliminaries}

In this section, we will review some of the notions related to quantum query complexity, adversarial methods and the cheat sheet framework of \cite{Aaronson16}.

\subsection{Notation}
For integers $N_1 < N_2$, we use $[N_1,N_2]$ to denote the set $\{N_1, N_1+1,\dots, N_2\}$. We also use $[N]$ for $[1,N]$ for simplicity. For a set $S \subseteq [N]$, we will use $S^\complement$ to denote the set $[N] \setminus S$. For strings $x,y$, we use $x \parallel y$ to denote concatenation of $x$ with $y$, $x[i]$ to denote  the character at the $i$th position in $x$ and $x[i:]$ to denote the substring of $x$ starting from the $i$th position. For functions $f, g$, we use $f \circ g$ to denote the composition of $f$ with $g$, $\Tilde{O}(f) = O(f \cdot \polylog(f))$, $\Tilde{\Omega}(f) = \Omega(f/\polylog(f))$ and $\Tilde{\Theta}(f) = g$ iff $g = \Tilde{O}(f)$ and $g = \Tilde{\Omega}(f)$. Note that our use of $\Tilde O$ and $\Tilde \Omega$ are such that logarithmic factors in all asymptotic variables are suppressed, unless otherwise stated. We use $\mathbb{I}[\text{prop}]$ to denote the indicator variable which is $1$ if prop is true, and $0$ otherwise.

\subsection{Boolean complexity measures and query complexity}

Let $f: \mathcal{D} \rightarrow \{0,1\}$ be a Boolean function with domain $\mathcal{D} \subset \{0,1\}^N$. We say that $f$ is total if $\mathcal{D} = \{0,1\}^N$; otherwise, we say it is partial. Sometimes, instead of considering a Boolean alphabet $\{0,1\}$, we will consider larger alphabets of size up to $N^{\poly \log N}$. However, such functions always correspond to functions with Boolean alphabet, up to logarithmic overhead factors, in the relevant complexity measures by a straightforward reduction. 

We recall the combinatorial notions of certificate complexity and block sensitivity for total Boolean functions. A certificate for $f$ on input $x \in \{0,1\}^n$ is a set $S \subseteq [n]$ such that $f(y) = f(x)$ for all $y \in \{0,1\}^n$ with $y_S = x_S$. Let $\cert_x(f)$ denote the size of the smallest certificate for $f$ on input $x$. The certificate complexity of $f$, denoted $\cert(f)$, is $\max_{x \in \{0,1\}^n} \cert_x(f)$.

For any $x \in \{0,1\}^n$ and $S \subseteq [n]$, let $x^{(S)}$ be the string $x$ with the characters at positions in $S$ flipped. For a given input $x \in \{0,1\}^n$ of $f$, call a subset of indices $S$ a sensitive block if $f(x) \neq f(x^{(S)})$. Let $\bs_x(f)$ be the maximum number of disjoint sensitive blocks. The block sensitivity of $f$, denoted $\bs(f)$, is $\max_{x \in \{0,1\}^n} \bs_x(f)$.

A $p$-parallel classical query to $x$ maps $p$ bit strings $i_1, i_2, \dots, i_p \in \{0,1\}^p$ to $x_{i_1}, x_{i_2}, \dots, x_{i_p}$. Similarly, a $p$-parallel quantum oracle $\oracle_x^{p\parallel}$ is defined as
\begin{align*}
    \oracle_x^{p\parallel} \ket{i_1, i_2, \dots, i_p} \ket{b_1, b_2, \dots, b_p} \rightarrow \ket{i_1, i_2, \dots, i_p} \ket{b_1 \oplus x_{i_1}, b_2 \oplus x_{i_2}, \dots, b_p \oplus x_{i_p}}
\end{align*}
We usually call a $1$-parallel (quantum) query as a sequential (quantum) query.

A $p$-parallel quantum algorithm is a sequence of unitary operations $U_i$ interleaved with $p$-parallel queries to an input oracle $\oracle_x^{p\parallel}$ with an unbounded number of ancilla qubits. Any such algorithm $\mathcal{A}$ computes $f$ if upon measuring the first qubit of the final state of $\mathcal{A}$, one receives $f(x)$ with probability at least 2/3 for all $x \in \{0,1\}^N$. The notions of $p$-parallel deterministic and randomized algorithms are defined similarly.

The $p$-parallel deterministic (respectively randomized, quantum) query complexity of a function $f$, denoted $\Det^{p \parallel}(f)$ (respectively $\Rand^{p \parallel}(f)$, $\Quant^{p \parallel}(f)$), is the minimum number of classical (respectively classical, quantum) queries a $p$-parallel algorithm makes to compute $f$. We will usually refer to $\Det^{1 \parallel}(f)$ (respectively $\Rand^{1 \parallel}(f)$, $\Quant^{1 \parallel}(f)$) as the sequential query complexity of $f$ and denote by $\Det(f)$ (respectively $\Rand(f)$, $\Quant(f)$). 

An orthogonal notion of complexity when queries are made in parallel is what we call $k$-adaptive query complexity. The $k$-adaptive deterministic (respectively randomized, quantum) query complexity for a function $f$, denoted $\Det^{k \perp}(f)$ (respectively $\Rand^{k \perp}(f)$, $\Quant^{k \perp}(f)$) is the minimum $p$ such that there is an algorithm $\mathcal{A}$ that computes $f$ using $k$ many $p$-parallel deterministic (respectively randomized, quantum) queries. We will usually refer to $\Det^{1 \perp}(f)$ (respectively $\Rand^{1 \perp}(f)$, $\Quant^{1 \perp}(f)$) as non-adpative query complexity.

We use \(\bs(f)\) to denote the block sensitivity of a function and \(bs^f(X)\) for the block sensitivity of an input \(X\) for the function \(f\) (see \cite{Buhrman2002Complexitymeasures} for formal definition). Then, \(\bs_0\) and \(\bs_1\)  refer to \(\max \{ \bs^f(X) \; | \; f(X) = 0\}\) and \(\max \{ \bs^f(X) \; | \; f(X) = 1\}\) respectively.

\subsection{Commonly used functions}
\label{sec:CommonlyUsedFunctions}
In this section, we recall some functions that we will use in our results. 

We begin with the $\andor$ function, which is a commonly used function due to its balanced $0$-certificate and $1$-certificate sizes (i.e. $\sqrt{N}$). 

\begin{prob}[\(\andor\)]
    \hspace{0.1in}
    \begin{problemTotal} \label{prob:andor}
        \probleminput{Oracles for $N$ strings $X^i \in \{0,1\}^{N}$ for $i \in [N]$.}
        \problemquestion{Decide if there exist an $i$ such that $X^i = 0^N$.}
    \end{problemTotal}
\end{prob}

We now define $\ksum$. We will define its specific Boolean version, which will be comparable to the definition of $\blockksum$ (see \Cref{prob:blockKsum}) from \cite{Aaronson16}. For the quantum lower bound of \cite{Belovs13,Aaronson16} to go through, we would need the alphabet size to be $\Omega(N^k)$. Therefore, in the definitions of $\ksum$ and $\blockksum$, we will choose the size of each block representing an integer to be $10k \log N$.

\begin{prob}[\(\ksum\)] Let $M = N^{10k} = \Omega(N^k)$ be the alphabet size.  
\hspace{0.1in}
\begin{problemPremise}\label{prob:ksum}
    
    \probleminput{An oracle for a string \(X \in \{0,1\}^N.\)}
    \problempremise{Divide \(X\) into blocks of size \(10k \log N\), where each block represent a number in \([M]\).}
    \problemquestion{Decide if there are \(k\) blocks whose corresponding numbers sum to \(0 \bmod M\). }
\end{problemPremise}
\end{prob}

Next, we define $\blockksum$. Unlike $\ksum$ where every block represents a number in the alphabet, only the balanced blocks represent numbers in $\blockksum$, while an additional check is necessary on the unbalanced blocks to compute the function.

\begin{prob}[\(\blockksum\) \cite{Aaronson16}] We will have $k = \log N$ unless otherwise mentioned. Let $M = N^{10k} = \Omega(N^k)$ be the alphabet size. 
\hspace{0.1in}
\begin{problemPremise} \label{prob:blockKsum}
   
    \probleminput{An oracle for a string \(X \in \{0,1\}^N.\)}
    \problempremise{Divide \(X\) into blocks of size \(10k \log N\). A block is balanced if it has an equal number of $0$s and $1$s. Each balanced block represents a number in \([M]\).}
    \problemquestion{Decide if there are \(k\) balanced blocks whose corresponding numbers sum to \(0 \bmod M\), and all other blocks have at least as many $1$s as $0$s.}
\end{problemPremise}
\end{prob}

\cite{Aaronson16} constructed \(\bkk\) to show a maximal (quadratic) separation between quantum query complexity and certificate complexity. We state these results in a theorem below for easy referencing. 

\begin{prob}[\((\bkk)\) \cite{Aaronson16}]
\hspace{0.1in}
\begin{problemTotal} \label{prob:bkk}
    
    \probleminput{An oracle for a string \(X \in \{0,1\}^{N^2}\).}
    \problemquestion{Compute \(\bkkfull(X)\).}
   
\end{problemTotal}
\end{prob}

\begin{theorem}[\cite{Aaronson16},Section 4]
\label{thm:bkk}
    $\Quant(\bkk_{N^2}) = \tilde{\Omega}(N^2)$ and $\cert(\bkk_{N^2}) = \tilde O(N)$.
\end{theorem}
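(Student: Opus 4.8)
The statement to prove is Theorem~\ref{thm:bkk}, namely that $\Quant(\bkk_{N^2}) = \tilde\Omega(N^2)$ and $\cert(\bkk_{N^2}) = \tilde O(N)$, for the function $\bkkfull$ defined as the block-composition of $\blockksum_N$ with $\ksum_N$. The plan is to treat the certificate bound and the quantum lower bound separately, since they rely on quite different ideas: the certificate bound is a direct combinatorial count on the composed structure, while the lower bound is an adversary/composition argument built on the known lower bounds for $\ksum$ and $\blockksum$.

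\emph{Certificate upper bound.} First I would exhibit, for any input $X \in \{0,1\}^{N^2}$, a certificate of size $\tilde O(N)$. The key point is that both $\blockksum_N$ and $\ksum_N$ with $k = \log N$ have certificate complexity $\tilde O(\sqrt N)$: a YES-instance of $\ksum_N$ is certified by the $k$ blocks summing to $0 \bmod M$ (that is $k \cdot 10k\log N = \polylog(N) \cdot \sqrt{\text{blocklength}}$ bits, hence $\tilde O(\sqrt N)$ after accounting that there are $N/(10k\log N)$ blocks), and a NO-instance can be certified by a short certificate as well using the structure of the problem (for $\blockksum$ one additionally needs to certify the unbalanced-block condition, but that too only requires $\tilde O(\sqrt N)$ bits since each block has length $\tilde O(1)$ relative to the number of blocks; this is exactly the content of \cite{Aaronson16}, Section 4). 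Composing, a certificate for $\bkkfull$ consists of a certificate for the outer $\blockksum_N$ — which names $\tilde O(\sqrt N)$ of the outer bits — together with, for each of those outer bits, a certificate for the corresponding inner $\ksum_N$ instance, each of size $\tilde O(\sqrt N)$. Multiplying, $\cert(\bkk_{N^2}) = \tilde O(\sqrt N) \cdot \tilde O(\sqrt N) = \tilde O(N)$. I would be careful to check that certificates compose in the natural way (the standard fact $\cert(f \circ g) \le \cert(f)\cdot\cert(g)$) and that this holds for partial-ish/promise structure here.

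\emph{Quantum lower bound.} For $\Quant(\bkk_{N^2}) = \tilde\Omega(N^2)$, the natural route is a composition theorem for the (negative-weight / general) adversary bound, $\Adv^\pm(f\circ g) = \Omega(\Adv^\pm(f)\cdot\Adv^\pm(g))$, combined with the known bounds $\Quant(\ksum_N) = \tilde\Omega(N^{k/(k+1)}) = \tilde\Omega(N)$ for $k=\log N$ (from \cite{Belovs13}, which is why the problem statement fixes the alphabet size to $\Omega(N^k)$) and $\Quant(\blockksum_N) = \tilde\Omega(N)$ (also from \cite{Aaronson16,Belovs13}). Since $\Quant = \Theta(\Adv^\pm)$, this gives $\Quant(\bkkfull) = \tilde\Omega(N)\cdot\tilde\Omega(N) = \tilde\Omega(N^2)$. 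I would need to confirm that the composition theorem applies — it does for total functions, and one must handle the fact that $\blockksum$ and $\ksum$ have a promise structure, so I would invoke the version of the adversary composition that tolerates this, or alternatively just cite the explicit adversary matrix constructed in \cite{Belovs13} for $\ksum$ and lift it through the $\blockksum$ outer layer exactly as in \cite{Aaronson16}.

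\emph{Main obstacle.} The delicate part is not the composition itself but making sure the alphabet-size bookkeeping is consistent: the lower bound for $\ksum_N$ over an alphabet of size $M$ only holds when $M = \Omega(N^k)$, which forces block length $\Theta(k\log N) = \Theta(\log^2 N)$ with $k=\log N$, and one must verify that with this block length the Boolean input length is still $\Theta(N^2)$ (so that the $\tilde\Omega(N^2)$ bound is meaningful relative to the input size) and that the $\polylog$ factors absorbed into $\tilde\Omega$ and $\tilde O$ don't collide in a way that destroys the separation $\cert = \tilde O(N) \ll \tilde\Omega(N^2) = \Quant$. Since this is precisely the computation carried out in \cite{Aaronson16}, Section 4, the cleanest path is to set up the parameters to match theirs and cite their analysis for the lower bound, supplying only the straightforward certificate-composition argument in full.
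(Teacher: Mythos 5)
First, a point of reference: the paper does not prove \Cref{thm:bkk} at all — it is imported verbatim from \cite{Aaronson16}, Section 4 — so your proposal can only be compared against that source. Your quantum lower bound half is essentially the right reconstruction: both $\ksum_N$ and $\blockksum_N$ are total Boolean functions, so the adversary composition theorem applies without any promise caveats, and the Belovs--{\v S}palek bound $\tilde\Omega(n^{k/(k+1)})=\tilde\Omega(N)$ at $k=\log N$ (together with the reduction of $\ksum$ to $\blockksum$) gives $\Quant(\bkk_{N^2})=\tilde\Omega(N^2)$. Your alphabet-size bookkeeping concern is legitimate but resolves exactly as you suspect.

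The certificate half, however, has a genuine gap. Your premise that both $\ksum_N$ and $\blockksum_N$ have certificate complexity $\tilde O(\sqrt N)$ is false, so the product bound $\cert(f\circ g)\le\cert(f)\cdot\cert(g)$ cannot yield $\tilde O(N)$. Concretely, $\cert_0(\ksum_N)=\Theta(N)$: to certify that no $k$ blocks sum to $0\bmod M$ one must pin down every block, since any block left unrevealed can be set by an adversary to complete some $(k-1)$-tuple to a zero sum (there is no ``short NO certificate from the structure of the problem''). Likewise $\cert_1(\blockksum_N)=\Theta(N)$, since a $1$-certificate must exhibit a majority of $1$s inside every non-designated block. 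With the correct values the naive product gives only $\tilde O(N)\cdot\tilde O(N)=\tilde O(N^2)$, which is vacuous. The actual mechanism in \cite{Aaronson16} is the \emph{asymmetry} $\cert_1(\ksum_N)=\polylog(N)$ versus $\cert_0(\ksum_N)=\Theta(N)$, combined with the fact that $\blockksum$ is engineered precisely so that each of its certificates reveals only $\polylog(N)$ outer positions whose value is $0$ (those lying in the $O(k)$ fully revealed balanced or $0$-heavy blocks) while possibly $\Theta(N)$ positions whose value is $1$. One then applies the value-sensitive composition bound
\begin{equation*}
\cert(f\circ g)\;\le\;\max_{\text{outer certificates}}\bigl(n_0\cdot\cert_0(g)+n_1\cdot\cert_1(g)\bigr),
\end{equation*}
where $n_b$ counts the $b$-valued positions named by the outer certificate, obtaining $\polylog(N)\cdot N+O(N)\cdot\polylog(N)=\tilde O(N)$. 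Without this refinement — which is the entire reason the balanced-block device exists — the certificate bound does not go through, so this step of your proposal would fail as written.
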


Deutsch and Jozsa \cite{Deutsch92} found a function $f:\{0,1\}^N\rightarrow\{0,1\}$ with $\Quant(f) = O(1), \Rand(f) = O(1)$ and $\Det(f) = \Theta(N)$. We modify this function slightly to have $\Quant(f) = \Rand(f) = 1$ but still have $\Det(f) = \Theta(N)$. 
\begin{prob}[\textsc{Deutsch Jozsa modified}]
\hspace{0.1in}
\begin{problem} \label{prob:djvar}
    \probleminput{An oracle for a string $X \in \{0,1\}^N$.}
    \problempromise{Either $X_i = 0$ for all $i \in \{0,1\}^n$ or $|X| = N/2$ ( i.e. $X$ is balanced between $1$ and $0$)}
    \problemquestion{Decide which is the case.}
\end{problem}
\end{prob}

The $1$-query randomized algorithm queries a random $i$, outputs $0$ with probability $2/3$ if $X_i = 0$ and outputs $1$ otherwise. We will sometimes refer to the \textsc{Deutsch Jozsa modified} function as \textsf{dj}.

The Forrelation problem, introduced by Aaronson \cite{Aaronson10}, can be solved with bounded error using $1$ quantum query yet requires $\Tilde\Omega(\sqrt{N})$ classical randomized queries to be solved with bounded error \cite{Aaronson15}.

\begin{prob}[\textsc{Forrelation}] Let $N=2^n$ and $X,Y: \{0,1\}^n \rightarrow \{-1,1\}$ be Boolean functions. Let $$\Phi_{X,Y} = \frac{1}{2^{3N/2}} \sum_{i,j \in \{0,1\}^n} X(i)(-1)^{i \cdot j} Y(j)$$ denote the ``Forrelation'' (\textit{fourier correlation}) of $X$ and $Y$.
\hspace{0.1in}

\begin{problem} \label{prob:Forrelation}
    \probleminput{An oracle for functions $X,Y : \{0,1\}^n \rightarrow \{-1,1\}$}
    \problempromise{Either $|\Phi_{X,Y}| \leq \frac{1}{100}$ or $\Phi_{X,Y} \geq \frac{3}{5}$.}
    \problemquestion{Decide which is the case.}
\end{problem}
\end{prob}

We will sometimes refer to the \textsc{Forrelation} function as \textsf{for}.

The pointer chasing problem is a canonical example of a problem which does not benefit from parallel queries. It has been studied in classical communication complexity and cryptography \cite{ponzio01pointers, cohen18posw}, and in similar contexts quantumly \cite{Klauck01,blocki21posw, Chung21}.  The input consists of \(N = 2^n\) blocks of size \(n\) each. Each block contains a pointer to a block. \(X(i)\) refers to the block pointed to by block \(i\), and \(X^k(i)\) refers to the block arrived at after following the pointer \(k\) times. Formally, the pointer chasing function is: 
\begin{prob}[\textsc{Pointer Chasing}] 
Let $N = 2^n$ be the instance size and $k$ be the chain length.
\begin{problemTotal}
    \label{prob:pointer-chasing}
    \probleminput{An oracle for $X:\{0,1\}^n \rightarrow \{0,1\}^n$ (represented by a string of length $nN$)}
    \problemquestion{Decide the last bit of $X^k(0^n)$.}
\end{problemTotal}
\end{prob}

We will need the following results about the \textsc{Pointer Chasing} function, which we prove in \Cref{sec:AppendixPointer-chasing-proofs} for completeness. We will sometimes use the notation $\textsc{pointer}_{N,k}$ to refer to the \textsc{Pointer Chasing} function with instance size $N$ and chain length $k$.

\begin{theorem} \label{thm:pointer_chasing}
    Let $f$ be $\textsc{pointer}_{N,k}$ for some integers $N,k$ and let $p \in [N]$. Then, 
    \begin{enumerate}[label=(\roman*)]
        \item $\emph{\Det}^{p \parallel}(f) = \tilde O(\min(k, N/p))$ (so $\emph{\Rand}^{p \parallel}(f) = \tilde O(\min(k, N/p))$ and $\emph{\Quant}^{p \parallel}(f) =\tilde O(\min(k, N/p))$),
        \item $\emph{\Det}^{p \parallel}(f) = \Omega(\min(k, N/p))$, 
        \inlineitem $\emph{\Rand}^{p \parallel}(f) = \Omega(\min(k, N/pk))$,
        \inlineitem $\emph{\Quant}(f) = \Omega(k)$.
    \end{enumerate}
\end{theorem}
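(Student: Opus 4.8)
The plan is to establish the four bounds on $\textsc{pointer}_{N,k}$ essentially independently, treating the two parts (upper and lower, classical and quantum) separately.

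\textbf{Upper bound (i).} First I would give a $p$-parallel deterministic algorithm that matches $\tilde O(\min(k, N/p))$. There are two regimes. If $p \le N/k$, simply chase the pointer sequentially: in round $t$ we have computed $X^{t-1}(0^n)$, so we query the $n$ bits of that block (using $n \le \tilde O(1)$ of the $p$ available parallel queries, assuming $p \ge n$; if $p < n$ we spend $\lceil n/p\rceil$ rounds per step, absorbed in the $\tilde O$) to learn $X^t(0^n)$. After $k$ rounds we know $X^k(0^n)$ and output its last bit, giving $\tilde O(k)$ rounds. If instead $p > N/k$, I would instead read large portions of the input per round: in one $p$-parallel query we can read $\lfloor p/n \rfloor \ge N/(nk)$ whole blocks, so in $\tilde O(N/p)$ rounds we can read the \emph{entire} input $X$ and then compute $X^k(0^n)$ with no further queries. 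Combining the two regimes gives $\tilde O(\min(k,N/p))$, and since a $p$-parallel deterministic algorithm is in particular randomized and quantum, the parenthetical bounds follow.

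\textbf{Deterministic lower bound (ii).} For $\Det^{p\parallel}(f) = \Omega(\min(k,N/p))$ I would use an adversary/adaptive-argument: an adversary answers queries consistently while keeping the chain $0^n \to X(0^n) \to \cdots$ ``undetermined'' as long as possible. In each of the first $r$ rounds the algorithm names $p$ blocks; the adversary maintains a partial function on the blocks so far queried, choosing pointer values for newly queried blocks to avoid creating the prefix of the chain of length $k$ whenever possible — concretely, route newly revealed pointers into blocks that have not yet been queried (possible as long as fewer than $N$ blocks have been touched), so after $r$ rounds at most $rp$ blocks are fixed and the algorithm has learned at most $r$ links of the chain of length $k$. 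As long as $r < \min(k, N/p)$ (up to constants and logs), the last bit of $X^k(0^n)$ is not determined, so the adversary can still force either output. This yields the $\Omega(\min(k,N/p))$ bound. The randomized lower bound (iii), $\Omega(\min(k, N/pk))$, follows from a distributional version: put the uniform distribution on random pointer functions; with $p$ queries per round, after $r$ rounds the algorithm has seen $rp$ random block values, and by a union/birthday-type argument it has ``hit'' a block on the true chain only if one of these $rp$ blocks coincides with one of the (at most $k$) chain blocks, which for random data happens with probability $O(rpk/N)$ per round unless the chain is explicitly traced — and tracing the chain costs one round per link, i.e. $k$ rounds; balancing gives $r = \Omega(\min(k, N/pk))$. (I expect this is exactly the argument deferred to \Cref{sec:AppendixPointer-chasing-proofs}; I would set up the two failure events — ``guessed a chain block early'' vs. ``traced the chain'' — and bound each.)

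\textbf{Quantum lower bound (iv).} For $\Quant(f) = \Omega(k)$ (sequential, so $p=1$) I would invoke a hybrid argument in the style of the pointer-chasing lower bounds of \cite{Klauck01} (or the recursive composition with a single-query-hard inner problem): on a random instance the chain visits $k$ ``fresh'' blocks, and a quantum algorithm making $o(k)$ queries, by the hybrid/BBBV argument, has $o(1)$ total query weight on the $j$-th chain block for the critical $j$ near $k$, hence cannot distinguish the two values of its last bit. Concretely one shows that learning bit $j+1$ of the chain requires having queried block $X^j(0^n)$, and since that block's identity is uniformly random and independent of everything queried before block $X^{j-1}(0^n)$ was read, a standard hybrid argument forces $\Omega(1)$ queries ``per level,'' i.e. $\Omega(k)$ total.

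\textbf{Main obstacle.} The routine parts are the upper bound and the deterministic lower bound (clean adversary argument). The genuinely delicate step is the randomized lower bound (iii) with its $\min(k, N/pk)$ form: one must carefully handle the interaction between \emph{guessing} a block on the true chain (cheap per round but low success probability $\sim pk/N$) and \emph{tracing} the chain honestly (needs $k$ rounds), and argue no hybrid strategy beats the minimum — this is where the extra factor of $k$ in the denominator comes from and where a union bound over rounds must be done with care. The quantum bound (iv) is standard given the pointer-chasing literature but still requires correctly formalizing the ``independence across levels'' that powers the hybrid argument.
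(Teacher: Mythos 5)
Your parts (i), (ii), and (iii) follow essentially the same route as the paper. The upper bound is the same two-regime algorithm (chase the chain for $k$ rounds, or read the whole input in $N/p$ rounds); the deterministic lower bound is the same adversary that routes each newly revealed pointer into an untouched block so that only one link of the chain is learned per round; and the randomized lower bound is the same Yao-style argument on random inputs, bounding the probability that a round of $p$ queries collides with the $\le k$ unrevealed chain blocks by $O(pk/N)$ and union-bounding over $k-1$ rounds to get $O(pk^{2}/N)$, which forces $pk=\Omega(N/k)$ total queries and hence $\Omega(\min(k,N/pk))$ rounds. Your "guess vs.\ trace" framing is exactly the dichotomy the paper's invariant encodes.

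Part (iv) is where you diverge, and it is the one place your sketch has a real gap. The paper does \emph{not} use a hybrid argument: it gives a two-line reduction from \textsc{Parity} on $k/2$ bits, encoding each parity bit $X_i$ as a crossing or non-crossing pair of pointers between consecutive levels $2i,2i+1 \to 2i+2,2i+3$, so that the parity of $X$ equals the last bit of $f^{k/2}(0)$; the $\Omega(k)$ bound then follows immediately from the known $\Omega(k)$ quantum lower bound for parity \cite{Beals98}. Your proposed BBBV/hybrid route — "$o(1)$ query weight on the $j$-th chain block, hence $\Omega(1)$ queries per level" — is not a standard consequence of the hybrid method: the identity of the $j$-th chain block is input-dependent and in superposition, and making the "independence across levels" rigorous is precisely the technical content of the iterated-hashing / proof-of-sequential-work lower bounds (Chung et al., Blocki et al.), which require a careful multi-hybrid or recording-query analysis, not a single application of BBBV. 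As written, that step would not go through; you should either import one of those theorems explicitly or, much more simply, use the parity reduction.
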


\subsection{Adversary method for parallel algorithms}\label{subsection:adv_method_parallel}
The (negative weighted) adversary quantity \cite{Hoyer07} is known to characterize (bounded-error) quantum sequential query complexity \cite{Reichardt11b}. This result was generalized to $p$-parallel quantum queries by \cite{Jeffery17}. For any function $f:\mathcal{D} \rightarrow \mathcal{M}$, a matrix $\Gamma \in \mathbb{R}^{|\mathcal{D}| \times |\mathcal{D}|}$ is an adversary matrix for $f$ if $\Gamma$ is symmetric and for any $x,y$ if $f(x) = f(y)$, then $(\Gamma)_{xy} = 0$. Alternatively, we say that any $\Gamma \in \mathbb{R}^{X \times Y}$ is an adversary matrix for $f$ where $X = f^{-1}(0)$ and $Y = f^{-1}(1)$. For any adversary matrix $\Gamma$ and any set $S \subset [N]$, let $\Gamma_S$ be defined as  
\begin{align*}
    (\Gamma_S)_{xy} \defeq \begin{cases}
        (\Gamma)_{xy} & \text{ if } x_S \neq y_S \\
        0 & \text{otherwise}
    \end{cases}
\end{align*}
for all row indices $x$ and column indices $y$ of $\Gamma$.

The $p$-parallel adversary quantity for $f$ is defined as
\begin{align}
    \Adv^{p \parallel}(f) \defeq \max_\Gamma \frac{\norm{\Gamma}}{\max_{S \subset [N], |S| = p}\norm{\Gamma_S} } \label{equation:parallelSpectralAdversary}
\end{align}
where the maximization is over all adversary matrices for $f$.

We state the correspondence between the $p$-parallel quantum query complexity and the $p$-parallel adversary quantity for later reference.
\begin{theorem} [Parallel quantum query complexity characterization \cite{Jeffery17}]
    For any function $f:\{0,1\}^N \rightarrow \mathcal{M}$, $\emph{\Quant}^{p\parallel}(f) = \Theta(\emph{\Adv}^{p\parallel}(f))$.
    \label{theorem:parAdv}
\end{theorem}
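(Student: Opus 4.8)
The plan is to derive the parallel characterization from the known sequential one, $\Quant(g)=\Theta(\Adv^{\pm}(g))$ (the negative-weight adversary characterization \cite{Reichardt11b}), by ``bundling'' each block of $p$ non-adaptive queries into a single query to a derived oracle. Fix $f:\mathcal D\to\mathcal M$ with $\mathcal D\subseteq\{0,1\}^N$ and $p\le N$. I would introduce a function $g$ whose input coordinates are indexed by the size-$p$ subsets $S\in\binom{[N]}{p}$: for $x\in\mathcal D$ let $\mathrm{enc}(x)$ be the string over alphabet $\{0,1\}^p$ with $\mathrm{enc}(x)_S=(x_i)_{i\in S}$, let $g$ have domain $\{\mathrm{enc}(x):x\in\mathcal D\}$, and set $g(\mathrm{enc}(x)):=f(x)$; this is well defined since $x\mapsto\mathrm{enc}(x)$ is injective, and $g$ is a legitimate (partial) function over a finite alphabet, so the sequential theorem applies to it. The two things to prove are then $\Quant^{p\parallel}(f)=\Theta(\Quant(g))$ and $\Adv^{p\parallel}(f)=\Adv^{\pm}(g)$; chaining these with the sequential characterization yields the claim.

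For $\Quant^{p\parallel}(f)=\Theta(\Quant(g))$, one inclusion is immediate: a single query to coordinate $S$ of $\mathrm{enc}(x)$ returns $(x_i)_{i\in S}$, which is literally one $p$-parallel query to $x$ on the indices of $S$ (fix any ordering of $S$), so any $T$-query algorithm for $g$ is a $T$-round $p$-parallel algorithm for $f$ with the same success probability, giving $\Quant^{p\parallel}(f)\le\Quant(g)$. Conversely, to simulate one $p$-parallel query $\oracle_x^{p\parallel}$ on $\ket{i_1,\dots,i_p}\ket{b}$ using $g$'s oracle for $\mathrm{enc}(x)$: coherently compute into an ancilla a size-$p$ set $S\supseteq\{i_1,\dots,i_p\}$ (padding with fixed indices if the $i_j$ are not distinct), query $g$'s oracle once to obtain $\mathrm{enc}(x)_S$, XOR the requested bits $x_{i_1},\dots,x_{i_p}$ out of it into $\ket{b}$, then undo $\mathrm{enc}(x)_S$ with a second oracle call and undo the ancilla by reversible arithmetic. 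This is $O(1)$ queries per round, so $\Quant(g)=O(\Quant^{p\parallel}(f))$.

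For $\Adv^{p\parallel}(f)=\Adv^{\pm}(g)$, the bijection $x\leftrightarrow\mathrm{enc}(x)$ between $\mathcal D$ and $\mathrm{dom}(g)$ intertwines function values, so a symmetric matrix $\Gamma$ is an adversary matrix for $f$ exactly when the correspondingly-indexed $\Gamma'$ is an adversary matrix for $g$, with $\norm{\Gamma'}=\norm{\Gamma}$. For a coordinate $S\in\binom{[N]}{p}$ of $g$, the entry $\Gamma'_{\mathrm{enc}(x),\mathrm{enc}(w)}$ is zeroed precisely when $\mathrm{enc}(x)_S=\mathrm{enc}(w)_S$, i.e.\ when $x_S=w_S$, which is exactly the condition defining $\Gamma_S$ in \eqref{equation:parallelSpectralAdversary}; hence $\Gamma'_S=\Gamma_S$. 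Since the coordinates of $g$ are precisely the $p$-subsets of $[N]$, $\max_S\norm{\Gamma'_S}=\max_{|S|=p}\norm{\Gamma_S}$, so the defining ratios agree and $\Adv^{\pm}(g)=\Adv^{p\parallel}(f)$. Combining everything, $\Quant^{p\parallel}(f)=\Theta(\Quant(g))=\Theta(\Adv^{\pm}(g))=\Theta(\Adv^{p\parallel}(f))$.

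I expect the main obstacle to be the clean simulation in $\Quant(g)=O(\Quant^{p\parallel}(f))$: a coherent $p$-parallel query may address a superposition of $p$-tuples, with repeated indices and in arbitrary order, whereas $g$'s coordinates are unordered repetition-free $p$-sets, so one must verify that the padding-to-a-$p$-set step together with uncompute-after-query bookkeeping really leaves no garbage entangled with the work register and costs only a constant query overhead. (If one preferred to avoid the reduction, the alternative is to redo both halves of the sequential proof directly: for the lower bound, run the adversary progress argument of \cite{Hoyer07} and show a single $p$-parallel query changes the progress measure by at most $O(\max_{|S|=p}\norm{\Gamma_S})$ even when the queried set $S$ is in superposition, which requires block-decomposing the query operator according to $S$ and is the crux there; for the upper bound, adapt the dual-adversary/span-program algorithm of \cite{Reichardt11b} so that it issues $p$ queries in each round.)
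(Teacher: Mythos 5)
This theorem is not proved in the paper—it is imported directly from \cite{Jeffery17}—and your reduction (bundling each $p$-parallel query into one query to a string indexed by $p$-subsets, then invoking the sequential negative-weight adversary characterization and checking that $\Gamma'_S=\Gamma_S$ under the encoding) is precisely the argument used there. Your proof is correct, including the two-query compute/uncompute simulation handling repeated indices, so there is nothing to add.
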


The positive weighted adversary method was originally formulated by \cite{Ambainis02} combinatorially, which was later generalized to allow for negative weights \cite{Hoyer07}. The latter version characterizes bounded error quantum query complexity up to constant factors \cite{Reichardt11b}, and both can be generalized to $p$-parallel quantum queries (where once again the negative weighted version is tight up to constants). We note that the combinatorial version is not generally tight (and is known to be not tight in the sequential case \cite{Zhang04}), but can be easier to work with.

\begin{theorem} [Parallel combinatorial adversary method \cite{Grover04,Burchard19}] 
\label{thm:parallel_comb_adv}
Let $f:\{0,1\}^N \rightarrow \mathcal{M}$ be any function and \(S\) be a subset of \([N]\) such that \(|S| \leq p\). For any \(X\) and \(Y\) be such that \(f(x) \neq f(y) \) for all \( x \in X, y \in Y\) and any relation \(R \subseteq X \times Y\), define 
\begin{equation*}
w(x, y) \defeq \begin{cases}
1 & \text{ if } (x, y) \in R \\
0 &\text{ otherwise }
\end{cases}
\end{equation*}
\begin{equation*}
w_x \defeq \sum_y w(x, y) \qquad w_{x, S} \defeq \sum_{x, S \; \vert \; x_S \neq y_S}  w(x,y) \qquad
w_y \defeq \sum_x w(x, y) \qquad w_{y, S} \defeq \sum_{y, S \; \vert \; x_S \neq y_S}  w(x,y)
\end{equation*}
\begin{equation*}
m \defeq \min_{x \in X} \; w_x \qquad l \defeq \max_{x, S} \; w_{x, S} \qquad
m' \defeq \min_{y \in Y} \; w_y \qquad l' \defeq \max_{y, S} \; w_{y, S}
\end{equation*}

Then, $\emph{\Quant}^{p \parallel}(f) = \Omega\left(\max_{X,Y,R} \sqrt{\frac{mm'}{\ell \ell'}}\right)$.
\end{theorem}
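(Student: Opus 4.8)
The plan is to derive the bound from the parallel spectral adversary characterization (\Cref{theorem:parAdv}) by constructing, from the data $X,Y,R$, a single adversary matrix whose spectral ratio is at least $\sqrt{mm'/(\ell\ell')}$. Fix such $X,Y,R$ and recall $w(x,y)=\mathbb{I}[(x,y)\in R]$ is $\{0,1\}$-valued. Let $A$ be the $X\times Y$ matrix $A_{xy}=w(x,y)$, and let $\Gamma$ be the symmetric $\mathcal D\times\mathcal D$ matrix that equals $A_{xy}$ on $X\times Y$, equals $A_{yx}$ on $Y\times X$, and is zero elsewhere. Since $f(x)\neq f(y)$ whenever $x\in X$ and $y\in Y$, every nonzero entry of $\Gamma$ joins two inputs with distinct $f$-values, so $\Gamma$ is a valid adversary matrix for $f$. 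A block matrix of the shape $\left(\begin{smallmatrix}0&A\\A^{\top}&0\end{smallmatrix}\right)$ satisfies $\norm{\Gamma}=\norm{A}$, and likewise $\norm{\Gamma_S}=\norm{A_S}$ for every $S$, where $A_S$ is obtained from $A$ by zeroing the entry $(x,y)$ unless $x_S\neq y_S$. By \eqref{equation:parallelSpectralAdversary} it therefore suffices to prove $\norm{A}\geq\sqrt{mm'}$ and $\norm{A_S}\leq\sqrt{\ell\ell'}$ for every $S$ with $|S|=p$.

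For the first inequality I would test $A$ against the unit vectors $u\in\mathbb R^X$, $v\in\mathbb R^Y$ given by $u_x=\sqrt{w_x/W}$ and $v_y=\sqrt{w_y/W}$, where $W=|R|=\sum_{x\in X}w_x=\sum_{y\in Y}w_y$ (assuming $R\neq\emptyset$, else the bound is vacuous). Then
\[
\norm{A}\ \geq\ u^{\top}Av\ =\ \frac1W\sum_{(x,y)\in R}\sqrt{w_x\,w_y}\ \geq\ \frac{|R|}{W}\sqrt{mm'}\ =\ \sqrt{mm'},
\]
using $w_x\geq m$ for all $x\in X$ and $w_y\geq m'$ for all $y\in Y$. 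For the second inequality, $A_S$ is entrywise nonnegative, its $x$-th row sums to $w_{x,S}\leq\ell$, and its $y$-th column sums to $w_{y,S}\leq\ell'$; the elementary bound $\norm{M}^2\leq(\text{max row sum of }M)(\text{max column sum of }M)$ for nonnegative $M$ (the Schur test) then gives $\norm{A_S}\leq\sqrt{\ell\ell'}$. Combining, $\Adv^{p\parallel}(f)\geq\norm{\Gamma}/\max_{|S|=p}\norm{\Gamma_S}\geq\sqrt{mm'/(\ell\ell')}$, and taking the maximum over $X,Y,R$ and invoking \Cref{theorem:parAdv} finishes the argument.

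I do not expect a genuine obstacle here: this is exactly the $p$-parallel lift of Ambainis's weighted adversary bound, and each ingredient is standard. The one point requiring slight care is the row/column-sum estimate for $\norm{A_S}$ — one must observe that the Schur-test argument never uses \emph{which} pairs $(x,y)$ are ``distinguished by $S$'', only how much weight sits in each row and column, so the classical single-index proof transfers verbatim with $\ell,\ell'$ redefined as maxima over sets $S$ with $|S|\leq p$. A small bookkeeping remark: $w_{x,S}$ is monotone under enlarging $S$, so restricting to $|S|=p$ in \eqref{equation:parallelSpectralAdversary} only enlarges the denominator, which is consistent with the ``$|S|\leq p$'' maxima defining $\ell$ and $\ell'$.
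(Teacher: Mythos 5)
Your proposal is correct. The paper does not actually prove \Cref{thm:parallel_comb_adv} — it is imported from \cite{Grover04,Burchard19} — but your derivation is the standard and correct way to obtain the combinatorial bound from the spectral characterization of \Cref{theorem:parAdv}: the $0/1$ bipartite matrix built from $R$ is a valid adversary matrix, the test vectors $u_x=\sqrt{w_x/W}$, $v_y=\sqrt{w_y/W}$ give $\norm{\Gamma}\geq\sqrt{mm'}$, and the Schur row/column-sum test gives $\norm{\Gamma_S}\leq\sqrt{\ell\ell'}$ uniformly over $|S|=p$; your closing remark that monotonicity of $w_{x,S}$ in $S$ reconciles the ``$|S|\leq p$'' maxima with the ``$|S|=p$'' denominator in \eqref{equation:parallelSpectralAdversary} is the right bookkeeping. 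Every step checks out.
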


Sequential deterministic and quantum query complexities of composed Boolean functions is easy to express in terms of the complexity of the functions being composed due to the work of \cite{Reichardt11b,Kimmel12,Tal13,Montanaro13}. No such general result is known for randomized query complexity. In fact, when the relevant functions are partial, it is known to be not true \cite{Ben-David20}. However, when the outer function is $\textsf{And}$ or $\textsf{Or}$ (or a composition of the two), such a result is known \cite{Aaronson16}. We summarize these results in the following theorem.

\begin{theorem}[Query complexity composition theorem 
]
\label{thm:QueryComplexityComposition}
    Let $f:\mathcal{D}_1 \rightarrow \{0,1\}$ and $g:\mathcal{D}_2 \rightarrow \{0,1\}$ be any Boolean functions with $\mathcal{D}_1 \subseteq \{0,1\}^{N_1}$ and $\mathcal{D}_2 \subseteq \{0,1\}^{N_2}$ for some integers $N_1,N_2$. Then,
    \begin{enumerate}
        \item $\emph{\Det}(f \circ g) = \Theta\left(\emph{\Det}(f) \cdot \emph{\Det}(g)\right)$ \emph{\cite{Tal13,Montanaro13}},
        \item $\emph{\Rand}(f \circ g) = \tilde{O}\left(\emph{\Rand}(f) \cdot \emph{\Rand}(g)\right)$, and if $g \in \{\emph{\textsf{And}}, \emph{\textsf{Or}}\}$ or $\emph{\Rand}(f) =\Theta(N_1)$, then $\emph{\Rand}(f \circ g) = \Omega\left(\emph{\Rand}(f) \cdot \emph{\Rand}(g)\right)$ \emph{\cite{Aaronson16,Ben-David20,Chakraborty23}},
        \item $\emph{\Quant}(f \circ g) = \Theta\left(\emph{\Quant}(f) \cdot \emph{\Quant}(g)\right)$ \emph{\cite{Reichardt11b,Kimmel12}}.
    \end{enumerate}
 
\end{theorem}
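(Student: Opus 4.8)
\emph{Proof proposal.} The plan is to assemble the three items from known composition theorems, giving the short elementary arguments where they exist and invoking the cited works for the technical cores.

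\textbf{Deterministic case.} The upper bound $\Det(f\circ g)\le \Det(f)\cdot\Det(g)$ is immediate: run an optimal depth-$\Det(f)$ decision tree for $f$ and answer each query to the $j$th input block by running an optimal depth-$\Det(g)$ decision tree for $g$ on that block. For the matching lower bound I would invoke the argument of \cite{Tal13,Montanaro13}: given any decision tree $T$ for $f\circ g$, maintain an adversary that reveals a bit of an inner copy only once $T$'s queries force that copy's value, and track that the outer value stays undetermined until $T$ has spent $\Det(g)$ queries on each of $\Det(f)$ ``critical'' copies; hence $T$ has depth $\Omega(\Det(f)\Det(g))$. Equivalently, this is the statement that the critical-path/certificate characterization of $\Det$ composes exactly.

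\textbf{Quantum case.} Both directions follow from multiplicativity of the (negative-weight) adversary bound, $\Adv(f\circ g)=\Adv(f)\cdot\Adv(g)$, established by composing optimal adversary matrices and their dual span-program solutions \cite{Reichardt11b,Kimmel12}. Combined with the characterization $\Quant=\Theta(\Adv)$ \cite{Reichardt11b}, this yields $\Quant(f\circ g)=\Theta(\Quant(f)\Quant(g))$ with no logarithmic loss; the span-program composition is precisely what lets one avoid the naive error-reduction overhead a direct simulation would incur.

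\textbf{Randomized case.} For the upper bound, run an optimal bounded-error algorithm for $f$ and answer each of its (at most $\Rand(f)$) queries by an algorithm for $g$ whose error has been driven to $O(1/\Rand(f))$ via $O(\log\Rand(f))$-fold majority; a union bound gives overall constant error at cost $O(\Rand(f)\Rand(g)\log\Rand(f))=\tilde O(\Rand(f)\Rand(g))$. A fully general matching lower bound is false for partial functions \cite{Ben-David20}, so I restrict to the two stated regimes. If $g\in\{\textsf{And},\textsf{Or}\}$, I would use the distributional round-elimination argument of \cite{Aaronson16}, which shows the composed algorithm must in effect solve $\Omega(\Rand(f))$ inner copies fully, giving $\Omega(\Rand(f)\Rand(g))$. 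If $\Rand(f)=\Theta(N_1)$, I would appeal to the randomized composition results of \cite{Ben-David20,Chakraborty23} that hold when the outer function has full randomized complexity: such an $f$ effectively forces an algorithm to solve $\Theta(N_1)=\Theta(\Rand(f))$ essentially-independent copies of $g$, a direct-sum situation yielding $\Omega(\Rand(f)\Rand(g))$.

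\textbf{Main obstacle.} The delicate part is the randomized lower bound: since $\Rand$ does not compose in general, the argument must exploit structure --- the combinatorial simplicity of $\textsf{And}/\textsf{Or}$, or the maximal hardness of $f$ --- rather than any generic composition principle, and in particular must recover the inner term $\Rand(g)$ exactly rather than incurring the $\sqrt{\Rand(g)}$-type loss of generic bounds. The deterministic and quantum parts are essentially bookkeeping on top of \cite{Tal13,Montanaro13} and \cite{Reichardt11b,Kimmel12}.
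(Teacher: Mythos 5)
The paper states this theorem as a summary of known results and gives no proof of its own, simply citing \cite{Tal13,Montanaro13}, \cite{Aaronson16,Ben-David20,Chakraborty23}, and \cite{Reichardt11b,Kimmel12} for the three items. Your sketch defers to exactly the same sources for the technical cores, and the elementary steps you do supply (the deterministic and randomized upper bounds by direct simulation with error reduction, and the observation that the randomized lower bound cannot be generic) are correct, so this matches the paper's treatment.
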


\subsection{Cheat sheet framework}
\label{subsec:cheatsheetPrelims}

The cheat sheet framework of Aaronson et al. \cite{Aaronson16} was primarily designed to construct separations between various Boolean complexity measures for total functions \cite{Aaronson16, Anshu16}, especially a super-quadratic separation between the quantum and randomized query complexities. 

This framework allows for constructing a total Boolean function $f_{\cheatsheet}$ from a partial function $f$ that preserves some of the desirable properties of $f$. 

\begin{definition}[Cheat sheet framework \cite{Aaronson16}] \label{def:cheatsheet}
    Let $f:\mathcal{D} \rightarrow \{0,1\}$ be any Boolean function with $\mathcal{D} \subseteq \{0,1\}^F$ and let $c$ be an integer. We define a total function $f^{c}_{\cheatsheet}: \{0,1\}^{cF} \times \{0,1\}^{M2^c} \rightarrow \{0,1\}$ constructed from $f$ as follows where $M$ is the size of the ``cheatsheet" and is the number of bits required to certify if \(c\) inputs to \(f\) are in the domain of \(f\). On input $z = (x^{(1)},x^{(2)}, \cdots, x^{(c)}, y^{(1)},y^{(2)}, \cdots y^{(2^c)})$, $f^c_{\cheatsheet} $ where $|x^{(i)}| = F$ and $|y^{(i)}| = M$, $f^{c}_{\cheatsheet}$ outputs $1$ iff the following conditions are satisfied (where the condition \ref{part:cscondition2} is only relevant when the condition \ref{part:cscondition1} is true).
    \begin{enumerate}
        \item \label{part:cscondition1} For each $i \in [c]$, $x^{(i)} \in \mathcal{D}$.
        \item \label{part:cscondition2} Let $\ell \in \{0,1\}^c$ be a binary string such that $\ell_i = f(x^{(i)})$. Then, $y^{(\ell)}$ certifies condition \ref{part:cscondition1}, and that $\ell_i = f(x^{(i)})$ for all $i \in [c]$.  
    \end{enumerate}
\end{definition}

For any input $z = (x^{(1)},x^{(2)}, \cdots, x^{(c)}, y^{(1)},y^{(2)}, \cdots y^{(2^c)})$ to $f^c_{\cheatsheet}$, we will usually refer to $x^{(1)},x^{(2)}, \cdots, x^{(c)}$ and $y^{(1)},y^{(2)}, \cdots y^{(2^c)}$ as the \textit{address} and \textit{data} parts of $z$ respectively. Moreover, we will refer to $y^{(i)}$ as the $i^{th}$ cheat sheet for input $z$. The cheat sheet framework is usually embedded with a function $f$ that has small certificate complexity so that the complexity of checking the condition \ref{part:cscondition2} in \Cref{def:cheatsheet} is bounded by the complexity of computing $f$. Moreover, we would usually want the value of $c$ to be small enough that so that the complexity of computing $c$ instances of $f$ is not much more than the complexity of computing one instance but large enough so that any algorithm is forced to compute all $c$ instances of $f$ before querying any information in the cheat sheet. For these reasons, we define a canonical version of cheat sheet function that will be very handy for us.

\begin{definition}[Canonical cheat sheet function] \label{def:Canonicalcheatsheet}
    Let $f:\mathcal{D} \rightarrow \{0,1\}$ be any Boolean function with $\mathcal{D} \subseteq \{0,1\}^N$ and let $f':\mathcal{D}' \rightarrow \{0,1\}$ be the function $f \circ \emph{\textsf{And}}_N \circ \emph{\textsf{Or}}_N$ with $\mathcal{D}'\subseteq \{0,1\}^{N^3}$. Let $c = 10 \log N$, $c' = O(\log N)$ be the space needed to store a query address-value pair for $f$, and $V$ be the size of the cheatsheet. Then, define the canonical cheat sheet function $f_{\canonicalcheatsheet}:\{0,1\}^{cN^3} \times \{0,1\}^{V2^c} \rightarrow \{0,1\}$ as $(f')^{c}_{\cheatsheet}$ 
\end{definition}

Notice that $f_{\canonicalcheatsheet}$ is a total Boolean function and $M$ is \(O(cc'N^2) = O(N^2 \log^2N)\) since the certificate complexity of $\textsf{And}_N \circ \textsf{Or}_N$ is only $N$. 
Intuitively, the cheat sheet framework would not help any algorithm perform more efficiently since it would need to compute $c$ independent instances of $f$ before it could access the relevant cheat sheet.

\begin{theorem}[\cite{Aaronson16}] \label{thm:cheatsheetcomplexities}
    Let $f:\mathcal{D} \rightarrow \{0,1\}$ be any Boolean function with $\mathcal{D} \subseteq \{0,1\}^N$ and let $c = \poly \log N$ be an integer. Then, $\emph{\Det}(f^c_\cheatsheet) = \Tilde\Omega \left(\emph{\Det}(f)\right)$, $\emph{\Rand}(f^c_\cheatsheet) = \Tilde\Omega \left(\emph{\Rand}(f)\right)$ and $\emph{\Quant}(f^c_\cheatsheet) = \Tilde\Omega \left(\emph{\Quant}(f)\right)$. 
\end{theorem}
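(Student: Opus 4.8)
The plan is to exhibit a single reduction, uniform across the three models, that turns any algorithm for $f^c_{\cheatsheet}$ into one for $f$ with at most polylogarithmic overhead in queries; this recovers the argument of \cite{Aaronson16}. We may assume $f$ is non-constant on $\mathcal D$ (otherwise $\Det(f),\Rand(f),\Quant(f)=O(1)$ and there is nothing to prove), and fix some $z_0\in\mathcal D$ with $f(z_0)=0$.

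First I would build an embedding $x\mapsto z(x)$ of $\mathcal D$ into inputs of $f^c_{\cheatsheet}$. Set the address part to $x^{(1)}=x$ and $x^{(2)}=\dots=x^{(c)}=z_0$, so the true label is $\ell=(f(x),0,\dots,0)$, which is either $0^c$ or $10^{c-1}$. In the data part, put into block $y^{(0^c)}$ the well-formed certificate that fully specifies all $c$ address blocks and asserts label $0^c$ --- the point being that every bit of this string is either a fixed constant (an index, an asserted bit of $z_0$, or the asserted label) or a verbatim copy of a single bit of $x$; put into $y^{(10^{c-1})}$ a string that is syntactically not a valid certificate; and fill every other data block with zeros. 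A short case check then gives $f^c_{\cheatsheet}(z(x)) = 1-f(x)$: if $f(x)=0$ the relevant block $y^{(\ell)}=y^{(0^c)}$ is a valid certificate, so both conditions of \Cref{def:cheatsheet} hold; if $f(x)=1$ the relevant block $y^{(\ell)}=y^{(10^{c-1})}$ is planted garbage, so condition~\ref{part:cscondition2} fails. Moreover each coordinate of $z(x)$ is by construction either a known constant or equal to $x_{\sigma(i)}$ for an explicit index map $\sigma$.

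All three bounds now follow at once. Run an algorithm $\mathcal A$ for $f^c_{\cheatsheet}$ on the virtual input $z(x)$, simulating its oracle: a query to a constant coordinate is free, and a query to a coordinate $i$ with $z(x)_i=x_{\sigma(i)}$ is answered by one query to the oracle for $x$ --- in the quantum case a single application of $\oracle_x$ implements one application of $\oracle_{z(x)}$, since $z(\cdot)$ is a fixed copy-or-constant map. Returning $1-\mathcal A$'s output computes $f(x)$ with the same success probability and no more queries than $\mathcal A$ uses, whence $\Det(f)\le\Det(f^c_{\cheatsheet})$, $\Rand(f)\le\Rand(f^c_{\cheatsheet})$ and $\Quant(f)\le\Quant(f^c_{\cheatsheet})$; the $\Tilde\Omega$ in the statement absorbs the logarithmic cost of encoding addresses and labels in binary.

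The step I expect to be the real obstacle is arranging the planted certificate in $y^{(0^c)}$ to be \emph{bit-simple}, i.e.\ a full description of the embedded instance in which each bit depends on at most one bit of $x$; a minimal certificate will not do, because the set of indices it names itself depends on $x$, and the oracle simulation would then cost more than one query per query. This forces $M=\Tilde\Omega(cN)$, which holds in the regime of interest and may be assumed when instantiating \Cref{def:cheatsheet}. (If one insists on a smaller cheatsheet, the weaker bound $\Det(f^c_{\cheatsheet})=\Omega(c\cdot\cert(f))$ still follows from an adversary argument: should the algorithm fail to pin down some $f(x^{(i)})$, an adversary flips that input bit, which relocates the unique valid cheatsheet to an all-zero block and turns a yes-instance into a no-instance undetected.) Checking that the predicate of \Cref{def:cheatsheet} accepts exactly the described certificate and rejects the planted garbage is routine.
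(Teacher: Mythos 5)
There is a genuine gap, and you have correctly located it yourself: the reduction only goes through when the planted cheatsheet can be a \emph{bit-simple full transcript} of the address section, i.e.\ when $M=\tilde\Omega(cN)$. But in \Cref{def:cheatsheet} the block length $M$ is fixed to be the number of bits needed to \emph{certify} domain membership and the label, and the entire point of the framework is that this is much smaller than the input length. In every instantiation in this paper (e.g.\ \Cref{def:Canonicalcheatsheet}, where the inner function is on $N^3$ bits but $V=\tilde\Theta(N^2)$ because $\cert(\andor_{N^2})=\tilde O(N)$), a minimal certificate must \emph{name} input positions whose identity depends on the input (the witness locations inside each $\textsc{Or}$ block), so a single cheatsheet bit is a function of many bits of $x$ and the one-query-per-query oracle simulation fails. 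Enlarging $M$ to fit a verbatim copy of the address section changes the function being analyzed (and would break the upper bounds in \Cref{prop:cheatsheet_parallel_upperbound}, which read the whole cheatsheet in one $p$-parallel query with $p=\tilde\Theta(N^2)$), so the statement you would prove is not the one that is used. Your fallback adversary argument recovers only a deterministic, certificate-complexity-type bound; it does not give $\Rand(f^c_\cheatsheet)=\tilde\Omega(\Rand(f))$ or $\Quant(f^c_\cheatsheet)=\tilde\Omega(\Quant(f))$.

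The actual argument (Lemmas 6, 7 and 21 of \cite{Aaronson16}; the parallel adaptations appear in \Cref{thm:cheatsheetdetparallelimpact,thm:cheatsheetrandparallelimpact}) is not a black-box embedding but a hybrid over the $2^c$ cheatsheet locations. One first shows (\Cref{obs:cheatsheet}) that on an input with an all-zero data section the algorithm must touch the \emph{correct} block $y^{(\ell)}$ with probability at least $1/3$; since there are $2^c=N^{10}$ blocks and only $\Rand(f)\le N$ queries, the average block is touched with negligible probability, so along a length-$c$ path of labels from $0^c$ to $\ell$ there exist adjacent labels, differing in one copy of $f$, where the probability of touching a fixed block jumps by $\Omega(1/c)$. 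That jump is converted into a distinguisher for hard $0$/$1$ distributions of $f$ with advantage $\Omega(1/c)$, costing $\Omega(\Rand(f)/c^2)=\tilde\Omega(\Rand(f))$ queries; the quantum case is analogous. This mechanism is exactly what replaces the certificate you cannot plant, so I would rework the randomized and quantum parts along these lines rather than trying to repair the embedding.
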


%%%%%%%%%%%%%%%%%%%%%%%%%%%%%%%%%%%%%%%%%%%%%%%%%%
\section{Unbounded parallel separations for total functions} \label{sec:cheatsheet}

In this section, we demonstrate the power of the cheat sheet framework by constructing exponential separations between parallel query algorithms for total functions. Our separations use the cheat sheet (\Cref{def:Canonicalcheatsheet}) framework, which takes as input a partial function \(f\) and returns a total function. Substituting different \(f\) gives us the separations for quantum vs randomized and randomized vs deterministic. First in \Cref{sec:CheatsheetUpperbounds} we give upper bounds for  this canonical cheat sheet function. We use these upper bounds in \Cref{sec:CheatsheetRandDetSep} to show the unbounded parallel separation between randomized and deterministic complexities, and in \Cref{sec:CheatsheetQuantRandSep} to show the unbounded parallel separation between quantum and randomized complexities. Finally, in \Cref{subsec:blockSensitivity} we prove that the block sensitivity of the canonical cheat sheet function is nearly equal to the parallelism at which unbounded speedup occurs, thus refuting \Cref{conj:JefferyTotalFunctions} in the strongest possible way.  

\subsection{Upper bounds in the cheatsheet framework}
\label{sec:CheatsheetUpperbounds}

We start off by proving deterministic and randomized sequential upper bounds for any partial function \(f\) composed with a total function \(g\) plugged into a cheat sheet. 

\begin{proposition} \label{prop:cheatsheet_seq_upperbound}
    Let $f:\mathcal{D} \rightarrow \{0,1\}$ be any (partial) Boolean function with $\mathcal{D} \subseteq \{0,1\}^N$, let $g:\{0,1\}^{N^2} \to \{0,1\}$ be any total Boolean function with $\cert(g) = \tilde{O}(N)$, and let $c = 10 \log N$. Let $(f \circ g)^c_{\cheatsheet}$ be the cheatsheet version of $f \circ g$ (see \emph{\Cref{def:cheatsheet}}). Then,
    \begin{enumerate} [label=(\roman*)]
        \item \label{itm:det_cheatsheet_upperbound} $\emph{\Det}((f \circ g)^c_{\cheatsheet}) = \tilde{O}(\max(\emph{\Det}(f \circ g),N^2))$,
        \item \label{itm:rand_cheatsheet_upperbound} $\emph{\Rand}((f \circ g)^c_{\cheatsheet}) = \tilde{O}(\max(\emph{\Rand}(f \circ g),N^2))$.
    \end{enumerate}
\end{proposition}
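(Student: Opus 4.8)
The plan is to give an explicit algorithm that computes $(f\circ g)^c_\cheatsheet$ and bound its query cost; the randomized case will just be the natural bounded-error variant of the deterministic one. Recall that an input $z$ consists of an address part — $c$ independent instances of $f\circ g$, call their inputs $w^{(1)},\dots,w^{(c)}$, each on $N^3$ bits — and a data part consisting of $2^c$ cheatsheets $y^{(1)},\dots,y^{(2^c)}$, each of size $V = \tilde O(cc'N^2) = \tilde O(N^2)$. The algorithm proceeds in three stages mirroring the three conceptual steps outlined in the technical overview: (1) compute $\ell_i = (f\circ g)(w^{(i)})$ for each $i\in[c]$, obtaining a string $\ell\in\{0,1\}^c$; (2) read the entire cheatsheet $y^{(\ell)}$ — this costs $|y^{(\ell)}| = \tilde O(N^2)$ queries; (3) use the contents of $y^{(\ell)}$ to verify conditions \ref{part:cscondition1} and \ref{part:cscondition2} of \Cref{def:cheatsheet}, namely that each $w^{(i)}$ restricted to the appropriate certificate lies in $\mathcal D$ (i.e.\ the inner $g$-values at a designated certificate set are consistent with an in-domain input to $f$) and that the claimed values $\ell_i$ are correct. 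Since $\cert(f\circ g) = \tilde O(\cert(g)\cdot \cert(\textsf{something}))$ — more precisely, a certificate for $f\circ g$ on an in-domain input needs only a certificate for the outer $f$ (a constant-or-small number of inner blocks, since $f$'s certificate complexity is at most $N$) together with a certificate for $g$ on each of those blocks, each of size $\cert(g) = \tilde O(N)$ — the total verification in stage (3) reads only $\tilde O(N\cdot N) = \tilde O(N^2)$ bits of the address part, guided by the location information stored in $y^{(\ell)}$. (If any check fails, output $0$.)

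For stage (1), the deterministic cost of computing one instance of $f\circ g$ is $\Det(f\circ g)$ by definition, so computing all $c=10\log N$ instances costs $c\cdot\Det(f\circ g) = \tilde O(\Det(f\circ g))$ queries. Summing the three stages gives $\tilde O(\Det(f\circ g)) + \tilde O(N^2) + \tilde O(N^2) = \tilde O(\max(\Det(f\circ g), N^2))$, which is part \ref{itm:det_cheatsheet_upperbound}. For part \ref{itm:rand_cheatsheet_upperbound}, run the same algorithm but use an optimal bounded-error randomized subroutine for each of the $c$ instances of $f\circ g$ in stage (1), each with error driven down to $O(1/c)$ by $O(\log c)$-fold repetition and majority vote; a union bound over the $c$ instances keeps the total error below $1/3$. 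Stages (2) and (3) are deterministic (just reading bits), so they contribute no error. The cost becomes $\tilde O(c\log c\cdot\Rand(f\circ g)) + \tilde O(N^2) = \tilde O(\max(\Rand(f\circ g), N^2))$.

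The main obstacle — and the one place the proof needs genuine care rather than routine bookkeeping — is stage (3): making precise exactly what a cheatsheet $y^{(\ell)}$ contains and why $\tilde O(N^2)$ queries to the address suffice to verify both that the $w^{(i)}$ are in-domain and that $\ell$ records their correct $f\circ g$ values. This requires unpacking the definition of "certify condition \ref{part:cscondition1}" in \Cref{def:cheatsheet}: the cheatsheet stores, for each $i\in[c]$, the locations and values of a certificate for $f$ on the string $(g(w^{(i)}_1),\dots)$ of inner-block values, plus the locations and values of a certificate for $g$ on each relevant inner block of $w^{(i)}$. One must check that the honest input's certificate has total size $\tilde O(N^2)$ — this is where $\cert(g) = \tilde O(N)$ and the bound $\cert(f)\le N$ (so $f$'s certificate touches at most $N$ inner blocks, but actually we only need that the outer $\textsf{And}\circ\textsf{Or}$-type structure has certificate $\tilde O(N)$ as well when this proposition is later applied) combine — and that a dishonest $z$ cannot fool the verifier, which holds because the algorithm re-queries the claimed certificate positions in the address and checks consistency with the claimed values and the computed $\ell$. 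Everything else is straightforward accounting of $\polylog N$ factors absorbed into $\tilde O(\cdot)$.
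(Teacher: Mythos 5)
Your proposal is correct and follows essentially the same route as the paper's proof: compute the $c$ instances of $f\circ g$, read the $\tilde O(N^2)$-bit cheatsheet $y^{(\ell)}$, and verify the $\tilde O(cN\cdot\cert(g)) = \tilde O(N^2)$ certified positions in the address part, with the randomized case handled by standard error reduction over the $c$ instances. The paper's own proof is terser (it only writes out the deterministic case), but the structure and the cost accounting are the same.
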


\begin{proof}
    We will only prove \cref{itm:det_cheatsheet_upperbound} since the proof of \cref{itm:rand_cheatsheet_upperbound} is very similar.  
    We describe a deterministic algorithm that computes $(f \circ g)^c_{\cheatsheet}$ using at most $\tilde{O}(\max({\Det}(f \circ g),N^2))$ deterministic queries as follows. Let $z = (x^{(1)},x^{(2)}, \cdots, x^{(c)}, y^{(1)},y^{(2)}, \cdots y^{(2^c)})$ be an input to $(f \circ g)^c_{\cheatsheet}$.

    Notice that we can compute the outputs of each of the $c$ instances of $f \circ g$ on respective inputs $x^{(1)},x^{(2)}, \cdots, x^{(c)}$ using $\tilde{O}(\Det(f \circ g))$ queries to $(f \circ g)^c_{\cheatsheet}$.

    As in \Cref{def:cheatsheet}, let $\ell$ be the length $c$ binary string such that $\ell_i = (f \circ g)(x^{(i)})$. As $\cert(g) = \tilde{O}(N)$, the length of each cheatsheet $y^{(i)}$ is $\tilde{O}(N^2)$ so, in particular, we can query the contents of $y^{(\ell)}$ with $\tilde{O}(N^2)$ deterministic queries.

    Given $y^{(\ell)}$, we only need to query $\tilde{O}(N^2)$ indices (i.e. those pointed by $y^{(\ell)}$) of $(f \circ g)^c_{\cheatsheet}$ to verify that $x^{(i)}$ is in the domain of $f \circ g$. Furthermore, since $y^{(\ell)}$ certifies each input bit to each of the $c$ instances of $f$, it automatically certifies that $\ell_i = (f \circ g)(x^{(i)})$ for all $i \in [c]$, and that each \(x^{(i)}\) is in the domain of \(f \circ g\) so both conditions \ref{part:cscondition1} and \ref{part:cscondition2} in \Cref{def:cheatsheet} are satisfied. The desired result follows.
\end{proof}

Next, we prove randomized and quantum upper bounds for any partial function \(f\) composed with a total function \(g\) plugged into a cheat sheet. 

\begin{proposition} \label{prop:cheatsheet_parallel_upperbound}
    Let $f:\mathcal{D} \rightarrow \{0,1\}$ be any (partial) Boolean function with $\mathcal{D} \subseteq \{0,1\}^N$, let $g:\{0,1\}^{N^2} \to \{0,1\}$ be any total Boolean function with $\cert(g) = \tilde{O}(N)$, and let $c = 10 \log N$. Let $(f \circ g)^c_{\cheatsheet}$ be the cheatsheet version of $f \circ g$ (see \emph{\Cref{def:cheatsheet}}) with cheatsheet size $V$\footnote{Since $\cert(g) = \tilde{O}(N)$, the input to $f$ has $N$ bits and specifying each location takes $\polylog(N)$ bits, $V = \tilde{\Theta}(N^2)$}. 
    \begin{enumerate} [label=(\roman*)]
        \item \label{itm:rand_parallel_cheatsheet_upperbound} If $\emph{\Rand}^{q \parallel}(f) = k$ and $p = \max(cqN^2 \log N,V) = \tilde{\Theta}(q N^2)$, then $\emph{\Rand}^{p \parallel}((f \circ g)^c_{\cheatsheet}) \leq k+2$ and $\emph{\Rand}^{k+2 \perp}(f \circ g)^c_{\cheatsheet}) \leq p$.
        \item \label{itm:quant_parallel_cheatsheet_upperbound} If $\emph{\Quant}^{q \parallel}(f) = k$ and $p = \max(cqN^2  \log N,V) = \tilde{\Theta}(q N^2)$, then $\emph{\Quant}^{p \parallel}((f \circ g)^c_{\cheatsheet}) \leq k+2$ and $\emph{\Quant}^{k+2 \perp}(f \circ g)^c_{\cheatsheet}) \leq p$.
    \end{enumerate}
\end{proposition}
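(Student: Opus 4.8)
The plan is to build, in each case, an explicit $(k+2)$-round $p$-parallel algorithm for $(f\circ g)^c_{\cheatsheet}$; the relevant $k$-adaptive bound (so $\Rand^{k+2\perp}((f\circ g)^c_{\cheatsheet})\le p$, resp.\ $\Quant^{k+2\perp}((f\circ g)^c_{\cheatsheet})\le p$) is then immediate from the definition, since the algorithm uses $k+2$ rounds at width $p$. Following the three conceptual steps of the technical overview, the algorithm spends its first $k$ rounds solving the $c$ copies of $f\circ g$, round $k+1$ reading out the relevant cheatsheet, and round $k+2$ verifying the certificate. Throughout, I write the input as $(x^{(1)},\dots,x^{(c)},y^{(1)},\dots,y^{(2^c)})$ with $x^{(i)}\in\{0,1\}^{N^3}$ split into $N$ sub-blocks of $N^2$ bits, and $y^{(j)}\in\{0,1\}^V$ with $V=\tilde{\Theta}(N^2)$.

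First I would implement the ``solve $f\circ g$'' phase: run the assumed $q$-parallel algorithm for $f$ on each virtual input $(g(x^{(i)}_1),\dots,g(x^{(i)}_N))$, answering each query to a bit of a virtual input by reading all $N^2$ bits of the relevant sub-block and then computing $g$ (legitimate since $g$ is total, so $\Det(g)\le N^2$). One round of $q$ parallel $f$-queries, executed for all $c$ instances simultaneously, costs $O(cqN^2)$ parallel queries to $(f\circ g)^c_{\cheatsheet}$ --- in the quantum case doubled, since I must coherently uncompute the $N^2$ read bits after writing $g$ into an answer register so that no workspace stays entangled with the (possibly superposed) query register. To force all $c$ guesses $\hat\ell_i$ for $\ell_i\defeq(f\circ g)(x^{(i)})$ to be simultaneously correct with probability $\ge 2/3$, I would run $\Theta(\log c)$ independent copies in parallel and take majorities, which multiplies the width by $O(\log c)$ but keeps the depth at $k$; in the quantum case the copies stay coherent and I invoke deferred measurement, so $\hat\ell$ (correct except with probability $\le 1/3$) may be assumed available after round $k$. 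The width used here is $O(cqN^2\log c)=\tilde{O}(qN^2)$, which is $\le p$ since the $\log N$ factor in $p$ absorbs the constant and sub-$\log N$ factors.

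For the last two rounds: in round $k+1$ read all $V\le p$ bits of $y^{(\hat\ell)}$ in one $p$-parallel query, and in round $k+2$ query the $\le V\le p$ address-part positions named by $y^{(\hat\ell)}$, accepting iff their queried values agree with $y^{(\hat\ell)}$ and the certified bits logically force conditions \ref{part:cscondition1} and \ref{part:cscondition2} of \Cref{def:cheatsheet} for this $\hat\ell$. Correctness follows by cases. If $(f\circ g)^c_{\cheatsheet}$ outputs $1$, then $\hat\ell=\ell$ except with probability $\le 1/3$ and $y^{(\ell)}$ passes the check by definition, so we accept except with probability $\le 1/3$. If it outputs $0$, then no $\hat\ell$ at all can pass the check (even if some virtual input left $\mathcal{D}$, making the corresponding $\hat\ell_i$ meaningless): a passing check exhibits a genuine certificate --- positions whose values we have verified against the actual input --- forcing every $x^{(i)}$ into the domain of $f\circ g$ with value $\hat\ell_i$, which would pin $\hat\ell=\ell$ and satisfy both conditions of \Cref{def:cheatsheet}, contradicting the output being $0$. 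Hence the only error is in the $1$-case and is at most $1/3$.

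The step I expect to be the main obstacle is the quantum ``solve $f\circ g$'' phase: one must carefully arrange the coherent evaluation and uncomputation of $g$ each round, and carry out the $\Theta(\log c)$-fold success amplification in width rather than depth, so that the simulation of a $q$-parallel $f$-query really is a single round of $\tilde{O}(qN^2)$-parallel queries and the round count stays exactly $k$. Everything after that --- the two extra verification rounds, the width accounting, and the observation that a false certificate cannot exist --- is routine, and the proof of \ref{itm:rand_parallel_cheatsheet_upperbound} is the same with the quantum simulation replaced by plain classical simulation (no uncomputation needed).
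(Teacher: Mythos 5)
Your proposal is correct and follows essentially the same route as the paper's proof: spend $k$ rounds solving the $c$ copies of $f\circ g$ in parallel by reading each $N^2$-bit inner block wholesale (with the success probability boosted so all $c$ answers are simultaneously correct), one round to read $y^{(\hat\ell)}$, and one round to verify the certificate, with the $(k+2)$-adaptive bound following immediately. Your added detail on quantum uncomputation and on width-amplification via $\Theta(\log c)$ parallel majority votes just fills in steps the paper leaves implicit.
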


\begin{proof}
     We will only prove \cref{itm:rand_parallel_cheatsheet_upperbound} since the proof of \cref{itm:quant_parallel_cheatsheet_upperbound} is very similar. We describe a randomized algorithm that computes $(f \circ g)^c_{\cheatsheet}$ using at most $k+2$ $p$-parallel queries, which is sufficient to prove both ${\Rand}^{p \parallel}((f \circ g)^c_{\cheatsheet}) \leq k+2$ and ${\Rand}^{k \perp}(f \circ g)^c_{\cheatsheet}) = \tilde{O}(p)$. Let $z = (x^{(1)},x^{(2)}, \cdots, x^{(c)}, y^{(1)},y^{(2)}, \cdots y^{(2^c)})$ be an input to $(f \circ g)^c_{\cheatsheet}$.

    Let $p' = \floor{p/cq} \geq N^2 \log N$. Notice that we can compute the whole input to some $g$ in $f \circ g$ using $1$ $p'$-parallel query to $(f \circ g)^c_{\cheatsheet}$. Thus, by running the $k$-query $q$-parallel randomized algorithm for $f$ where each query to $f$ makes $1$ $p'$-parallel query to $(f \circ g)^c_{\cheatsheet}$ allows us to compute $f \circ g$ using $k$ many $qp'$-parallel queries with probability $1-\Theta(1/c)$. Therefore, as $cqp' \leq p$, we can compute $c$ instances of $f \circ g$ on respective inputs $x^{(1)},x^{(2)}, \cdots, x^{(c)}$ in parallel using $k$ many $p$-parallel queries with probability $2/3$.

    As in \Cref{def:cheatsheet}, let $\ell$ be the length $c$ binary string such that $\ell_i = (f \circ g)(x^{(i)})$. Since $p \geq V$ and $\cert(g) = \tilde{O}(N)$, we can query the contents of $y^{(\ell)}$ with a single $p$-parallel query.

    Given $y^{(\ell)}$, we only need to query at most $cN \cdot \cert(g)$ indices of $(f \circ g)^c_{\cheatsheet}$ to verify that $x^{(i)}$ is in the domain of $f \circ g$. But this could be done using only $1$ $p$-parallel query. Furthermore, since $y^{(\ell)}$ certifies each input bit to each of the $c$ instances of $f$, it automatically certifies that $\ell_i = (f \circ g)(x^{(i)})$ for all $i \in [c]$, and also that each \(x^{(i)}\) is in the domain of \(f \circ g\) so both conditions \ref{part:cscondition1} and \ref{part:cscondition2} in \Cref{def:cheatsheet} are satisfied. It follows that $\Rand^{p \parallel}(g_{\cheatsheet}) \leq k+2$.
\end{proof}

As a corollary, note that the above upper bounds hold for the canonical cheatsheet function, which we will use to show separations in the next two sections. 

\begin{corollary} \label{cor:can_cheatsheet}
    Let $f:\mathcal{D} \rightarrow \{0,1\}$ be any (partial) Boolean function with $\mathcal{D} \subseteq \{0,1\}^N$. Let $f_{\canonicalcheatsheet}$ be the canonical cheatsheet function associated with $f$ (see \emph{\Cref{def:Canonicalcheatsheet}}) with cheatsheet size $V$. 
    \begin{enumerate} [label=(\roman*)]
        \item \label{itm:rand_parallel_can_cheatsheet_upperbound} If $\emph{\Rand}(f) = k$ and $p = \max(cN^2 \log N,V) = \tilde{\Theta}(N^2)$, then $\emph{\Rand}^{p \parallel}(f_{\canonicalcheatsheet}) \leq k+2$ and $\emph{\Rand}^{k+2 \perp}(f_{\canonicalcheatsheet}) \leq p$.
        \item \label{itm:quant_parallel_can_cheatsheet_upperbound} If $\emph{\Quant}(f) = k$ and $p = \max(cN^2  \log N,V) = \tilde{\Theta}(N^2)$, then $\emph{\Quant}^{p \parallel}(f_{\canonicalcheatsheet}) \leq k+2$ and $\emph{\Quant}^{k+2 \perp}(f_{\canonicalcheatsheet}) \leq p$.
    \end{enumerate}
\end{corollary}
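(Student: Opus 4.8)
The plan is to derive the corollary directly from Proposition \ref{prop:cheatsheet_parallel_upperbound} by recognizing the canonical cheatsheet function $f_{\canonicalcheatsheet}$ as an instance of the general construction $(f' \circ g)^c_{\cheatsheet}$ with a specific, allowed choice of inner function $g$. First I would recall Definition \ref{def:Canonicalcheatsheet}: $f_{\canonicalcheatsheet}$ is defined as $(f')^c_{\cheatsheet}$ where $f' = f \circ \textsf{And}_N \circ \textsf{Or}_N$ and $c = 10\log N$. To apply Proposition \ref{prop:cheatsheet_parallel_upperbound}, I would set the proposition's ``$g$'' to be $\textsf{And}_N \circ \textsf{Or}_N$, so that the proposition's ``$f \circ g$'' is exactly our $f'$; here the relevant input size for $g$ is $N^2$ (since $\textsf{And}_N \circ \textsf{Or}_N$ acts on $N$ blocks of $N$ bits each, i.e. $N^2$ bits), matching the hypothesis $g:\{0,1\}^{N^2}\to\{0,1\}$ of the proposition.

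The key step is to verify the certificate hypothesis $\cert(g) = \tilde O(N)$. This is immediate: a $0$-certificate for $\textsf{And}_N\circ\textsf{Or}_N$ exhibits a single all-zero $\textsf{Or}$-block of $N$ bits, and a $1$-certificate exhibits one $1$ in each of the $N$ $\textsf{Or}$-blocks, so in both cases the certificate has size at most $N$. (This is the same fact already invoked in Definition \ref{def:Canonicalcheatsheet} and the remark following it.) With $g = \textsf{And}_N\circ\textsf{Or}_N$ thus satisfying all hypotheses of Proposition \ref{prop:cheatsheet_parallel_upperbound}, and with the cheatsheet size being $V = \tilde\Theta(N^2)$ as noted there (the input to $f'$ has $N^3$ bits, but what must be certified is only the $N$-bit input to each of the $c$ copies of $f$ together with the $\tilde O(N)$-size certificates for the $\textsf{And}\circ\textsf{Or}$ layers, giving $\tilde O(N^2)$ total), I would instantiate parts \ref{itm:rand_parallel_cheatsheet_upperbound} and \ref{itm:quant_parallel_cheatsheet_upperbound} of the proposition.

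Concretely, for part \ref{itm:rand_parallel_can_cheatsheet_upperbound}: taking $q = 1$ in Proposition \ref{prop:cheatsheet_parallel_upperbound}\ref{itm:rand_parallel_cheatsheet_upperbound}, the hypothesis ``$\Rand^{q\parallel}(f) = k$'' becomes ``$\Rand(f) = k$'', the parallelism becomes $p = \max(cN^2\log N, V) = \tilde\Theta(N^2)$, and the conclusion reads $\Rand^{p\parallel}((f')^c_{\cheatsheet}) \le k+2$ and $\Rand^{k+2\perp}((f')^c_{\cheatsheet}) \le p$, which is exactly the claim since $(f')^c_{\cheatsheet} = f_{\canonicalcheatsheet}$. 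Part \ref{itm:quant_parallel_can_cheatsheet_upperbound} follows identically from Proposition \ref{prop:cheatsheet_parallel_upperbound}\ref{itm:quant_parallel_cheatsheet_upperbound} with $q=1$. I do not anticipate a genuine obstacle here; the only point requiring a moment's care is bookkeeping the input sizes — confirming that the ``$N$'' playing the role of the inner-block size in the proposition is consistent with the ``$N$'' in Definition \ref{def:Canonicalcheatsheet}, and that the polylog factors absorbed into $\tilde\Theta(N^2)$ for both $p$ and $V$ line up — but this is routine.
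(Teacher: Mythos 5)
Your proposal is correct and matches the paper's intent exactly: the paper states this corollary as an immediate instantiation of Proposition \ref{prop:cheatsheet_parallel_upperbound} with $g = \textsf{And}_N \circ \textsf{Or}_N$ (so $\cert(g) = N$) and $q=1$, which is precisely what you do. The bookkeeping you flag (the parallelism $p = \max(cN^2\log N, V)$ and $V = \tilde\Theta(N^2)$) lines up as you describe.
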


\subsection{Separation between randomized and deterministic complexities}
\label{sec:CheatsheetRandDetSep}

Using the parallel upper bound for the canonical cheatsheet function in the previous section, we show the desired unbounded separation between randomized and deterministic complexities. 

\begin{prob}[\textsc{cheatsheet deutsch-josza}] \label{prob:djcancs}
    Let $f:\{0,1\}^N \to \{0,1\}$ be $\textsf{dj}_N$ (see \emph{\Cref{prob:djvar}}).
    \begin{problemTotal}
        \probleminput{An oracle for a string $X \in \{0,1\}^{M}$ (with $M = \tilde{\Theta}(N^{12})$\footnote{Since a certificate of $\andor$ on $N^2$ bits could be represented using $\tilde{\Theta}(N)$ bits, we will have the number of input bits for the function $(f \circ g)^c_{\cheatsheet}$ to be $\tilde{\Theta}(N^{12})$.}).}
        \problemquestion{Compute $f^c_{\canonicalcheatsheet}(X)$ (see \emph{\Cref{def:Canonicalcheatsheet}}).}
    \end{problemTotal}
\end{prob}

\begin{theorem}
    Let $h$ be the \textsc{cheatsheet deutsch-josza} problem. Then there exists a  $p = \tilde \Theta(N^{2})$ such that 
    \begin{enumerate} [label=(\roman*)]
        \item $\emph{\Det}^{p \parallel}(h) = \Omega(N)$,
        \inlineitem $\emph{\Rand}^{p \parallel}(h) \leq 3$,
        \inlineitem $\emph{\Rand}^{3 \perp}(h) \leq p$.
    \end{enumerate}
    \label{thm:cheatsheetRpvsDpSeparation}
\end{theorem}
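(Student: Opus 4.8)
The plan is to read off the two upper bounds from the canonical cheat sheet upper bound already established in \Cref{cor:can_cheatsheet}, and to prove the deterministic parallel lower bound by pushing the sequential hardness of $\textsf{dj}_N$ through the deterministic composition theorem and the cheat sheet framework, and then dividing by the available parallelism. Throughout I would write $f = \textsf{dj}_N$, $f' = f \circ \textsf{And}_N \circ \textsf{Or}_N$, and $h = f_{\canonicalcheatsheet} = (f')^{c}_{\cheatsheet}$ with $c = 10\log N$, following \Cref{def:Canonicalcheatsheet} and \Cref{prob:djcancs}.

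For the two upper bounds, the first step is to record that $\Rand(f) = \Rand(\textsf{dj}_N) = 1$ via the one-query algorithm described just after \Cref{prob:djvar} (query a uniformly random index $i$; output $0$ with probability $2/3$ when $X_i = 0$ and $1$ otherwise); a two-line check against the two cases of the promise shows it has error at most $1/3$. Then I would invoke \Cref{cor:can_cheatsheet}, item (i), with $k = \Rand(f) = 1$: it produces a parallelism $p = \tilde\Theta(N^2)$ for which $\Rand^{p\parallel}(h) \le k+2 = 3$ and $\Rand^{3\perp}(h) \le p$, which are exactly parts (ii) and (iii). I would fix this $p$ for the remainder of the argument.

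For the lower bound (i), I would first establish $\Det(\textsf{dj}_N) = \Theta(N)$: the upper bound is clear (read any $\ceil{N/2}+1$ bits — all zero forces the all-zero case, a one forces the balanced case), and for the lower bound an adversary answering $0$ to the first $\ceil{N/2}-1$ queries keeps both promise cases alive. Next, by the deterministic composition theorem (\Cref{thm:QueryComplexityComposition}, item 1), which is stated for arbitrary (hence possibly partial) Boolean functions, $\Det(f') = \Theta(\Det(\textsf{dj}_N)\cdot\Det(\textsf{And}_N)\cdot\Det(\textsf{Or}_N)) = \Theta(N^3)$. Since $h = (f')^{c}_{\cheatsheet}$, the deterministic part of the cheat sheet lower bound \Cref{thm:cheatsheetcomplexities} gives $\Det(h) = \tilde\Omega(\Det(f')) = \tilde\Omega(N^3)$. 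Finally, any $p$-parallel deterministic algorithm running for $r$ rounds can be simulated by an $rp$-query sequential algorithm, so $\Det^{p\parallel}(h) \ge \Det(h)/p$; with $p = \tilde\Theta(N^2)$ this is $\tilde\Omega(N^3/N^2) = \Omega(N)$, as desired.

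The step I expect to be the main obstacle is making this last division land exactly on $\Omega(N)$ rather than on a bound smaller by polylogarithmic factors, since $\Det(h) = \tilde\Omega(N^3)$ and $p = \tilde\Theta(N^2)$ each hide logarithmic factors. I would handle this either by bookkeeping — the logarithmic factors in $p = \max(cN^2\log N, V)$ are, up to the verification cost $\tilde O(N^2)$ which is negligible against $N^3$, the same ones already present in the bound $\Det(h) \ge c\cdot\Det(f')$ — or, more cleanly, by a direct $p$-parallel deterministic adversary argument: on the first address block the algorithm must evaluate $f' = \textsf{dj}_N \circ \textsf{And}_N \circ \textsf{Or}_N$, an adversary for the outer $\textsf{dj}_N$ forces $\Omega(N)$ of the inner $\textsf{And}_N\circ\textsf{Or}_N$ instances to be fully resolved, and with parallelism $\tilde\Theta(N^2)$ only $\tilde O(1)$ of these $N^2$-bit instances can be resolved per round, forcing $\tilde\Omega(N)$ rounds. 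Everything else — the construction, both upper bounds, and the reduction of the lower bound to sequential facts — is routine. As remarked in the introduction, this separation already suffices to falsify \Cref{conj:JefferyTotalFunctions}.
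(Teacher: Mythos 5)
Your proposal is correct and follows essentially the same route as the paper: parts (ii) and (iii) are read off from \Cref{cor:can_cheatsheet} with $k=\Rand(\textsf{dj}_N)=1$, and part (i) comes from $\Det(\textsf{dj}_N)=\Omega(N)$, the deterministic composition theorem, \Cref{thm:cheatsheetcomplexities}, and dividing by $p$. Your concern about the polylogarithmic factors in the final division is well placed --- the paper's own proof in fact only concludes $\tilde\Omega(N)$ rather than the stated $\Omega(N)$ --- but this does not affect the substance of the separation, and your suggested bookkeeping (or direct adversary argument) would resolve it.
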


\begin{proof}
    We prove each item one by one. Recall that $f = \textsf{dj}_N$ in \Cref{prob:djcancs} and let $g = (\andor)_{N^2}$.  
    \begin{enumerate}[label=(\roman*)]
        \item It is well-known that $\Det(f) = \Omega(N)$ so $\Det(f \cdot g) = \Omega(N^3)$ by \Cref{thm:QueryComplexityComposition}. It follows, from \Cref{thm:cheatsheetcomplexities}, that $\Det(h) = \Omega(N^3)$. Therefore, $\Det^{p \parallel} = \Omega(\Det(h)/p) = \tilde \Omega(N)$.
        \item and\!\!\!\inlineitem The desired result directly follows from \cref{itm:rand_parallel_can_cheatsheet_upperbound} of \Cref{cor:can_cheatsheet}. 
    \end{enumerate}
\vspace{-1pc}
\end{proof}

The previous theorem gives us the separation we want, but note that the input is of size \(\tilde \Theta (N^{12})\) there. To make the actual separation more evident (and to emphasize the relatively small size of the exponent), we restate it with an input size of \(M\) in the corollary below. To make clear where this separation occurs, we also mention the block sensitivity which comes from our results in \Cref{subsec:blockSensitivity}.

\begin{corollary}
    There exists a total Boolean function $h: \{0,1\}^M \to \{0,1\}$ and an integer $p \in [M]$ such that 
    \begin{enumerate}[label=(\roman*)]
        \item $\emph{\Det}^{p \parallel}(h) = \tilde\Omega(M^{1/12})$,
        \inlineitem $\emph{\Rand}^{p \parallel}(h) \leq 3$,
        \inlineitem $\emph{\Rand}^{3 \perp}(h) \leq p$,
        \inlineitem $\emph{\bs}(h) = \tilde{\Omega}(p)$
    \end{enumerate}
\end{corollary}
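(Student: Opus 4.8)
The plan is to obtain the corollary from \Cref{thm:cheatsheetRpvsDpSeparation} by a change of variables, the only genuinely new ingredient being item (iv). Let $h$ be the \textsc{cheatsheet deutsch-josza} function of \Cref{prob:djcancs}. By construction its input length is $M = \tilde\Theta(N^{12})$, so $N = \tilde\Theta(M^{1/12})$, and the parallelism $p = \tilde\Theta(N^{2}) = \tilde\Theta(M^{1/6})$ witnessing \Cref{thm:cheatsheetRpvsDpSeparation} satisfies $p \in [M]$ once $N$ is large enough. With this relabeling, item (i) becomes item (i) of \Cref{thm:cheatsheetRpvsDpSeparation} rewritten, $\Det^{p\parallel}(h) = \Omega(N) = \tilde\Omega(M^{1/12})$, and items (ii) and (iii) are items (ii) and (iii) of \Cref{thm:cheatsheetRpvsDpSeparation} verbatim.

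It remains to prove item (iv), $\bs(h) = \tilde\Omega(p) = \tilde\Omega(N^{2})$. Recall $h = (f')^{c}_{\cheatsheet}$ with $f' = f \circ g$, $f = \textsf{dj}_N$, $g = (\andor)_{N^2}$, $c = 10\log N$, and cheat sheet size $V = \tilde\Theta(\cert(f')) = \tilde\Theta(N^{2})$ (\Cref{prop:cheatsheet_parallel_upperbound}); indeed pinning down $\textsf{dj}_N$ forces one to read all $N$ of its inputs, each the output of an $\andor$ gate costing $\tilde\Theta(N)$ bits to certify. I would exhibit a $1$-input $z = (x^{(1)},\dots,x^{(c)},y^{(1)},\dots,y^{(2^c)})$ of $h$ in which every address block $x^{(i)}$ is chosen so that $f'(x^{(i)})$ admits only maximal-length certificates, making $\tilde\Omega(N^{2})$ of the bits of the active cheat sheet $y^{(\ell)}$ essential: flipping any one of them corrupts the encoded (index, value) list so that it no longer certifies $\ell_i = f'(x^{(i)})$ — the string $\ell$ depends on the address part alone and is unchanged by such a flip — so condition \ref{part:cscondition2} of \Cref{def:cheatsheet} fails and $h(z)$ flips from $1$ to $0$. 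These $\tilde\Omega(N^{2})$ single-bit flips lie in pairwise disjoint blocks, giving $\bs_1(h) = \tilde\Omega(N^{2}) = \tilde\Omega(p)$. Equivalently, one may simply cite the block sensitivity computation carried out in \Cref{subsec:blockSensitivity}.

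The one delicate step is justifying that $\tilde\Omega(N^{2})$ cheat-sheet bits are simultaneously essential: a carelessly chosen $1$-input could leave slack bits in $y^{(\ell)}$ (for instance if some $x^{(i)}$ admits a short certificate, or if the parsing of $y^{(\ell)}$ into an (index, value) list tolerates a flip). This is handled by forcing each $x^{(i)}$ to require a certificate of the maximal length $\tilde\Theta(N^{2})$, so that the length-$V$ cheat sheet is consumed by a non-redundant certificate with no room to spare; everything else is routine bookkeeping, and in any case this is precisely the content of \Cref{subsec:blockSensitivity}.
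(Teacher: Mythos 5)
Your proposal is correct, and for items (i)--(iii) it is exactly the paper's proof: the paper's corollary is proved by citing \Cref{thm:cheatsheetRpvsDpSeparation} together with \Cref{cor:bsForCanonicalCheatsheet}, and your change of variables $N = \tilde\Theta(M^{1/12})$, $p = \tilde\Theta(M^{1/6})$ is the intended reading. The only genuine divergence is in item (iv). The paper obtains $\bs(h) = \tilde\Omega(N^2)$ from \Cref{thm:blockSensitivity}, which places the $\tilde\Omega(N^2)$ disjoint sensitive blocks in the \emph{address} part: each of the $N$ copies of $\andor_{N^2}$ feeding a given $f$-instance is set to an input with $\tilde\Omega(N)$ disjoint sensitive blocks, and flipping any one of them either exits the domain of $f'$, changes $\ell$ (so the active cheat sheet becomes spurious), or invalidates the existing certificate. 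Your primary argument instead places the sensitive bits in the \emph{cheat sheet} $y^{(\ell)}$, arguing that a non-redundant, maximal-length certificate has $\tilde\Omega(N^2)$ individually essential bits. This also works for the \textsc{cheatsheet deutsch-josza} function — every in-domain input to $\textsf{dj}_N \circ \andor_{N^2}$ has minimal certificate size $\tilde\Theta(N^2)$ (domain membership of $\textsf{dj}$ forces all $N$ inner $\andor$ outputs to be certified, each at cost $N$ pointers), and corrupting any index or value field strictly shrinks or falsifies the certified set — but it is tailored to this particular inner function, whereas the paper's address-side argument (\Cref{thm:blockSensitivity}) is stated for any inner function with large $\min(\bs_0,\bs_1)$ and is reused elsewhere. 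Since you also offer the citation to \Cref{subsec:blockSensitivity} as an alternative, nothing is missing; just be aware that the certificate-side argument requires the formatting caveat you flagged (no slack or redundancy in $y^{(\ell)}$), which the address-side argument avoids entirely.
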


\begin{proof}
    It follows from \Cref{thm:cheatsheetRpvsDpSeparation} and \Cref{cor:bsForCanonicalCheatsheet}.
\end{proof}

\subsection{Separation between quantum and randomized complexities}
\label{sec:CheatsheetQuantRandSep}

Using the parallel upper bound for the canonical cheatsheet function in \Cref{sec:CheatsheetUpperbounds}, we show the desired unbounded separation between quantum and randomized complexities.

\begin{prob}[\textsc{cheatsheet forrelation}] \label{prob:forcancs}
    Let $f:\{0,1\}^N \to \{0,1\}$ be $\textsf{for}_N$ (see \emph{\Cref{prob:Forrelation}}).

    \begin{problemTotal}
        \probleminput{An oracle for a string $X \in \{0,1\}^{M}$ (with $M = \tilde{\Theta}(N^{12})$\footnote{Since a certificate of $\andor$ on $N^2$ bits could be represented using $\tilde{\Theta}(N)$ bits, we will have the number of input bits for the function $(f \circ g)^c_{\cheatsheet}$ to be $\tilde{\Theta}(N^{12})$.}).}
        \problemquestion{Compute $f^c_{\canonicalcheatsheet}(X)$ (see \emph{\Cref{def:Canonicalcheatsheet}}).}
    \end{problemTotal}
\end{prob}

\begin{theorem}
    Let $h$ be the \textsc{cheatsheet forrelation} problem. Then, there exists $p = \Theta(N^{2})$ such that
    \begin{enumerate} [label=(\roman*)]
        \item $\emph{\Rand}^{p \parallel}(h) = \tilde{\Omega}(\sqrt{N})$,
        \inlineitem $\emph{\Quant}^{p \parallel}(h) \leq 3$,
        \inlineitem $\emph{\Quant}^{3 \perp}(h) \leq p$.
    \end{enumerate}
    \label{thm:cheatsheetQpvsRpSeparation}
\end{theorem}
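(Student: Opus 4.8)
The plan is to read off parts (ii) and (iii) directly from the canonical-cheatsheet parallel upper bound of \Cref{cor:can_cheatsheet}, and to derive the randomized lower bound (i) through the chain: randomized hardness of \textsc{Forrelation} $\Rightarrow$ randomized hardness of the composed function $\textsf{for}_N\circ\textsf{And}_N\circ\textsf{Or}_N$ via the randomized composition theorem $\Rightarrow$ randomized hardness of the whole cheatsheet function $h$ via \Cref{thm:cheatsheetcomplexities} $\Rightarrow$ parallel hardness of $h$ via the elementary inequality $\Rand^{p\parallel}(h)\ge\Rand(h)/p$. For (ii) and (iii): recall $h=(\textsf{for}_N)_{\canonicalcheatsheet}$ and $\Quant(\textsf{for}_N)=1$, since one quantum query solves \textsc{Forrelation} (\Cref{prob:Forrelation}). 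Applying \Cref{cor:can_cheatsheet}, item \ref{itm:quant_parallel_can_cheatsheet_upperbound}, with $f=\textsf{for}_N$ and $k=1$ yields a parallelism $p=\max(cN^2\log N,V)=\tilde\Theta(N^2)$ with $\Quant^{p\parallel}(h)\le k+2=3$ and $\Quant^{3\perp}(h)\le p$, which are exactly (ii) and (iii). I fix this $p$ for the rest of the argument; it is of order $N^2$ up to polylogarithmic factors, matching the statement.

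For (i): first I would record $\Rand(\textsf{for}_N)=\tilde\Omega(\sqrt N)$, the Aaronson--Ambainis lower bound recalled after \Cref{prob:Forrelation}. By \Cref{def:Canonicalcheatsheet}, $h$ is the cheatsheet function of $f':=\textsf{for}_N\circ\textsf{And}_N\circ\textsf{Or}_N$. Writing $f'=(\textsf{for}_N\circ\textsf{And}_N)\circ\textsf{Or}_N$ and applying \Cref{thm:QueryComplexityComposition}(2) twice --- once with inner function $\textsf{Or}_N$ and once with inner function $\textsf{And}_N$, both of which lie in $\{\textsf{And},\textsf{Or}\}$ so the matching randomized composition lower bound is available --- together with $\Rand(\textsf{And}_N)=\Rand(\textsf{Or}_N)=\Theta(N)$, gives $\Rand(f')=\Omega(\Rand(\textsf{for}_N)\cdot N^2)=\tilde\Omega(N^{5/2})$. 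Since $c=10\log N=\poly\log N$, \Cref{thm:cheatsheetcomplexities} then yields $\Rand(h)=\Rand((f')^{c}_{\cheatsheet})=\tilde\Omega(\Rand(f'))=\tilde\Omega(N^{5/2})$. Finally, because any $q$-round $p$-parallel randomized algorithm for $h$ is simulated by a $pq$-query sequential randomized algorithm, we get $\Rand^{p\parallel}(h)\ge\Rand(h)/p=\tilde\Omega(N^{5/2})/\tilde\Theta(N^2)=\tilde\Omega(\sqrt N)$, establishing (i).

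The one delicate point is the appeal to the randomized composition theorem: general randomized query composition is false, so one cannot compose $\textsf{for}_N$ with the full $\textsf{And}_N\circ\textsf{Or}_N$ in a single step; instead one peels the $\textsf{And}$ and $\textsf{Or}$ layers off one at a time, which is precisely the regime \Cref{thm:QueryComplexityComposition}(2) covers (the outer function at the intermediate step, $\textsf{for}_N\circ\textsf{And}_N$, is partial, but that statement permits partial functions). Everything else is bookkeeping --- in particular one should check that the $\poly\log N$ overheads (from $c$, from encoding certificate addresses inside the data blocks, and hidden in the various $\tilde\Omega/\tilde\Theta$'s) contribute only $\poly\log N$ factors, which are absorbed by the $\tilde\Omega$ in (i) and by the $\tilde\Theta$ in $p$. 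I do not anticipate a genuine obstacle, since every ingredient is already established earlier in the paper.
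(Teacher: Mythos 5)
Your proposal is correct and follows essentially the same route as the paper: parts (ii) and (iii) read off from \Cref{cor:can_cheatsheet}, and part (i) via $\Rand(\textsf{for}_N)=\tilde\Omega(\sqrt N)$, the randomized composition theorem, \Cref{thm:cheatsheetcomplexities}, and division by $p$. Your extra care in peeling off the $\textsf{And}$ and $\textsf{Or}$ layers one at a time (since the general randomized composition lower bound is unavailable) is a detail the paper's proof elides but implicitly relies on, so it is a welcome clarification rather than a divergence.
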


\begin{proof}
    We prove each item one by one. Recall that $f = \textsf{for}_N$ in \Cref{prob:djcancs} and let $g = (\andor)_{N^2}$.  
    \begin{enumerate}[label=(\roman*)]
        \item It is well-known that $\Rand(f) = \tilde{\Omega}(\sqrt{N})$ so $\Rand(f \cdot g) = \tilde{\Omega}(N^{5/2})$ by \Cref{thm:QueryComplexityComposition}. It follows, from \Cref{thm:cheatsheetcomplexities}, that $\Rand(h) = \tilde{\Omega}(N^{5/2})$. Therefore, $\Rand^{p \parallel} = \Omega(\Rand(h)/p) = \tilde \Omega(\sqrt{N})$.
        \item and\!\!\!\inlineitem The desired result directly follows from \cref{itm:quant_parallel_can_cheatsheet_upperbound} of \Cref{cor:can_cheatsheet}. 
    \end{enumerate}
\vspace{-1pc}
\end{proof}

Analogous to the randomized vs deterministic case, the previous theorem gives us the separation we want, but note that the input is of size \(\tilde \Theta (N^{12})\) there. To make the actual separation more evident (and to emphasize the relatively small size of the exponent), we restate it with an input size of \(M\) in the corollary below. To make clear where this separation occurs, we also mention the block sensitivity which comes from our results in \Cref{subsec:blockSensitivity}.
\begin{corollary}
    There exists a total Boolean function $h: \{0,1\}^M \to \{0,1\}$ and an integer $p \in [M]$ such that 
    \begin{enumerate}[label=(\roman*)]
        \item $\emph{\Rand}^{p \parallel}(h) = \tilde\Omega(M^{1/24})$,
        \inlineitem $\emph{\Quant}^{p \parallel}(h) \leq 3$,
        \inlineitem $\emph{\Quant}^{3 \perp}(h) \leq p$,
        \inlineitem $\emph{\bs}(h) = \tilde{\Omega}(p)$
    \end{enumerate}
\end{corollary}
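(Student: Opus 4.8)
The plan is to instantiate $h$ as the \textsc{cheatsheet forrelation} function of \Cref{prob:forcancs}, which lives on $M = \tilde\Theta(N^{12})$ input bits, and then simply re-express the guarantees of \Cref{thm:cheatsheetQpvsRpSeparation} in terms of $M$ rather than $N$. Inverting $M = \tilde\Theta(N^{12})$ gives $N = \tilde\Theta(M^{1/12})$, hence $\sqrt N = \tilde\Theta(M^{1/24})$ and $p = \Theta(N^2) = \tilde\Theta(M^{1/6}) \in [M]$. With this substitution, \Cref{thm:cheatsheetQpvsRpSeparation} directly yields $\Rand^{p\parallel}(h) = \tilde\Omega(\sqrt N) = \tilde\Omega(M^{1/24})$, which is item (i), while $\Quant^{p\parallel}(h) \le 3$ and $\Quant^{3\perp}(h) \le p$ transfer verbatim as items (ii) and (iii). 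The only bookkeeping is that the polylogarithmic slack hidden in $M = \tilde\Theta(N^{12})$ is a polylogarithmic factor in $N$, hence also in $M$ (since $N = \tilde\Theta(M^{1/12})$), so it is absorbed into the $\tilde\Omega(\cdot)$ throughout.

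It remains to prove item (iv), $\bs(h) = \tilde\Omega(p)$, which is the one genuinely new ingredient; here I would invoke \Cref{cor:bsForCanonicalCheatsheet} from \Cref{subsec:blockSensitivity}. Its content is a lower bound of $\tilde\Omega(p) = \tilde\Omega(N^2)$ on the block sensitivity of the canonical cheat sheet function, and the natural route is to exhibit that many pairwise disjoint sensitive blocks directly from its structure. Fix a $1$-input $z$ of $h$: then all $c = \Theta(\log N)$ instances of the inner function $f' = \textsf{for}_N \circ \textsf{And}_N \circ \textsf{Or}_N$ are in the domain, and the cheat sheet $y^{(\ell)}$ addressed by their values is a valid certificate consisting of $V = \tilde\Theta(N^2)$ bits that encode roughly that many (address, value) pairs checked against the input in condition \ref{part:cscondition2} of \Cref{def:cheatsheet}. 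Corrupting a single ``value'' bit of $y^{(\ell)}$ makes the certificate disagree with the (fixed) input bit at the corresponding address, so the verification fails and $h$ flips to $0$; these $\tilde\Theta(N^2)$ single-bit blocks are disjoint, giving $\bs(h) \ge \bs_z(h) = \tilde\Omega(N^2) = \tilde\Omega(p)$, which is item (iv). Equivalently, item (iv) together with the parallel upper bounds of \Cref{cor:can_cheatsheet} pins down $\bs(h)$ up to polylogarithmic factors, matching the claim in the introduction that the separation occurs essentially at the block sensitivity.

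The main obstacle is entirely in item (iv): one must verify carefully, on a suitable $1$-input, that enough cheat-sheet bits are \emph{individually} sensitive — i.e.\ that corrupting one bit of the certificate cannot accidentally leave behind a still-valid certificate and thus fail to change the output — which is exactly what the argument behind \Cref{cor:bsForCanonicalCheatsheet} supplies. Items (i)--(iii), by contrast, are a purely mechanical reparametrization of \Cref{thm:cheatsheetQpvsRpSeparation}, and the choice $p = \Theta(N^2) = \tilde\Theta(M^{1/6})$ together with the conversion of $\tilde\Omega$-bounds between exponents in $N$ and in $M$ is routine.
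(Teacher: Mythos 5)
Your proposal is correct, and for items (i)--(iii) it is exactly the paper's argument: a mechanical reparametrization of \Cref{thm:cheatsheetQpvsRpSeparation} with $N = \tilde\Theta(M^{1/12})$, $p = \tilde\Theta(M^{1/6})$. For item (iv) you invoke the same result the paper does (\Cref{cor:bsForCanonicalCheatsheet}), so the overall structure matches; the one place you genuinely diverge is in how you would justify that block-sensitivity bound. The paper's \Cref{thm:blockSensitivity} places the $\tilde\Omega(N^2)$ disjoint sensitive blocks in the \emph{address} section: it chooses each $\andor$ input to have $\tilde\Omega(N)$ disjoint sensitive blocks and argues by cases that flipping any one of them either pushes $\In$ out of the domain, redirects to a spurious cheat sheet, or invalidates the fixed certificate. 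You instead place the sensitive blocks in the \emph{data} section, flipping individual value bits of the designated cheat sheet $y^{(\ell)}$ so that the recorded value disagrees with the (unchanged) input bit at the recorded address. Your count is right — $c$ copies times $N$ inner $\andor$'s times certificate size $N$ gives $\tilde\Theta(N^2)$ value bits, each individually sensitive and at distinct positions — so this is a valid and arguably shorter proof of the $\tilde\Omega(N^2)$ lower bound for this particular instantiation. What the paper's version buys is generality: \Cref{thm:blockSensitivity} is stated for an arbitrary inner gadget $h$ with large $\min(\bs_0(h),\bs_1(h))$ and yields $\tilde\Omega(BG)$, which is reused elsewhere (e.g.\ with $\bkk$ as the inner function), whereas your certificate-corruption argument ties the bound to the cheat-sheet length rather than to the inner function's block sensitivity. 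Either suffices for the corollary as stated.
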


\begin{proof}
    It follows from \Cref{thm:cheatsheetQpvsRpSeparation} and \Cref{cor:bsForCanonicalCheatsheet}.
\end{proof}

\subsection{Discussion of block sensitivity}
\label{subsec:blockSensitivity}

The block sensitivity of a function \(\bs(f)\) is intimately related with parallel exponential advantage. Jeffery et al \cite{Jeffery17} show that if \(p = O(bs(f)^{1-\epsilon})\) for a constant $\epsilon>0$, then \(\Quant^{p\parallel}(f)\) and \(\Det^{p\parallel}(f)\) are polynomially related. We use \Cref{thm:blockSensitivity} to give a lower bound on \(\bs(f)\) for functions that exhibit exponential parallel advantage and show that this phenomenon occurs exactly when \(p = \tilde O(bs(f))\). Moreover, we observe that showing an exponential parallel separation has the unexpected application of providing an upper bound for \(\bs(f)\), and we exploit this to characterize \(\bs(f)\) for cheat sheet functions in \Cref{cor:bsThetaforCanonicalCheatsheet}. 

Below is a general theorem to lower bound the block sensitivity of cheat sheet functions.

\begin{theorem}
\label{thm:blockSensitivity}
Let \(g: \mathcal D \to \{0,1\}\) be a partial function where \(\mathcal D \subseteq \{0,1\}^G\) and let \(h: \{0,1\}^H \to \{0,1\}\) be a total function with \(\min(\bs_0(h), \bs_1(h)) = \tilde \Omega(B)\). Thus, \(f = g \circ h : \{0,1\}^{GH} \to \{0,1\}\) is a partial function. Construct the cheat sheet function \(f^c_{cs}\) as described in \Cref{def:cheatsheet}. Then, \(\bs(f^c_{cs}) = \tilde \Omega (BG)\).
\end{theorem}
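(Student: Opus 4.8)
The plan is to construct a large set of disjoint sensitive blocks for $f^c_{cs}$ at a carefully chosen input. I would start from a $1$-input $z^\star = (x^{(1)},\dots,x^{(c)}, y^{(1)},\dots,y^{(2^c)})$ of $f^c_{cs}$, i.e. one where every $x^{(i)} = (x^{(i)}_1,\dots,x^{(i)}_G) \in \mathcal D^G$ (each $x^{(i)}_j$ an input to $h$), the string $\ell$ with $\ell_i = g(x^{(i)})$ indexes the relevant cheatsheet, and $y^{(\ell)}$ is a valid certificate for condition~\ref{part:cscondition1} and the values $\ell_i = f(x^{(i)})$. Fix one coordinate, say $i=1$, and one inner copy $j^\star$ of $h$ inside $x^{(1)}$. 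By hypothesis $\min(\bs_0(h),\bs_1(h)) = \tilde\Omega(B)$, so whichever value $h$ takes on $x^{(1)}_{j^\star}$, there is an input to $h$ of the \emph{opposite} value with a block-sensitivity structure giving $\tilde\Omega(B)$ disjoint sensitive blocks — but to exploit this I actually want blocks that are sensitive \emph{for $f^c_{cs}$ at $z^\star$}, which is subtler: flipping a single block inside one $h$-copy changes $h(x^{(1)}_{j^\star})$, hence changes $f(x^{(1)})$ through the $g\circ h$ composition only if $x^{(1)}_{j^\star}$ is a \emph{sensitive} coordinate of $g$ at $x^{(1)}$, and even then it changes the index $\ell$ and so typically makes $z^\star$ a $0$-input (the old cheatsheet $y^{(\ell)}$ no longer certifies anything). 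So each such block is indeed a sensitive block of $f^c_{cs}$.

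The key steps, in order: (1) choose $x^{(1)}$ so that $g$ is block-sensitive at it — pick a $G$-input $X^\star$ with $\bs_{X^\star}(g) = \tilde\Omega(BG)$... wait, that is too strong; instead pick $X^\star \in \mathcal D$ and exploit that we get one sensitive block of $g$ per sensitive coordinate, combined with the $\tilde\Omega(B)$ disjoint $h$-blocks inside \emph{each} such coordinate's copy of $h$. Concretely: take $x^{(1)}$ so that $g$ has $\tilde\Omega(G)$ pairwise-disjoint sensitive blocks $S_1,\dots,S_m$ at the corresponding point of $\{0,1\}^G$; inside each block $S_t$ pick one coordinate $j_t$, and inside the copy $h$ at position $j_t$ use the $\tilde\Omega(B)$ disjoint sensitive blocks of $h$ (at value $= h(x^{(1)}_{j_t})$). (2) Argue that flipping any one of these inner $h$-blocks flips $h$ at position $j_t$, which flips the $S_t$-component of the $g$-input, which flips $g(x^{(1)})$ (since $S_t$ is sensitive), hence flips $\ell_1$, hence — since $y^{(\ell)}$ is no longer the addressed cheatsheet, and with high probability the newly addressed $y^{(\ell')}$ does not certify — flips $f^c_{cs}$ from $1$ to $0$. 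Here I would need to be slightly careful: it is cleanest to arrange $x^{(1)}$ so that changing $\ell_1$ lands on a cheatsheet $y^{(\ell')}$ that is identically some fixed invalid pattern at $z^\star$, which one can do since $z^\star$ is under our control and the unaddressed cheatsheets are free. (3) Count: we get $m \cdot \tilde\Omega(B) = \tilde\Omega(G)\cdot\tilde\Omega(B) = \tilde\Omega(BG)$ pairwise-disjoint sensitive blocks (disjointness is immediate: different $S_t$ are disjoint, and within one $S_t$ the $h$-blocks are disjoint and all live inside copy $j_t$). Hence $\bs(f^c_{cs}) = \tilde\Omega(BG)$.

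The main obstacle I expect is step (2): verifying that flipping an inner $h$-block genuinely changes the \emph{output of $f^c_{cs}$} and not merely the output of $g$. The chain "flip $h$-block $\Rightarrow$ flip $h(x^{(1)}_{j_t})$ $\Rightarrow$ flip $g(x^{(1)})$ $\Rightarrow$ flip $\ell_1$ $\Rightarrow$ wrong cheatsheet addressed $\Rightarrow$ $f^c_{cs}$ becomes $0$" has to be made watertight, in particular the last implication, which relies on setting up $z^\star$ so that every cheatsheet other than $y^{(\ell)}$ (or at least the $2c$ cheatsheets at Hamming distance $1$ from $\ell$) fails to certify. Since Definition~\ref{def:cheatsheet} only constrains the addressed cheatsheet, we are free to fix the others adversarially, so this is arrangeable; the write-up just needs to state the construction of $z^\star$ explicitly and check that all the remaining $x^{(i)}$ ($i \neq 1$) and the block structure of $g$ and $h$ can be chosen consistently. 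A secondary minor point is handling the $\tilde\Omega$'s: both $\bs$-lower bounds carry polylog-factor losses, and one should confirm these multiply cleanly to $\tilde\Omega(BG)$, which they do under the paper's convention that $\tilde\Omega$ suppresses polylogs in all asymptotic parameters.
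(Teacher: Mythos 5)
There is a genuine gap, and it comes from routing the argument through the sensitivity of $g$. First, your construction begins by choosing $x^{(1)}$ so that $g$ has $\tilde\Omega(G)$ pairwise-disjoint sensitive blocks at the corresponding point of $\{0,1\}^G$ — but the theorem assumes nothing about $\bs(g)$; $g$ is an arbitrary partial function and may have block sensitivity $O(1)$, so this starting point is unavailable in general. Second, even granting such blocks $S_1,\dots,S_m$, your step (2) flips a single coordinate $j_t \in S_t$ (by flipping an inner $h$-block at position $j_t$) and concludes that $g(x^{(1)})$ flips ``since $S_t$ is sensitive.'' That inference is false unless $|S_t|=1$: block sensitivity only guarantees that flipping \emph{all} of $S_t$ changes $g$, not that flipping one coordinate of $S_t$ does. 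So the chain from an inner $h$-flip to a change in $\ell_1$ breaks down, and with it the claim that these blocks are sensitive for $f^c_{cs}$.

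The idea you are missing is that no propagation to $g$'s output is needed at all. In Definition~\ref{def:cheatsheet}, the addressed cheatsheet $y^{(\ell)}$ certifies the value of \emph{every} inner instance of $h$ (every input bit fed to every copy of $g$). The paper's proof therefore takes, for each of the $c$ copies and each of the $G$ positions, an input $x^{(i)}_j$ to $h$ realizing the prescribed bit with $\tilde\Omega(B)$ disjoint sensitive blocks (this is where the hypothesis $\min(\bs_0(h),\bs_1(h))=\tilde\Omega(B)$ is used — you need blocks at an $h$-input of the \emph{given} value, not the opposite one), and observes that flipping any such block flips some $h$-value and hence makes the input a $0$-input in every case: either the new $G$-bit string leaves $\mathcal D$ (condition~\ref{part:cscondition1} fails), or $g$'s value flips so the address changes and points to spurious data, or $g$'s value is unchanged but the old certificate no longer matches the flipped bit. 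This gives $cG\cdot\tilde\Omega(B)=\tilde\Omega(BG)$ disjoint sensitive blocks with no assumption on $g$ and no case left uncovered.
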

\begin{proof}
We prove this by constructing an input \(\mathcal X\) to \(f^c_{cs}\) that attains block sensitivity \(\tilde \Omega(BG)\) as follows:
\begin{enumerate}
\item For each of the \(c\) copies of \(g\), select some input \(z^{(i)}\) where \(|z^{(i)}| = G\) such that \(z^{(i)} \in \mathcal D\). Denote the bits of \(z^{(i)}\) as \(z^{(i)}_j\).
\item For each \(z^{(i)}_j\), select string \(x^{(i)}_j\) such that  \(|x^{(i)}_j| = H\), \(h(x^{(i)}_j) = z^{(i)}_j\) and \(\bs^h(x^{(i)}_j) = \tilde \Omega(B)\). The concatenation of all the \(x^{(i)}_j\)'s constitutes the address section of our input string.
\item Let \(\ell \in \{0,1\}^c\) be such that \(\ell_i = f(x^{(i)})\) for all \(i\). Then let \(y^{(l)}\) contain a correct certificate. At all other \(y^{(k)}\), place some spurious data.
\end{enumerate}

Observe that each \(x^{(i)}_j\) has \(\tilde \Omega(B)\) disjoint sensitive blocks for the function \(h\), and there are \(\tilde \Omega(G)\) many \(x^{(i)}_j\)'s. We will argue that all of these \(\tilde \Omega(BG)\) disjoint blocks of \(\mathcal X\) are sensitive for \(f^c_{cs}\). First note that by construction, \(f_{cs}^c(\mathcal X) = 1\). Suppose one of these blocks, say one in \(x^{(i)}_j\), is flipped. Because this block is sensitive for \(h\), this causes \(z^{(i)}_j\) to flip to \(z^{(i)'}_j\), and we denote the new string by \(z^{(i)'}\), and the new input as \(\mathcal X'\). Then, one of three cases could occur: 
\begin{enumerate}
\item \(z^{(i)'} \notin \mathcal D\), which means by definition, \(f_{cs}^c(\mathcal X') = 0\).
\item \(z^{(i)'} \in \mathcal D\) but \(g(z^{(i)'}) \neq g(z^{(i)})\). This results in a new \(l'\) but \(y^{(l')}\) contains spurious data by construction, so \(f_{cs}^c(\mathcal X') = 0\).
\item \(z^{(i)'} \in \mathcal D\) and \(g(z^{(i)'}) = g(z^{(i)})\). This time, the certificate in \(y^{(l)}\) is wrong, hence \(f_{cs}^c(\mathcal X') = 0\).
\end{enumerate}

Hence, \(\bs(f_{cs}^c) = \tilde \Omega (BG)\).

\end{proof}

The above theorem immediately implies a lower bound for the block sensitivity of the canonical cheat sheet function.

\begin{corollary}
\label{cor:bsForCanonicalCheatsheet}
    For any partial function \(f: \mathcal D \to \{0,1\}\) where \(\mathcal D \subseteq \{0,1\}^N\), the canonical cheatsheet function \(f_{\canonicalcheatsheet}\) satisfies \(\emph{\bs}(f_{\canonicalcheatsheet}) = \tilde \Omega(N^2)\). 
\end{corollary}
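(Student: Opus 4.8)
The plan is to obtain this as an immediate instantiation of \Cref{thm:blockSensitivity}. Recall from \Cref{def:Canonicalcheatsheet} that $f_{\canonicalcheatsheet} = (f')^{c}_{\cheatsheet}$ where $f' = f \circ \textsf{And}_N \circ \textsf{Or}_N$ and $c = 10\log N$. To apply \Cref{thm:blockSensitivity} I would set $g \defeq f$, viewed as a partial function on $G = N$ bits, and $h \defeq \textsf{And}_N \circ \textsf{Or}_N$, viewed as a total function on $H = N^2$ bits, so that $f' = g \circ h$ and $f_{\canonicalcheatsheet} = (g \circ h)^{c}_{\cheatsheet}$ is exactly the cheat sheet function appearing in the theorem.

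The only hypothesis to verify is $\min(\bs_0(h),\bs_1(h)) = \tilde\Omega(B)$ for $h = \textsf{And}_N \circ \textsf{Or}_N$, and I would check it holds with $B = N$ by exhibiting explicit hard inputs. For the $1$-side, consider the input in which each of the $N$ inner $\textsf{Or}$-blocks contains exactly one $1$: each such $1$ is a singleton sensitive block, since flipping it makes that $\textsf{Or}$ (and hence the $\textsf{And}$) evaluate to $0$, giving $N$ pairwise disjoint sensitive blocks, so $\bs_1(h) \ge N$. For the $0$-side, take the input in which one $\textsf{Or}$-block is $0^N$ and every other block has exactly one $1$: each of the $N$ bits of the all-zero block is a singleton sensitive block, since setting it to $1$ makes every $\textsf{Or}$ (and hence the $\textsf{And}$) evaluate to $1$, so $\bs_0(h) \ge N$. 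Thus $\min(\bs_0(h),\bs_1(h)) = \Omega(N)$.

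Feeding $G = N$ and $B = N$ into \Cref{thm:blockSensitivity} then gives $\bs(f_{\canonicalcheatsheet}) = \bs((g\circ h)^{c}_{\cheatsheet}) = \tilde\Omega(BG) = \tilde\Omega(N^2)$, which is the claim. I do not expect any real obstacle: the argument is essentially parameter bookkeeping plus the two short combinatorial observations about the balanced $\textsf{And}$–$\textsf{Or}$ tree. The only point requiring mild care is matching the roles of $g$, $h$ and the dimensions $G$, $H$ to the statement of \Cref{thm:blockSensitivity} correctly — in particular noting that it is the two-sided block sensitivity of the \emph{inner} total function $\textsf{And}_N \circ \textsf{Or}_N$ that is needed, not any property of $f$ itself.
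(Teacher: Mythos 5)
Your proposal is correct and is essentially identical to the paper's proof: both instantiate \Cref{thm:blockSensitivity} with the outer partial function $f$ on $G=N$ bits and the inner total function $\textsf{And}_N\circ\textsf{Or}_N$ on $N^2$ bits with $B=N$, yielding $\bs(f_{\canonicalcheatsheet})=\tilde\Omega(N^2)$. The only difference is that you explicitly verify $\min(\bs_0,\bs_1)=\Omega(N)$ for the $\textsf{And}$--$\textsf{Or}$ tree via the two hard inputs, a fact the paper asserts without proof; your verification is correct.
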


\begin{proof}
    We have \(h = f \circ \andor \), where \(\andor\) satisfies \(\min(\bs_0, \bs_1) = \tilde \Omega(N)\). Thus, \(g_{\canonicalcheatsheet}:= f^c_{\cheatsheet}\) satisfies the conditions in \Cref{thm:blockSensitivity} with \(B = N\) and \(G = N\). Hence, \(\bs(f_{\canonicalcheatsheet}) = \tilde \Omega(N^2)\).
\end{proof}

Combining the above with the upper bound we get from Jeffery et al \cite{Jeffery17} and our \Cref{thm:cheatsheetQpvsRpSeparation}, we characterize the block sensitivity for cheat sheet functions. 

\begin{corollary}
\label{cor:bsThetaforCanonicalCheatsheet}
Let \(f: \mathcal D \to \{0,1\}\) be a partial function such that  \(\mathcal D \subseteq \{0,1\}^N\) and \(\emph{\Quant}(f) = k\) and \(\emph{\Rand}(f) = \tilde \Theta (N^\alpha)\) hold for some \(0 < \alpha \leq 1\). Then, the canonical cheatsheet function \(g_{\canonicalcheatsheet}\) satisfies \(\emph{\bs}(g_{\canonicalcheatsheet}) = \tilde \Omega(N^2)\) and $\emph{\bs}(g_{\canonicalcheatsheet}) = O(N^{2+\epsilon})$ for all $\epsilon > 0$. 
\end{corollary}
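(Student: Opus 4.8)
The plan is to prove the two bounds separately. The lower bound $\emph{\bs}(g_{\canonicalcheatsheet}) = \tilde\Omega(N^2)$ is exactly \Cref{cor:bsForCanonicalCheatsheet}, so all the work lies in the matching upper bound $\emph{\bs}(g_{\canonicalcheatsheet}) = O(N^{2+\epsilon})$. For that I would use the contrapositive of the result of \cite{Jeffery17} (their parallel generalization of \cite{Beals98}): since $\emph{\Det}^{p\parallel}(h) = \poly(\emph{\Quant}^{p\parallel}(h))$ holds for every total $h$ whenever $p = O(\emph{\bs}(h)^{1-\epsilon})$ for a constant $\epsilon > 0$, any total function exhibiting a \emph{super-polynomial} gap between $\emph{\Det}^{p\parallel}$ and $\emph{\Quant}^{p\parallel}$ at parallelism $p$ must satisfy $p = \omega(\emph{\bs}(h)^{1-\epsilon})$ for every constant $\epsilon > 0$, equivalently $\emph{\bs}(h) = O(p^{1+\epsilon})$ for every $\epsilon > 0$. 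So it suffices to produce such a super-polynomial parallel gap for $g_{\canonicalcheatsheet}$ at parallelism $p = \tilde\Theta(N^2)$.

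To produce the gap I would fix $p = \tilde\Theta(N^2)$ to be the parallelism guaranteed by the quantum part of \Cref{cor:can_cheatsheet}, giving $\emph{\Quant}^{p\parallel}(g_{\canonicalcheatsheet}) \le k+2 = O(1)$ (treating $k = \emph{\Quant}(f)$ as a constant). For the deterministic lower bound, the trivial parallel-to-sequential simulation gives $\emph{\Det}^{p\parallel}(g_{\canonicalcheatsheet}) \ge \emph{\Rand}^{p\parallel}(g_{\canonicalcheatsheet}) \ge \emph{\Rand}(g_{\canonicalcheatsheet})/p$, and combining \Cref{thm:cheatsheetcomplexities} with two applications of the randomized composition lower bound of \Cref{thm:QueryComplexityComposition} (the inner functions being $\textsf{Or}_N$ and $\textsf{And}_N$, each falling under the $\{\textsf{And},\textsf{Or}\}$ case) yields $\emph{\Rand}(g_{\canonicalcheatsheet}) = \tilde\Omega(\emph{\Rand}(f)\cdot N^2) = \tilde\Omega(N^{\alpha+2})$; dividing by $p = \tilde\Theta(N^2)$ gives $\emph{\Det}^{p\parallel}(g_{\canonicalcheatsheet}) = \tilde\Omega(N^\alpha)$. (For $f = \textsf{for}_N$ this is just item (i) of \Cref{thm:cheatsheetQpvsRpSeparation}.) Since $\alpha > 0$ is a fixed constant, $\tilde\Omega(N^\alpha)$ versus $O(1)$ is a super-polynomial gap, so the contrapositive above forces $\emph{\bs}(g_{\canonicalcheatsheet}) = O(p^{1+\epsilon})$ for every $\epsilon > 0$, and with $p = \tilde\Theta(N^2)$ this reads $\emph{\bs}(g_{\canonicalcheatsheet}) = O(N^{2+\epsilon})$ for every $\epsilon > 0$, as desired.

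The step I expect to be most delicate is making ``super-polynomial'' rigorous: \cite{Jeffery17} only guarantee $\emph{\Det}^{p\parallel} = \poly(\emph{\Quant}^{p\parallel})$ with an unspecified exponent, so the argument genuinely needs $\emph{\Quant}(f) = N^{o(1)}$ — harmless for the instances of interest, where $\emph{\Quant}(f)$ is constant (e.g.\ $\emph{\Quant}(\textsc{Forrelation})=1$), but if $\emph{\Quant}(f) = N^{\Omega(1)}$ this route would not rule out a larger block sensitivity. Two minor bookkeeping points also need to be checked: that the randomized composition \emph{lower} bound applies, which it does because $\andor = \textsf{And}_N \circ \textsf{Or}_N$ is built from $\textsf{And}$ and $\textsf{Or}$ gates (the hypothesis of \Cref{thm:QueryComplexityComposition}(2)); and that the polylogarithmic factors suppressed by all the $\tilde\Omega$'s only perturb the exponent by an arbitrarily small amount, which is absorbed into the $O(N^{2+\epsilon})$ target.
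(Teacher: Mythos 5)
Your proposal is correct and follows essentially the same route as the paper's own proof: the lower bound is quoted from \Cref{cor:bsForCanonicalCheatsheet}, and the upper bound comes from combining the $O(1)$-round quantum upper bound at $p=\tilde\Theta(N^2)$ with the $\tilde\Omega(N^\alpha)$ classical parallel lower bound and then invoking the contrapositive of Theorem~12 of \cite{Jeffery17} with the exponent $\delta$ tuned so that $p=\Omega(\bs^{1-\delta})$ yields $\bs = O(N^{2+\epsilon})$. Your explicit caveat that the argument needs $\Quant(f)=N^{o(1)}$ for the gap to be genuinely super-polynomial is a point the paper's proof glosses over (it silently treats $k$ as constant), so flagging it is a small improvement rather than a deviation.
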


\begin{proof}
Fix an $\epsilon$ and let $\delta = 1-\frac{2}{2+\epsilon}$. Using the same argument as in \Cref{thm:cheatsheetQpvsRpSeparation}, we can show that at \(p = \tilde \Theta(N^2)\), \(\Quant^{p\parallel}(f_{\canonicalcheatsheet}) = O(1)\) whereas \(\Rand^{p\parallel}(f_{\canonicalcheatsheet}) = \Omega(N^\alpha)\). However, from (\(\cite{Jeffery17}\), Theorem 12), we know that for all \(p = O(\bs(f)^{1-\delta})\), \(\Rand^{p\parallel}(h) = O \left ( \poly \left (\Quant^{p||}(h) \right )\right)\). Hence, it must be the case that \(p = \Omega(\bs(f_{\canonicalcheatsheet})^{1-\delta})\) so \(\bs(f_{\canonicalcheatsheet}) = O(N^{2+\epsilon})\). Combining it with \Cref{cor:bsForCanonicalCheatsheet} proves the desired result.
\end{proof}

%%%%%%%%%%%%%%%%%%%%%%%%%%%%%%%%%%%%%%%%%%%%%%%%%%
\section{Unbounded genuine parallel separations} \label{sec:unbounded_genuine_separations}
In \Cref{sec:cheatsheet}, we demonstrated that cheatsheet functions can show an unbounded parallel speedup. However, these functions show a considerable sequential speedup too (in fact they were originally constructed in \cite{Aaronson16} for this very purpose). One might wonder then if a sequential speedup is always necessary to see a parallel speedup. Our goal is to prove this is not the case, and we do this over this section and the next (\Cref{sec:unbounded_genuine_separations}). In this section, we construct a \textit{partial} function that satisfies our properties (parallel advantage with no sequential advantage). We will use this partial function in the next section to construct a total function that accomplishes our goal. Our partial function uses the \textit{adaptive non-adaptive}, or \textsf{ANA} construction, which provides a framework for accomplishing these separations.

At a high level, the \textsf{ANA} function constructions will consist of two components: an adaptive function $f$ which does not parallelize, and a non-adaptive function $g$ which parallelizes, but admits an advantage for the stronger query model. In particular, an instance of the \textsf{ANA} function will consist of an instance of $f$ and an instance of $g$, promised that the evaluation of both will be the same. Therefore, an algorithm can choose which function it wants to solve, depending on the amount of parallelism it gets. By tuning the instance sizes, the ability of an algorithm in the stronger model to solve $g$ allows it to parallelize, while the inability of the algorithm in the weaker model to efficiently solve $g$ requires it to compute the \textsf{ANA} function only via computing $f$, which bottlenecks its ability to effectively use the given parallelism.

\subsection{Correlated functions}

Before we describe our separations, we will prove generic results relating the query complexity of a function with components promised to produce the same output with the query complexities of the components, which may be of independent interest.

Precisely, we consider the following problem of computing functions with input components correlated in their outputs.  

\begin{prob}[\correlated$(f,g)$] \label{prob:correlated}
Let $f:\domain_f \to \{0,1\}$ and $g:\domain_g \to \{0,1\}$ be any (partial) Boolean functions. 

\begin{problem}
    \probleminput{An oracle for the strings $x \in \domain_f$ and $y \in \domain_g$.}
    \problempromise{$f(x) = g(y)$.}
    \problemquestion{Output $f(x)=g(y)$.}
\end{problem}
\end{prob}

For an input $(x,y)$ to \correlated$(f,g)$, we will also refer to $x$ and $y$ as the $f$-component and $g$-component of the input to \correlated$(f,g)$ respectively. Also, we define $X_0 = f^{-1}(0)$, $X_1 = f^{-1}(1)$, $Y_0 = g^{-1}(0)$ and $Y_1 = g^{-1}(1)$.  

In the rest of this section, we aim to show that the deterministic and randomized $p$-parallel and quantum sequential query complexity of \correlated$(f,g)$ is the minimum of the relevant query complexities of computing $f$ and $g$. We begin with the result for deterministic $p$-parallel query complexities.

\begin{lemma} \label{lem:det_correlated_functions}
    Let $f:\domain_f \to \{0,1\}$ and $g:\domain_g \to \{0,1\}$ be any (partial) Boolean functions. Then, $\emph{\Det}^{p \parallel}(\correlated(f,g)) = \min(\emph{\Det}^{p \parallel}(f), \emph{\Det}^{p \parallel}(g))$. 
\end{lemma}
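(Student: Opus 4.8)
The plan is to prove the two inequalities separately. The upper bound $\emph{\Det}^{p\parallel}(\correlated(f,g)) \le \min(\emph{\Det}^{p\parallel}(f), \emph{\Det}^{p\parallel}(g))$ is immediate: an algorithm for $\correlated(f,g)$ may simply ignore the $g$-component and run an optimal $p$-parallel deterministic algorithm for $f$ on the $f$-component (or, symmetrically, ignore the $f$-component and run an optimal algorithm for $g$), which by the promise $f(x)=g(y)$ returns the correct value; each of its rounds uses at most $p$ queries to the combined input, so no parallelism is lost.

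For the lower bound, let $\mathcal{A}$ be a $k$-round $p$-parallel deterministic algorithm computing $\correlated(f,g)$; I want to show $k \ge \min(\emph{\Det}^{p\parallel}(f), \emph{\Det}^{p\parallel}(g))$. The key tool is a standard adversary/game characterization of $\emph{\Det}^{p\parallel}$: if $k < \emph{\Det}^{p\parallel}(f)$, then in the $k$-round game where a querier names at most $p$ indices per round and a responder answers them, the responder has a strategy keeping the set of consistent domain inputs nonempty throughout and meeting both $f^{-1}(0)$ and $f^{-1}(1)$ at the end. This follows from finite perfect-information game determinacy: a winning querier strategy is precisely a $k$-round $p$-parallel decision tree whose leaves are $f$-monochromatic, i.e.\ an algorithm computing $f$ in $k$ rounds, contradicting $k < \emph{\Det}^{p\parallel}(f)$.

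Now suppose $k < \emph{\Det}^{p\parallel}(f)$; I build a $k$-round $p$-parallel deterministic algorithm $\mathcal{B}$ for $g$, which forces $k \ge \emph{\Det}^{p\parallel}(g)$ and hence $k \ge \min(\emph{\Det}^{p\parallel}(f), \emph{\Det}^{p\parallel}(g))$. On input $y \in \domain_g$, $\mathcal{B}$ simulates $\mathcal{A}$: each round, it forwards $\mathcal{A}$'s queries into the $g$-component to its own oracle for $y$, and answers $\mathcal{A}$'s queries into the $f$-component internally using the responder strategy above (no oracle access needed for those). Since $\mathcal{A}$ makes at most $p$ queries per round, $\mathcal{B}$ makes at most $p$ real queries per round, so $\mathcal{B}$ runs in $k$ rounds. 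When $\mathcal{A}$ halts with output $b$, the responder has maintained some $x^0 \in f^{-1}(0)$ and $x^1 \in f^{-1}(1)$ consistent with the $f$-side transcript, while $y$ is consistent with the $g$-side transcript. If $g(y)=0$ then $(x^0,y)$ satisfies the promise and is consistent with $\mathcal{A}$'s full transcript, forcing $b = \correlated(f,g)(x^0,y) = 0$; if $g(y)=1$ then $(x^1,y)$ forces $b=1$. Either way $b=g(y)$, so $\mathcal{B}$ computes $g$. Symmetrically, if $k < \emph{\Det}^{p\parallel}(g)$ we obtain $k \ge \emph{\Det}^{p\parallel}(f)$, and in all cases $k \ge \min(\emph{\Det}^{p\parallel}(f), \emph{\Det}^{p\parallel}(g))$.

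The only step beyond routine bookkeeping is the game-determinacy lemma, and the subtlety there is purely in the setup: one must check the responder can always answer so as to keep at least one domain input alive (answer according to some surviving input), and that a winning querier strategy genuinely yields a legal depth-$k$ $p$-parallel decision tree for $f$. Everything else — splitting query indices by component, counting rounds, and checking the promise is satisfied for the surviving pairs — is straightforward. I would also note that this reduction template (replace the responder strategy by the relevant hard-distribution hybrid or by an adversary-matrix tensor construction) is exactly what will drive the analogous randomized-parallel and quantum-sequential statements for $\correlated(f,g)$ in the remainder of this section.
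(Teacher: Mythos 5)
Your proof is correct, and its core is the same adversary fact the paper relies on; the difference is in packaging. The paper's lower bound runs adversary strategies on \emph{both} components simultaneously against a purported algorithm making $t<\min(\Det^{p\parallel}(f),\Det^{p\parallel}(g))$ rounds, and concludes directly that both a promise-satisfying $0$-pair and $1$-pair remain consistent, a contradiction. You instead fix one side (say $k<\Det^{p\parallel}(f)$), answer the $f$-queries with the responder strategy, forward the $g$-queries to the real oracle, and thereby extract a genuine $k$-round $p$-parallel algorithm for $g$, forcing $k\ge\Det^{p\parallel}(g)$; you also make explicit, via determinacy of the finite query game, why the responder strategy exists — a step the paper leaves implicit in the phrase ``adversary strategy for $f$.'' The two arguments are interchangeable here, but your reduction form is slightly more informative (it shows any too-shallow algorithm for $\correlated(f,g)$ \emph{is} an algorithm for the other function) and would generalize more readily to settings where one wants to transfer algorithms rather than just derive contradictions. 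One further point in your favor: your upper bound correctly observes that running a single optimal algorithm (for whichever of $f,g$ is cheaper) suffices because of the promise, giving $\Det^{p\parallel}(\correlated(f,g))\le\min(\Det^{p\parallel}(f),\Det^{p\parallel}(g))$; the paper writes the looser bound $2\cdot\min(\cdot,\cdot)$, which does not actually match the equality asserted in the lemma statement, so your version is the one that proves the lemma as stated.
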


\begin{proof}
    Let $\mathcal{A}_f$ and $\mathcal{A}_g$ be optimal deterministic $p$-parallel algorithms for $f$ and $g$. Given an input $(x,y)$ for $\correlated(f,g)$, running either $\mathcal{A}_f$ on $x$ or $\mathcal{A}_g$ on $y$ is sufficient to compute it. Thus, $\Det^{p \parallel}(\correlated(f,g)) \leq 2 \cdot \min(\Det^{p \parallel}(f), \Det^{p \parallel}(g))$.

    We now argue for the lower bound using the adversary argument. That is, for any deterministic $p$-parallel algorithm for $\correlated(f,g)$, we answer queries to the $f$-component of the input according to an adversary strategy for $f$ and queries to the $g$-component of the input according to an adversary strategy for $g$. Now, suppose that $\mathcal{A}$ is a deterministic $p$-parallel algorithm for $\correlated(f,g)$ that makes $t < \min(\Det^{p \parallel}(f), \Det^{p \parallel}(g))$ queries. Then, it must have made less than $\Det^{p \parallel}(f)$ $\leq p$-parallel queries to the $f$-component of the input and less than $\Det^{p \parallel}(g)$ $p$-parallel queries to the $g$-component of the input. Therefore, after $t$ $p$-parallel queries, both the $f$-component and the $g$-component of the input are not determined to be $0$-inputs or $1$-inputs, meaning that there is a consistent $0$ or $1$ input to $\correlated(f, g)$. It follows that $t$ $p$-parallel queries are not sufficient to compute $\correlated(f,g)$ so $\Det^{p \parallel}(\correlated(f,g)) \geq \min(\Det^{p \parallel}(f), \Det^{p \parallel}(g))$.  
\end{proof}

Before we consider the randomized analog of \Cref{lem:det_correlated_functions}, we will prove the following useful proposition. For distributions $\dist$ and $\dist'$, we will usually say distinguishing $\dist$ and $\dist'$ to mean distinguishing an input $x$ distributed according to the distribution $\dist$ from an input $x'$ distributed according to the distribution $\dist'$ with probability $1/2 + \Omega(1)$.

\begin{proposition} \label{prop:hybrid_dist}
    Let $f:\domain_f \to \{0,1\}$ and $g:\domain_g \to \{0,1\}$ be any (partial) Boolean functions. Let $\dist^f_0$, $\dist^f_1$, $\dist^g_0$ and $\dist^g_1$ be any distributions over $X_0$, $X_1$, $Y_0$ and $Y_1$ respectively. Let $r_f = \emph{\Rand}^{p \parallel}(\dist^f_0, \dist^f_1)$ and $r_g = \emph{\Rand}^{p \parallel}(\dist^g_0, \dist^g_1)$ be the randomized $p$-parallel query complexities of distinguishing $\dist^f_0$ and $\dist^f_1$ and distinguishing $\dist^g_0$ and $\dist^g_1$ respectively. Then, the randomized $p$-parallel query complexity of distinguishing the distributions $\dist_0 = \dist^f_0 \times \dist^g_0$ and $\dist_1 = \dist^f_1 \times \dist^g_1$ is $\emph{\Rand}^{p \parallel}(\dist_0, \dist_1) = \Omega(\min(r_f, r_g))$. 
   
\end{proposition}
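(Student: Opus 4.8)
The plan is to run a standard hybrid argument. I would take any randomized $p$-parallel algorithm $\mathcal{A}$ that distinguishes $\dist_0 = \dist^f_0 \times \dist^g_0$ from $\dist_1 = \dist^f_1 \times \dist^g_1$ with some constant bias $\delta > 0$ using $t$ rounds, and show that $t \geq \min(r_f, r_g)$. The key device is the hybrid distribution $\dist_{\mathrm{hyb}} = \dist^f_0 \times \dist^g_1$, which need not satisfy the promise $f(x)=g(y)$ — it is used purely as an analytic intermediate. By the triangle inequality applied to $\mathcal{A}$'s acceptance probabilities, $\mathcal{A}$ must distinguish either $\dist_0$ from $\dist_{\mathrm{hyb}}$, or $\dist_{\mathrm{hyb}}$ from $\dist_1$, with bias at least $\delta/2$.

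In the first case, $\dist_0$ and $\dist_{\mathrm{hyb}}$ have the same $f$-marginal $\dist^f_0$ and differ only in the $g$-marginal ($\dist^g_0$ versus $\dist^g_1$), and crucially $x$ and $y$ are sampled independently. So I would convert $\mathcal{A}$ into a $p$-parallel algorithm $\mathcal{B}$ for distinguishing $\dist^g_0$ from $\dist^g_1$: on oracle input $y$, $\mathcal{B}$ draws $x \sim \dist^f_0$ internally and stores it, then simulates $\mathcal{A}$ on the joint input $(x,y)$, answering the indices of $\mathcal{A}$'s current query that lie in the $x$-part from its stored copy of $x$ and forwarding the (at most $p$) indices lying in the $y$-part as a single $p$-parallel query to its own oracle. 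Doing this round by round preserves adaptivity, so $\mathcal{B}$ uses at most $t$ rounds and keeps bias $\delta/2$, whence $r_g \leq t$. The second case is symmetric, with the roles of $f$ and $g$ interchanged, giving $r_f \leq t$. Combining the two cases yields $t \geq \min(r_f, r_g)$; since this holds for every such $\mathcal{A}$, we conclude $\emph{\Rand}^{p \parallel}(\dist_0, \dist_1) = \Omega(\min(r_f, r_g))$.

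The argument is essentially routine, and the \textbf{one point that deserves care} is the query accounting in the simulation: a single $p$-parallel query of $\mathcal{A}$ to the joint input $(x,y)$ costs $\mathcal{B}$ only a single $p$-parallel query to its component oracle, since the other component is a fixed string held in $\mathcal{B}$'s memory and can be read at no cost, so the round count is preserved exactly. One should also note that the constant advantage merely halves when the triangle inequality is applied (and could be restored by $O(1)$ repetitions if desired), which is harmless since the definitions only require advantage $\Omega(1)$ throughout.
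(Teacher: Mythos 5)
Your proposal is correct and follows essentially the same route as the paper: the same hybrid distribution $\dist^f_0 \times \dist^g_1$, the same triangle-inequality split, and the same reduction in which the algorithm samples the unchanged component internally at zero query cost so that each $p$-parallel round of $\mathcal{A}$ costs one $p$-parallel round to the component oracle. The query-accounting point you flag is exactly the observation the paper relies on (phrased there as ``one can sample from $\dist^f_0$ without any queries''), so there is nothing to add.
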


\begin{proof}
    Let $\dist_{\hybrid} = \dist^f_0 \times \dist^g_1$. Then, since one can sample from the distribution $\dist^f_0$ without any queries, distinguishing $\dist_0$ from $\dist_{\hybrid}$ allows one to distinguish $\dist^g_0$ from $\dist^g_1$. Thus, $\Rand^{p \parallel}(\dist_0, \dist_{\hybrid}) = \Omega(r_g)$. Similarly, $\Rand^{p \parallel}(\dist_{\hybrid}, \dist_1) = \Omega(r_f)$. Suppose that a randomized $p$-parallel algorithm $\mathcal{A}$ making $t$ queries could distinguish $\dist_0$ and $\dist_{\hybrid}$ with probability $1/2 + \alpha(t)$ and distinguish $\dist_{\hybrid}$ and $\dist_1$ with probability $1/2 + \beta(t)$. Then, $\mathcal{A}$ can distinguish $\dist_0$ and $\dist_1$ with probability at most $1/2 + \alpha(t) + \beta(t)$. Therefore, if we want $\alpha(t) + \beta(t) = \Omega(1)$, then we must choose $t$ so that either $\alpha(t) = \Omega(1)$ or $\beta(t) = \Omega(1)$ (or both). It follows that $\Rand^{p \parallel}(\dist_0, \dist_1) = \min(\Rand^{p \parallel}(\dist_0, \dist_{\hybrid}), \Rand^{p \parallel}(\dist_{\hybrid}, \dist_1)) = \Omega(\min(r_f,r_g))$.
\end{proof}

We are now ready to prove the following relationship between the randomized $p$-parallel query complexities of $f$, $g$ and $\correlated(f,g)$.

\begin{lemma} \label{lem:rand_correlated_functions}
    Let $f:\domain_f \to \{0,1\}$ and $g:\domain_g \to \{0,1\}$ be any (partial) Boolean functions. Then, $\emph{\Rand}^{p \parallel}(\correlated(f,g)) = \Theta(\min(\emph{\Rand}^{p \parallel}(f), \emph{\Rand}^{p \parallel}(g))$. 
\end{lemma}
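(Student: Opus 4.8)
The plan is to mirror the structure of the deterministic case (\Cref{lem:det_correlated_functions}): a nearly trivial upper bound, plus a lower bound that combines Yao's minimax principle with the hybrid argument already packaged in \Cref{prop:hybrid_dist}. For the upper bound, given an input $(x,y)$ to $\correlated(f,g)$ together with the promise $f(x)=g(y)$, I would simply run whichever of the two optimal randomized $p$-parallel algorithms (for $f$ on $x$, or for $g$ on $y$) uses fewer queries; either one outputs the common value with bounded error, so $\Rand^{p\parallel}(\correlated(f,g)) = O(\min(\Rand^{p\parallel}(f),\Rand^{p\parallel}(g)))$.

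For the lower bound, the first step is to extract hard distributions via Yao's principle. Up to constants one has $\Rand^{p\parallel}(f) = \Theta\big(\max \Rand^{p\parallel}(\dist^f_0,\dist^f_1)\big)$, where the maximum is over distributions $\dist^f_0$ supported on $X_0$ and $\dist^f_1$ supported on $X_1$: the ``$\geq$'' direction is immediate since an algorithm computing $f$ distinguishes any such pair, and for ``$\leq$'' I would take a distribution $\mu$ on $\domain_f$ that is hardest for computing $f$ at a given query budget, write $\mu = \alpha\,\dist^f_0 + (1-\alpha)\,\dist^f_1$ with $\dist^f_b := \mu|_{f^{-1}(b)}$, and observe that an algorithm distinguishing $\dist^f_0$ from $\dist^f_1$ with constant advantage would, after $O(1)$ independent repetitions and a majority vote, compute $f$ with error below $1/3$ under $\mu$ — a contradiction if it used too few queries. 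Fixing such near-optimal pairs $\dist^f_0,\dist^f_1$ for $f$ and $\dist^g_0,\dist^g_1$ for $g$, set $r_f := \Rand^{p\parallel}(\dist^f_0,\dist^f_1) = \Theta(\Rand^{p\parallel}(f))$ and $r_g := \Rand^{p\parallel}(\dist^g_0,\dist^g_1) = \Theta(\Rand^{p\parallel}(g))$.

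Next I would form the product distributions $\dist_0 = \dist^f_0 \times \dist^g_0$ and $\dist_1 = \dist^f_1 \times \dist^g_1$. Every input drawn from $\dist_0$ has $f(x)=g(y)=0$ and every input drawn from $\dist_1$ has $f(x)=g(y)=1$, so these are legal $0$- and $1$-inputs to $\correlated(f,g)$, and hence any bounded-error $p$-parallel algorithm for $\correlated(f,g)$ in particular distinguishes $\dist_0$ from $\dist_1$ (with advantage $\Omega(1)$). By \Cref{prop:hybrid_dist}, this distinguishing task needs $\Omega(\min(r_f,r_g))$ $p$-parallel queries, so $\Rand^{p\parallel}(\correlated(f,g)) = \Omega(\min(r_f,r_g)) = \Omega(\min(\Rand^{p\parallel}(f),\Rand^{p\parallel}(g)))$. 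Together with the upper bound this gives the claimed $\Theta$.

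The step I expect to require the most care is the Yao reduction: one must track the error threshold for ``computing $f$'' against the advantage threshold for ``distinguishing the $0$- and $1$-parts of a hard distribution,'' handle the partial-function case (the hard distribution is supported on $\domain_f$, not all of $\{0,1\}^N$), and confirm that nothing changes when a single query is replaced by a $p$-parallel query. Since Yao's principle is agnostic to which unit of cost one charges, this last point is immediate and the constant-factor slack is absorbed into the $\Theta$ — so the ``hard part'' is really just bookkeeping in that reduction, while the upper bound and the appeal to \Cref{prop:hybrid_dist} are routine.
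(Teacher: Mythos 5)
Your proposal is correct and follows essentially the same route as the paper: a trivial upper bound by running the cheaper of the two algorithms, and a lower bound obtained by taking hard distinguishing distributions for $f$ and $g$, forming the products $\dist_0 = \dist^f_0 \times \dist^g_0$ and $\dist_1 = \dist^f_1 \times \dist^g_1$, and invoking \Cref{prop:hybrid_dist}. The only difference is that you spell out the Yao-style argument for why hard distributions with $\Rand^{p\parallel}(\dist^f_0,\dist^f_1) = \Omega(\Rand^{p\parallel}(f))$ exist, a step the paper simply asserts.
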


\begin{proof}
    Similar to the deterministic case, computing $f$ or $g$ is sufficient to compute $\correlated(f,g)$. Thus, $\Rand^{p \parallel}((\correlated(f,g)) \leq \min(\Rand^{p \parallel}(f), \Rand^{p \parallel}(g))$.

    Now, we argue for the lower bound. Let $\dist^f_0$ and $\dist^f_1$ be respective distributions over $X_0$ and $X_1$ such that distinguishing them requires any randomized $p$-parallel algorithm to make $\Omega(\Rand^{p \parallel}(f))$ queries. Similarly, let $\dist^g_0$ and $\dist^g_1$ be respective distributions over $Y_0$ and $Y_1$ such that distinguishing them requires any randomized $p$-parallel algorithm to make $\Omega(\Rand^{p \parallel}(g))$ queries. That is, $r_f = \Rand^{p \parallel}(\dist^f_0, \dist^f_1) = \Omega(\Rand^{p \parallel}(f))$ and $r_g = \Rand^{p \parallel}(\dist^g_0, \dist^g_1) = \Omega(\Rand^{p \parallel}(g))$. By \Cref{prop:hybrid_dist}, we know that the randomized $p$-parallel query complexity of distinguishing the distributions $\dist_0 = \dist^f_0 \times \dist^g_0$ and $\dist_1 = \dist^f_1 \times \dist^g_1$ is $\Rand^{p \parallel}(\dist_0, \dist_1) = \Omega(\min(r_f, r_g)) = \Omega(\min(\Rand^{p \parallel}(f), \Rand^{p \parallel}(g))$. But, since $\dist_0$ is a distribution over $X_0 \times Y_0$ and $\dist_1$ is a distribution over $X_1 \times Y_1$, the task of distinguishing $\dist_0$ and $\dist_1$ is at most as hard as computing $\correlated(f,g)$. It follows that $\Rand^{p \parallel}(\correlated(f,g)) \geq \Rand^{p \parallel}(\dist_0, \dist_1) = \Omega(\min(\Rand^{p \parallel}(f), \Rand^{p \parallel}(g))$. 
\end{proof}

Last, we will prove a relationship between the quantum query complexity of $f$, $g$ and $\correlated(f,g)$. 

\begin{lemma} \label{lem:quant_correlated_functions}
    Let $f:\domain_f \to \{0,1\}$ and $g:\domain_g \to \{0,1\}$ be any (partial) Boolean functions on $n_f$ and $n_g$ bits respectively. Then, $\emph{\Quant}(\correlated(f,g)) = \Theta(\min(\emph{\Quant}(f), \emph{\Quant}(g))$. 
\end{lemma}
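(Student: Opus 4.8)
The plan is to prove the two directions separately. The upper bound $\Quant(\correlated(f,g)) = O(\min(\Quant(f),\Quant(g)))$ is immediate: given an input $(x,y)$ with the promise $f(x)=g(y)$, an algorithm may discard one of the two components and run an optimal bounded-error quantum algorithm for whichever of $f$ and $g$ has the smaller query complexity; by the promise its output equals $f(x)=g(y)$, which is exactly the value to be returned, so $\min(\Quant(f),\Quant(g))$ queries suffice (and we may assume neither $f$ nor $g$ is constant, since otherwise $\min(\Quant(f),\Quant(g))=0$ and the claim is trivial from the promise). The content is in the lower bound, which I would obtain from the (sequential) spectral adversary characterization $\Quant = \Theta(\Adv)$ — the $p=1$ case of \Cref{theorem:parAdv} — by building an adversary matrix for $\correlated(f,g)$ as a Kronecker product of optimal adversary matrices for $f$ and $g$.

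First I would record the structure of $F \defeq \correlated(f,g)$: because every input $(x,y)$ satisfies $f(x)=g(y)$, we have $F^{-1}(0) = X_0 \times Y_0$ and $F^{-1}(1) = X_1 \times Y_1$. Hence in the bipartite formulation any matrix indexed by $(X_0\times Y_0)\times(X_1\times Y_1)$ is a legal adversary matrix for $F$. Now let $\Gamma^{(f)}$ and $\Gamma^{(g)}$ be optimal adversary matrices for $f$ and $g$, normalized (rescaling does not change the relevant ratio) so that $\max_i \norm{\Gamma^{(f)}_i} = \max_j \norm{\Gamma^{(g)}_j} = 1$; then $\norm{\Gamma^{(f)}} = \Adv(f)$ and $\norm{\Gamma^{(g)}} = \Adv(g)$. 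Set $\Gamma \defeq \Gamma^{(f)} \otimes \Gamma^{(g)}$, so that $\Gamma_{(x,y),(x',y')} = \Gamma^{(f)}_{x,x'}\,\Gamma^{(g)}_{y,y'}$. Any index $i$ of $F$ belongs either to the $x$-block or to the $y$-block, and an index in the $x$-block ``sees'' only the first tensor factor, so from the definition of $\Gamma_{\{i\}}$ (the indicator $\mathbb{I}[x_i \neq x'_i]$ multiplies precisely the $\Gamma^{(f)}$ part) one gets $\Gamma_i = \Gamma^{(f)}_i \otimes \Gamma^{(g)}$; symmetrically $\Gamma_i = \Gamma^{(f)} \otimes \Gamma^{(g)}_i$ for an index in the $y$-block. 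Using multiplicativity of the operator norm under Kronecker products, $\norm{A\otimes B} = \norm{A}\norm{B}$ (the singular values of $A\otimes B$ are the pairwise products), we obtain $\norm{\Gamma} = \Adv(f)\,\Adv(g)$ and $\max_i \norm{\Gamma_i} \leq \max(\Adv(f),\Adv(g))$. Therefore
$$\Adv(F) \;\geq\; \frac{\norm{\Gamma}}{\max_i \norm{\Gamma_i}} \;\geq\; \frac{\Adv(f)\,\Adv(g)}{\max(\Adv(f),\Adv(g))} \;=\; \min(\Adv(f),\Adv(g)),$$
and invoking $\Quant = \Theta(\Adv)$ for $f$, $g$, and $F$ gives $\Quant(\correlated(f,g)) = \Omega(\min(\Quant(f),\Quant(g)))$, completing the proof together with the upper bound.

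The main things that need care are: (i) the slicing identity $\Gamma_{\{i\}} = \Gamma^{(f)}_{\{i\}} \otimes \Gamma^{(g)}$ for an $x$-index — routine from the definitions once one observes that flipping an $x$-index is independent of the $y$-component; and (ii) the Kronecker-norm multiplicativity, which is standard linear algebra. I do not expect a serious obstacle here, but it is worth noting that this argument crucially uses that a single index lies entirely within one tensor factor; it does not directly upgrade to $p$-parallel slices (a set $S$ straddling both blocks gives an ``or'' of two indicators, which does not factor), which is precisely why the $p$-parallel lower bound for the $\correlated$ construction is handled instead by the hybrid argument of \Cref{prop:hybrid_dist} (classically) and the separate parallel-adversary computation sketched in the technical overview.
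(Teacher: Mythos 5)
Your proof is correct and follows essentially the same route as the paper: the upper bound by running an optimal algorithm on one component, and the lower bound via the Kronecker product $\Gamma^{(f)}\otimes\Gamma^{(g)}$ of optimal adversary matrices, the slicing identity $\Gamma_i = \Gamma^{(f)}_i\otimes\Gamma^{(g)}$ (resp.\ $\Gamma^{(f)}\otimes\Gamma^{(g)}_i$), and multiplicativity of the spectral norm under Kronecker products. Your normalization $\max_i\norm{\Gamma^{(f)}_i}=\max_j\norm{\Gamma^{(g)}_j}=1$ is only a cosmetic repackaging of the paper's ratio computation, and your closing remark about why the argument does not extend to $p$-parallel slices straddling both blocks mirrors the paper's own caveat.
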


\begin{proof}
    Similar to the deterministic and the randomized case, computing $f$ or $g$ is sufficient to compute $\correlated(f,g)$. Thus, $\Quant(\correlated(f,g)) = O(\min(\Quant(f), \Quant(g)))$.

    Now, we argue for the lower bound. Let $\Gamma^{(f)} \in \mathbf{R}^{|X_0| \times |X_1|}$ and $\Gamma^{(g)} \in \mathbf{R}^{|Y_0| \times |Y_1|}$ be an optimal adversary matrices for the functions $f$ and $g$ respectively. That is, $\Adv(f) = \frac{\norm{\Gamma^{(f)}}}{\min_{i \in [n_f]} \norm{\Gamma^{(f)}_i}}$ and $\Adv(g) = \frac{\norm{\Gamma^{(g)}}}{\min_{i \in [n_f+1,n_g]} \norm{\Gamma^{(g)}_i}}$.  Let $\Gamma^{(\correlated(f,g))} = \Gamma{(f)} \otimes \Gamma{(g)} \in \mathbf{R}^{|X_0| |Y_0| \times |X_1| |Y_1|}$, where \(\otimes\) refers to the Kronecker product. It is easy to see that $\Gamma^{(\correlated(f,g))}$ is an adversary matrix for $\correlated(f,g)$. Moreover, for any $i \in [n_f]$, $\Gamma^{(\correlated(f,g))}_i = \Gamma^{(f)}_i \otimes \Gamma^{(g)}$. Similarly, for any $i \in [n_f+1,n_g]$, $\Gamma^{(\correlated(f,g))}_i = \Gamma^{(f)} \otimes \Gamma^{(g)}_i$. Therefore,
    \begin{align*}
        \Quant(\correlated(f,g)) &= \Omega(\Adv(\correlated(f,g)) \\ 
        &= \Omega\left(\min_{i \in [n_f + n_g] }\frac{\norm{\Gamma^{(\correlated(f,g))}}}{\norm{\Gamma^{(\correlated(f,g))}_i}}\right) \\
        &= \Omega\left(\min\left(\min_{i \in [n_f]}\frac{\norm{\Gamma^{(f)}}}{\norm{\Gamma^{(f))}_i}}, \min_{i \in [n_f+1, n_g]} \frac{\norm{\Gamma^{(g)}}}{\norm{\Gamma^{(g)}_i}}\right)\right) \\
        &= \Omega\left(\min\left(\Adv(f), \Adv(g) \right)\right) \\
        &= \Omega(\min(\Quant(f), \Quant(g)))
     \end{align*}
     \vspace{-1pc}
\end{proof}

For our applications, it is sufficient to consider the $p=1$ case. Moreover, it is not clear how we would generalize the above proof for $p > 1$ since we will need to consider sets of indices that contains inputs to $f$ and inputs to $g$. Therefore, we leave extending \Cref{lem:quant_correlated_functions} to $p$-parallel quantum query complexity as future work.

\subsection{Separations between randomized and deterministic complexities} \label{subsec:rand_det_separation_partial}
We begin by showing an unbounded separation for partial functions between randomized and deterministic parallel query complexities, while maintaining no sequential separation. 

\subsubsection{Partial function separation}
We will choose $f$ to be the pointer chasing function (\Cref{prob:pointer-chasing}) with instance size $N$ and chain length $k=\sqrt{N}$ and $g$ to be the balanced composition of \textsc{parity} and the \textsc{Deutsch Jozsa modified} function (\Cref{prob:djvar}). Then, we define \textsc{deutsch-josza ana function} as the function $\correlated(f,g)$ (\Cref{prob:correlated}). Precisely,

\begin{prob}[\textsc{deutsch-josza ana}] Let $N=2^n$, $f = \textsc{pointer}_{N,\sqrt{N}}$ and $g=\bigoplus_{\sqrt{N}} \circ \textsc{dj}_{\sqrt{N}}$.\label{prob:ANA-D-vs-R}

\begin{problem}
        \probleminput{Oracles for strings $X \in \{0,1\}^{Nn}$ and $Y \in \{0,1\}^{N}$.}
        \problempromise{$Y$ satisfies the promise of \textsc{Deutsch Jozsa modified} \emph{(\Cref{prob:djvar})}, and $f(X)=g(Y)$.}
        \problemquestion{Decide $f(X)=g(Y)$.}
    \end{problem}
\end{prob}

We will sometimes refer to the \textsc{deutsch-josza ana} function on parameter $N$ as $\djana_N$.

Below, we show that the \textsc{deutsch-josza ana} function satisfies our criteria of a parallel separation without a sequential separation for randomized vs deterministic complexities. In the sequential case, a strategy for both the deterministic and the randomized algorithm is to follow the pointer to compute \textsc{deutsch-jozsa ana} in \(\tilde O(\sqrt N)\) queries, and lower bounds for \(f\), \(g\) and \(\correlated(f, g)\) show neither can do better. In the parallel case with \(\tilde \Theta(\sqrt N)\) parallelism, the randomized algorithm can solve \(g\) in $1$ round because the \textsc{parity} component of \(g\) can be computed in parallel and \textsc{deutsch jozsa modified} requires $1$ query. However, the deterministic algorithm is still forced to spend \(\sqrt N\) queries for \textsc{deustsch jozsa modified} even if it can solve parity in parallel. Formally, we prove the theorem:

\begin{theorem} \label{thm:non-seq_DvsR_parallel_separation}
    Let $h$ be the \textsc{deutsch-josza ana} function as defined in \emph{\Cref{prob:ANA-D-vs-R}}. Then for some $p = \tilde{\Theta}(\sqrt{N})$,
    \begin{enumerate}[label=(\roman*)]
        \item \label{itm:non-seq_DvsR_parallel_separation_1} $\emph{\Det}(h) = O\left(\sqrt{N}\right)$,
        \inlineitem \label{itm:non-seq_DvsR_parallel_separation_2} $\emph{\Rand}(h) = \Omega\left(\sqrt{N}\right)$, \inlineitem \label{itm:non-seq_DvsR_parallel_separation_3} $\emph{\Det}^{p \parallel}(h) = \tilde{\Omega}\left(\sqrt{N}\right)$, \inlineitem \label{itm:non-seq_DvsR_parallel_separation_4} $\emph{\Rand}^{p \parallel}(h) = 1.$
    \end{enumerate} 
\end{theorem}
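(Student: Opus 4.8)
The plan is to leverage that $h$ is, by definition (see \Cref{prob:ANA-D-vs-R}), the function $\correlated(f,g)$ with $f = \textsc{pointer}_{N,\sqrt N}$ and $g = \bigoplus_{\sqrt N}\circ\,\textsf{dj}_{\sqrt N}$, so that the correlated‑function lemmas reduce every one of the four bounds to the corresponding bound for $f$ and for $g$ taken separately. Concretely, \Cref{lem:det_correlated_functions} gives $\Det^{p\parallel}(h) = \min(\Det^{p\parallel}(f),\Det^{p\parallel}(g))$ and \Cref{lem:rand_correlated_functions} gives $\Rand^{p\parallel}(h) = \Theta(\min(\Rand^{p\parallel}(f),\Rand^{p\parallel}(g)))$ for every $p$, including the sequential case $p=1$. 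The bounds on $f$ are read straight off \Cref{thm:pointer_chasing}; the bounds on $g$ are elementary facts about parity composed with Deutsch--Jozsa together with \Cref{thm:QueryComplexityComposition}. Throughout I will take $p = \Theta(\sqrt N\log N)$, which is $\tilde\Theta(\sqrt N)$.

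For part (i), a deterministic algorithm can ignore $Y$ and evaluate $f$ by following the length‑$\sqrt N$ pointer chain, which costs $\Det(f) = O(\sqrt N)$ by \Cref{thm:pointer_chasing}(i); since $h$ agrees with $f(X)$ on every legal input, $\Det(h) = O(\sqrt N)$. For part (ii), \Cref{lem:rand_correlated_functions} at $p=1$ gives $\Rand(h) = \Theta(\min(\Rand(f),\Rand(g)))$. Here $\Rand(f) = \Omega(\sqrt N)$ by \Cref{thm:pointer_chasing}(iii) (with $k=\sqrt N$, $\min(k,N/k)=\sqrt N$), and $\Rand(g) = \Omega(\sqrt N)$ because $\Rand(\bigoplus_{\sqrt N}) = \Theta(\sqrt N)$ equals the arity of the outer parity, so the lower‑bound half of \Cref{thm:QueryComplexityComposition}(2) applies and yields $\Rand(g) = \Omega(\Rand(\bigoplus_{\sqrt N})\cdot\Rand(\textsf{dj}_{\sqrt N})) = \Omega(\sqrt N)$. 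Hence $\Rand(h) = \Omega(\sqrt N)$.

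For part (iii), \Cref{lem:det_correlated_functions} gives $\Det^{p\parallel}(h) = \min(\Det^{p\parallel}(f),\Det^{p\parallel}(g))$. By \Cref{thm:pointer_chasing}(ii), $\Det^{p\parallel}(f) = \Omega(\min(k,N/p)) = \Omega(\sqrt N/\log N) = \tilde\Omega(\sqrt N)$ for our $p$. For $g$, an adversary answering $0$ to the first $\sqrt N/2$ queries inside each of the $\sqrt N$ Deutsch--Jozsa blocks keeps that block's value, and hence the overall parity, undetermined until $\sqrt N/2$ queries have been spent on every block; thus $\Det(g) = \Theta(N)$ (equivalently, \Cref{thm:QueryComplexityComposition}(1) applied to $\bigoplus_{\sqrt N}\circ\,\textsf{dj}_{\sqrt N}$), and the trivial bound $\Det^{p\parallel}(g) \ge \Det(g)/p = \tilde\Omega(\sqrt N)$ finishes part (iii). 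For part (iv), by the upper‑bound half of \Cref{lem:rand_correlated_functions} it suffices to solve $g$ in a single $p$‑parallel round. The algorithm spends $\Theta(\log N)$ of its $p$ queries on each block $Y^{(i)}$, querying $\Theta(\log N)$ uniformly random coordinates; it declares $\textsf{dj}(Y^{(i)})=1$ if it sees any $1$ and $0$ otherwise, then outputs the parity of the $\sqrt N$ declarations. A block declared $0$ is wrong only if $Y^{(i)}$ is balanced yet all $\Theta(\log N)$ sampled coordinates are $0$, an event of probability $N^{-\Omega(1)}$; tuning the constant so this is at most $1/(3\sqrt N)$ and union bounding, all $\sqrt N$ declarations are simultaneously correct — hence the parity is correct — with probability at least $2/3$. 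So $\Rand^{p\parallel}(g) = 1$ and therefore $\Rand^{p\parallel}(h) = 1$.

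The only subtle point, and the one that dictates the exponent and the hidden logarithms, is the simultaneous calibration of $p$ for parts (iii) and (iv): the amplified Deutsch--Jozsa subroutine needs $\Omega(\sqrt N\log N)$ queries to fit in one round — the naive single‑query‑per‑block strategy only reaches per‑block error $1/3$, which an XOR over $\sqrt N$ blocks drives to error $\frac12 - 2^{-\Omega(\sqrt N)}$, useless for parity — whereas the deterministic lower bounds from \Cref{thm:pointer_chasing}(ii) and from $\Det(g)/p$ degrade like $N/p$, so $p$ must stay $\tilde O(\sqrt N)$. The choice $p = \Theta(\sqrt N\log N)$ is exactly what satisfies both constraints, which is why the statement is phrased with $\tilde\Theta(\sqrt N)$ and $\tilde\Omega(\sqrt N)$.
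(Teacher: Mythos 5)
Your proposal is correct and follows the same overall route as the paper: decompose $h = \correlated(f,g)$ via \Cref{lem:det_correlated_functions} and \Cref{lem:rand_correlated_functions}, read the bounds for $f$ off \Cref{thm:pointer_chasing}, and get the bounds for $g$ from \Cref{thm:QueryComplexityComposition} (or a direct adversary for $\Det(g)$). Parts (i)--(iii) match the paper's proof essentially line for line.

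On part (iv) you go beyond the paper, and rightly so. The paper's argument is simply that $\textsf{dj}$ has randomized query complexity $1$ and parity parallelizes, so $\Rand^{p\parallel}(g)=1$ with one query per block. As you observe, that literal reading does not work: a per-block error of $1/3$ fed into an XOR over $\sqrt N$ blocks yields success probability $\tfrac12 + \tfrac12 3^{-\sqrt N}$, which is useless. Your fix --- spend $\Theta(\log N)$ random queries per block, exploit the one-sidedness of the resulting test (a sampled $1$ certifies a balanced block, while an all-zero sample is wrong only with probability $N^{-\Omega(1)}$), and union bound over the $\sqrt N$ blocks --- is the right repair, and it costs only a $\log N$ factor in $p$, which the statement's $\tilde\Theta(\sqrt N)$ already absorbs. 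Your closing remark correctly identifies that this calibration of $p$ (large enough for the amplified one-round algorithm, small enough that $N/p$ stays $\tilde\Omega(\sqrt N)$) is the only delicate point in the whole theorem; the paper leaves it implicit.
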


\begin{proof}
    We prove each of the items one by one as follows.
    \begin{enumerate} [label=(\roman*)]
        \item By \Cref{thm:pointer_chasing}, we know that $\Det(f) = O(\sqrt{N})$. Since $\Det(h) = \min(\Det(f), \Det(g))$ from \Cref{lem:det_correlated_functions}, we get $\Det(h) = O(\sqrt{N})$.
        \item From \Cref{thm:pointer_chasing}, we have $\Rand(f) = \Omega(\sqrt{N})$ and from \Cref{thm:QueryComplexityComposition}, we have $\Rand(g) = \Omega(\sqrt{N})$. It follows, from \Cref{lem:rand_correlated_functions}, that $\Rand(h) = \Omega(\min(\Rand(f), \Rand(g))) = \Omega(\sqrt{N})$.
        \item From \Cref{thm:pointer_chasing}, we know that $\Det^{p \parallel}(f) = \Omega(\sqrt{N})$ and from \Cref{thm:QueryComplexityComposition}, we have that $\Det(g) = \Omega(N)$ so $\Det^{p \parallel}(g) = \Omega(N/p) = \tilde{\Omega}(\sqrt{N})$. Since $\Det^{p \parallel}(h) = \min(\Det^{p \parallel}(f), \Det^{p \parallel}(g))$ from \Cref{lem:det_correlated_functions}, we get $\Det^{p \parallel}(h) = \tilde{\Omega}(\sqrt{N})$.
        \item Since \textsc{Deutsch Jozsa modified} is defined to have randomized query complexity $1$ and \textsc{parity} can be perfectly parallelized, we have $\Rand^{p \parallel}(g) = 1$. Solving $g$ is sufficient to solve $h$ so $\Rand^{p \parallel}(h) = 1$.
    \end{enumerate}
    \vspace{-1pc}
\end{proof}

\subsubsection{Total function separation}
We totalize the partial function obtained in the previous section to show an unbounded total function parallel separation without sequential separation (between randomized and deterministic complexities).

\begin{prob}[\textsc{cheatsheet deutsch-josza ana}] \label{prob:djcs}
    Let $f:\{0,1\}^N \to \{0,1\}$ and $g:\{0,1\}^{N^2} \to \{0,1\}$ be  defined as $f = \djana_N$ (see \emph{\Cref{prob:ANA-D-vs-R}}) and $g=(\andor)_{N^2}$. Let $c = 10 \log N$.
    \begin{problemTotal}
        \probleminput{An oracle for a string $X \in \{0,1\}^{M}$ (with $M = \tilde{\Theta}(N^{12})$\footnote{Since a certificate of $\andor$ on $N^2$ bits could be represented using $\tilde{\Theta}(N)$ bits, we will have the number of input bits for the function $(f \circ g)^c_{\cheatsheet}$ to be $\tilde{\Theta}(N^{12})$.}).}
        \problemquestion{Compute $(f \circ g)^c_{\cheatsheet}(X)$ (see \emph{\Cref{def:cheatsheet}}).}
    \end{problemTotal}
\end{prob}

Next, we show that this function demonstrates unbounded parallel advantage without sequential advantage.

\begin{theorem} \label{thm:no_seq_parallel_sepa_RvsD}
    Let $h$ be the \textsc{cheatsheet deutsch-josza ana} function as defined in \emph{\Cref{prob:djcs}}. Then for some $p = \Theta(N^{5/2} \log^3 N)$,
    \begin{enumerate}[label=(\roman*)]
        \item $\emph{\Det}(h) = \tilde{O}\left(N^{5/2}\right)$,
        \inlineitem $\emph{\Rand}(h) = \tilde{\Omega}\left(N^{5/2}\right)$, \inlineitem $\emph{\Det}^{p \parallel}(h) = \tilde{\Omega}\left(\sqrt{N}\right)$, \inlineitem $\emph{\Rand}^{p \parallel}(h) \leq 3.$
    \end{enumerate}
\end{theorem}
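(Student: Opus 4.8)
The four claims split into three routine parts and one substantive one. Items (i), (ii) and (iv) follow from known composition and cheat-sheet machinery applied to $h = (\djana_N \circ (\andor)_{N^2})^c_{\cheatsheet}$, while item (iii) is the heart of the argument and is where I expect the real work: it needs a composition theorem for \emph{deterministic parallel} query complexity when the inner function has (almost) full degree, together with a parallel analogue of the cheat-sheet lower bound.

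For (i), I would use \Cref{thm:QueryComplexityComposition} to get $\Det(\djana_N \circ (\andor)_{N^2}) = \Theta(\Det(\djana_N)\cdot\Det((\andor)_{N^2})) = O(\sqrt N \cdot N^2) = O(N^{5/2})$, invoking $\Det(\djana_N)=O(\sqrt N)$ from \Cref{thm:non-seq_DvsR_parallel_separation}; since $\cert((\andor)_{N^2}) = \tilde O(N)$, the deterministic bound of \Cref{prop:cheatsheet_seq_upperbound} gives $\Det(h) = \tilde O(\max(\Det(\djana_N \circ (\andor)_{N^2}), N^2)) = \tilde O(N^{5/2})$. For (ii), since $(\andor)_{N^2} = \textsf{And}_N\circ\textsf{Or}_N$ with both $\textsf{And}_N,\textsf{Or}_N\in\{\textsf{And},\textsf{Or}\}$, applying \Cref{thm:QueryComplexityComposition} twice yields $\Rand(\djana_N \circ (\andor)_{N^2}) = \Omega(\Rand(\djana_N)\cdot N\cdot N) = \tilde\Omega(N^{5/2})$ using $\Rand(\djana_N)=\Omega(\sqrt N)$ from \Cref{thm:non-seq_DvsR_parallel_separation}, and then \Cref{thm:cheatsheetcomplexities} gives $\Rand(h) = \tilde\Omega(N^{5/2})$. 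For (iv), \Cref{thm:non-seq_DvsR_parallel_separation} gives $\Rand^{q\parallel}(\djana_N)=1$ for $q=\tilde\Theta(\sqrt N)$ (the parity layer parallelizes and \textsf{dj} costs one query), so the randomized-parallel bound of \Cref{prop:cheatsheet_parallel_upperbound}, applied with $k=1$, shows that at $p = \tilde\Theta(qN^2) = \Theta(N^{5/2}\log^3 N)$ we get $\Rand^{p\parallel}(h)\le k+2 = 3$.

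The plan for (iii) has three stages. First, reduce from the cheat-sheet function to its inner function in the parallel model: adapting the cheat-sheet lower bound of \cite{Aaronson16} (cf.\ \Cref{thm:cheatsheetcomplexities}), one argues that a $p$-parallel deterministic algorithm cannot determine the address $\ell$ without resolving $\djana_N \circ (\andor)_{N^2}$ on each of the $c$ address blocks, and that the $2^c$ cheat-sheets (each of size $\tilde\Theta(N^2)$, so $\tilde\Theta(N^{12})$ bits total) are far too numerous to brute-force in $o(\sqrt N)$ rounds at parallelism $p=\tilde\Theta(N^{5/2})$; this gives $\Det^{p\parallel}(h) = \tilde\Omega(\Det^{p\parallel}(\djana_N \circ (\andor)_{N^2}))$. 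Second, since $\djana_N = \correlated(\textsc{pointer}_{N,\sqrt N},\, \bigoplus_{\sqrt N}\circ\textsf{dj}_{\sqrt N})$ and composing each input bit with a fixed function $k$ commutes with the $\correlated$ construction (i.e.\ $\correlated(a,b)\circ k = \correlated(a\circ k, b\circ k)$), \Cref{lem:det_correlated_functions} gives
\[
\Det^{p\parallel}\!\left(\djana_N \circ (\andor)_{N^2}\right) = \min\!\left(\Det^{p\parallel}\!\left(\textsc{pointer}_{N,\sqrt N}\circ(\andor)_{N^2}\right),\ \Det^{p\parallel}\!\left(\left(\textstyle\bigoplus_{\sqrt N}\circ\textsf{dj}_{\sqrt N}\right)\circ(\andor)_{N^2}\right)\right).
\]
For the second term the trivial bound suffices: $\Det\!\left(\left(\bigoplus_{\sqrt N}\circ\textsf{dj}_{\sqrt N}\right)\circ(\andor)_{N^2}\right) = \Theta(N\cdot N^2) = \Theta(N^3)$ by \Cref{thm:QueryComplexityComposition}, so $\Det^{p\parallel} \ge \Theta(N^3)/p = \tilde\Omega(\sqrt N)$. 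For the first term the trivial bound is useless, because $\Det\!\left(\textsc{pointer}_{N,\sqrt N}\circ(\andor)_{N^2}\right) = \tilde O(\sqrt N\cdot N^2) = \tilde O(N^{5/2})$ only matches $p$; so here I would invoke the deterministic-parallel composition theorem for an almost-full-degree inner function, using that $\deg((\andor)_{N^2}) = \deg(\textsf{And}_N\circ\textsf{Or}_N) = N^2$ is full, to get $\Det^{p\parallel}\!\left(\textsc{pointer}_{N,\sqrt N}\circ(\andor)_{N^2}\right) = \tilde\Omega\!\left(\Det^{\lceil p/N^2\rceil\parallel}(\textsc{pointer}_{N,\sqrt N})\right) = \tilde\Omega\!\left(\Det^{\tilde\Theta(\sqrt N)\parallel}(\textsc{pointer}_{N,\sqrt N})\right)$, which by \Cref{thm:pointer_chasing} is $\tilde\Omega(\min(\sqrt N,\ N/\tilde\Theta(\sqrt N))) = \tilde\Omega(\sqrt N)$. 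Combining the two terms gives $\Det^{p\parallel}(\djana_N \circ (\andor)_{N^2}) = \tilde\Omega(\sqrt N)$, hence $\Det^{p\parallel}(h)=\tilde\Omega(\sqrt N)$.

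The main obstacle is the composition theorem $\Det^{p\parallel}(F\circ g) = \tilde\Omega\!\left(\Det^{\lceil p/n_g\rceil\parallel}(F)\right)$ for $g$ of almost-full degree (where $n_g$ is the input length of $g$), which is the genuinely new ingredient. The difficulty is that a single $p$-parallel round of the $F\circ g$ algorithm may touch up to $p$ distinct $g$-blocks at one bit each, so one cannot naively charge a round of that algorithm to a round of an algorithm for $F$; the almost-full-degree hypothesis is what lets an adversary for $g$ force $\approx n_g$ queries per block before its value is pinned down, and thereby bound the number of $g$-blocks whose values can be newly resolved in each round — this is the calculation I would expect to be the most delicate. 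A secondary point is making the first stage (the parallel cheat-sheet lower bound) fully rigorous, in particular ruling out that the algorithm short-cuts by guessing the address, or by reading and verifying a certificate before it has actually located the relevant cheat-sheet; this should follow the structure of the sequential argument in \cite{Aaronson16} but must be re-checked in the parallel model.
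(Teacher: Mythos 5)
Your proposal is correct and follows essentially the same route as the paper: items (i), (ii), (iv) are handled identically, and for (iii) the paper likewise chains the parallel cheat-sheet lower bound (\Cref{thm:cheatsheetdetparallelimpact}) with the deterministic parallel composition theorem for an inner function of full deterministic complexity (\Cref{thm:det_parallel_comp}) and the $\correlated$ min-lemma (\Cref{lem:det_correlated_functions}). The only difference is cosmetic ordering — the paper applies the min-lemma to $\djana_N$ at parallelism $\lfloor p/N^2\rfloor$ (already done in \Cref{thm:non-seq_DvsR_parallel_separation}) and then invokes the composition theorem once, whereas you distribute the $\andor$ composition into the two branches first; both the composition theorem and the parallel cheat-sheet bound that you flag as the substantive ingredients are indeed proved separately in the paper's appendices.
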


\begin{proof}
    We will prove each of the statements one by one. Recall that $f = \djana_N$ and $g = (\andor)_{N^2}$ in \Cref{prob:djcs}.
    \begin{enumerate}[label=(\roman*)]
        \item By \cref{itm:non-seq_DvsR_parallel_separation_1} of \Cref{thm:non-seq_DvsR_parallel_separation} and \Cref{thm:QueryComplexityComposition}, we know that $\Det(f \circ g) = O(N^{5/2})$. It follows, by \cref{itm:det_cheatsheet_upperbound} of \Cref{prop:cheatsheet_seq_upperbound}, that $\Det(h) = \tilde{O}(N^{5/2})$.
        \item By \cref{itm:non-seq_DvsR_parallel_separation_2} of \Cref{thm:non-seq_DvsR_parallel_separation} and \Cref{thm:QueryComplexityComposition}, we know that $\Rand(f \circ g) = \Omega(N^{5/2})$. It follows, by \Cref{thm:cheatsheetcomplexities}, that $\Rand(h) = \tilde{\Omega}(N^{5/2})$. 
        \item By \cref{itm:non-seq_DvsR_parallel_separation_3} of \Cref{thm:non-seq_DvsR_parallel_separation}, we know that $\Det^{\floor{p/N^2} \parallel}(f) = \tilde{\Omega}(\sqrt{N})$. Since $\Det(g) = \Theta(N^2)$, it follows, by  \Cref{thm:det_parallel_comp}, that $\Det^{p \parallel}(f \circ g) = \Omega\left(\Det^{\floor{p/N^2} \parallel}(f)\right) = \Omega(\sqrt{N})$. Therefore, using \Cref{thm:cheatsheetdetparallelimpact}, we have that $\Det^{p \parallel}(h) = \tilde{\Omega}(\sqrt{N})$. 
        \item The desired result directly follows from \cref{itm:rand_parallel_cheatsheet_upperbound} of \Cref{prop:cheatsheet_parallel_upperbound}.
    \end{enumerate}
    \vspace{-1pc}
\end{proof}

The previous results are shown with an input of size \(\tilde \Theta (N^{12})\). We state the corollary of the above theorem below using inputs of size \(M\) to make clear the exponents in the separation.

\begin{corollary}
    There exists a total Boolean function $h: \{0,1\}^M \to \{0,1\}$ and an integer $p \in [M]$ such that 
    \begin{enumerate}[label=(\roman*)]
        \item $\emph{\Det}(h) = \tilde{O}\left(M^{5/24}\right)$,
        \inlineitem $\emph{\Rand}(h) = \tilde{\Omega}\left(\emph{\Det}(h)\right)$, 
        \inlineitem $\emph{\Det}^{p \parallel}(h) = \tilde{\Omega}\left(M^{1/24}\right)$, \inlineitem $\emph{\Rand}^{p \parallel}(h) \leq 3$.
    \end{enumerate}
\end{corollary}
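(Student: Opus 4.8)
The plan is to obtain this corollary as a cosmetic reparametrization of \Cref{thm:no_seq_parallel_sepa_RvsD}, trading the parameter $N$ of the \textsc{cheatsheet deutsch-josza ana} function for its actual input length $M$. First I would take $h$ to be exactly the function of \Cref{prob:djcs} on parameter $N$; by the footnote in that problem statement its input length is $M = \tilde\Theta(N^{12})$ (the $N^3$ from the $f \circ g = \djana_N \circ (\andor)_{N^2}$ composition, times the $\tilde\Theta(N^9)$ blow-up from attaching the cheatsheet, whose $2^c = \poly(N)$ data blocks each of size $\tilde\Theta(N^2)$). Consequently $N = \tilde\Theta(M^{1/12})$, and therefore $N^{5/2} = \tilde\Theta(M^{5/24})$ and $\sqrt N = \tilde\Theta(M^{1/24})$.

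Next I would translate the four items of \Cref{thm:no_seq_parallel_sepa_RvsD} through this identity. Item (i) gives $\emph{\Det}(h) = \tilde O(N^{5/2}) = \tilde O(M^{5/24})$; item (ii) gives $\emph{\Rand}(h) = \tilde\Omega(N^{5/2}) = \tilde\Omega(M^{5/24}) = \tilde\Omega(\emph{\Det}(h))$, where the last step uses the upper bound from item (i); item (iii) gives $\emph{\Det}^{p\parallel}(h) = \tilde\Omega(\sqrt N) = \tilde\Omega(M^{1/24})$; and item (iv) gives $\emph{\Rand}^{p\parallel}(h) \leq 3$ verbatim. Finally, the parallelism witnessing these bounds is the same one as in \Cref{thm:no_seq_parallel_sepa_RvsD}, namely $p = \Theta(N^{5/2}\log^3 N) = \tilde\Theta(M^{5/24})$, which in particular is an integer in $[M]$ for all sufficiently large $N$, as required.

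The only point needing care — and it is bookkeeping rather than a genuine obstacle — is the consistency of the suppressed polylogarithmic factors: our $\tilde O,\tilde\Omega,\tilde\Theta$ conventions hide $\polylog$ factors in all asymptotic variables, and since $\log M = \Theta(\log N)$, suppressing $\polylog(N)$ is the same as suppressing $\polylog(M)$, so the rewriting of exponents does not hide any super-polylogarithmic slack. With that observed, the corollary follows directly from \Cref{thm:no_seq_parallel_sepa_RvsD}.
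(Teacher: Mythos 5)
Your proposal is correct and matches the paper's intent exactly: the corollary is stated as a direct reparametrization of \Cref{thm:no_seq_parallel_sepa_RvsD} via $M = \tilde\Theta(N^{12})$, and your exponent arithmetic ($N^{5/2} = \tilde\Theta(M^{5/24})$, $\sqrt{N} = \tilde\Theta(M^{1/24})$) and the observation that $\polylog(N) = \polylog(M)$ are all that is needed. The only trivial blemish is the parenthetical accounting of where the factor $\tilde\Theta(N^{12})$ comes from (the data section is $2^c \cdot \tilde\Theta(N^2) = \tilde\Theta(N^{12})$ with $c = 10\log N$, dominating the $\tilde\Theta(N^3\log N)$ address section), which does not affect the argument.
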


\subsection{Separations between quantum and randomized complexities}
 \label{subsec:quant_rand_separation_partial}
Using the same framework, we can separate quantum and randomized parallel query complexities, while maintaining no sequential separation. We first give partial function separations, and then totalize them. 

\subsubsection{Partial function separation}
We will choose $f$ to be the pointer chasing function (\Cref{prob:pointer-chasing}) with instance size $N$ and chain length $k=N^{1/3}$ and $g$ to be the composition of \textsc{parity} with instance size $N^{1/3}$ and the \textsc{Forrelation} function (\Cref{prob:Forrelation}) with instance size $N^{2/3}$. Then, we define \textsc{forrelation ana function} as the function $\correlated(f,g)$ (\Cref{prob:correlated}). Precisely,

\begin{prob}[\textsc{forrelation ana function}] Let $N=2^n$, $f = \textsc{pointer}_{N,N^{1/3}}$ and $g=\bigoplus_{N^{1/3}} \circ \textsc{for}_{N^{2/3}}$.\label{prob:ANA-R-vs-Q}
\begin{problem}
        \probleminput{Oracles for strings $X \in \{0,1\}^{Nn}$ and $Y \in \{0,1\}^{N}$.}
        \problempromise{$Y$ satisfies the promise of \textsc{Forrelation} \emph{(\Cref{prob:Forrelation})}, and $f(X)=g(Y)$.}
        \problemquestion{Decide $f(X)=g(Y)$.}
    \end{problem}
\end{prob}
Below, we show that the \textsc{forrelation ana} function satisfies our criteria of a parallel separation without a sequential separation for quantum vs randomized complexities. In the sequential case, a strategy for both the quantum and the randomized algorithm is to follow the pointer to compute \textsc{forrelation ana} in \(\tilde O(N^{1/3})\) queries, and lower bounds for \(f\), \(g\) and \(\correlated(f, g)\) show neither can do better. In the parallel case with \(\tilde \Theta(N^{1/3})\) parallelism, the quantum algorithm can solve \(g\) in 1  query because the \textsc{parity} component of \(g\) can be computed in parallel and \textsc{forrelation} is takes just \(1\) query. However, the randomized algorithm is still forced to spend \(N^{1/3}\) queries for \textsc{forrelation} even if it can solve parity in parallel. Formally, we prove the theorem:
\begin{theorem} \label{thm:non-seq_RvsQ_parallel_separation}
    Let $h$ be the \textsc{forrelation ana function} as defined in \emph{\Cref{prob:ANA-R-vs-Q}}. Then for some $p = \tilde{\Theta}(N^{1/3})$,
    \begin{enumerate}[label=(\roman*)]
        \item \label{itm:non-seq_RvsQ_parallel_separation_1} $\emph{\Rand}(h) = \tilde O\left(N^{1/3}\right)$,
        \inlineitem \label{itm:non-seq_RvsQ_parallel_separation_2} $\emph{\Quant}(h) = \Omega\left(N^{1/3}\right)$, \inlineitem \label{itm:non-seq_RvsQ_parallel_separation_3} $\emph{\Rand}^{p \parallel}(h) = \tilde{\Omega}\left(N^{1/3}\right)$, \inlineitem \label{itm:non-seq_RvsQ_parallel_separation_4}$\emph{\Quant}^{p \parallel}(h) = 1.$
    \end{enumerate}
\end{theorem}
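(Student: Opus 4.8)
The plan is to mirror the proof of \Cref{thm:non-seq_DvsR_parallel_separation} almost line for line, with the exponent $1/2$ replaced by $1/3$ and the \textsc{Deutsch Jozsa modified} ingredients replaced by \textsc{Forrelation} ingredients. Write $h = \correlated(f,g)$ with $f = \textsc{pointer}_{N,N^{1/3}}$ and $g = \bigoplus_{N^{1/3}} \circ \textsc{for}_{N^{2/3}}$ as in \Cref{prob:ANA-R-vs-Q}. The correlated-function lemmas of the previous subsection reduce each of the four claims to a statement about $f$ and $g$ separately: \Cref{lem:rand_correlated_functions} gives $\Rand^{p\parallel}(h) = \Theta(\min(\Rand^{p\parallel}(f),\Rand^{p\parallel}(g)))$ for every $p$ (in particular $p=1$), and \Cref{lem:quant_correlated_functions} gives $\Quant(h) = \Theta(\min(\Quant(f),\Quant(g)))$. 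The parameter $p$ will be fixed to a value $p = \tilde\Theta(N^{1/3})$ — concretely, a large enough polylog-factor times $N^{1/3}$, chosen just big enough that one $p$-parallel query supplies the single oracle query needed by each of the $N^{1/3}$ inner \textsc{Forrelation} blocks at once, yet still $p = \tilde O(N^{1/3})$.

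For the sequential bounds: \Cref{thm:pointer_chasing}(i) gives $\Rand(f) = \tilde O(N^{1/3})$ (follow the pointer), so \Cref{lem:rand_correlated_functions} yields $\Rand(h) = O(\min(\Rand(f),\Rand(g))) = \tilde O(N^{1/3})$, which is (i). For (ii), \Cref{thm:pointer_chasing}(iv) gives $\Quant(f) = \Omega(N^{1/3})$, while the quantum composition theorem \Cref{thm:QueryComplexityComposition} (part 3), together with $\Quant(\bigoplus_m) = \Theta(m)$ and $\Quant(\textsc{for}_m) = 1$, gives $\Quant(g) = \Theta(N^{1/3})$; hence $\Quant(h) = \Omega(N^{1/3})$. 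For the parallel upper bound (iv): a $p$-parallel quantum algorithm discards the $f$-component, runs the one-query \textsc{Forrelation} algorithm on each of the $N^{1/3}$ inner blocks of the $g$-component simultaneously ($N^{1/3} \le p$ parallel queries, one round), and returns the parity of the $N^{1/3}$ outcomes with no further queries; thus $\Quant^{p\parallel}(g) = 1$, and since computing $g(Y)$ solves $h$ under the promise, $\Quant^{p\parallel}(h) \le 1$ (and $\ge 1$ as $h$ is nonconstant).

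The remaining and most delicate item is the parallel randomized lower bound (iii), and the crux is that it must hold for the \emph{same} $p = \tilde\Theta(N^{1/3})$ used in (iv). From the $f$-side: with chain length $k = N^{1/3}$, \Cref{thm:pointer_chasing}(iii) gives $\Rand^{p\parallel}(f) = \Omega(\min(k,\, N/(pk))) = \Omega(\min(N^{1/3},\, N^{2/3}/p)) = \tilde\Omega(N^{1/3})$, which is exactly where $p = \tilde O(N^{1/3})$ is needed. From the $g$-side: the outer \textsc{parity} on $N^{1/3}$ bits has $\Rand(\bigoplus_{N^{1/3}}) = \Theta(N^{1/3})$, i.e. its randomized complexity equals its arity, so the hypothesis of the randomized composition lower bound in \Cref{thm:QueryComplexityComposition} (part 2) is met and $\Rand(g) = \Omega(N^{1/3}\cdot \Rand(\textsc{for}_{N^{2/3}})) = \tilde\Omega(N^{2/3})$; the generic simulation $\Rand(g) \le p\cdot \Rand^{p\parallel}(g)$ then yields $\Rand^{p\parallel}(g) = \tilde\Omega(N^{2/3}/N^{1/3}) = \tilde\Omega(N^{1/3})$. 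Feeding both into \Cref{lem:rand_correlated_functions} gives $\Rand^{p\parallel}(h) = \Theta(\min(\Rand^{p\parallel}(f),\Rand^{p\parallel}(g))) = \tilde\Omega(N^{1/3})$, completing (iii). The main obstacle is therefore just the bookkeeping that the window of admissible $p$ is nonempty: it is, precisely because $k = N^{1/3}$ and inner instance size $N^{2/3}$ are tuned so that the $f$-bottleneck $N/(pk)$ and the quantum-parallelism requirement $p \gtrsim N^{1/3}$ meet at $\tilde\Theta(N^{1/3})$; and one should note that \Cref{thm:QueryComplexityComposition} (part 2) is being invoked here in its ``outer randomized complexity $=$ arity'' case (so no \textsc{And}/\textsc{Or} outer function is required), which it legitimately is despite the inner \textsc{Forrelation} being partial.
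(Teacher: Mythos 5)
Your proof is correct and follows essentially the same route as the paper's: reduce each item to the $f$- and $g$-components via the correlated-function lemmas, then invoke the pointer-chasing bounds and the composition theorem for the individual pieces, with the same choice of $p=\tilde\Theta(N^{1/3})$. You in fact cite the correlated lemmas more carefully than the paper does (its items (ii) and (iii) point to \Cref{lem:rand_correlated_functions} and \Cref{lem:det_correlated_functions} where \Cref{lem:quant_correlated_functions} and \Cref{lem:rand_correlated_functions} are intended), and your explicit check that the randomized composition lower bound applies because the outer \textsc{parity} has randomized complexity equal to its arity is exactly the justification the paper relies on implicitly.
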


\begin{proof}
    We prove each of the items one by one as follows.
    \begin{enumerate} [label=(\roman*)]
        \item By \Cref{thm:pointer_chasing}, we know that $\Rand(f) = O(N^{1/3})$. Since $\Rand(h) = O(\min(\Rand(f), \Rand(g)))$ from \Cref{lem:rand_correlated_functions}, we get $\Rand(h) = O(N^{1/3})$.
        \item From \Cref{thm:pointer_chasing}, we have $\Quant(f) = \Omega(N^{1/3})$ and from \Cref{thm:QueryComplexityComposition}, we have $\Quant(g) = \Omega(N^{1/3})$. It follows, from \Cref{lem:rand_correlated_functions}, that $\Quant(h) = \Omega(\min(\Quant(f), \Quant(g))) = \Omega(N^{1/3})$.
        \item From \Cref{thm:pointer_chasing}, we know that $\Rand^{p \parallel}(f) = \Omega(N^{1/3})$ and from \Cref{thm:QueryComplexityComposition}, we have that $\Rand(g) = \Omega(N^{1/3} \cdot \sqrt{N^{2/3}}) = \Omega(N^{2/3})$ so $\Rand^{p \parallel}(g) = \Omega(N^{2/3}/p) = \tilde{\Omega}(N^{1/3})$. Since $\Rand^{p \parallel}(h) = \Omega(\min(\Rand^{p \parallel}(f), \Rand^{p \parallel}(g)))$ from \Cref{lem:det_correlated_functions}, we get $\Rand^{p \parallel}(h) = \tilde{\Omega}(N^{1/3})$.
        \item Since \textsc{Forrelation} has quantum query complexity $1$ and \textsc{parity} can be perfectly parallelized, we have $\Quant^{p \parallel}(g) = 1$. Solving $g$ is sufficient to solve $h$ so $\Quant^{p \parallel}(h) = 1$.
    \end{enumerate}
    \vspace{-1pc}
\end{proof}

\subsubsection{Total function separation}
We totalize the partial function obtained in the previous section to show an unbounded total function parallel separation without sequential separation (between quantum and randomized complexities).

\begin{prob}[\textsc{cheatsheet forrelation ana}] \label{prob:forcs}
    Let $f:\{0,1\}^N \to \{0,1\}$ and $g:\{0,1\}^{N^2} \to \{0,1\}$ be  defined as $f = \forrelationana_N$ (see \emph{\Cref{prob:ANA-R-vs-Q}}) and $g=\bkk_{N^2}$ (see \emph{\Cref{prob:bkk}}). Let $c = 10 \log N$.
    \begin{problemTotal}
        \probleminput{An oracle for a string $X \in \{0,1\}^{M}$ (with $M = \tilde{\Theta}(N^{12})$\footnote{Since a certificate of $\andor$ on $N^2$ bits could be represented using $\tilde{\Theta}(N)$ bits, we will have the number of input bits for the function $(f \circ g)^c_{\cheatsheet}$ to be $\tilde{\Theta}(N^{12})$.}).}
        \problemquestion{Compute $(f \circ g)^c_{\cheatsheet}(X)$ (see \emph{\Cref{def:cheatsheet}}).}
    \end{problemTotal}
\end{prob}

We use the above function to prove an unbounded advantage in parallel query complexities but no sequential advantage between randomized and quantum algorithms.

\begin{theorem} \label{thm:no_seq_parallel_sepa_QvsR}
    Let $h$ be the \textsc{cheatsheet forrelation ana} function as defined in \emph{\Cref{prob:forcs}}. Then for some $p = \Theta(N^{7/3} \log^3 N)$,
    \begin{enumerate}[label=(\roman*)]
        \item $\emph{\Rand}(h) = \tilde{O}\left(N^{7/3}\right)$,
        \inlineitem $\emph{\Quant}(h) = \tilde{\Omega}\left(N^{7/3}\right)$, \inlineitem $\emph{\Rand}^{p \parallel}(h) = \tilde{\Omega}\left(N^{1/3}\right)$, \inlineitem $\emph{\Quant}^{p \parallel}(h) \leq 3$.
    \end{enumerate}
\end{theorem}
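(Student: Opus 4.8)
The plan is to run the proof of \Cref{thm:no_seq_parallel_sepa_RvsD} essentially verbatim, with two substitutions: the inner composing function $\andor$ becomes $\bkk$, and each deterministic-versus-randomized argument becomes the corresponding randomized-versus-quantum one. Write $f = \forrelationana_N$ and $g = \bkk_{N^2}$, so $h = (f\circ g)^c_{\cheatsheet}$ with $c = 10\log N$. By \Cref{thm:non-seq_RvsQ_parallel_separation}(i)--(ii) together with $\Quant(\cdot)\le\Rand(\cdot)$ we have $\Rand(f) = \Quant(f) = \tilde\Theta(N^{1/3})$; by \Cref{thm:bkk} we have $\cert(g) = \tilde O(N)$ and $\Quant(g) = \tilde\Omega(N^2)$, and since $g$ is total on $N^2$ bits, also $\Quant(g) = \Rand(g) = \tilde\Theta(N^2)$. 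Let $V = \tilde\Theta(N^2)$ be the cheatsheet size, take $q = \tilde\Theta(N^{1/3})$ (concretely $q = \Theta(N^{1/3}\log N)$) to be a parallelism at which \Cref{thm:non-seq_RvsQ_parallel_separation}(iii)--(iv) hold, and set $p = \max(cqN^2\log N,\,V) = \Theta(N^{7/3}\log^3 N)$ -- this is the $p$ in the statement. Items (i), (ii), (iv) are then bookkeeping and item (iii) carries the weight.

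\textbf{Items (i), (ii), (iv).} For (i): the upper-bound half of \Cref{thm:QueryComplexityComposition}(2) gives $\Rand(f\circ g) = \tilde O(\Rand(f)\Rand(g)) = \tilde O(N^{7/3})$, and \Cref{prop:cheatsheet_seq_upperbound}(ii) (applicable since $\cert(g) = \tilde O(N)$) then gives $\Rand(h) = \tilde O(\max(\Rand(f\circ g),N^2)) = \tilde O(N^{7/3})$. For (ii): \Cref{thm:QueryComplexityComposition}(3) with \Cref{thm:bkk} gives $\Quant(f\circ g) = \Theta(\Quant(f)\Quant(g)) = \tilde\Theta(N^{7/3})$, and \Cref{thm:cheatsheetcomplexities} then gives $\Quant(h) = \tilde\Omega(\Quant(f\circ g)) = \tilde\Omega(N^{7/3})$; combined with (i) this yields $\Quant(h) = \tilde\Omega(\Rand(h))$, so there is no sequential advantage. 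For (iv): \Cref{thm:non-seq_RvsQ_parallel_separation}(iv) gives $\Quant^{q\parallel}(f) = 1$, so \Cref{prop:cheatsheet_parallel_upperbound}(ii) with $k = 1$ and this $p$ gives $\Quant^{p\parallel}(h)\le k+2 = 3$.

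\textbf{Item (iii): the $p$-parallel randomized lower bound.} This is the substantive step and it mirrors the three-invocation structure of \Cref{thm:no_seq_parallel_sepa_RvsD}(iii). First, \Cref{thm:non-seq_RvsQ_parallel_separation}(iii) -- unwound via \Cref{lem:rand_correlated_functions}, \Cref{thm:pointer_chasing}(iii) and \Cref{thm:QueryComplexityComposition} -- gives $\Rand^{q'\parallel}(f) = \tilde\Omega(N^{1/3})$ at $q' = \tilde\Theta(p/N^2) = \tilde\Theta(N^{1/3})$; at this $q'$ both the pointer-chasing bound $\Omega(\min(N^{1/3}, N/(q'N^{1/3})))$ and the $\bigoplus_{N^{1/3}}\!\circ\,\textsc{for}_{N^{2/3}}$ bound $\tilde\Omega(N^{2/3}/q')$ are $\tilde\Omega(N^{1/3})$. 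Second, invoke the randomized $p$-parallel composition theorem whose inner function is $\bkk$, \Cref{thm:rand_comp}: since $\Rand(\bkk_{N^2}) = \tilde\Theta(N^2)$, a $p$-parallel randomized algorithm for $f\circ\bkk_{N^2}$ can resolve only $\tilde O(p/N^2)$ of the inner $\bkk$-instances per round, so $\Rand^{p\parallel}(f\circ g) = \tilde\Omega(\Rand^{q'\parallel}(f)) = \tilde\Omega(N^{1/3})$. Third, apply the $p$-parallel randomized cheatsheet lower bound (the randomized counterpart of \Cref{thm:cheatsheetdetparallelimpact}, itself the parallel analogue of the argument behind \Cref{thm:cheatsheetcomplexities}): since $c = 10\log N$ is small, any $p$-parallel randomized algorithm for $(f\circ g)^c_{\cheatsheet}$ must still determine all $c$ values $(f\circ g)(x^{(i)})$ to identify the relevant cheatsheet block, so $\Rand^{p\parallel}(h) = \tilde\Omega(\Rand^{p\parallel}(f\circ g)) = \tilde\Omega(N^{1/3})$.

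\textbf{Main obstacle.} The only nonroutine ingredient is \Cref{thm:rand_comp}. Unlike \Cref{thm:det_parallel_comp}, a clean general deterministic parallel composition theorem, and unlike \Cref{thm:QueryComplexityComposition}(2), which yields a matching randomized composition lower bound only when the outer function has full randomized complexity or the inner function is $\andfunc$/$\orfunc$, no general randomized composition theorem holds; one must lean on structural features of $\bkk$ specifically -- that it is no easier for bounded-error randomized algorithms than for deterministic ones, and that computing even one $\bkk_{N^2}$-instance to bounded error forces reading a constant fraction of its $N^2$ bits, in a way that composes and parallelizes as needed. Everything else -- matching $q$, $q' = \tilde\Theta(p/N^2)$ and $p = \Theta(N^{7/3}\log^3 N)$, and keeping track of the polylogarithmic factors contributed by $c = 10\log N$ and by the extra $\log N$ in the definition of $p$ -- is routine.
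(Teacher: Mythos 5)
Your proposal is correct and follows essentially the same route as the paper's proof: items (i), (ii), (iv) via \Cref{thm:QueryComplexityComposition}, \Cref{prop:cheatsheet_seq_upperbound}, \Cref{thm:cheatsheetcomplexities}, and \Cref{prop:cheatsheet_parallel_upperbound}, and item (iii) via the chain \Cref{thm:non-seq_RvsQ_parallel_separation}(iii) $\rightarrow$ \Cref{thm:rand_comp} $\rightarrow$ \Cref{thm:cheatsheetrandparallelimpact}. The parameter bookkeeping ($q = \Theta(N^{1/3}\log N)$, $p = \max(cqN^2\log N, V)$) also matches.
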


\begin{proof}
    We will proof each of the statements one by one. Recall that $f = \djana_N$ and $g = (\andor)_{N^2}$ in \Cref{prob:djcs}.
    \begin{enumerate}[label=(\roman*)]
        \item By \cref{itm:non-seq_RvsQ_parallel_separation_1} of \Cref{thm:non-seq_RvsQ_parallel_separation} and \Cref{thm:QueryComplexityComposition}, we know that $\Rand(f \circ g) = O(N^{7/3})$. It follows, by \cref{itm:rand_cheatsheet_upperbound} of \Cref{prop:cheatsheet_seq_upperbound}, that $\Rand(h) = \tilde{O}(N^{7/3})$.
        \item We know, from \Cref{thm:bkk}, that $\Quant(g) = \tilde\Omega(N^2)$. By \cref{itm:non-seq_RvsQ_parallel_separation_2} of \Cref{thm:non-seq_RvsQ_parallel_separation} and \Cref{thm:QueryComplexityComposition}, we have that $\Quant(f \circ g) = \tilde{\Omega}(N^{7/3})$. It follows, by \Cref{thm:cheatsheetcomplexities}, that $\Quant(h) = \tilde{\Omega}(N^{7/3})$. 
        \item By \cref{itm:non-seq_RvsQ_parallel_separation_3} of \Cref{thm:non-seq_RvsQ_parallel_separation}, we know that $\Rand^{\floor{p/N^2} \parallel}(f) = \Omega(N^{1/3})$. It follows, by \Cref{thm:rand_comp}, that $\Rand^{p \parallel}(f \circ g) = \Omega\left(\Rand^{\floor{p/N^2} \parallel}(f)\right) = \tilde{\Omega}(N^{1/3})$. Therefore, using \Cref{thm:cheatsheetrandparallelimpact}, we have that $\Rand^{p \parallel}(h) = \tilde{\Omega}(N^{1/3})$. 
        \item The desired result directly follows from \cref{itm:quant_parallel_cheatsheet_upperbound} of \Cref{prop:cheatsheet_parallel_upperbound}.
    \end{enumerate}
    \vspace{-1pc}
\end{proof}

The above theorem was proven for a function using input size \(\tilde \Theta(N^{12})\). We restate it as a corollary below using input size \(M\) to make clear the size of the exponents in the separation.

\begin{corollary}
    There exists a total Boolean function $h: \{0,1\}^M \to \{0,1\}$ and an integer $p \in [M]$ such that 
    \begin{enumerate}[label=(\roman*)]
        \item $\emph{\Rand}(h) = \tilde{O}\left(M^{7/36}\right)$,
        \inlineitem $\emph{\Quant}(h) = \tilde{\Omega}\left(\emph{\Det}(h)\right)$, 
        \inlineitem $\emph{\Rand}^{p \parallel}(h) = \tilde{\Omega}\left(M^{1/36}\right)$, \inlineitem $\emph{\Quant}^{p \parallel}(h) \leq 3$.
    \end{enumerate}
\end{corollary}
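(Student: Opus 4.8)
This corollary is simply \Cref{thm:no_seq_parallel_sepa_QvsR} restated with the bounds expressed in terms of the ambient input length. The plan is as follows. I would take $h$ to be the \textsc{cheatsheet forrelation ana} function of \Cref{prob:forcs}; by construction its input length is $M = \tilde{\Theta}(N^{12})$, so $N = \tilde{\Theta}(M^{1/12})$. Then \Cref{thm:no_seq_parallel_sepa_QvsR} furnishes an integer $p = \Theta(N^{7/3}\log^3 N)\in[M]$ with $\Rand(h) = \tilde O(N^{7/3})$, $\Quant(h) = \tilde\Omega(N^{7/3})$, $\Rand^{p\parallel}(h) = \tilde\Omega(N^{1/3})$ and $\Quant^{p\parallel}(h) \le 3$. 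Substituting $N^{7/3} = \tilde\Theta(M^{7/36})$ and $N^{1/3} = \tilde\Theta(M^{1/36})$ gives items (i), (iii), (iv) directly; this step is harmless because $\tilde O,\tilde\Omega$ suppress polylogarithmic factors in every asymptotic variable, so $\polylog N = \polylog M$ up to constants.

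For item (ii) I would additionally upper bound $\Det(h)$. By \cref{itm:det_cheatsheet_upperbound} of \Cref{prop:cheatsheet_seq_upperbound}, $\Det(h) = \tilde O(\max(\Det(\forrelationana_N \circ \bkk_{N^2}), N^2))$, and by \Cref{thm:QueryComplexityComposition} the first term is $\tilde O(\Det(\forrelationana_N)\cdot \Det(\bkk_{N^2}))$. Using \Cref{lem:det_correlated_functions} with \Cref{thm:pointer_chasing} gives $\Det(\forrelationana_N) = \tilde O(N^{1/3})$ (the deterministic algorithm just follows the pointer chain), while $\Det(\bkk_{N^2}) = \tilde\Theta(N^2)$ by \Cref{thm:bkk}; hence $\Det(h) = \tilde O(N^{7/3}) = \tilde O(M^{7/36})$. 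Combining with $\Quant(h) = \tilde\Omega(N^{7/3})$ from \Cref{thm:no_seq_parallel_sepa_QvsR} yields $\Quant(h) = \tilde\Omega(\Det(h))$, which is item (ii).

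There is no real obstacle here, since all the substance is already in \Cref{thm:no_seq_parallel_sepa_QvsR}; the only points needing a little care are carrying along the deterministic sequential upper bound required for item (ii), and being consistent about polylogarithmic factors when changing the parametrization from $N$ to $M$.
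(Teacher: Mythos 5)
Your proposal is correct and matches the paper's (implicit) argument: the corollary is just \Cref{thm:no_seq_parallel_sepa_QvsR} re-parametrized via $M = \tilde\Theta(N^{12})$, and the paper offers no further proof. Your extra step of establishing $\Det(h) = \tilde O(N^{7/3})$ via \Cref{prop:cheatsheet_seq_upperbound}, \Cref{thm:QueryComplexityComposition}, \Cref{lem:det_correlated_functions} and \Cref{thm:pointer_chasing} is a sensible and correct addition, since item (ii) as literally stated compares $\Quant(h)$ to $\Det(h)$ rather than to $\Rand(h)$, which the theorem alone does not cover.
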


%%%%%%%%%%%%%%%%%%%%%%%%%%%%%%%%%%%%%%%%%%%%%%%%%%
\section{Separations with two layers of adaptivity}
\label{sec:2Adaptive}
So far we have largely been dealing with separations in the depth (or number of layers or rounds) of an algorithm given a fixed width (or parallelism), that is, separations in \(\Det^{p \parallel}(f)\), \(\Rand^{p \parallel}(f)\) and \(\Quant^{p \parallel}(f)\). The relevant dual question to ask is given a fixed depth \(k\), what total function separation is possible for the widths, denoted as \(\Det^{k \perp}(f)\), \(\Rand^{k \perp}(f)\) and \(\Quant^{k \perp}(f)\)? Moreover, what is the least depth \(k\) for which a polynomial separation is possible between the widths? From \cite{montanaro2010nonadaptive}, we know that for \(k = 1\), for all total \(f\), there can be at most a constant factor of separation, that is \(\Det^{1 \perp}(f) = \Theta(\Rand^{1 \perp}(f)) = \Theta(\Quant^{1 \perp}(f))\). From \Cref{thm:cheatsheetQpvsRpSeparation} and \Cref{thm:cheatsheetRpvsDpSeparation} in this work, we know that at \(k = 3\), there is a polynomial separation. In particular,  there exist functions \(f\) and \(g\) such that \(\Rand^{3 \perp}(f) = \tilde O \left (D^{3 \perp}(f)^{1-\delta} \right )\) and \(\Quant^{3 \perp} (g) = \tilde O \left (\Rand^{3 \perp}(g)^{1-\delta'} \right )\) for some \(\delta, \delta' > 0\). Thus, the pertinent question is: is it possible to have a polynomial separation for total functions at \(k = 2\)? In this section we show that it is indeed possible---two layers of adaptivity are sufficient for a polynomial separation in parallelism or width between deterministic and randomized, and randomized and quantum algorithms. More generally, we show:

\begin{theorem}
\label{thm:twoAdaptiveInformalTheorem}
(Informal) Any partial function \(f\) that shows a polynomial separation in parallelism with one layer of adaptivity ($\Rand$ vs $\Det$ or $\Quant$ vs $\Rand$) can be converted into a total function \(h\) that shows a polynomial separation in parallelism with two layers of adaptivity. 
\end{theorem}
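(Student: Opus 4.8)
The plan is, given a partial $f$ with $\Quant^{1\perp}(f)=\tilde O(N^\epsilon)\ll\Rand^{1\perp}(f)=\tilde\Omega(N^\delta)$ (the $\Rand$-versus-$\Det$ case being identical with $\Rand$/$\Det$ in place of $\Quant$/$\Rand$), to build a single total function $h$ on $M=\Theta(N^3)$ bits --- the ``split cheat sheet'' of \Cref{prob:twoAdaptiveFunction} --- and to prove $\Quant^{2\perp}(h)=\tilde O(M^{\epsilon'})$ and $\Rand^{2\perp}(h)=\tilde\Omega(M^{\delta'})$ with $\epsilon'=(2+\epsilon)/3<(2+\delta)/3=\delta'$. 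Concretely, an input to $h$ has (i) an \emph{address} part: $c=\Theta(\log N)$ independent copies of $f\circ(\andor)_{N^2}$, jointly naming $\ell\in\{0,1\}^c$ by $\ell_i=(f\circ(\andor)_{N^2})(x^{(i)})$; (ii) a \emph{certificate} part holding, for each of the $cN$ inner $(\andor)_{N^2}$-instances, a \emph{bi-certificate} --- a position set that certifies that instance's value but whose shape (e.g.\ ``all $N$ bits of one block, plus one position in each other block'') is identical on $0$- and $1$-instances, so it reveals \emph{which} positions certify but not the certified value; and (iii) a \emph{data} part: one string of length $2^c=\tilde\Theta(N^{2+\delta})$ with a single relevant coordinate. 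Then $h=1$ iff all $cN$ bi-certificates are valid (hence each $f$-input is in the domain and the copies evaluate to $\ell$) and the $\ell$-th data bit is $1$; otherwise $h=0$. Counting bits, the address part dominates, so $M=\Theta(N^3)$.

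For the quantum upper bound I would exhibit a two-round algorithm. In round $1$ it (a) evaluates all $c$ copies of $f\circ(\andor)_{N^2}$ --- simulate the width-$\Quant^{1\perp}(f)$ nonadaptive quantum algorithm for $f$, replacing each of its queries by a batch of $N^2$ parallel queries that reads a whole inner $(\andor)_{N^2}$-instance and computes its value coherently, for a per-copy cost of $\tilde O(\Quant^{1\perp}(f)\cdot N^2)$ --- and (b) reads the certificate part ($\tilde O(N^2)$ parallel queries), after which it knows the true $\ell$ and the claimed bi-certificate positions. In round $2$ it (c) queries the address bits at the claimed bi-certificate positions ($\tilde O(N^2)$ queries) to check that each bi-certificate is genuinely certifying, that the certified $\andor$-values are the inputs on which $f$ was evaluated, and that each $f$-input lies in the domain; and (d) queries the single data bit at position $\ell$, outputting it (or $0$ if a check failed). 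This uses two rounds and parallelism $\tilde O(\Quant^{1\perp}(f)N^2+N^2)=\tilde O(N^{2+\epsilon})=\tilde O(M^{\epsilon'})$.

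For the randomized lower bound I would run a hybrid/distributional argument. Draw the $c$ copies (with their bi-certificates) independently, each from a hard nonadaptive $0$- (resp.\ $1$-) distribution for $f\circ(\andor)_{N^2}$ with probability $1/2$ each, so $\ell$ is uniform on $\{0,1\}^c$; fill the data part with independent uniform bits; the $h=1$- and $h=0$-conditioned distributions then differ only in the data coordinate indexed by $\ell$. A two-round randomized algorithm of parallelism $p$ cannot compute any copy in round $1$: evaluating even one copy of $f\circ(\andor)_{N^2}$ nonadaptively costs $\tilde\Omega(\Rand^{1\perp}(f)\cdot N^2)=\tilde\Omega(N^{2+\delta})$ (a nonadaptive composition bound in the spirit of \Cref{thm:QueryComplexityComposition}, using that $(\andor)_{N^2}$ is nonadaptively hard), which exceeds $p=\tilde\Theta(N^{2+\epsilon})$. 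And reading the certificate part exposes only the bi-certificate positions, whose shape is value-independent, so it conveys no information about $\ell$. Hence after round $1$ the algorithm has essentially no information about $\ell$; its round-$2$ queries into the data part hit coordinate $\ell$ with probability at most $p/2^c+o(1)$, and on a miss $h=1$ and $h=0$ are indistinguishable. Bounded-error success therefore forces $p=\tilde\Omega(\min(N^{2+\delta},2^c))=\tilde\Omega(N^{2+\delta})=\tilde\Omega(M^{\delta'})$. Plugging in $f=\textsc{Forrelation}$ (\Cref{prob:Forrelation}; $\Quant^{1\perp}(f)=1$, $\Rand^{1\perp}(f)=\tilde\Omega(\sqrt N)$) gives $\Quant^{2\perp}(h)=\tilde O(M^{2/3})$ versus $\Rand^{2\perp}(h)=\tilde\Omega(M^{5/6})$, and replacing it by the modified Deutsch--Jozsa function (\Cref{prob:djvar}; $\Rand(f)=1$, $\Det(f)=\Theta(N)$) gives the $\Rand^{2\perp}$-versus-$\Det^{2\perp}$ analogue.

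The step I expect to be the main obstacle is making the certificate part leak nothing. Ordinary certificates fail: a certificate that an inner $(\andor)_{N^2}$-instance is a $1$-instance (one all-zero block) looks structurally different from one that it is a $0$-instance (one $1$ per block), so a round-$1$ randomized algorithm could recover $f$'s composed inputs, hence $\ell$, just by reading the certificate part, and then win in round $2$ with only $\tilde O(N^2)$ parallelism --- collapsing the separation. The fix is to store instance-dependent but structurally uniform bi-certificates; I would need to exhibit such bi-certificates for $(\andor)_{N^2}$ (possible, since a valid certifying set can always be padded into the canonical ``one full block plus one point per other block'' shape), argue that reading them in round $1$ still reveals nothing about the $\andor$-values, and verify that this modification leaves intact both the round-$1$ quantum evaluation and the nonadaptive randomized hardness of $f\circ(\andor)_{N^2}$. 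The remaining lower-bound ingredient --- that one round of $p<\Rand^{1\perp}(f)\cdot N^2$ queries cannot evaluate a copy, even partially, under the hard product distribution --- is a routine nonadaptive composition/hybrid computation that I would carry out carefully but do not expect to be the bottleneck.
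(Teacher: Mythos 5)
Your proposal is correct and follows essentially the same route as the paper: the same address/bicertificate/data split (this is exactly \Cref{prob:twoAdaptiveFunction}), the same structurally uniform bi-certificates to block round-1 leakage, the same two-round quantum algorithm (solve $f\circ\andor$ nonadaptively and read the bicertificates in round 1, verify and fetch the data bit in round 2), and the same distributional argument that round-2 data queries miss the target address. The only substantive difference is in how the key lower-bound step is framed: you invoke a nonadaptive randomized composition lower bound for $f\circ\andor$ (which would have to be proved conditioned on the algorithm knowing the bicertificates), whereas the paper sidesteps any composition theorem by observing that, given a bicertificate, every bit of an $\andor$ block is fixed except the uniformly placed intersection point, which equals the corresponding $f$-input bit---so a Markov bound on the number of intersection points queried reduces the claim directly to the nonadaptive hardness of $f$ alone.
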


Before showing how we can obtain \(h\) given \(f\), we first give some intuition. We discuss here the $\Quant$ vs $\Rand$ separation but the same idea holds for $\Rand$ vs $\Det$. Let's start with the canonical cheat sheet function (\Cref{def:Canonicalcheatsheet}) used to show the separations in \Cref{thm:cheatsheetQpvsRpSeparation} and \Cref{thm:cheatsheetRpvsDpSeparation}. The separation here requires three rounds of adaptivity. At a high level, the upper bound for \(Q\) is given by doing the following in the three rounds:
\begin{enumerate}
\item Solve the hard partial function to obtain an address.
\item Read the certificate at the address in the cheat sheet. 
\item Use the certificate to go back and verify whether the input was in the domain of the hard partial function.
\end{enumerate}

\(R\) is unable to solve the hard partial function and obtain the location of the cheat sheet in the same amount of parallelism, hence the separation.

\paragraph{} Our challenge is to compress the above into two rounds. To achieve this, we remove the data dependence from step (1) to step (2). In particular, we will give the certificates of the hard partial function in a given fixed input position. However, we would still like to force the algorithm to solve the problem before (or in parallel to) reading the certificates: to accomplish this, we will have a third ``data'' section of the input. The output of the function will depend on a certain bit of the data register, specifically the one pointed to by the output of the hard partial function. In this way, an algorithm capable of solving the hard partial function without the certificates can find the data bit in the second round of queries. However, an algorithm that needs the certificates will not be able to learn the data bit until the third round. This function is still total, because the certificates can---also in the second round---be used to verify that the partial function input is from the domain.

\paragraph{} Note a subtle point in the prior intuition: an algorithm can learn a certificate of the hard partial function before it solves said function, in seeming contrast to the cheatsheet construction. If the \textit{form} of the certificate reveals the output of the function (e.g. for $\andor$, a minimal YES certificate points to a single position in each block, whereas a minimal NO certificate points to a full block), then the algorithm doesn't need to actually solve the function in the first round. We avoid this using so-called ``bi-certificates'' for the $\andor$ function. These are pointers to both an entire block, and a single index in each block, such that either a YES or NO instance could be certified. In this way, even after learning all the bi-certificates, an algorithm must still query the hard partial function to learn its output. 

\paragraph {} We now define the function \(h\) given the partial function \(f\). Then, in the following sections, we use \(h\) to show separations between \(\Det\) and \(\Rand\), and \(\Rand\) and \(\Quant\).

\begin{prob}
\label{prob:twoAdaptiveFunction}
\textsc{2-Adaptive-F} Let \(f:\{0,1\}^N \rightarrow \{0,1\}\) be a partial function. The input bit string to \(h: \{0,1\}^{\tilde \Theta(N^3)} \rightarrow \{0,1\}\) is divided into three components:  
\begin{enumerate}
\item \Add: The address component of size = \(\tilde \Theta(N^3)\)
\item \Bc: The bicertificate component of size = \(\tilde \Theta(N^2)\)
\item \Dt: The data component of size = \(N^3\)
\end{enumerate}
We then make the following indexing/ definitions:
\begin{enumerate}
\item Divide \(\Add\) into \(3 \log N\) segments of size \(N^3\) each, calling each segment \(\Add[i]\). Further divide each segment into \(N\) sub segments of size \(N^2\) each, calling each sub segment \(\Add[i,j]\). Each sub segment is further divided into \(N\) sized blocks \(\Add[i, j, k]\).
\item Similarly, divide \(\Bc\) into \(3 \log N\) segments of size \(\tilde \Theta(N^2)\) each, calling each segment \(\Bc[i]\). Further divide each segment into \(N\) sub segments of size \(\tilde \Theta(N)\)\footnote[1]{The \(\tilde \Theta\) hides a \(\log\) and a constant factor. It comes from the number of bits required to specify 2\(N\) locations of the \(\Add\) bit string.} calling each sub segment \(\Bc[i, j]\). \(\Bc[i, j]\) is called the ``bicertificate" corresponding to \(\Add[i, j]\).
\item Define the ``f-input bits" \(\In[i,j] = \andor(\Add[i,j])\), and the ``f-input" \(\In[i] = \In[i,0] \; \concat \; \In[i,1] \; \concat \; \In[i,2] \; ... \; \In[i,N-1]\)
\item If each \(\In[i]\) is in the domain of \(f\), define the `` target address bits" \(\tg[i] = f(\In[i])\) and the ``target address" \(\tg = \tg[0] \; \concat \; \tg[1] \; \concat \;  \tg[2] \; ... \;  \tg[3 \log N-1]\)
\item A valid bicertificate is a set of locations that could form a zero-certificate concatenated with a set of locations that could form a one-certificate (the actual instance is not taken into account). In the case of \(\andor\), a valid bicertificate \(\Bc[i,j]\) for \(\Add[i, j]\) is locations corresponding to any block \(\Add[i, j, k]\) (the zero certificate) concatenated with a set of one location in \(\Add[i, j, k]\) for each \(k\) (the one certificate).   
\item The intersection point \(\Ip[i, j]\) defined for a valid \(\andor\) bicertificate \(\Bc[i, j]\) refers to the single location that is part of both the zero- and the one- certificate. 
\end{enumerate}

Then, a given input is a YES instance to \(h\) if and only if:
\begin{enumerate}
\item For all \(i \in [0...3 \log N-1]\) and \(j \in [0 ... N-1]\), we have \(\Bc[i, j]\) is a valid \(\andor\),  
\item Every \(\Bc[i, j]\) certifies the output of \(\andor(\Add[i, j])\),
\item All \(\In[i]\) are within the domain of \(f\), and
\item \(\Dt[\tg] = 1\)
\end{enumerate}
\end{prob}

We can now use \textsc{2-Adaptive-F} (depicted in \Cref{fig:TwoAdaptive}) to show the desired separations. In \Cref{sec:2AdaptiveDetvsRand}, we show the deterministic vs randomized separation and in 
\Cref{sec:2AdaptiveRandvsQuant}, we show the randomized vs quantum separation. 

\subsection{Separation between randomized and deterministic complexities}
\label{sec:2AdaptiveDetvsRand}

\begin{theorem}
\label{thm:twoAdaptiveDvsR}
Let \(f: \mathcal D \rightarrow \{0,1\}\) for $\mathcal D \subset \{0,1\}^N$ be a partial function that satisfies the following where \(0 \leq s, l \leq 1\): 
\begin{align*}
\emph{\Rand}^{1 \perp}(f) = \tilde O(N^s)  \hspace{80pt} \emph{\Det}^{1 \perp}(f) = \tilde \Omega(N^l)
\end{align*}
Then \(h: \{0,1\}^{\tilde \Theta (N^3)} \rightarrow \{0,1\}\) as constructed in \Cref{prob:twoAdaptiveFunction} is a total function that satisfies:
\begin{align*}
\emph{\Rand}^{2 \perp}(h) = \tilde O(N^{s+2})   \hspace{80pt} \emph{\Det}^{2 \perp}(h) = \tilde \Omega(N^{l+2})
\end{align*}
\end{theorem}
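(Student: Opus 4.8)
The plan is to prove the upper bound with an explicit two-round randomized algorithm and the lower bound with an adversary argument that forces any deterministic two-round algorithm to evaluate $f\circ\andor$ on some address segment entirely inside its first round. Two facts are used repeatedly: a single instance $\Add[i,j]$ is an $\andor$ instance on $N^2$ bits and $\andor$ on $N^2$ bits has non-adaptive deterministic query complexity $N^2$ (no bit can be omitted); and $\Bc$, together with the $\Add$-locations named by all the bicertificates, has size only $\tilde\Theta(N^2)$, which is within the budget $p$ in both bounds.

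\textbf{Upper bound.} Take $p=\tilde\Theta(N^{s+2})$. In round 1 the algorithm reads all of $\Bc$ and, for each of the $3\log N$ segments $i$, runs an amplified (error $1/\mathrm{poly}(N)$) copy of the optimal non-adaptive randomized algorithm for $f$ on $\In[i]$, supplying each requested bit $\In[i,j]=\andor(\Add[i,j])$ by reading all $N^2$ bits of $\Add[i,j]$; this costs $\tilde O(N^s)\cdot N^2\cdot 3\log N=\tilde O(N^{s+2})$ queries and, if every $\In[i]$ lies in the domain of $f$, yields the correct $\tg$ with high probability. In round 2 the algorithm (i) queries the single data bit $\Dt[\tg]$ using the round-1 value of $\tg$, and (ii) reads the $O(N)$ bits of each $\Add[i,j]$ named by $\Bc[i,j]$ ($\tilde\Theta(N^2)$ more queries), from which it checks whether every $\Bc[i,j]$ is valid and certifies $\andor(\Add[i,j])$, and if so reconstructs every $\In[i]$ exactly, checks membership in the domain of $f$, and recomputes $\tg$. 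It outputs $1$ iff conditions (1)--(3) of \Cref{prob:twoAdaptiveFunction} all hold, the recomputed $\tg$ equals the round-1 value, and $\Dt[\tg]=1$. This is correct except on the $1/\mathrm{poly}(N)$ failure event of the amplified subroutines, and uses $\tilde O(N^{s+2})$ parallelism in each of two rounds, so $\Rand^{2\perp}(h)=\tilde O(N^{s+2})$.

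\textbf{Lower bound.} Let $\mathcal A$ be a deterministic two-round algorithm for $h$ with parallelism $p$, and let $A_1\subseteq\Add$ be the fixed set of address bits it queries in round 1. If $|A_1|\ge \Det^{1\perp}(f)\cdot N^2=\tilde\Omega(N^{l+2})$ we are done; otherwise, for every segment $i$, $A_1$ fully covers fewer than $\Det^{1\perp}(f)$ of the $\Add[i,j]$. Having seen $\mathcal A$'s (non-adaptive) first-round queries, the adversary reveals all of $\Bc$ for free, committing for each sub-segment not fully covered by $A_1$ a valid ``dual-use'' bicertificate whose decider bits (including the intersection point) lie outside $A_1$, and for each fully covered sub-segment a bicertificate certifying the value of $\andor(\Add[i,j])$ it is about to fix; on the fewer than $\Det^{1\perp}(f)$ forced coordinates of each $\In[i]$ it answers according to the $\Det^{1\perp}(f)$-adversary for $f$, keeping $f(\In[i])$ --- hence every bit of $\tg$ --- free. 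Now $\mathcal A$'s round-2 queries are a fixed function of the (now fixed) round-1 transcript, so the set of data bits it ever reads, over both rounds, has size at most $2p$; if $2p\ge N^3$ then $p=\tilde\Omega(N^{l+2})$ already, and otherwise the adversary picks $\tg^{*}$ outside this set, finalizes the free $\andor$ values and the associated domain inputs of $f$ so that $\tg=\tg^{*}$ and conditions (1)--(3) hold, and sets $\Dt[\tg^{*}]$ to differ from $\mathcal A$'s output --- a contradiction. Hence $\Det^{2\perp}(h)=\tilde\Omega(N^{l+2})$.

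\textbf{Main obstacle.} The delicate part is the bicertificate bookkeeping in the lower bound. Because $\mathcal A$ may read $\Bc$ in round 1, the adversary must commit every bicertificate before it has fixed the underlying $\andor$ (and hence $f$) values, yet each committed bicertificate must (a) be structurally valid, (b) leave the value of the $\andor$ it certifies flexible so that the $\Det^{1\perp}(f)$-adversary can keep $\tg$ free, (c) have its decider bits outside $A_1$, and (d) genuinely certify whichever value is ultimately chosen, so that conditions (1)--(2) hold on the hard instances. Verifying that the dual-use structure of $\andor$ bicertificates --- a full block together with one index per block, overlapping at the intersection point --- supports (a)--(d) simultaneously is where the argument needs the most care; by contrast the upper bound is routine once one observes that $\Bc$ and all the bicertified $\Add$-locations fit inside the parallelism budget, so that they can be read ``for free'' alongside the non-adaptive algorithm for $f$ in round 1 and used in round 2.
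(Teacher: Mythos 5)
Your proposal is correct and takes essentially the same approach as the paper: round one of the randomized algorithm solves $f\circ\andor$ non-adaptively while reading all of $\Bc$, round two verifies the certificates and queries $\Dt[\tg]$, and the deterministic lower bound is the paper's adversary argument that leaves each intersection point unqueried so that every uncovered $\andor$ value --- hence $\tg$ --- remains free, then plants the target at an untouched data location. The only imprecision is your requirement that all ``decider bits'' of an uncovered $\Add[i,j]$'s bicertificate avoid $A_1$, which can be impossible when a sub-segment is almost (but not) fully queried; in fact only the intersection point must be unqueried, since queried bicertificate positions are simply answered with their canonical fill-in values, which keeps both $\andor$ outcomes alive.
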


\begin{proof}
The statement follows from the deterministic lower bound proved in \Cref{lem:twoAdaptiveDLowerBound} and the randomized upper bound proved in \Cref{lem:twoAdaptiveRUpperBound}.

\end{proof}

By taking \(f\) as the \textsc{Deutsch Jozsa} problem, for which \(\Rand^{1 \perp}(f) = k\) where \(k\) is constant, and \(\Det^{1 \perp}(f) = \tilde \Omega(N)\), we have the desired polynomial separation. 

\begin{lemma}
\label{lem:twoAdaptiveDLowerBound}
Let \(f: \mathcal D \rightarrow \{0,1\}\) for $\mathcal D \subset \{0,1\}^N$  be a partial function that satisfies \(\emph{\Det}^{1 \perp}(f) = \tilde \Omega(N^l)\) where \(0\leq l \leq 1\). Then \(h: \{0,1\}^{\tilde \Theta (N^3)} \rightarrow \{0,1\}\) as constructed in \Cref{prob:twoAdaptiveFunction} is a total function that satisfies \(\emph{\Det}^{2 \perp}(h) = \tilde \Omega(N^{l+2})\).
\end{lemma}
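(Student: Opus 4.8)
The plan is to prove a lower bound of $\tilde\Omega(N^{l+2})$ on $\Det^{2\perp}(h)$ by an adversary argument against any deterministic algorithm that uses only two rounds of $p$-parallel queries with $p = o(N^{l+2})$ (up to log factors). The key observation is that with only $p$ queries per round, a deterministic algorithm must, in round one, ``commit'' to a set of at most $p$ positions across the \Add, \Bc\ and \Dt\ components; in round two it may adaptively choose another $p$ positions based on what it saw. The adversary will maintain freedom to set the undetermined \Add\ bits so as to flip the output of $h$, unless the algorithm has essentially solved each of the $3\log N$ copies of $f$ by the end of round one.

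Let me describe the steps. First I would fix a hard input distribution / adversary strategy for the inner function $f$ witnessing $\Det^{1\perp}(f) = \tilde\Omega(N^l)$: any nonadaptive deterministic algorithm querying fewer than $\tilde\Omega(N^l)$ positions of an $f$-input cannot determine $f$'s value, so there is an adversary that keeps both values of $f$ alive. Second, I would argue that $\In[i,j] = \andor(\Add[i,j])$ behaves like a fresh bit that the adversary controls: to learn $\In[i,j]$ the algorithm must either read an entire block $\Add[i,j,k]$ (all $1$s, giving the OR-bit, hence potentially the AND), i.e. $N$ queries, or, using the bicertificate $\Bc[i,j]$, read the $N$ intersection points $\Ip[i,j]$ across the $N$ sub-blocks — again $\Theta(N)$ queries — so learning one f-input bit costs $\tilde\Theta(N)$ \Add-queries. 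The bi-certificate design is exactly what prevents the algorithm from reading $\In[i,j]$ off the \emph{shape} of $\Bc[i,j]$. Third, I would combine these: learning the whole f-input $\In[i]$ (which has $N$ bits) well enough to determine $\tg[i] = f(\In[i])$ requires the algorithm to pin down $\tilde\Omega(N^l)$ of these f-input bits, each costing $\tilde\Theta(N)$ \Add-queries, so $\tilde\Omega(N^{l+1})$ \Add-queries per copy $i$; and there are $3\log N$ copies, and they must all be resolved before \Dt\ can be read at the relevant address. Since the target address $\tg$ ranges over $2^{3\log N} = \tilde\Theta(N^3)$ positions of the $N^3$-bit \Dt\ register, and the algorithm only learns the single relevant bit $\Dt[\tg]$ in round two if it already knows $\tg$ after round one, a round-one budget of $o(N^{l+2})$ — which is $o(N^{l+1})$ per copy when split $3\log N$ ways — leaves at least one copy $\In[i]$ with $f(\In[i])$ undetermined. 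Then the adversary sets the remaining \Add\ bits to flip $\tg[i]$, pointing $\tg$ to a \Dt\ position the algorithm never queried, and sets that \Dt\ bit freely; simultaneously it keeps all $\Bc[i,j]$ consistent (valid bi-certificates certifying whichever $\andor$ values it chooses) and all $\In[i]$ in $\domain(f)$, so the only undetermined clause of the YES-condition is $\Dt[\tg]=1$, which the adversary can set to either value. Hence two rounds with $p = o(N^{l+2}/\polylog)$ cannot compute $h$, giving $\Det^{2\perp}(h) = \tilde\Omega(N^{l+2})$.

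\textbf{The main obstacle} I anticipate is the careful bookkeeping of the round-two adaptivity together with the bi-certificate consistency: after round one the algorithm has partial information about many $\Add[i,j]$ blocks and their bi-certificates, and in round two it gets $p$ more adaptive queries, so I must verify that the adversary can still (a) choose the undetermined f-input bits of the ``escaped'' copy $i^\ast$ to force a desired value of $\tg[i^\ast]$, (b) do so while every $\Bc[i,j]$ remains a \emph{valid} $\andor$ bi-certificate that correctly certifies $\andor(\Add[i,j])$ — this is where the freedom to have the intersection point $\Ip[i,j]$ be either a $0$ or a $1$ is used — and (c) the resulting $\tg$ avoids all the (at most $2p$) \Dt\ positions the algorithm queried across both rounds, which holds because $N^3 \gg p$. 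Formally, I would package (a)–(c) into a claim that ``for any transcript of $\le p$ round-one queries and $\le p$ round-two queries there exist two completions of the input, one a YES-instance and one a NO-instance, agreeing on all queried positions,'' and then the lemma follows immediately. A secondary technical point is making the per-copy accounting rigorous: one must show the $p$ round-one queries cannot be cleverly shared among the $3\log N$ copies to beat the $\tilde\Omega(N^{l+1})$-per-copy bound, which follows because the copies occupy disjoint segments $\Add[i]$ of \Add, so a query budget $p$ splits additively and some copy receives at most $p/(3\log N)$ queries. I would note that the randomized upper bound matching this (Lemma~\ref{lem:twoAdaptiveRUpperBound}) is handled separately, and that taking $f$ to be Deutsch–Jozsa (with $\Det^{1\perp}(f) = \tilde\Omega(N)$, i.e. $l=1$) yields the quadratic-in-exponent separation claimed in Theorem~\ref{thm:twoAdaptiveDvsR}.
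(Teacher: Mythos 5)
Your overall strategy is the same as the paper's: answer round-one queries adversarially so that the target-address bits $\tg[i]=f(\In[i])$ stay undetermined, using the bi-certificates to hide the f-input bits and the nonadaptive hardness of $f$ to keep $f(\In[i])$ alive, and then point $\tg$ at an unqueried position of $\Dt$. However, the quantitative core of your argument has a genuine gap. You charge $\tilde\Theta(N)$ \Add-queries per f-input bit $\In[i,j]$, but the correct (and necessary) figure is $N^2$: the adversary places the single intersection point $\Ip[i,j]$ at \emph{any} unqueried bit of the $N^2$-bit sub-segment $\Add[i,j]$, so the algorithm pins down $\In[i,j]$ only by querying that sub-segment in full. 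Your two justifications for the $N$ figure misread the structure --- reading one block $\Add[i,j,k]$ reveals nothing unless the adversary has already committed to which block is special, and there is exactly one intersection point per sub-segment $\Add[i,j]$, not $N$ of them. With a cost of $N$ per bit the computation yields only $\tilde\Omega(N^{l+1})$, one factor of $N$ short of the lemma. You recover $\tilde\Omega(N^{l+2})$ only through a compensating arithmetic slip: splitting a budget of $o(N^{l+2})$ among $3\log N$ copies leaves some copy with $\tilde o(N^{l+2})$ queries, not $o(N^{l+1})$.

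There is also a structural issue in the endgame: ``at least one copy $\In[i]$ undetermined'' is not sufficient. If only one bit of $\tg$ is free, there are just two candidate target addresses, and the algorithm can simply query both corresponding $\Dt$ positions in round two; your step (c), that $\tg$ can dodge all $2p$ queried $\Dt$ positions because $N^3\gg p$, presupposes that the adversary retains (near-)full freedom over $\tg$, i.e.\ that essentially all $3\log N$ copies remain undetermined. With the correct $N^2$ cost this comes for free and no pigeonhole is needed: each segment $\Add[i]$ receives at most the total budget $\tilde O(N^{l+2-\epsilon})$ of queries, so at most $\tilde O(N^{l-\epsilon})$ sub-segments $\Add[i,j]$ are fully queried, hence fewer than $\Det^{1\perp}(f)$ bits of $\In[i]$ are fixed and \emph{every} $\tg[i]$ remains free --- which is exactly how the paper's proof proceeds.
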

\begin{proof}
It suffices to show that for any 2 round deterministic algorithm that makes \(\tilde O \left ( N^{l+2-\epsilon} \right )\) queries for any \(\epsilon > 0\), it is always possible to adversarially answer the queries such that the algorithm fails.

\paragraph{Round 1} The algorithm submits \(\tilde O (N^{2+l-\epsilon})\) queries each to \(\Add\), \(\Bc\) and \(\Dt\). Answer the queries using the following strategy: 
\begin{enumerate}
\item For each segment \(\Add[i]\), the algorithm can fully query at most \(\tilde O({N^{2+l-\epsilon}}/N^2) = \tilde O(N^{l-\epsilon})\) \(\Add[i, j]\)'s, and there remain \(\tilde \Omega (N^\epsilon) \) \(\Add[i, j]\)'s with at least one bit unqueried.
\begin{enumerate}
\item For \(\Add[i, j]\) with at least one unqueried bit: Select a valid bicertificate \(\Bc[i, j]\) such that the intersection point \(\Ip[i, j]\) points to at an unqueried bit. For the queried bits, if \(l\) belongs to the one-certificate of \(\Bc[i, j]\), set \(\Add[i, j](l) = 1\), else set \(\Add[i, j](l) = 0\). This ensures that both \(\andor(\Add[i, j]) = 1\) and \(\andor(\Add[i, j]) = 0\) are still consistent with answered queries.  
\item For the \(\Add[i, j]\) which are fully queried: There are at most \(\tilde O(N^{l-\epsilon})\) such \(\Add[i, j]\)'s, therefore this fixes at most \(\tilde O(N^{l-\epsilon})\) bits of \(\In[i]\). Answer the queries to the \(\Add[i, j]\)'s such that \(\In[i]\) can be completed to both a \(0\)-instance and a \(1\)-instance of \(f\). This should always be possible, else it contradicts \(\Det^{1\perp}(f) = \tilde \Omega (N^l)\).
\end{enumerate}
\item For queries to \(\Dt\), answer anything. 
\end{enumerate}

\paragraph{Round 2} At this stage, note that the algorithm can fully learn \(\Bc\). The algorithm once again submits \(\tilde O (N^{2+l-\epsilon})\) queries to \(\Add\), \(\Bc\) and \(\Dt\). Answer the queries using the following strategy: 
\begin{enumerate}
\item Designate any location of \(\Dt\) that is unqueried in both rounds as the target address \(t'\). \(\Vert \Dt \Vert = N^3\) and only \(\tilde O(N^{2+l-\epsilon})\) queries have been made to it, so there should be several such locations. 
\item Setting \(\tg = t'\), for each \(i\), select an \(\In[i]\) consistent with the queries such that \(f(\In[i]) = \tg[i]\). In the first round we ensured this is always possible. Then answer queries to the \(\Add[i, j]\) such that \(\andor(\Add[i, j]) = \In[i, j]\).
\item For queries to \(\Dt\), answer anything.
\end{enumerate}
After Round 2, as the target location has not been queried, both 0 and 1 outputs of \(h\) are consistent with the queries made. Hence the deterministic algorithm fails, proving the claim. 
\end{proof}

\begin{lemma}
\label{lem:twoAdaptiveRUpperBound}
Let \(f: \mathcal D \rightarrow \{0,1\}\) for $\mathcal D \subset \{0,1\}^N$ be a partial function that satisfies \(\emph{\Rand}^{1 \perp}(f) = \tilde O(N^s)\) where \(0 \leq s \leq 1\). Then \(h: \{0,1\}^{\tilde \Theta (N^3)} \rightarrow \{0,1\}\) as constructed in \Cref{prob:twoAdaptiveFunction} is a total function that satisfies \(\emph{\Rand}^{2 \perp}(h) = \tilde O(N^{s+2})\).
\end{lemma}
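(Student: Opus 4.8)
The plan is to give an explicit $2$-round randomized algorithm for $h$ that uses $p = \tilde O(N^{s+2})$ parallel queries in each round. Write an input to $h$ as the triple $(\Add, \Bc, \Dt)$ of \Cref{prob:twoAdaptiveFunction}. In the \textbf{first round} the algorithm does the following in parallel. It reads the entire bicertificate register $\Bc$; since $|\Bc| = \tilde\Theta(N^2) \le p$ this is affordable, and it reveals, for every $i,j$, whether $\Bc[i,j]$ has the shape of a valid $\andor$ bicertificate and, if so, the (at most) $2N$ positions of $\Add[i,j]$ it points to — one full block $\Add[i,j,k^{*}]$ together with one designated position inside each block $\Add[i,j,k]$. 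Simultaneously, for each of the $3\log N$ indices $i$ it runs $t = \Theta(\log\log N)$ independent copies of the optimal one-round randomized algorithm for $f$ on the string $\In[i]$, where a query to a coordinate $\In[i,j] = \andor(\Add[i,j])$ is simulated by reading all $N^2$ bits of the sub-segment $\Add[i,j]$. As $\Rand^{1\perp}(f) = \tilde O(N^s)$, each copy touches $\tilde O(N^s)$ sub-segments, so these simulations cost $3\log N \cdot t \cdot \tilde O(N^s) \cdot N^2 = \tilde O(N^{s+2}) \le p$ queries in total. Taking the majority over the $t$ copies yields a guess $g_i$ for each $i$, hence a guessed target address $\tg' = g_0 \concat \cdots \concat g_{3\log N - 1}$.

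In the \textbf{second round} the algorithm reads, for every $i,j$, exactly the $\le 2N$ positions of $\Add$ singled out by $\Bc[i,j]$ — costing $3\log N \cdot N \cdot \tilde O(N) = \tilde O(N^2) \le p$ queries — and it queries the single bit $\Dt[\tg']$. From these reads it checks deterministically whether $h$'s YES-conditions (from \Cref{prob:twoAdaptiveFunction}) hold: validity of each $\Bc[i,j]$ is already known from $\Bc$; whether $\Bc[i,j]$ certifies $\andor(\Add[i,j])$ is decided by whether the designated block $\Add[i,j,k^{*}]$ is all-zero (which forces $\andor(\Add[i,j]) = 1$) or every designated position is a $1$ (which forces $\andor(\Add[i,j]) = 0$), and in either certifying case this also pins down $\In[i,j]$; if the certification holds for every $i,j$, all strings $\In[i]$ are then known, so the condition ``$\In[i] \in \mathcal D$ for all $i$'' is a free check, and when it passes the algorithm recomputes the true target $\tg = (f(\In[i]))_i$. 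The algorithm outputs $1$ iff all three of these conditions hold and the queried bit $\Dt[\tg']$ equals $1$.

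For correctness: if any of the first three conditions fails then $h = 0$ and the algorithm outputs $0$, since that check is deterministic; if all three hold then $h$ equals the bit $\Dt[\tg]$, and the algorithm's output equals $\Dt[\tg']$, which agrees with $\Dt[\tg]$ whenever $\tg' = \tg$. The guess can be wrong only if, for some $i$ with $\In[i] \in \mathcal D$, the majority-amplified simulation of the $f$-algorithm errs; a Chernoff bound makes this probability at most $1/(100\log N)$ per such $i$, so a union bound over the $3\log N$ blocks gives $\Pr[\tg' \ne \tg] < 1/3$. Hence the algorithm succeeds with probability at least $2/3$, and since both rounds use $\tilde O(N^{s+2})$ parallel queries we conclude $\Rand^{2\perp}(h) = \tilde O(N^{s+2})$.

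The step I would write out most carefully is the second-round verification: one must check that reading only the $O(N)$ bicertificate-indicated positions of each $\Add[i,j]$ — rather than all $N^2$ bits — suffices both to evaluate $h$'s YES-conditions correctly on \emph{every} input and to reconstruct the $f$-inputs $\In[i]$ whenever the input could be a $1$-instance. This is exactly what the ``bicertificate'' structure buys: a full block together with one position per block certifies whichever value $\andor$ actually takes on that sub-instance. The remaining ingredients — the query-budget bookkeeping and the amplification of the $3\log N$ parallel invocations of the $f$-algorithm — are routine.
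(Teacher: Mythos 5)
Your proof is correct and follows essentially the same approach as the paper's: in round one read all of $\Bc$ and solve $f\circ\andor$ non-adaptively with logarithmic amplification to guess the target address, and in round two verify the bicertificates against $\Add$, check domain membership, and read $\Dt$ at the guessed target. The only differences are presentational — you unfold the composition step explicitly (reading each $N^2$-bit subsegment to simulate one $f$-query) where the paper cites composition, and your error accounting is slightly more careful than the paper's.
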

\begin{proof}
The following randomized algorithm proves the statement.
\paragraph{Round 1} We have \(\Rand^{1 \perp}(f) = \tilde O(N^s)\). Therefore, by composition,  \(\Rand^{1 \perp}(f \circ \andor ) = \tilde O(N^{2+s})\). Thus in \(\tilde O(N^{s+2})\) queries the algorithm can determine each \(\tg[i] := f \circ \andor(\Add[i])\) with probability \(1/2 + c\) for any constant \(c\). By repeating the algorithm non-adaptively \(O(\log N)\) times and taking majority vote for each \(\tg[i]\), the algorithm can determine \(\tg\) with constant probability \(c'\). The algorithm can also query the full \(\Bc\) and check condition (1) in \Cref{prob:twoAdaptiveFunction}.
\paragraph{Round 2} The algorithm queries \(\Add\) at locations learned from \(\Bc\) to verify conditions (2) and (3) of \Cref{prob:twoAdaptiveFunction}. It also queries \(\Dt\) at the \(\tg\) determined in Round 1. If conditions 1, 2 and 3 are satisfied, it answers with \(\Dt[\tg]\), else it answers 0. With probability \(c'\) we have \(\Dt[\tg]\) is correct, else it is a uniformly random bit. Hence the algorithm succeeds with probability \(\geq \frac 1 2 + c'\).
\end{proof}

\subsection{Separation between quantum and randomized complexities}
\label{sec:2AdaptiveRandvsQuant}

\begin{theorem}
\label{thm:twoAdaptiveRvsQ}
Let \(f: \mathcal D \rightarrow \{0,1\}\) for $\mathcal D \subset \{0,1\}^N$ be a partial function that satisfies the following where \(0 \leq s, l \leq 1\): 
\begin{align*}
\emph{\Quant}^{1 \perp}(f) = \tilde O(N^s)  \hspace{80pt} \emph{\Rand}^{1 \perp}(f) = \tilde \Omega(N^l)
\end{align*}
Then \(h: \{0,1\}^{\tilde \Theta (N^3)} \rightarrow \{0,1\}\) as constructed in \Cref{prob:twoAdaptiveFunction} is a total function that satisfies
\begin{align*}
\emph{\Quant}^{2 \perp}(h) = \tilde O(N^{s+2})   \hspace{80pt} \emph{\Rand}^{2 \perp}(h) = \tilde \Omega(N^{l+2-o(1)})
\end{align*}
\end{theorem}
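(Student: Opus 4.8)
The plan is to prove the theorem exactly as \Cref{thm:twoAdaptiveDvsR} was proved: isolate a quantum upper bound, $\Quant^{2\perp}(h) = \tilde O(N^{s+2})$, and a randomized lower bound, $\Rand^{2\perp}(h) = \tilde\Omega(N^{l+2-o(1)})$, and take their conjunction. The quantum upper bound mirrors \Cref{lem:twoAdaptiveRUpperBound} almost verbatim, with the non-adaptive randomized subroutine for $f$ replaced by the non-adaptive quantum one. In round $1$ the algorithm (i) runs a one-round quantum algorithm for $f \circ \andor$ on each of the $3\log N$ segments $\Add[i]$ to obtain $\tg$, amplified $O(\log N)$-fold so that all $3\log N$ outputs $\tg[i]$ are simultaneously correct with constant probability, and (ii) reads all of $\Bc$ (only $\tilde\Theta(N^2)$ bits) and checks condition (1) of \Cref{prob:twoAdaptiveFunction}. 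Since $\andor_{N^2}$ can be evaluated by reading its $N^2$ bits, $\Quant^{1\perp}(\andor_{N^2}) = \tilde O(N^2)$, so one-round composition gives round-$1$ cost $\tilde O(\Quant^{1\perp}(f)\cdot N^2) = \tilde O(N^{s+2})$. In round $2$ the algorithm queries the $\tilde O(N^2)$ positions of $\Add$ pointed to by the bicertificates, checks conditions (2)--(3), and queries $\Dt$ at the address $\tg$; it outputs $\Dt[\tg]$ if all conditions hold and $0$ otherwise. The error analysis is identical to \Cref{lem:twoAdaptiveRUpperBound}.

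The heart of the argument is the randomized lower bound, which I would obtain by turning the adversarial argument of \Cref{lem:twoAdaptiveDLowerBound} into a Yao-style distributional argument. Fix hard distributions $\dist^f_0, \dist^f_1$ over $f^{-1}(0), f^{-1}(1)$ witnessing $\Rand^{1\perp}(f) = \tilde\Omega(N^l)$, a hard distribution over $\andor_{N^2}$ instances witnessing $\Rand^{1\perp}(\andor_{N^2}) = \tilde\Theta(N^2)$ (which holds by a standard argument since each $0$-certificate and $1$-certificate of $\andor_{N^2}$ has size $N$ yet confirming either requires reading an entire block), and a uniformly random data string $\Dt$. Sample an input to $h$ by drawing $\In[i] \sim \dist^f_{b_i}$ for independent uniform bits $b_i$; realising each coordinate $\In[i,j]$ as an $\andor_{N^2}$ instance from the hard $\andor$ distribution conditioned on value $\In[i,j]$; and choosing, for each $(i,j)$, a uniformly random \emph{valid} bicertificate $\Bc[i,j]$ consistent with $\Add[i,j]$. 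Two features are crucial: every condition (1)--(3) of \Cref{prob:twoAdaptiveFunction} holds by construction, so $h$ equals $\Dt[\tg]$, which is a uniform bit; and because the construction uses \emph{bi}certificates, the law of each $\Bc[i,j]$ is independent of $\In[i,j]$, so reading all of $\Bc$ reveals nothing about $\tg$.

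It then remains to show that any two-round randomized algorithm spending $\tilde O(N^{l+2-\epsilon})$ queries per round fails on this distribution. The skeleton is: (a) the ``bicertificate route'' to $\tg$ inherently costs two query rounds, since $\Bc$ must be read before one knows which positions of $\Add$ certify the $\andor$'s, so an algorithm that answers within two rounds must gain its advantage about $\tg$ essentially from round $1$ alone; (b) extracting $\tg$ from round $1$ amounts to solving $3\log N$ independent instances of $f \circ \andor_{N^2}$ non-adaptively, and a composition lower bound for non-adaptive randomized query complexity, combined with $\Rand^{1\perp}(\andor_{N^2}) = \tilde\Theta(N^2)$ and the hardness of $\dist^f_0, \dist^f_1$, shows this needs $\tilde\Omega(N^{l+2-o(1)})$ queries; (c) since the algorithm may instead only \emph{partially} narrow $\tg$ while speculatively probing $\Dt$, one runs a hybrid argument across the two rounds to show the posterior law of $\tg$ after round $1$ has min-entropy exceeding $(l+2-\epsilon)\log N + \omega(1)$, so the at most $2\cdot\tilde O(N^{l+2-\epsilon}) \ll N^3 = |\Dt|$ positions of $\Dt$ ever queried miss the target $\tg$ with probability $1-o(1)$; on that event the output is independent of $\Dt[\tg]$ and hence wrong with probability $1/2 - o(1)$. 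The $-o(1)$ in the exponent absorbs the slack in the non-adaptive composition bound and in the min-entropy estimate.

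I expect step (c), together with the non-adaptive randomized composition lower bound it relies on, to be the main obstacle: one must rule out \emph{every} allocation of a $\tilde O(N^{l+2-\epsilon})$-query budget across the address register aimed at gaining a small predictive advantage for $\tg$ --- including strategies that evaluate $f \circ \andor$ fully on some segments and not at all on others, or that evaluate each inner $\andor$ only approximately. The bicertificate device is exactly what makes this tractable: since the certificate structure carries no information about values, the only route to $\tg$ in round $1$ is to genuinely evaluate the $\andor$'s and then $f$, so the problem reduces cleanly to a non-adaptive composed lower bound rather than to a ``learn the certificate, then look it up'' shortcut. Instantiating with $f = \textsc{Forrelation}$ (so $s = 0$ since $\Quant(f) = 1$, and $l = 1/2$ since $\Rand(f) = \tilde\Omega(\sqrt N)$) then yields a total $h$ on $M = \tilde\Theta(N^3)$ bits with $\Quant^{2\perp}(h) = \tilde O(M^{2/3})$ and $\Rand^{2\perp}(h) = \tilde\Omega(M^{5/6 - o(1)})$, which is the polynomial separation of \Cref{thm:informal_twoAdaptiveRvsQ}.
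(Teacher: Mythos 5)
Your overall decomposition (quantum upper bound plus randomized lower bound, combined via Yao) matches the paper, and your quantum upper bound is essentially identical to \Cref{lem:twoAdaptiveQUpperBound}. The endgame of your lower bound (near-uniformity of the posterior on \(\tg\) after round 1, then a union bound over the \(\Dt\) queries) also matches \Cref{lem:twoAdaptiveRLowerBound}.

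However, the core of your lower bound has a genuine gap. You propose to realize each \(\In[i,j]\) as a hard \(\andor_{N^2}\) instance and then invoke a ``non-adaptive randomized composition lower bound'' for \(f\circ\andor\). No such composition theorem is available: the paper's own \Cref{thm:QueryComplexityComposition} gives randomized composition lower bounds only in the sequential setting and only under restrictive hypotheses on \(f\) or \(g\), and a one-round version for arbitrary partial \(f\) would be a substantial new result that you have not supplied. You yourself flag this as ``the main obstacle,'' but it is not an obstacle the paper confronts --- it is one the paper engineers away. In \Cref{lem:twoAdaptiveRLowerBound} the hard distribution samples the bicertificate \emph{first}, uniformly among valid bicertificates, and then fixes \emph{every} bit of \(\Add[i,j]\) from it except the single intersection point \(\Ip[i,j]\), which is set to \(\In[i,j]\). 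Consequently, even an algorithm that reads all of \(\Bc\) in round 1 learns a bit of \(\In[i]\) only by hitting a uniformly random position inside a block of size \(N^2\); Markov plus the hardness of \(\Rand^{1\perp}(f)\) on few uniformly placed known input bits then bounds the advantage on each \(\tg[i]\), with no composition theorem needed. Your distribution (bicertificate chosen uniformly among those \emph{consistent with} a pre-sampled hard \(\andor\) instance) also does not automatically satisfy the independence claim ``the law of \(\Bc[i,j]\) is independent of \(\In[i,j]\)'' that you assert; under the paper's ordering this independence holds by construction. To close the gap you should replace your sampling of hard \(\andor\) instances with the paper's intersection-point construction, which reduces the round-1 analysis directly to the hardness of \(f\) rather than to an unproved composed lower bound.
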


\begin{proof}
The statement follows from the randomized lower bound proved in \Cref{lem:twoAdaptiveRLowerBound} and the quantum upper bound proved in \Cref{lem:twoAdaptiveQUpperBound}.
\end{proof}
By taking \(f\) as the \textsc{Forrelation} problem, for which \(\Quant^{1 \perp}(f) = k\) where \(k\) is constant, and \(\Rand^{1 \perp}(f) = \tilde \Omega(\sqrt N)\), we have the desired polynomial separation. 

\begin{lemma}
\label{lem:twoAdaptiveRLowerBound}
Let \(f: \mathcal D \rightarrow \{0,1\}\) for $\mathcal D \subset \{0,1\}^N$ be a partial function that satisfies \(\emph{\Rand}^{1 \perp}(f) = \tilde \Omega(N^l)\) where \(0 \leq l \leq 1\). Then \(h: \{0,1\}^{\tilde \Theta (N^3)} \rightarrow \{0,1\}\) as constructed in \Cref{prob:twoAdaptiveFunction} is a total function that satisfies \(\emph{\Rand}^{2 \perp}(h) = \tilde \Omega(N^{l+2-o(1)})\).
\end{lemma}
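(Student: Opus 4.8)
The plan is to apply Yao's minimax principle: I will fix a distribution $\mu$ over inputs of $h$ and show that every deterministic $2$-round algorithm with parallelism $p\le N^{l+2-\delta}$ errs on $\mu$ with probability bounded away from $0$; since $\delta>0$ is arbitrary this yields $\Rand^{2\perp}(h)=\tilde\Omega(N^{l+2-o(1)})$. First, applying Yao to $\Rand^{1\perp}(f)=\tilde\Omega(N^l)$, fix hard distributions $\mathcal D^f_0$ over $f^{-1}(0)$ and $\mathcal D^f_1$ over $f^{-1}(1)$, amplified (at a cost absorbed into $N^{o(1)}$) so that $N^{l-o(1)}$ nonadaptive queries leave $f$ at bias $o(1)$. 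Now draw $\mu$ thus: for each of the $3\log N$ segments $i$, pick an independent uniform bit $b_i$ and set $\In[i]\sim\mathcal D^f_{b_i}$; for each $j$, encode $\In[i,j]$ into $\Add[i,j]$ with a hard nonadaptive $(\andor)_{N^2}$ gadget (every block holds a single $1$ in a uniformly random place, and when $\In[i,j]=0$ one uniformly random block is all-zero), and let $\Bc[i,j]$ be a valid bicertificate whose $0$-certificate block and $1$-certificate positions are uniform subject to correctly certifying $\andor(\Add[i,j])$ --- arranged so that the marginal law of $\Bc[i,j]$ does not depend on $\In[i,j]$, hence the bicertificate region reveals nothing about $\tg$ by itself. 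Finally take $\Dt$ uniformly random. By construction conditions (1)--(3) of \Cref{prob:twoAdaptiveFunction} hold identically, so $h(z)=\Dt[\tg]$ with $\tg=(b_1,\dots,b_{3\log N})$, a uniform bit independent of $(\Add,\Bc)$ and of $\Dt$ at indices other than $\tg$.

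The core step is that the first (nonadaptive) round cannot localize $\tg$. Under $\mu$ the segments, together with $\Dt$, are mutually independent, and the way $\Bc$ was chosen makes the portion $v^{(i)}$ of the round-$1$ view inside segment $i$ a sufficient statistic for $b_i$; hence the posterior of $\tg$ given the round-$1$ view factorizes, and $\mathbb E_{\mathrm{view}_1}\big[\max_t\Pr[\tg=t\mid\mathrm{view}_1]\big]=\prod_i\big(\tfrac12+\eta_i\big)$, where $\eta_i=\tfrac12\,\mathrm{TV}(v^{(i)}\mid b_i{=}0,\ v^{(i)}\mid b_i{=}1)$. Let $q_i$ be the number of round-$1$ queries landing in segment $i$; trivially $q_i\le p\le N^{l+2-\delta}$. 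A nonadaptive algorithm reading $q_i$ bits of segment $i$ reduces, through the $(\andor)_{N^2}$ gadgets, to a nonadaptive algorithm for $f$ that effectively learns only $\tilde O(q_i/N^2)\le N^{l-o(1)}$ coordinates of $\In[i]$; combined with the robust hardness of $\mathcal D^f$ this forces $\eta_i=o(1)$, uniformly in $i$. Consequently $\prod_i(\tfrac12+\eta_i)=2^{-3\log_2 N}\cdot(1+o(1))^{3\log_2 N}=N^{-3+o(1)}$, i.e.\ the algorithm's posterior on $\tg$ is essentially flat.

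To finish, observe that for any $2$-round algorithm the set $Q$ of indices of $\Dt$ that it queries in round $2$ is a deterministic function of the round-$1$ view and has $|Q|\le p$. Hence $\Pr_\mu[\tg\in Q]\le p\cdot\mathbb E_{\mathrm{view}_1}\big[\max_t\Pr[\tg=t\mid\mathrm{view}_1]\big]\le N^{l+2-\delta}\cdot N^{-3+o(1)}=N^{l-1-\delta+o(1)}\to 0$ as $N\to\infty$, using $l\le 1$. On the complementary event $\tg\notin Q$ the algorithm's entire transcript is independent of $h(z)=\Dt[\tg]$, so it is correct with probability exactly $\tfrac12$; thus its overall success probability is at most $\tfrac12+\Pr_\mu[\tg\in Q]=\tfrac12+o(1)<\tfrac23$ for large $N$. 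This contradicts bounded-error computation, so $\Rand^{2\perp}(h)>N^{l+2-\delta}$, and letting $\delta\to0$ gives the claim.

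The main obstacle is the middle step: showing that $N^{l+2-\delta}$ nonadaptive queries into one segment leave $b_i$ at bias $o(1)$. This is a robust, below-threshold form of nonadaptive randomized composition for $f\circ(\andor)_{N^2}$, which I would establish by a hybrid argument replacing each gadget oracle with an idealized oracle that returns independent noise until $\Omega(N^2)$ of its bits are read and only then reveals $\In[i,j]$, bounding the hybrid error by the hardness of nonadaptive $(\andor)_{N^2}$ under the needle distribution, after which the surviving task is exactly nonadaptive evaluation of $f$ on $\mathcal D^f$ with few revealed coordinates; the amplification needed to push the residual bias from a constant down to $o(1)$ is precisely where the $-o(1)$ in the exponent is spent. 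A secondary subtlety is verifying that the bicertificate law is genuinely independent of $\In[i,j]$ while still always certifying the correct $\andor$ value --- if the $0$-certificate block were predictable, a nonadaptive algorithm could aim its $\Add$ queries at the single informative block and collapse the $N^2$ factor, so the block must be hidden, and the resulting mild $\Add$--$\Bc$ correlation must be tolerated in the sufficient-statistic claim.
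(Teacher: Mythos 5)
Your overall skeleton matches the paper's: Yao's principle, a product hard distribution per segment, an argument that the round-1 (nonadaptive) view leaves the target address $\tg$ with posterior $\approx N^{-3+o(1)}$, and a union bound over the $\le p$ round-2 queries to $\Dt$. The final accounting ($p\cdot N^{-3+o(1)}=o(1)$ using $l\le 1$, and exact $1/2$ success when $\tg$ is unqueried) is also the same.

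However, there is a genuine gap at the central step, and you have correctly identified it yourself: the claim that $\tilde O(N^{l+2-\epsilon})$ nonadaptive queries into one segment leave $b_i$ at bias $o(1)$ is exactly the crux, and your proposed route --- a hybrid argument establishing a robust nonadaptive composition lower bound for $f\circ(\andor)_{N^2}$ under a needle distribution, plus a delicate "sufficient statistic" claim in the presence of $\Add$--$\Bc$ correlations --- is left unproven and is substantially harder than necessary. No such nonadaptive composition theorem is available off the shelf, and the correlation between the bicertificates (which must certify the true $\andor$ values) and the address gadgets is precisely where a generic hybrid can leak. The paper avoids all of this by a construction trick you are missing: conditioned on $\Bc[i,j]$, \emph{every} bit of $\Add[i,j]$ is fixed (one-certificate positions are $1$, the rest $0$) except the single intersection point $\Ip[i,j]$, which is set equal to $\In[i,j]$ and sits at a uniformly random location in a length-$N^2$ sub-segment. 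Thus the only informative queries are those hitting intersection points; by linearity of expectation and Markov, at most $N^{l-\epsilon/2}$ of them are hit in a segment except with probability $\tilde O(N^{-\epsilon/2})$, and the residual bias on $\tg[i]$ is then bounded by $\tfrac12+\sqrt{N^{l-\epsilon/2}/\Rand^{1\perp}(f)}=\tfrac12+\tilde O(N^{-\epsilon/4})$ via a repetition argument against $\Rand^{1\perp}(f)$. This replaces your entire unproven composition step with elementary counting; without it (or a full proof of your hybrid lemma), the argument does not go through.
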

\begin{proof} Using Yao's Lemma, observe that it is sufficient to show that there exists a distribution over inputs to \(h\) such that any deterministic algorithm making \(\tilde O(N^{l+2-\epsilon})\) queries in 2 rounds for any \(\epsilon > 0\) can succeed with probabilility at most \(1/2+o(1)\) over the distribution. We construct such a distribution below:

\begin{enumerate}
\item For \(i \in [0 \; ... \; 3 \log N - 1]\), select \(\In[i]\) from a hard distribution for \(f\).
\item For each \(\Bc[i, j]\), select valid bicertificates for \(\andor\) uniformly at random. 
\item Fixing each \(\In[i]\) and \(\Bc[i,j]\) as above fixes all the \(\Add[i,j]\) as follows. For each bit location \(l\) of \(\Add[i, j]\),
\begin{enumerate} 
\item If \(l = \Ip[i, j]\): set \(\Add[i, j](l) = \In[i, j]\)
\item Else if \(l\) belongs to the one-certificate of \(\Bc[i, j]\): set \(\Add[i, j](l) = 1\)
\item Else: set \(\Add[i, j](l) = 0\)
\end{enumerate}
\item For \(\Dt\), select a bitstring of length \(N^3\) uniformly at random. 
\end{enumerate}

\paragraph{Round 1} The deterministic algorithm submits \(\tilde O (N^{l+2-\epsilon})\) queries to \(\Add\), \(\Bc\) and \(\Dt\). Thus after the first round, it learns \(\tilde O (N^{l+2-\epsilon})\) bits in each \(\Add[i]\), all of \(\Bc\), and \(\tilde O (N^{l+2-\epsilon})\) bits of \(\Dt\). We will show that at this stage, among the inputs consistent with the queries, the probability distribution over all target addresses is almost uniform. That is \(\Pr[\tg = t'] \leq \frac{2} {N^3}\) for all \(t' \in [0, N^3 - 1]\). 
\paragraph{}
Recall that given \(\Bc[i, j]\), every bit of \(\Add[i, j]\) is fixed except \(\Add[i, j](\Ip[i, j])\) which is equal to \(\In[i, j]\). Thus a query to \(\Add[i, j](\Ip[i, j])\) fixes \(\In[i, j]\) whereas a query to any \(\Add[i, j](l)\) where \(l \neq \Ip[i, j]\) does not learn anything more about \(\In[i, j]\). The number of queries to \(\Add[i]\) is \(\tilde O (N^{l+2-\epsilon})\), and each of the contained \(N\) \(\Add[i, j]\)'s of length \(N^2\) have one intersection point placed uniformly at random. Thus, by linearity of expectation, the expected number of intersection points queried are \(\tilde O \left (\frac {N^{l+2-\epsilon}}{N^2}\right ) = \tilde O(N^{l-\epsilon})\). Note also that the probability that any given intersection point is queried is independent of whether any other is queried: this follows from their independent uniform random placement. Therefore, using Markov's inequality,
\begin{equation}
\label{eqn:det_tg_prob}
\Pr[\#\text{Intersection points queried in } \Add[i] \geq N^{l-\epsilon/2}] = \tilde O\left ( \frac {1}{N^{\epsilon/2}} \right )
\end{equation}

Thus with the remaining probability \(1 - \tilde O\left ( \frac {1}{N^{\epsilon/2}} \right )\), the number of intersection points queried in \(\Add[i]\) is \(< N^{l-\epsilon/2}\), or in other words the number of bits of \(\In[i]\) learned is  \(< N^{l-\epsilon/2}\). Thus, we have the probability of guessing \(\tg[i] := f(\In[i]) \) is 
\begin{equation}
\label{eqn:det_tg_condition}
\begin{aligned}
\Pr[\frac{\text{Determining }\; \tg[i]}{< N^{l-\epsilon/2} \text{ bits of } \In[i] \text{ known}}] & \leq   \frac 1 2 + \sqrt {\frac{\# \text{bits learned of } \In[i]}{\Rand^{1 \perp}(f)}} \\ &= \frac 1 2 + \sqrt{\tilde O \left ( \frac {N^{l-\epsilon/2}}  {N^l} \right)} \\ &= \frac 1 2 + \tilde O \left ( \frac 1  {N^{\epsilon/4}} \right). \\
\end{aligned}
\end{equation}
where the first inequality must be true by the following argument, where we recall that the location of the input bits learned to $f$ are independently, uniformly random. If it did not hold, then by repeating the process \(\frac{\Rand^{1 \perp}(f)}{\# \text{bits learned of } \In[i]}\) times, \(f\) can be computed with \(1/2 + const.\) probability in a total number of queries \(< R^{1\perp}(f)\) which is a contradiction. Thus from \Cref{eqn:det_tg_prob} and \Cref{eqn:det_tg_condition}, we have
\begin{equation}
\begin{aligned}
\Pr[\text{Determining }\; \tg[i]]
& \leq \tilde O \left ( \frac 1 {N^{\epsilon/2}} \right ) (1) +  \left ( 1 - \tilde O \left (\frac 1 {N^{\epsilon/2}} \right ) \right ) \left (\frac 1 2 + \tilde O \left ( \frac 1 {N^{\epsilon/4}} \right) \right ) \\
& = \frac 1 2 + \tilde O \left ( \frac 1 {N^{\epsilon/4}} \right )
\end{aligned}
\end{equation}

Thus, for all \(t' \in [0 \; ... \; N^3 - 1]\) and large enough $N$ and any constant $\epsilon > 0$,
\begin{equation}
\begin{aligned}
\Pr[\tg = t']
\leq \left (\frac 1 2 + \tilde O \left ( \frac 1 {N^{\epsilon/4}} \right ) \right )^ {3 \log N} 
\leq \frac {2} {N^3} 
\end{aligned}
\label{eqn:uniformity-target-bound}
\end{equation}

\paragraph{Round 2} Once again, the deterministic algorithm submits \(\tilde O (N^{l+2-\epsilon})\) queries to \(\Add\), \(\Bc\) and \(\Dt\). Let \(S\) be the set of all of the queries made to \(\Dt\), both in the first round and the second round. We know that \(\vert S \vert = \tilde O(N^{l+2 - \epsilon})\). By the union bound, we have
\begin{equation}
\begin{aligned}
\Pr[t' \in S]  \leq  \tilde O(N^{l+2 - \epsilon}) \left ( \frac 2 {N^3}\right ) 
= \tilde O \left (\frac 1 {N^\epsilon} \right ).
\end{aligned}
\end{equation}

Note that the first inequality holds for any constant $\epsilon > 0$, but not $\epsilon = 0$, as in this case \Cref{eqn:uniformity-target-bound} fails to hold. Because \(\Dt\) is selected uniformly at random, if \(t'\) is not queried, \(h\) can be answered with only probability \(1/2\). Thus, the probability of success over this distribution for the deterministic algorithm is \(\frac 1 2 + \tilde O \left (\frac 1 {N^\epsilon} \right )  = \frac 1 2 + o(1) \) as desired. This proves the statement.

\end{proof}

\begin{lemma}
\label{lem:twoAdaptiveQUpperBound}
Let \(f: \mathcal D \rightarrow \{0,1\}\) for $\mathcal D \subset \{0,1\}^N$ be a partial function that satisfies \(\emph{\Quant}^{1 \perp}(f) = \tilde O(N^s)\) where \(0 \leq s \leq 1\). Then \(h: \{0,1\}^{\tilde \Theta (N^3)} \rightarrow \{0,1\}\) as constructed in \Cref{prob:twoAdaptiveFunction} is a total function that satisfies \(\emph{\Quant}^{2 \perp}(h) = \tilde O(N^{s+2})\).
\end{lemma}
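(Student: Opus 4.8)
The goal is to exhibit a two-round quantum algorithm for $h$ (from \Cref{prob:twoAdaptiveFunction}) using parallelism $\tilde O(N^{s+2})$, in close analogy with the randomized upper bound of \Cref{lem:twoAdaptiveRUpperBound} but with the inner partial function $f$ solved quantumly. I would organize the computation around the three natural subtasks implicit in the definition of $h$: (a) determine the target address $\tg$; (b) check that every $\Bc[i,j]$ is a well-formed bicertificate and actually certifies $\andor(\Add[i,j])$, and that every $\In[i]$ lies in $\mathcal{D}$; and (c) read $\Dt[\tg]$. The point of the separation (which \Cref{lem:twoAdaptiveRLowerBound} exploits) is that a quantum algorithm can do (a) in the \emph{first} round using only $\tilde O(N^{s+2})$ parallel queries, whereas a randomized one cannot.

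\textbf{Round 1.} Read all of $\Bc$ (only $\tilde\Theta(N^2)$ bits, well inside the width budget) and check condition (1) of \Cref{prob:twoAdaptiveFunction} by classical post-processing. Simultaneously, for each $i\in\{0,\dots,3\log N-1\}$ run the nonadaptive quantum algorithm for $f$ on the input $\In[i]$, where a query to a bit $\In[i,j]=\andor(\Add[i,j])$ is simulated by an $N^2$-wide parallel query reading the block $\Add[i,j]$, followed by a unitary that computes $\andor$, hands it to $f$'s circuit, and uncomputes. Since $\Quant^{1\perp}(f)=\tilde O(N^s)$ and each of its queries blows up by a factor of $N^2$, running this for all $3\log N$ segments costs $\tilde O(N^{s}\cdot N^2\cdot\log N)=\tilde O(N^{s+2})$ parallel queries; amplifying with $O(\log N)$ independent repetitions and taking a majority vote for each $\tg[i]$ pins down $\tg=\tg[0]\concat\cdots\concat\tg[3\log N-1]$ with constant probability.

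\textbf{Round 2.} Using the now-known contents of $\Bc$, query $\Add$ at the $\le 2N$ locations pointed to by each $\Bc[i,j]$ --- a total of $\tilde O(N^2)$ queries --- which is enough to verify conditions (2) and (3): from these values one reads off $\andor(\Add[i,j])=\In[i,j]$ (whenever the bicertificate genuinely certifies it), hence each $\In[i]$, so both the certification property and domain-membership of the $\In[i]$ become classical checks. Simultaneously query the single bit $\Dt[\tg]$. Output $\Dt[\tg]$ if conditions (1), (2), (3) all hold, and $0$ otherwise. On a YES instance all conditions hold and $\Dt[\tg]=1$ whenever $\tg$ was computed correctly; on a NO instance we output $0$ either because some condition fails or because $\Dt[\tg]=0$, so the success probability is $\tfrac12+\Omega(1)$ and can be boosted. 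Each round uses width $\tilde O(N^{s+2})$, giving $\Quant^{2\perp}(h)=\tilde O(N^{s+2})$.

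\textbf{Main obstacle.} The crux is Round 1: lifting the single nonadaptive round of $f$ through the inner $\andor_{N^2}$ while keeping everything inside two query rounds of width $\tilde O(N^{s+2})$. The delicate part is the quantum bookkeeping --- each simulated query to $\In[i,j]$ must read $\Add[i,j]$ into ancilla, compute $\andor$, feed it to $f$'s circuit, and be uncomputed, so that the interference $f$'s algorithm relies on is not spoiled by residual entanglement between the query register $j$ and the block contents --- and one must check that all of these block reads and uncomputations, across all $\tilde O(N^s)$ of $f$'s parallel queries and all $3\log N$ segments, fit into the two available rounds without exceeding width $\tilde O(N^{s+2})$. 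The remaining ingredients (reading $\Bc$, verifying conditions (2)--(3) via the bicertificate locations, and the final $\Dt[\tg]$ query) are routine and mirror the randomized argument of \Cref{lem:twoAdaptiveRUpperBound}.
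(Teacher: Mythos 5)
Your proposal is correct and follows essentially the same route as the paper's proof: Round 1 solves $f \circ \andor$ on each segment nonadaptively (the paper invokes composition to get $\Quant^{1\perp}(f\circ\andor)=\tilde O(N^{s+2})$, which is exactly the block-reading simulation you spell out), amplifies by $O(\log N)$ repetitions with majority vote, and reads $\Bc$; Round 2 verifies conditions (2)--(3) from the bicertificate locations and queries $\Dt[\tg]$, outputting $\Dt[\tg]$ if all checks pass and $0$ otherwise. The only difference is that you make the query-simulation and uncomputation explicit where the paper cites composition directly.
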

\begin{proof}
The following quantum algorithm proves the statement: 
\paragraph{Round 1} We have \(\Quant^{1 \perp}(f) = \tilde O(N^s)\). Therefore, by composition,  \(\Quant^{1 \perp}(f \circ \andor ) = \tilde O(N^{s+2})\). Thus in \(\tilde O(N^{s+2})\) queries the algorithm can determine each \(\tg[i] := f \circ \andor(\Add[i])\) with probability \(1/2 + c\) for any constant \(c\). By repeating the algorithm non-adaptively \(O(\log N)\) times and taking majority vote for each \(\tg[i]\), the algorithm can determine \(\tg\) with constant probability \(c'\). The algorithm can also query the full \(\Bc\) and check condition (1) in \Cref{prob:twoAdaptiveFunction}.
\paragraph{Round 2} The algorithm queries \(\Add\) at locations learned from \(\Bc\) to verify conditions (2) and (3) of \Cref{prob:twoAdaptiveFunction}. It also queries \(\Dt\) at the \(\tg\) determined in Round 1. If conditions 1, 2 and 3 are satisfied, it answers with \(\Dt[\tg]\), else it answers 0. With probability \(c'\) we have \(\Dt[\tg]\) is correct, else it is a uniformly random bit. Hence the algorithm succeeds with probability \(\geq \frac 1 2 + c'\).
\end{proof}

%%%%%%%%%%%%%%%%%%%%%%%%%%%%%%%%%%%%%%%%%%%%%%%%%%
\section{A parallel quantum lower bound framework and applications}
\label{sec:LowerBounds}
In this section we prove a modified version of the parallel quantum adversary theorem which is weaker, yet easy to work with, giving explicit lower bounds for read once formulas and symmetric functions. This theorem allows one to derive parallel lower bounds from sequential upper bounds. To motivate this result, we first give a barrier result for applying the parallel combinatorial adversary for lower bounding the parallel query complexity of read once formulas.

\subsection{A parallel combinatorial adversary method barrier}
In this section, we exhibit a general certificate barrier for the parallel combinatorial adversary method (\Cref{thm:parallel_comb_adv}), which in particular implies that the method cannot give a tight lower bound for the simple two layer \(\andor\) tree. Notably, in the sequential case the combinatorial adversary method suffices to give a tight lower bound \cite{Barnum04}. This motivates our lower bound method, which overcomes this barrier while still being straightforward to apply. In particular, we show a lower bound for read-once formulas in \Cref{subsec:NNAdversaries}, of which a tight lower bound for the two layer \(\andor\) tree is a special case. The barrier is described in the following theorem, and can be seen as a parallel analog of a theorem by Zhang (\cite{Zhang04}, Theorem 10). Following the terminology therein, we use the term \(Alb_{p\parallel}\) to refer to the parallel combinatorial adversary bound.

\begin{theorem}
    \label{thm:parallel_comb_adv_barrier}
    Let \(f:\{0,1\}^N \rightarrow \{0,1\}\) be a total function, and let \(Alb_{p\parallel}(f), C_0(f) \text{ and } C_1(f)\) refer to the parallel combinatorial adversary method bound (\Cref{thm:parallel_comb_adv}), the \(0\)-certificate complexity and the \(1\)-certificate complexity of \(f\) respectively. Then, we have \(Alb_{p\parallel}(f) \leq \sqrt{\ceil{\frac {C_0} {p}} \ceil{\frac {C_1} {p}}}\).
\end{theorem}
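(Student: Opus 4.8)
The plan is to adapt Zhang's argument bounding the \emph{sequential} combinatorial adversary by $\sqrt{C_0 C_1}$ (\cite{Zhang04}, Theorem~10), replacing single-index flips by blocks of at most $p$ indices. Recall from \Cref{thm:parallel_comb_adv} that $Alb_{p\parallel}(f) = \max_{X,Y,R}\sqrt{mm'/(\ell\ell')}$. Fix any admissible triple $(X,Y,R)$. Since $f$ is total and $f(x)\neq f(y)$ for all $(x,y)\in R$, after possibly exchanging the roles of $X$ and $Y$ we may assume $X \subseteq f^{-1}(0)$ and $Y \subseteq f^{-1}(1)$; this exchange is harmless because the expression $\sqrt{mm'/(\ell\ell')}$ is symmetric under $(X,m,\ell)\leftrightarrow(Y,m',\ell')$. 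Thus $\cert_x(f) \le C_0(f)$ for every $x\in X$ and $\cert_y(f) \le C_1(f)$ for every $y\in Y$. If $\ell = 0$ then the estimate below forces $m = 0$, and similarly for $\ell'$ and $m'$, so with the usual convention the claimed bound holds trivially; hence assume $\ell,\ell'>0$.

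The key estimate is $m \le \ceil{C_0(f)/p}\cdot \ell$. To prove it, take any $x\in X$ together with a minimal certificate $C_x \subseteq [N]$ for $x$, so $|C_x| \le \cert_x(f) \le C_0(f)$. Partition $C_x$ into $t := \ceil{|C_x|/p} \le \ceil{C_0(f)/p}$ blocks $S_1,\dots,S_t$, each of size at most $p$. For every $y$ with $(x,y)\in R$ we have $f(y)=1\neq 0=f(x)$, so $y$ cannot agree with $x$ on all of $C_x$ (otherwise the certificate property would force $f(y)=f(x)$); hence $x_{S_j}\neq y_{S_j}$ for at least one $j$. Counting each such $y$ at least once on the right, and using that each $S_j$ has $|S_j|\le p$ so that $w_{x,S_j}$ is one of the quantities over which $\ell$ is maximized,
\begin{align*}
m \le w_x &= \sum_{y:(x,y)\in R} w(x,y) \;\le\; \sum_{j=1}^{t}\ \sum_{\substack{y:(x,y)\in R\\ x_{S_j}\neq y_{S_j}}} w(x,y) \\
&= \sum_{j=1}^{t} w_{x,S_j} \;\le\; t\cdot \ell \;\le\; \ceil{C_0(f)/p}\cdot \ell.
\end{align*}
The symmetric argument applied to $y\in Y$, using a minimal certificate of size at most $\cert_y(f)\le C_1(f)$ partitioned into at most $\ceil{C_1(f)/p}$ blocks of size at most $p$, gives $m' \le \ceil{C_1(f)/p}\cdot \ell'$.

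Combining the two estimates, $\dfrac{mm'}{\ell\ell'} \le \ceil{C_0(f)/p}\ceil{C_1(f)/p}$, hence $\sqrt{mm'/(\ell\ell')} \le \sqrt{\ceil{C_0(f)/p}\ceil{C_1(f)/p}}$; taking the maximum over all admissible $(X,Y,R)$ yields the claimed bound $Alb_{p\parallel}(f) \le \sqrt{\ceil{C_0/p}\ceil{C_1/p}}$. I do not expect a genuine obstacle here; the only points requiring care are (i) the union-bound step, where a $y$ disagreeing with $x$ on several blocks is over-counted on the right, which only weakens the inequality in the direction we need, and (ii) confirming that each block $S_j$ (left at or padded up to size $\le p$) is an admissible choice of $S$ in the definition of $\ell$ in \Cref{thm:parallel_comb_adv}. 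Both are straightforward, so the proof is essentially a careful bookkeeping of Zhang's sequential argument in the blocked setting.
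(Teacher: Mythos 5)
Your proof is correct and rests on the same key idea as the paper's: partition a minimal certificate of $x$ (resp.\ $y$) into at most $\ceil{C_0/p}$ (resp.\ $\ceil{C_1/p}$) blocks of size at most $p$, and use totality to argue that every $y$ with $(x,y)\in R$ must differ from $x$ on at least one of those blocks, yielding $w_x \le \ceil{C_0/p}\cdot \ell$ and its symmetric counterpart. The only difference is presentational --- you establish this pointwise bound directly for every $x$ and $y$, whereas the paper phrases the same estimate as a proof by contradiction with a global double summation over all of $X\times Y$ --- so the two arguments are essentially the same.
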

\begin{proof}
Let \(X, Y, R, S, w(x, y), w_x, w_{x, S}, w_y \text{ and } w_{y,S}\) be defined as in \Cref{thm:parallel_comb_adv}. Thus, \begin{align*}
Alb_{p \parallel}(f) &= \max_{X, Y, R} \sqrt { \left ( \frac{\min_x{w_x}}{\max_{x, S} w_{x, S}} \right 
 ) \left (\frac{\min_y{w_y}}{\max_{y, S'} w_{y, S'}} \right )} \\
& \leq  \max_{X, Y, R} \sqrt { \min_{x, S} \left (\frac{w_x}{w_{x, S}} \right ) \min_{y, S'} \left (\frac{w_y}{w_{y, S'}} \right )}.
\end{align*}
Therefore, to prove the statement, it is sufficient to show that for any choice of \(X, Y, R\) there exists an \(x, y, S, S'\) such that \(\frac {w_x w_y}{w_{x, S} w_{y, S'}} \leq \ceil {\frac{C_0(f)}{p}} \ceil{\frac{C_1(f)}{p}}\). Let us assume this is false. That is for some choice of \(X, Y, R\), 
\begin{equation}
\label{eqn:parallel_comb_adv_barrier_proof_cont}
\forall x, y, S, S', \; \; w_x w_y > \ceil {\frac{C_0(f)}{p}} \ceil{\frac{C_1(f)}{p}} w_{x, S} \; w_{y, S'}
\end{equation}
Let \(C_{x,p}\) be a partition of the set of indices belonging to any certificate of \(x\) consisting of sets of size \(p\), as well as one set of size \(\leq p\) for the remaining indices. We have \(\vert C_{x,p} \vert \leq \ceil{\frac{C_0(f)}{p}}\). Define \(C_{y,p}\) likewise, and we have \(\vert C_{y,p} \vert \leq \ceil{\frac{C_1(f)}{p}}\). Thus, \Cref{eqn:parallel_comb_adv_barrier_proof_cont} implies \(\forall \; x, y\)
\begin{equation*}
  w_x w_y > \sum_{S \in C_{x,p}} w_{x, S} \; \sum_{S' \in C_{y, p}} w_{y, S'}.
\end{equation*}

Summing over \(x\) and \(y\), we have
\begin{equation}
\begin{aligned}
\label{eqn:parallel_comb_adv_barrier_proof_sum}
\sum_x w_x \sum_y w_y &> \sum_{x, S \in C_{x,p}} w_{x, S} \; \sum_{y, S' \in C_{y, p}} w_{y, S'} \\
&= \sum_{x, y} \left (\sum_{S \in C_{x,p} \; | \; x_S \neq y_S} w(x, y) \right ). \sum_{x, y} \left ( \sum_{S' \in C_{y,p} \; | \; x_{S'} \neq y_{S'}} w(x, y) \right )
\end{aligned}
\end{equation}

Now because \(f\) is a total function, for each \(x, y\) there must exist some \(S \in C_{x,p}\) such that \(x_S \neq y_S\). Likewise, there must exist some \(S' \in C_{y, p}\) such that \(x_{S'} \neq y_{S'}\). Thus, for each \(x, y\), the summation over \(S\) and \(S'\) in \Cref{eqn:parallel_comb_adv_barrier_proof_sum} contains at least one term. Thus, 
\begin{align*}
\sum_x w_x \sum_y w_y &> \sum_{x,y} w(x, y) . \sum_{x,y} w(x,y) \\
&> \sum_x w_x \sum_y w_y
\end{align*}
which is a contradiction. Hence, the statement follows.
\end{proof}

\begin{corollary}
    \label{corr:AND_OR_Barrier}
    The parallel combinatorial adversary method \Cref{thm:parallel_comb_adv} cannot provide a tight lower bound for \(\andor\).
\end{corollary}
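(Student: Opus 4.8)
The plan is to read an upper bound on the parallel combinatorial adversary quantity of $\andor$ directly off \Cref{thm:parallel_comb_adv_barrier}, and then contrast it with the true parallel quantum query complexity of the two-layer $\andor$ tree. First I would record that $\andor$ is the balanced read-once formula $\textsf{And}_{\sqrt{N}}\circ\textsf{Or}_{\sqrt{N}}$ on $N$ variables and that, as already noted in the introduction, $\cert_0(\andor)=\cert_1(\andor)=\sqrt N$: a minimal $1$-certificate names one satisfied leaf inside each of the $\sqrt N$ $\textsf{Or}$-gates, and a minimal $0$-certificate names all $\sqrt N$ leaves of a single all-zero $\textsf{Or}$-gate. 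Substituting $C_0(\andor)=C_1(\andor)=\sqrt N$ into \Cref{thm:parallel_comb_adv_barrier} gives, for every $p$,
\[
Alb_{p\parallel}(\andor)\;\le\;\sqrt{\ceil{\sqrt N/p}\cdot\ceil{\sqrt N/p}}\;=\;\ceil{\sqrt N/p},
\]
which is $O(\sqrt N/p)$ when $p\le\sqrt N$ and equals $1$ once $p\ge\sqrt N$.

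The second step is to compare this with the actual complexity. The benchmark is the read-once-formula lower bound obtained from the new method in this section, $\Quant^{p\parallel}(\andor)=\Omega(\sqrt{N/p})$ (\Cref{theorem:lowerBoundReadOnce}; this bound for read-once formulas is already stated in the introduction). For any $p=\omega(1)$ with $p\le\sqrt N$ we then have $Alb_{p\parallel}(\andor)=O(\sqrt N/p)$ while $\Quant^{p\parallel}(\andor)=\Omega(\sqrt N/\sqrt p)$, so the true complexity exceeds the combinatorial bound by a factor $\Omega(\sqrt p)=\omega(1)$; and for $\sqrt N<p=o(N)$ we have $Alb_{p\parallel}(\andor)\le 1$ but $\Quant^{p\parallel}(\andor)=\Omega(\sqrt{N/p})=\omega(1)$. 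Thus throughout the super-constant, sub-linear range of $p$ the parallel combinatorial adversary undershoots the parallel quantum query complexity of $\andor$ by an unbounded factor, so it cannot be tight; exhibiting any single such $p$ (e.g.\ $p=\sqrt N$, where the bound is $\le 1$ against a true value $\Omega(N^{1/4})$) already suffices.

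There is no deep obstacle internal to this corollary; two points merely require care in the write-up. First, the statement is about super-constant $p$: at $p=1$ the bound $\ceil{\sqrt N}=\sqrt N$ is in fact tight \cite{Barnum04}, and at $p=\Theta(N)$ both quantities are $\Theta(1)$, so the failure of tightness is a genuinely parallel phenomenon and I would state it for the appropriate regime of $p$. Second, the comparison relies on the $\Omega(\sqrt{N/p})$ lower bound of \Cref{theorem:lowerBoundReadOnce}, which is proved later in this section via the nearest-neighbor adversary framework (\Cref{theorem:informalNNLowerBound}) and not via the combinatorial method, so there is no circularity; everything else is bookkeeping around \Cref{thm:parallel_comb_adv_barrier}.
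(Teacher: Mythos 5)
Your proposal is correct and follows essentially the same route as the paper: substitute $C_0(\andor)=C_1(\andor)=\sqrt N$ into \Cref{thm:parallel_comb_adv_barrier} to get $Alb_{p\parallel}(\andor)\le\ceil{\sqrt N/p}$, and contrast this with the tight $\Omega(\sqrt{N/p})$ bound from \Cref{theorem:lowerBoundReadOnce}. Your added discussion of the regimes of $p$ where the gap is super-constant is a harmless refinement of the same argument.
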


\begin{proof}
The $\andor$ function is a special case of a read once formula, so by \Cref{theorem:lowerBoundReadOnce} we have $\Quant^{p\parallel}(\andor) = \Omega\left(\sqrt{\frac{{N}}{p}}\right)$. Furthermore, $\Quant^{p\parallel}(\andor) = \tilde O\left(\sqrt{\frac{N}{p}}\right)$ by parallel Grover search, so this is tight up to log factors in $N, p$. However, we have \(C_0(\andor) = C_1(\andor) = \sqrt N\). Thus, by \Cref{thm:parallel_comb_adv_barrier}, we have \(Alb_{p||}(\andor) \leq \ceil {\frac {\sqrt N} {p}} \).
\end{proof}

We note that the above proof only acts as a barrier for the parallel combinatorial adversarial method, but not the parallel version of the general positive weighted adversary method \cite{Ambainis02, Zhang04}. However, it is unclear how to assign weights using the general method to prove parallel lower bounds for read once formulas.

\subsection{Lower bounds from nearest neighbor adversaries}
\label{subsec:NNAdversaries}

For any total $f:\{0,1\}^N \rightarrow \{0,1\}$, call $\Gamma^{\NN}$ a nearest neighbor adversary matrix of $f$ if it is an adversary matrix that only places non-zero weight on pairs that differ at a single entry. Formally: \begin{align}
    \Gamma^{\NN}_{xy} &= \Gamma^{\NN}_{yx} \nonumber \\
    f(x)=f(y) \Rightarrow& \Gamma^{\NN}_{xy} = 0 \nonumber \\
    d(x,y) > 1 \Rightarrow& \Gamma^{\NN}_{xy} = 0 \label{eqn:adversaryMatrixNNReqs}
\end{align}
where $d(\cdot,\cdot)$ denotes hamming distance. Nearest neighbor adversary matrices were studied by Aaronson et al \cite{BetterbealsAaronson21}, where they are shown to be closely related to the so-called spectral sensitivity, $\lambda(f)$. To define this quantity for a boolean function $f:\{0,1\}^N \rightarrow \{0,1\}$, let $G_f$ be a subgraph of the boolean hypercube on $\{0,1\}^N$ with edges only between inputs distinguished by $f$. In particular, the vertex set of $G_f$ is $\{0,1\}^N$, and there is an edge between strings $x$ and $y$ if and only if (a) $d(x, y)=1$ and (b) $f(x) \neq f(y)$. Then let $A_f$ be the adjacency matrix of $G_f$; we define $\lambda(f) = \norm{A_f}$. The following theorem was established by Aaronson et al.

\begin{theorem}[\cite{BetterbealsAaronson21}, Theorem 10]
    For any total boolean function $f:\{0,1\}^N \rightarrow \{0,1\}$ and nearest neighbor adversary matrices $\Gamma^{NN}$ for $f$, we have \begin{align*}
        \max_{\Gamma^{NN}} \frac{\norm{\Gamma^{NN}}}{\max_{i \in [N]} \norm{\Gamma^{NN}}} = \lambda(f)
    \end{align*}
    \label{theorem:spectral-sens-equals-adv}
\end{theorem}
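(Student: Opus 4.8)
The plan is to prove the two inequalities separately. For the lower bound $\max_{\Gamma^{\NN}} \frac{\norm{\Gamma^{\NN}}}{\max_{i} \norm{\Gamma^{\NN}_i}} \geq \lambda(f)$, I would exhibit the adjacency matrix $A_f$ of $G_f$ itself as a candidate nearest neighbor adversary matrix. It is symmetric, vanishes on all pairs $x,y$ with $f(x)=f(y)$, and is supported on Hamming-distance-$1$ pairs, so it satisfies \Cref{eqn:adversaryMatrixNNReqs}. Its numerator equals $\norm{A_f}=\lambda(f)$ by definition. For the denominator, note that $(A_f)_i$ has at most one nonzero entry in each row and each column, since the only candidate neighbor of $x$ in direction $i$ is $x\oplus e_i$; hence $(A_f)_i$ is a symmetric partial permutation matrix with $\norm{(A_f)_i}\le 1$. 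When $f$ is non-constant there is a sensitive edge in some direction $i$, so $\max_i\norm{(A_f)_i}=1$ and the ratio is $\ge \lambda(f)$; when $f$ is constant both sides of the theorem are $0$ and there is nothing to prove.

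For the upper bound, the plan is to show every nearest neighbor adversary matrix $\Gamma$ satisfies $\norm{\Gamma}\le \lambda(f)\cdot\max_i\norm{\Gamma_i}$. Two observations do the work. First, since $\Gamma$ places weight only on pairs differing in a single coordinate, the supports of $\Gamma_1,\dots,\Gamma_N$ are disjoint and cover that of $\Gamma$; moreover, after reordering rows and columns, each $\Gamma_i$ is block-diagonal with $2\times 2$ anti-diagonal blocks, so $\norm{\Gamma_i}$ is exactly the largest $\abs{\Gamma_{xy}}$ over direction-$i$ pairs, and therefore $\mu:=\max_i\norm{\Gamma_i}=\max_{x,y}\abs{\Gamma_{xy}}$. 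In particular $\abs{\Gamma_{xy}}\le\mu\,(A_f)_{xy}$ entrywise. Second, $G_f$ is a subgraph of the Boolean hypercube and hence bipartite, with the parity of Hamming weight giving the bipartition. Writing $A_f$ and $\Gamma$ in the induced block-off-diagonal form, with off-diagonal blocks $C$ and $B$ respectively, we have $\norm{A_f}=\norm{C}$, $\norm{\Gamma}=\norm{B}$, and $\abs{B_{xy}}\le\mu\,\abs{C_{xy}}$; then for all vectors $u,v$,
\[
\abs{u^{T} B v}\;\le\;\sum_{x,y}\abs{B_{xy}}\,\abs{u_x}\,\abs{v_y}\;\le\;\mu\sum_{x,y} C_{xy}\,\abs{u_x}\,\abs{v_y}\;\le\;\mu\,\norm{C}\,\norm{u}\,\norm{v},
\]
so $\norm{B}\le\mu\norm{C}$, i.e. $\norm{\Gamma}\le\mu\,\lambda(f)=\max_i\norm{\Gamma_i}\cdot\lambda(f)$, as required. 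Combining the two directions gives the claimed equality.

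The step most deserving of care is the upper bound: the inequality $\norm{\Gamma}\le\mu\norm{A_f}$ genuinely uses bipartiteness, since the spectral norm is not monotone under entrywise absolute-value domination for general symmetric matrices. It becomes monotone precisely because the relevant norm is the operator norm of the bipartite block $C$, where $\abs{B_{xy}}\le\mu C_{xy}$ does imply $\norm{B}\le\mu\norm{C}$ — and this bipartite structure is exactly what $G_f\subseteq\{0,1\}^N$ provides. The rest is bookkeeping about how $\Gamma_i$ decomposes and the elementary fact that a partial permutation matrix has norm at most $1$.
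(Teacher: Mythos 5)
The paper states this result as a citation to [BetterbealsAaronson21] and gives no proof of its own, so there is nothing internal to compare against; your proof stands on its own and is correct. Both directions work: $A_f$ itself witnesses the lower bound with $\max_i\norm{(A_f)_i}=1$, and the normalization $\max_i\norm{\Gamma_i}=\max_{x,y}\abs{\Gamma_{xy}}$ (via the $2\times 2$ anti-diagonal block decomposition of each $\Gamma_i$) combined with entrywise domination $\abs{\Gamma_{xy}}\le\mu\,(A_f)_{xy}$ gives the upper bound. This is essentially the argument in the cited source.

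One remark in your closing paragraph is mistaken, though harmlessly so: the operator norm \emph{is} monotone under entrywise absolute-value domination by a nonnegative matrix, even for general (non-bipartite) symmetric matrices. For any $M$ and nonnegative $P$ with $\abs{M_{xy}}\le P_{xy}$, one has $\abs{u^{T}Mv}\le \abs{u}^{T}P\abs{v}\le\norm{P}\,\norm{u}\,\norm{v}$, where $\abs{u}$ denotes the entrywise absolute value; no bipartition is needed. (What fails for general symmetric matrices is monotonicity of the \emph{largest eigenvalue} without the absolute value, but $\lambda(f)=\norm{A_f}$ is the operator norm of a nonnegative matrix, so this is not an issue here.) So the passage to the off-diagonal blocks $B$ and $C$ is a valid but unnecessary detour; you could apply the same three-line estimate directly to $\Gamma$ and $A_f$.
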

This quantity will play an important role in our parallel lower bound. A straightforward corollary of the above is that spectral sensitivity is upper bounded by quantum query complexity, $\lambda(f) = O(\Quant(f))$ for all total $f$.

Furthermore, let $\mathcal{F}_{p-\text{res}}^{(f)}$ denote the set of all functions obtained from $f$ by fixing all but $p$ bits in the input. Formally, a function $g: \{0,1\}^p \rightarrow \{0,1\}$ is in $\mathcal{F}_{p-\text{res}}^{(f)}$ if there is a subset $S \subset [N]$ of size $|S|=p$ and an assignment $A: ([N] \setminus S) \rightarrow \{0,1\}$ of the remaining bits such that for all $X$ satisfying $X_i=A(i)$ when $i \in ([N] \setminus S)$, we have $f(X)=g(X_S)$ where $X_S$ is the concatenation of $X_j$ for all $j \in S$.

\begin{theorem}
    \label{theorem:parallelLowerBoundNN}
    For any total boolean function $f: \{0,1\}^N \rightarrow \{0,1\}$, the parallel quantum query complexity satisfies the following. \begin{align*}
        \emph{\Quant}^{p\parallel}(f) &= \Omega\left(\lambda(f) \cdot \frac{1}{\max_{g \in \mathcal{F}_{p-\text{res}}^{(f)}}\lambda(g)}\right)
    \end{align*}
\end{theorem}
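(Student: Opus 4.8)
The plan is to use the parallel spectral adversary characterization $\Quant^{p\parallel}(f) = \Theta(\Adv^{p\parallel}(f))$ from \Cref{theorem:parAdv}, so it suffices to lower bound $\Adv^{p\parallel}(f) = \max_\Gamma \norm{\Gamma} / \max_{|S|=p} \norm{\Gamma_S}$. I will choose $\Gamma$ to be the optimal \emph{nearest neighbor} adversary matrix for $f$ from \Cref{theorem:spectral-sens-equals-adv}, normalized so that $\max_{i\in[N]} \norm{\Gamma_i} = 1$; then $\norm{\Gamma} = \lambda(f)$. The whole game is to upper bound $\norm{\Gamma_S}$ for an arbitrary $S \subseteq [N]$ with $|S|=p$, in terms of $\max_{g \in \mathcal{F}_{p\text{-res}}^{(f)}}\lambda(g)$, which combined with $\lambda(g) = O(\Quant(g))$ gives the stated bound. (Strictly, I will argue $\max_{|S|=p}\norm{\Gamma_S} = O(\max_g \lambda(g))$; the $\Theta$ in \Cref{theorem:parAdv} absorbs constants.)

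The key structural step is the block-diagonalization illustrated in \Cref{fig:lower-bound-intuition}. Fix $S$ with $|S| = p$. Since $\Gamma$ is a nearest-neighbor matrix, $(\Gamma_S)_{xy} \ne 0$ forces $d(x,y) = 1$ \emph{and} that single differing coordinate lies in $S$ (if the differing coordinate were outside $S$ then $x_S = y_S$, so that entry is zeroed in $\Gamma_S$). Therefore $\Gamma_S$ only connects inputs that agree on $S^\complement$. Group the $2^N$ inputs into $2^{N-p}$ classes indexed by the assignment $A \in \{0,1\}^{S^\complement}$; $\Gamma_S$ is block diagonal with respect to this partition, with one block $B_A$ of size $2^p \times 2^p$ for each $A$. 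Hence $\norm{\Gamma_S} = \max_A \norm{B_A}$, and it remains to bound each $\norm{B_A}$.

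Next I identify each block $B_A$ with a (rescaled) nearest-neighbor adversary matrix for the restricted function $g_A \in \mathcal{F}_{p\text{-res}}^{(f)}$ obtained by fixing $S^\complement$ according to $A$ and leaving the $p$ coordinates of $S$ free. Indeed, the nonzero entries of $B_A$ are indexed by pairs $(x_S, y_S) \in \{0,1\}^p \times \{0,1\}^p$ with $d(x_S, y_S)=1$, and $(\Gamma_S)_{xy}$ is nonzero only when $f(x) \ne f(y)$, i.e. $g_A(x_S) \ne g_A(y_S)$ — so $B_A$ satisfies the three defining conditions \eqref{eqn:adversaryMatrixNNReqs} of a nearest-neighbor adversary matrix for $g_A$. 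By \Cref{theorem:spectral-sens-equals-adv} applied to $g_A$, $\norm{B_A} \le \lambda(g_A)\cdot \max_{i \in S}\norm{(B_A)_i}$, where $(B_A)_i$ is the restriction of $B_A$ to pairs differing at coordinate $i$. Since $(B_A)_i$ is a submatrix of $\Gamma_i$ (the $i$-th slice of the \emph{global} $\Gamma$, restricted to rows/columns in the class $A$), we have $\norm{(B_A)_i} \le \norm{\Gamma_i} \le \max_{j\in[N]}\norm{\Gamma_j} = 1$ by our normalization. Hence $\norm{B_A} \le \lambda(g_A) \le \max_{g \in \mathcal{F}_{p\text{-res}}^{(f)}} \lambda(g)$, and taking the max over $A$ gives $\norm{\Gamma_S} \le \max_g \lambda(g)$. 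Combining, $\Adv^{p\parallel}(f) \ge \norm{\Gamma}/\max_{|S|=p}\norm{\Gamma_S} \ge \lambda(f)/\max_g \lambda(g)$, which with \Cref{theorem:parAdv} yields the theorem.

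The main obstacle is making the identification of $B_A$ with an adversary matrix for $g_A$ fully rigorous — in particular checking that restricting $f$'s domain and adjacency structure to the fiber $\{X : X_{S^\complement} = A\}$ really does produce exactly the hypercube-subgraph structure of $g_A$ on $\{0,1\}^p$, and that $\Gamma$ being a \emph{valid} adversary matrix for $f$ (symmetric, zero on same-value pairs) descends to $B_A$ being a valid one for $g_A$. A secondary subtlety is the direction of the inequality in applying \Cref{theorem:spectral-sens-equals-adv}: that theorem is stated as an equality for the \emph{optimal} nearest-neighbor adversary, so I need the easy direction, namely that \emph{any} nearest-neighbor adversary matrix $\Gamma'$ for a total function $g$ satisfies $\norm{\Gamma'} \le \lambda(g)\cdot\max_i\norm{\Gamma'_i}$; this is the "$\le$" half of that characterization and should be invoked explicitly. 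Everything else — block-diagonality, the submatrix norm bound, the normalization bookkeeping — is routine.
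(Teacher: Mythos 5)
Your proposal is correct and follows essentially the same route as the paper: the same block-diagonalization of $\Gamma_S$ over assignments to $S^\complement$, the same identification of each $2^p\times 2^p$ block with a nearest-neighbor adversary matrix for a $p$-restriction $g$ of $f$, and the same application of \Cref{theorem:spectral-sens-equals-adv} to bound each block's normalized norm by $\lambda(g)$ before plugging into the parallel adversary quantity. Your explicit handling of the normalization and of the "easy direction" of the spectral-sensitivity characterization is in fact slightly cleaner than the paper's write-up, which states the intermediate block bound in terms of $\Quant(g)$ rather than $\lambda(g)$.
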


We defer the proof of this theorem to \Cref{sec:nnLowerboundProof}. A slightly weaker but sometimes easier to reason about formulation of this lower bound is given by recalling that $\lambda(f)=O(\Quant(f))$. This form, stated below, is the one we will apply to read-once formulas.

\begin{corollary}
    For any total boolean function $f: \{0,1\}^N \rightarrow \{0,1\}$, the parallel quantum query complexity satisfies the following. \begin{align*}
        \emph{\Quant}^{p\parallel}(f) &= \Omega\left(\lambda(f) \cdot \frac{1}{\max_{g \in \mathcal{F}_{p-\text{res}}^{(f)}}\emph{\Quant}(g)}\right)
    \end{align*}
    \label{corollary:parallelLowerBoundNN}
\end{corollary}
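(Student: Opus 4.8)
The plan is to obtain \Cref{corollary:parallelLowerBoundNN} as a one-line weakening of \Cref{theorem:parallelLowerBoundNN}; all of the substance sits in that theorem (whose proof is deferred to \Cref{sec:nnLowerboundProof}), so there is no genuine obstacle at this stage. The only extra input required is the comparison $\lambda(g) = O(\Quant(g))$, which holds for every total boolean function $g$ and was already noted immediately after \Cref{theorem:spectral-sens-equals-adv}. For completeness I would re-derive it: a nearest-neighbor adversary matrix for $g$ is in particular an adversary matrix for $g$ in the sense of \Cref{subsection:adv_method_parallel}, so \Cref{theorem:spectral-sens-equals-adv} gives
\[
\lambda(g) \;=\; \max_{\Gamma^{NN}} \frac{\norm{\Gamma^{NN}}}{\max_{i} \norm{\Gamma^{NN}_i}} \;\le\; \max_{\Gamma} \frac{\norm{\Gamma}}{\max_{i} \norm{\Gamma_i}} \;=\; \Adv^{1\parallel}(g) \;=\; \Theta(\Quant(g)),
\]
where the last equality is the $p=1$ instance of \Cref{theorem:parAdv}.

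Applying this bound pointwise over the family $\mathcal{F}_{p-\text{res}}^{(f)}$ — each of whose members is a total boolean function on $p$ bits, since $f$ is total — and taking the maximum over $g$ on both sides yields $\max_{g \in \mathcal{F}_{p-\text{res}}^{(f)}} \lambda(g) = O\!\big(\max_{g \in \mathcal{F}_{p-\text{res}}^{(f)}} \Quant(g)\big)$, hence $\big(\max_{g} \lambda(g)\big)^{-1} = \Omega\!\big(\big(\max_{g} \Quant(g)\big)^{-1}\big)$. Substituting this into \Cref{theorem:parallelLowerBoundNN} and absorbing the constant factor into the $\Omega(\cdot)$ gives
\[
\Quant^{p\parallel}(f) \;=\; \Omega\!\left(\lambda(f) \cdot \frac{1}{\max_{g \in \mathcal{F}_{p-\text{res}}^{(f)}}\lambda(g)}\right) \;=\; \Omega\!\left(\lambda(f) \cdot \frac{1}{\max_{g \in \mathcal{F}_{p-\text{res}}^{(f)}}\Quant(g)}\right),
\]
which is the claimed corollary. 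The only point demanding any care is the bookkeeping that every $g$ arising as a $p$-restriction of $f$ really is total and boolean, so that $\lambda(g)=O(\Quant(g))$ is applicable to each of them — and this is immediate from the definition of $\mathcal{F}_{p-\text{res}}^{(f)}$ and the totality of $f$.
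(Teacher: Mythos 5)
Your proposal is correct and is exactly the paper's route: the corollary is obtained from \Cref{theorem:parallelLowerBoundNN} by substituting the bound $\lambda(g) = O(\Quant(g))$, which the paper notes as an immediate consequence of \Cref{theorem:spectral-sens-equals-adv} together with the adversary characterization of $\Quant$. Your extra bookkeeping (that each $p$-restriction of a total boolean $f$ is itself total and boolean, so the bound applies pointwise before taking the max) is sound and just makes explicit what the paper leaves implicit.
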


The first term in this expression is a standard boolean complexity measure that is well understood for many functions. The second term is the inverse of the sequential query complexity of a worst-case restriction of $f$, and so the second term can be lower bounded by giving an algorithm for worst-case restrictions of $f$ (upper bounding the denominator). This will allow us to prove parallel lower bounds with only sequential reasoning.

\subsection{Read once formulas}
\label{subsec:readOnceformulas}

Recall that a read-once formula is a boolean formula of variables $x_1,...,x_N$ that can be written using and $(\land)$, or $(\lor)$, and not $(\neg)$ in such a way that each variable appears exactly once. For example, the $\andor$ tree corresponding to \begin{align*}
    \psi_{\andor} &= (x_1 \lor ... \lor x_{\sqrt{N}}) \land ... \land (x_{N-\sqrt{N}+1} \lor ... \lor x_N)
\end{align*}
is a read once formula. It is well known that the $\andor$ tree has sequential quantum query complexity $\Omega(\sqrt{N})$ \cite{Barnum04,Reichardt11}. We give a lower bound for the parallel quantum query complexity of $\andor$ and more broadly any read-once formula, eliminating the possibility of large parallel query advantage in this case.
\begin{theorem}
    The $p$-parallel quantum query complexity of any read-once formula with $N$ variables, $f:\{0,1\}^N \rightarrow\{0,1\}$ satisfies the following. \begin{align*}
        \emph{\Quant}^{p\parallel}(f) &= \Omega\left(\sqrt{\frac{N}{p}}\right)
    \end{align*}
    \label{theorem:lowerBoundReadOnce}
\end{theorem}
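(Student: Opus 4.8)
The plan is to apply Corollary~\ref{corollary:parallelLowerBoundNN}, which reduces the problem to (a) lower bounding the spectral sensitivity $\lambda(f)$ of a read-once formula on $N$ variables, and (b) upper bounding $\Quant(g)$ for every $g \in \mathcal{F}_{p\text{-res}}^{(f)}$, i.e. every function obtained by fixing all but $p$ of the variables of $f$. For the first term, I would recall that for a read-once formula every variable is (generically) sensitive: there is an input that is a minimal $1$-certificate or minimal $0$-certificate making that variable pivotal, so the sensitivity of $f$, and hence $\lambda(f)$, is $\Omega(\sqrt{N})$. More carefully, one can exhibit a single input $x$ together with $N$ disjoint single-bit sensitive blocks (one per leaf) by choosing the values at the internal $\textsf{And}$/$\textsf{Or}$ gates so that every leaf is pivotal simultaneously — this is the standard fact that read-once formulas have full sensitivity $s(f)=N$ — and then use $\lambda(f) \geq \sqrt{s(f)} = \sqrt{N}$ (the bound $\lambda(f)\ge\sqrt{s(f)}$ follows since the sensitivity graph $G_f$ contains a star $K_{1,N}$, whose adjacency matrix has spectral norm $\sqrt{N}$, and $\|A_f\|\ge$ the norm of any principal submatrix).

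For the second term, the key observation is that $\mathcal{F}_{p\text{-res}}^{(f)}$ consists only of read-once formulas on at most $p$ variables: restricting a read-once formula by fixing some leaves to constants and simplifying (a constant input to an $\textsf{Or}$ either kills the gate or drops out, similarly for $\textsf{And}$) yields again a read-once formula, now on the at most $p$ remaining free variables. By the Barnum--Saks--Szegedy / Reichardt result that every read-once formula on $m$ variables has $\Quant(\cdot) = O(\sqrt{m})$ \cite{Barnum04,Reichardt11}, we get $\Quant(g) = O(\sqrt{p})$ for all $g \in \mathcal{F}_{p\text{-res}}^{(f)}$, hence $\max_{g} \Quant(g) = O(\sqrt{p})$. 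Plugging both bounds into Corollary~\ref{corollary:parallelLowerBoundNN} gives
\[
\Quant^{p\parallel}(f) = \Omega\!\left(\frac{\sqrt{N}}{\sqrt{p}}\right) = \Omega\!\left(\sqrt{\frac{N}{p}}\right),
\]
as claimed.

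The main obstacle I anticipate is making the $\lambda(f) = \Omega(\sqrt{N})$ bound fully rigorous and uniform over \emph{all} read-once formulas — one has to be a bit careful that a read-once formula really does attain sensitivity $N$ at some single input (which requires choosing, at each $\textsf{Or}$ gate, exactly one surviving true child along the path, and at each $\textsf{And}$ gate all children true, so that flipping any one leaf propagates to the root); handling arbitrary nesting of $\textsf{And}$, $\textsf{Or}$, and $\textsf{Not}$ cleanly may take an inductive argument on the formula structure. A secondary (minor) point is checking that the $O(\sqrt{m})$ quantum upper bound for read-once formulas applies to the restricted formulas even when they degenerate (e.g. become constant, in which case $\Quant = 0$ and the bound is vacuous/only helps), and that edge cases with $p > N$ or trivial formulas are handled by the asymptotic notation. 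Neither obstacle is deep, but the sensitivity claim is where the real content of the lower bound sits.
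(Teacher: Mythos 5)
Your overall strategy matches the paper's: apply \Cref{corollary:parallelLowerBoundNN} and bound the two terms separately, and your treatment of the second term (a $p$-restriction of a read-once formula is again a read-once formula on at most $p$ variables, hence has quantum query complexity $O(\sqrt{p})$) is exactly the paper's argument. However, your handling of the first term contains a genuine gap. The claim that read-once formulas have full sensitivity $s(f)=N$ is false, and the construction you sketch (choose an input so that every leaf is simultaneously pivotal) cannot work. Already for $f = x_1 \lor (x_2 \land x_3)$ the maximum sensitivity is $2$, not $3$: to make the leaves of the $\textsf{And}$ sensitive the $\textsf{Or}$ must evaluate to $1$ via that child, which forces $x_1=0$ and makes $x_1$ non-pivotal. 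More tellingly, the two-level $\andor$ tree on $N$ variables — the motivating example of this whole section — has $s_0(f)=s_1(f)=\sqrt{N}$. Consequently the chain $\lambda(f) \ge \sqrt{s(f)}$ only yields $\lambda(\andor) = \Omega(N^{1/4})$, and your argument as written proves only $\Quant^{p\parallel}(f) = \Omega(N^{1/4}/\sqrt{p})$, strictly weaker than the claimed $\Omega(\sqrt{N/p})$.

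The missing ingredient is the result of Barnum and Saks \cite{Barnum04}: for every read-once formula there is an adversary matrix $\Gamma$ that is \emph{nearest-neighbor} (supported on pairs at Hamming distance one) with $\norm{\Gamma} = \Omega(\sqrt{N})$ and $\norm{\Gamma_i} \le 1$ for all $i$. Combined with \Cref{theorem:spectral-sens-equals-adv}, this gives $\lambda(f) = \Omega(\sqrt{N})$ directly; it is a genuinely stronger statement than any sensitivity bound, and it is where the real content of the lower bound lives. If you want to avoid citing Barnum--Saks you would need some other construction of a nearest-neighbor adversary achieving ratio $\Omega(\sqrt{N})$ for arbitrary read-once formulas, which is essentially reproving their theorem; the sensitivity shortcut cannot be repaired.
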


\begin{proof}
    We will apply \Cref{corollary:parallelLowerBoundNN}, bounding the first and second term separately. A result of Barnum and Saks \cite{Barnum04} provides a $\Gamma$ for any read once formula $f$ that satisfies $\parallel \Gamma\parallel =\Omega(\sqrt{N})$ and $\parallel \Gamma_i\parallel =1$ for any $i\in[N]$ and is nearest neighbor. Hence we have $\lambda(f) = \Omega(\sqrt{N})$. To bound the second term we can simply observe that a $p$-restriction of a read-once formula is another read-once formula of size $p$. Consider a disjunction of the form \begin{align*}
        c&=x_1 \lor x_2 \lor ... \lor x_k
    \end{align*}
    If $x_i=0$ is fixed then it can simply be removed from $c$, if $x_i=1$ is fixed then $c$ can be removed and replaced with a $1$ literal (or vice versa if $\neg x_i$ appears). The rule for eliminating literals in a conjunction follows similarly. This procedure can be applied recursively until all fixed variables have been eliminated, leaving only $p$ many free variables for a $p$-restriction. The quantum query complexity of a size $p$ read-once formula is $O(\sqrt{p})$ \cite{Reichardt11}, so the second term is $\Omega(1/\sqrt{p})$.
\end{proof}

\subsection{Symmetric functions}
\label{subsec:SymmetricFunctions}

We can also use \Cref{corollary:parallelLowerBoundNN} to reproduce parallel quantum lower bounds for symmetric functions. These bounds were implicit in the work of Grover and Radhakrishnan \cite{Grover04}, but are arguably simpler to show once our theorem has been established. A symmetric function is a function $f: \{0, 1\}^N\rightarrow \{0, 1\}$ which satisfies \begin{align}
    f(x) &= f(\pi(x)) \text{ \qquad \qquad for all permutations } \pi. \nonumber
\end{align}
Or equivalently, $f(x)$ depends only on the hamming weight of $x$, $|x|$. We denote by $f_{|x|}$ the value of $f$ on inputs of hamming weight $|x|$. We recall a standard combinatorial adversary lower bound for symmetric functions.

Let $f$ be a non-constant symmetric total function from $\{0,1\}^N$ to $\{0, 1\}$, and let $t_f\leq N/2$ be chosen such that \begin{enumerate}[label=(\arabic*)]
    \item $f_{t_f}\neq f_{t_f+1}$ or $f_{N-t_f} \neq f_{N-t_f-1}$
    \item $t_f$ is chosen to minimize $|N/2-t_f|$
\end{enumerate}
WLOG let us assume that $f_{t_f} \neq f_{t_f+1}$ (if this is not the case then consider the complement function $\tilde{f}$ defined by $\tilde{f}(x)=f(\tilde{x})$, with $\tilde{x}$ being the bitwise complement). Consider the adversary matrix given by \begin{align}
    \Gamma_{xy} = \begin{cases}
        1 & \text{ if } |x|=t_f, |y|=t_f+1, |x \oplus y| = 1 \\
        0 & \text{ otherwise}
    \end{cases} \nonumber 
\end{align}
This provides a tight $\Omega(\sqrt{Nt_f})$ sequential lower bound \cite{Beals98}, and it is worth noting that this is also a nearest-neighbor adversary matrix. We are now ready to show the parallel lower bound.

\begin{theorem} \label{thm:symmetric_functions}
    For any non-constant symmetric function $f:\{0,1\}^N \rightarrow \{0, 1\}$, define $t_f\leq N/2$ such that \begin{enumerate}[label=(\arabic*)]
        \item $f_{t_f}\neq f_{t_f+1}$ or $f_{N-t_f} \neq f_{N-t_f-1}$
        \item $t_f$ minimizes $|N/2-t_f|$
    \end{enumerate}
    Then the parallel quantum query complexity of $f$ satisfies \begin{align*}
        \Quant^{p\parallel}(f) &= \Tilde{\Theta}\left(\sqrt{\frac{Nt_f}{p\min\{p, t_f\}}}\right)
    \end{align*}
\end{theorem}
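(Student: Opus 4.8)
The plan is to apply Corollary~\ref{corollary:parallelLowerBoundNN} for the lower bound and to recall the parallel Grover/counting algorithm of Grover and Radhakrishnan for the matching upper bound. For the lower bound, the first term $\lambda(f)$ comes from the nearest‑neighbor adversary matrix $\Gamma$ displayed just before the theorem: a standard counting argument (the graph $G_f$ restricted to levels $t_f$ and $t_f+1$ is a biregular bipartite graph whose degrees are $\Theta(N-t_f)=\Theta(N)$ on one side and $\Theta(t_f)$ on the other) gives $\lambda(f)=\norm{\Gamma}=\Tilde\Theta(\sqrt{Nt_f})$; this is exactly the sequential bound of \cite{Beals98} and I will just cite it. The work is in the denominator: I must upper bound $\max_{g\in\mathcal F_{p\text{-res}}^{(f)}}\lambda(g)$, equivalently (via $\lambda(g)=O(\Quant(g))$) exhibit a good sequential quantum algorithm for every $p$‑restriction $g$ of $f$.

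So the key step is: let $g:\{0,1\}^p\to\{0,1\}$ be obtained from $f$ by fixing $N-p$ coordinates, of which say $w_0$ are fixed to $1$; since $f$ is symmetric, $g(z)$ depends only on $|z|$, and $g(z)=1$ iff $f_{w_0+|z|}=1$. Thus $g$ is itself a symmetric function on $p$ bits, and the relevant threshold for $g$ — the smallest $|p/2-s|$ with $g_s\neq g_{s+1}$ — corresponds to the value of $|x|$ among $t_f,t_f+1,N-t_f-1,N-t_f$ that is ``reachable'' after shifting by $w_0$. The crucial observation is that because $t_f$ minimizes $|N/2-t_f|$, $f$ is constant on the whole middle band of Hamming weights $(t_f, N-t_f)$; hence any sign change of $g$ must occur at a level $s$ with $s\le t_f$ or $s\ge p-t_f$ after the shift, i.e. $\min\{s,p-s\}\le t_f$ (or there is no sign change at all, in which case $g$ is constant and $\lambda(g)=0$). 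By the sequential symmetric‑function upper bound (parallel Grover/counting, or just the classic $\Theta(\sqrt{p\,s_g})$ bound with $s_g=\min\{s,p-s\}$), we get $\Quant(g)=\Tilde O(\sqrt{p\cdot\min\{t_f,p\}})$ in the worst case, since $s_g\le\min\{t_f,p\}$ always. Plugging into Corollary~\ref{corollary:parallelLowerBoundNN} yields
\begin{align*}
    \Quant^{p\parallel}(f)=\Omega\!\left(\frac{\sqrt{Nt_f}}{\sqrt{p\min\{p,t_f\}}}\right)=\Tilde\Omega\!\left(\sqrt{\frac{Nt_f}{p\min\{p,t_f\}}}\right).
\end{align*}

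For the upper bound I will describe the $p$‑parallel algorithm directly, following \cite{Grover04}: to decide $f$ it suffices to estimate $|x|$ well enough to distinguish the threshold level $t_f$ (and, by symmetry of the construction, $N-t_f$), which is a parallel approximate‑counting task. Running $\min\{p,t_f\}$ independent parallel Grover searches, each of width $\Theta(p/\min\{p,t_f\})$ over a random partition of $[N]$, and using the Grover–Radhakrishnan bound for finding $\min\{p,t_f\}$ marked items among $N$, needs $\Tilde O\big(\sqrt{N t_f/(p\min\{p,t_f\})}\big)$ $p$‑parallel queries to resolve the relevant level, matching the lower bound up to log factors. I expect the main obstacle to be the bookkeeping in the denominator argument — specifically arguing carefully that a worst‑case $p$‑restriction really does have its nearest threshold at $\min\{t_f,p\}$ rather than something smaller that would only help us, and handling the boundary/degenerate cases ($p>t_f$ versus $p\le t_f$, and restrictions on which $g$ is constant) — together with importing the parallel counting upper bound from \cite{Grover04} in a self‑contained way; the spectral‑norm estimate for $\Gamma$ and the block‑diagonalization are already packaged in the cited results.
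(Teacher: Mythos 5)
Your proposal is correct and follows essentially the same route as the paper: the lower bound via \Cref{corollary:parallelLowerBoundNN} with $\lambda(f)=\Omega(\sqrt{Nt_f})$ from the level-$t_f$/level-$(t_f+1)$ nearest-neighbor adversary, combined with the observation that every $p$-restriction of $f$ is a shifted symmetric function on $p$ bits whose threshold satisfies $t_g\le\min\{p,t_f\}$ and hence has sequential complexity $O(\sqrt{p\min\{p,t_f\}})$, and the upper bound deferred to a modification of the Grover--Radhakrishnan algorithm. Your justification that the restriction's threshold cannot exceed $t_f$ (because $f$ is constant on the middle band of Hamming weights) is in fact spelled out slightly more carefully than in the paper's own proof.
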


\begin{proof}
    We will again apply \Cref{corollary:parallelLowerBoundNN}, bounding each term separately. From the preceding paragraph, we have \begin{align*}
        \lambda(f) &= \Omega\left(\sqrt{Nt}\right)
    \end{align*}
    bounding the first term. We will now upper bound $\max_{g \in \mathcal{F}_{p-\text{res}}^{(f)}}\Quant(g)$ by giving an efficient quantum algorithm for a $p$-restriction of $f$. Let $g$ be some $p$-restriction of $f$ and $A:[N] \setminus S \rightarrow \{0,1\}$ be the assignment to the remaining $N-p$ variables. Let there be $k$ many $1$'s in the assignment $A$.  $g$ is now a symmetric function ``shifted'' by $k$, i.e. we have \begin{align*}
        f_l = f_{l+1} (\Leftrightarrow) g_{l-k} = g_{l-k+1}
    \end{align*}
    it follows that $t_g \leq t_f$, and we obviously have $t_g \leq p$ as the domain is size $p$. Hence there is a quantum algorithm for $g$ with complexity $O(\sqrt{p \min \{p, t_f\}})$, and therefore \begin{align*}
        \frac{1}{\max_{g \in \mathcal{F}_{p-\text{res}}^{(f)}}\Quant(g)} &= \Omega\left(\frac{1}{\sqrt{p \min\{p, t_f\}}}\right) \\
         \Quant^{p\parallel}(f) &= \Omega\left(\sqrt{\frac{Nt_f}{p\min\{p, t_f\}}}\right) & \text{(\Cref{corollary:parallelLowerBoundNN})}
    \end{align*}

    This is tight by a small modification of the algorithm presented by Grover and Radhakrishnan \cite{Grover04}.
\end{proof}

\subsection{Proof of nearest neighbour lower bound}
\label{sec:nnLowerboundProof}
Now, we give a proof of the nearest neighbour lower bound stated in \Cref{theorem:parallelLowerBoundNN}. By \Cref{theorem:spectral-sens-equals-adv}, the following formulation is equivalent: 
\begin{theorem*}[Reformulation of \Cref{theorem:parallelLowerBoundNN}]
    For any total boolean function $f: \{0,1\}^N \rightarrow \{0,1\}$, and nearest neighbor adversary $\Gamma^{NN}$ for $f$, the parallel quantum query complexity satisfies the following. \begin{align*}
        \emph{\Quant}^{p\parallel}(f) &= \Omega\left(\frac{\norm{\Gamma^{NN}}}{\max_i \norm{\Gamma^{NN}_i}} \cdot \frac{1}{\max_{g \in \mathcal{F}_{p-\text{res}}^{(f)}}\lambda(g)}\right)
    \end{align*}
\end{theorem*}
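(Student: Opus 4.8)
The plan is to instantiate the parallel spectral adversary lower bound of \Cref{theorem:parAdv} (through \Cref{equation:parallelSpectralAdversary}) with the single matrix $\Gamma = \Gamma^{\NN}$. Since $\norm{\Gamma^{\NN}}$ sits in the numerator of \Cref{equation:parallelSpectralAdversary}, nothing needs to be done with it; all of the work goes into proving the uniform bound $\norm{\Gamma^{\NN}_S} \le \bigl(\max_{g \in \mathcal{F}_{p-\text{res}}^{(f)}} \lambda(g)\bigr)\cdot \max_i \norm{\Gamma^{\NN}_i}$ for every $S \subseteq [N]$ with $|S| = p$. Granting this, \Cref{equation:parallelSpectralAdversary} gives $\Adv^{p\parallel}(f) \ge \norm{\Gamma^{\NN}}/\max_{|S|=p}\norm{\Gamma^{\NN}_S} \ge \bigl(\norm{\Gamma^{\NN}}/\max_i\norm{\Gamma^{\NN}_i}\bigr)\cdot\bigl(1/\max_g \lambda(g)\bigr)$, and \Cref{theorem:parAdv} converts this into the stated bound on $\Quant^{p\parallel}(f)$.

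The structural heart of the argument is a block-diagonalization of $\Gamma^{\NN}_S$ (illustrated in \Cref{fig:lower-bound-intuition}). First I would observe that an entry $(\Gamma^{\NN}_S)_{xy}$ can be nonzero only if $x_S \neq y_S$ and $d(x,y) \le 1$, by \Cref{eqn:adversaryMatrixNNReqs} and the definition of $\Gamma_S$; together these force $x$ and $y$ to differ in exactly one coordinate and that coordinate to lie in $S$, so in particular $x$ and $y$ agree on all of $[N]\setminus S$. Consequently, reordering rows and columns so that inputs are grouped by their restriction $A\colon [N]\setminus S \to \{0,1\}$ turns $\Gamma^{\NN}_S$ into a block-diagonal matrix with $2^{N-p}$ blocks $B^{(A)}$ of size $2^p \times 2^p$, one per assignment $A$. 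Each $B^{(A)}$ is symmetric, and its $(u,v)$ entry equals $\Gamma^{\NN}_{x,y}$ where $x,y$ are the extensions of $u,v$ by $A$; this is nonzero only when $d(u,v)=1$ and $f$ takes different values on $x$ and $y$, i.e.\ exactly when the $p$-restriction $g^{(A)} \in \mathcal{F}_{p-\text{res}}^{(f)}$ obtained by fixing $[N]\setminus S$ to $A$ distinguishes $u$ and $v$. Hence $B^{(A)}$ is a nearest neighbor adversary matrix for $g^{(A)}$ (the all-zero matrix when $g^{(A)}$ is constant, which is harmless).

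Next I would bound each block. Applying \Cref{theorem:spectral-sens-equals-adv} to $B^{(A)}$ as an adversary matrix for $g^{(A)}$ gives $\norm{B^{(A)}} \le \lambda(g^{(A)}) \cdot \max_{i \in [p]} \norm{B^{(A)}_i}$. For the normalizing factor, note that $B^{(A)}_i$ is obtained from $\Gamma^{\NN}_{\{j\}}$ — where $j \in S$ is the global index corresponding to the local coordinate $i$ — by deleting every row and column whose restriction to $[N]\setminus S$ is not $A$; deleting rows and columns cannot increase spectral norm, so $\norm{B^{(A)}_i} \le \norm{\Gamma^{\NN}_{\{j\}}} \le \max_k \norm{\Gamma^{\NN}_k}$. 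Since $\Gamma^{\NN}_S$ is block diagonal, $\norm{\Gamma^{\NN}_S} = \max_A \norm{B^{(A)}} \le \bigl(\max_{g \in \mathcal{F}_{p-\text{res}}^{(f)}}\lambda(g)\bigr)\cdot\max_k\norm{\Gamma^{\NN}_k}$, uniformly over all $S$ of size $p$, which is precisely what the first paragraph needed; the degenerate case of constant $f$ makes both sides of the theorem zero and is dispatched separately.

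The main obstacle I anticipate is organizational rather than conceptual: setting up the index bookkeeping for the block-diagonalization cleanly — matching local coordinates of a block to global coordinates of $[N]$, tracking which restriction $g^{(A)}$ each block corresponds to, and checking that $B^{(A)}$ literally satisfies the defining conditions \Cref{eqn:adversaryMatrixNNReqs} of a nearest neighbor adversary matrix for that restriction. Everything else (monotonicity of spectral norm under passing to submatrices, the appeal to \Cref{theorem:spectral-sens-equals-adv} per block, and the final division in \Cref{equation:parallelSpectralAdversary}) is routine.
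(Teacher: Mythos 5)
Your proposal is correct and follows essentially the same route as the paper: instantiate the parallel spectral adversary with $\Gamma^{\NN}$, block-diagonalize $\Gamma^{\NN}_S$ into $2^p\times 2^p$ blocks indexed by assignments of $[N]\setminus S$, recognize each block as a nearest-neighbor adversary matrix for a $p$-restriction, and bound it via \Cref{theorem:spectral-sens-equals-adv} together with the observation that the per-block column norms are dominated by $\max_i\norm{\Gamma^{\NN}_i}$. If anything, your per-block bound in terms of $\lambda(g)$ matches the stated reformulation more directly than the paper's own inequality, which passes through $\Quant(g)$.
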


\begin{proof}
Whenever  we \(\max\) over \(S\) in this proof, it is understood that \(S \subseteq [N], |S| = p\), and when we \(\max\) over \(i\), it is understood that \(i \in [N]\). From the parallel adversary method in \Cref{equation:parallelSpectralAdversary}, we have
\begin{align}
        \Quant^{p\parallel}(f) &= \Omega\left(\frac{\parallel \Gamma^{\NN}\parallel }{\max_{S} \parallel \Gamma^{\NN}_S\parallel } \right)
 \label{eqn:redefParallelAdvQuant}       
\end{align}
By \Cref{lem:NearestNeighbourBlockDiagonal}, we have that \(\Gamma^{\NN}_S\) is a block diagonal matrix. Labelling each block by \(b\) such that \(\left (\Gamma^{\NN}_S \right )^{(b)}\) refers to the \(b^{th}\) block matrix, we get 
\begin{equation}
\left \Vert \Gamma_S^{\NN} \right \Vert = \max_b \left \Vert \left (\Gamma_{S}^{\NN}\right )^{(b)}\right \Vert = \left \Vert \left (\Gamma_{S}^{\NN}\right )^{(b^*)}\right \Vert 
\label{eqn:adversaryMaxBlockNN}
\end{equation}
where $b^*$ is the block index which achieves the maximum in \Cref{eqn:adversaryMaxBlockNN}.

By \Cref{lem:NearestNeighbourBlockDiagonal}, we have that each of the block matrices \(\left (\Gamma_S^{\NN} \right )_b\) are in fact adversary matrices for some function \(g \in \mathcal{F}_{p-\text{res}}^{(f)}\). Therefore, for all \(b\), 
\begin{equation}
\frac{\left \Vert \left (\Gamma_{S}^{\NN} \right )^{(b)} \right \Vert} {\max_i \left \Vert \left (\Gamma^{\NN}_{S} \right )_i^{(b)} \right \Vert} \leq \max_{g \in \mathcal{F}_{p-\text{res}}^{(f)}} \Quant(g)
\label{eqn:adversaryInequalityNN}
\end{equation}
Since $\Gamma^{\NN}_S$ is block diagonal, $\Gamma^{\NN}_S[x,y] = 0$ whenever $x_{S^\complement} \neq y_{S^\complement}$. Therefore, $\left(\Gamma^{\NN}_{S}\right)_i = \Gamma^{\NN}_{S \cap \{i\}}$. In particular, $\max_i \left \Vert \left(\Gamma^{\NN}_{S}\right)^{(b)}_i  \right \Vert = \max_i \left \Vert \left(\Gamma^{\NN}_i\right)^{(b)}  \right \Vert$.
Putting this together with \Cref{eqn:adversaryMaxBlockNN} and \Cref{eqn:adversaryInequalityNN}, we get
\begin{equation}
\left \Vert \Gamma^{\NN}_S \right \Vert \leq \left (\max_i \left \Vert \left (\Gamma^{\NN}_{i} \right )^{(b^*)} \right \Vert \right ) \left ( \max_{g \in \mathcal{F}_{p-\text{res}}^{(f)}} \Quant(g) \right ).
\label{eqn:blockInequalityNN}
\end{equation}

which when plugged into \Cref{eqn:redefParallelAdvQuant} proves the theorem.
\end{proof}

\begin{lemma}
Let \(\Gamma^{\NN}\) be a Nearest Neighbour adversary matrix (\Cref{eqn:adversaryMatrixNNReqs}) and let $S \subseteq [N]$ with $|S| = p$. Then,
\begin{enumerate}
\item \label{part:NN1} \(\Gamma^{\NN}_S\) is a block diagonal matrix with blocks of size \(2^p \times 2^p\). 
\item \label{part:NN2} Each block of \(\Gamma^{\NN}_S\) is an adversary matrix of a function \(g \in \mathcal{F}_{p-\text{res}}^{(f)}\) where \(\mathcal{F}_{p-\text{res}}^{(f)}\) is as defined in \Cref{subsec:NNAdversaries}.

\end{enumerate}
\label{lem:NearestNeighbourBlockDiagonal}
\end{lemma}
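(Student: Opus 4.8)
The plan is to prove both parts by directly examining which entries of $\Gamma^{\NN}_S$ can be nonzero. For Part~\ref{part:NN1}, note that $(\Gamma^{\NN}_S)_{xy} \ne 0$ forces $\Gamma^{\NN}_{xy} \ne 0$ and $x_S \ne y_S$. By \Cref{eqn:adversaryMatrixNNReqs}, $\Gamma^{\NN}_{xy} \ne 0$ implies $f(x) \ne f(y)$ (hence $x \ne y$) and $d(x,y) \le 1$, so in fact $d(x,y) = 1$. Since $x$ and $y$ then differ in exactly one coordinate and $x_S \ne y_S$, that coordinate must lie in $S$, so $x_{S^\complement} = y_{S^\complement}$. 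Therefore, if we reorder the $2^N$ rows and columns of $\Gamma^{\NN}_S$ so that strings sharing a common value on $S^\complement$ are grouped together --- there are $2^{N-p}$ such groups, each of size $2^p$ --- every nonzero entry lies within a single group, i.e.\ $\Gamma^{\NN}_S$ is block diagonal with one $2^p \times 2^p$ block per assignment $A\colon S^\complement \to \{0,1\}$.

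For Part~\ref{part:NN2}, fix such an assignment $A$ and let $(\Gamma^{\NN}_S)^{(A)}$ be the corresponding block, with rows and columns indexed by $z \in \{0,1\}^p$ through the bijection $z \mapsto x(z,A)$, where $x(z,A)$ denotes the string with $x(z,A)_S = z$ and $x(z,A)_{S^\complement} = A$. Define $g_A\colon \{0,1\}^p \to \{0,1\}$ by $g_A(z) = f(x(z,A))$; by construction $g_A$ is obtained from $f$ by fixing the $N-p$ bits outside $S$ according to $A$, so $g_A \in \mathcal{F}_{p-\text{res}}^{(f)}$. It remains to check that $(\Gamma^{\NN}_S)^{(A)}$ is an adversary matrix for $g_A$: symmetry is inherited because $\Gamma^{\NN}$ is symmetric and the predicate ``$x_S \ne y_S$'' is symmetric, and the vanishing condition holds because if $g_A(z) = g_A(z')$ then $f(x(z,A)) = f(x(z',A))$, whence $\Gamma^{\NN}_{x(z,A)\,x(z',A)} = 0$ (as $\Gamma^{\NN}$ is an adversary matrix for $f$) and so the corresponding entry of $(\Gamma^{\NN}_S)^{(A)}$ is $0$ as well.

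There is no genuinely difficult step here --- the whole argument is bookkeeping about indices --- so I expect the writeup to be short. The one point requiring a little care is the degenerate case where a restriction $g_A$ is constant: then $(\Gamma^{\NN}_S)^{(A)}$ is the zero matrix, which is still vacuously an adversary matrix for $g_A$, so the statement holds as written and this block simply contributes $0$ when one takes the maximum over blocks in the downstream argument. I would also be careful to fix the identification of the block's index set with $\{0,1\}^p$ so that coordinate $i \in S$ corresponds to the $i$-th input of $g_A$; this consistency is exactly what lets the proof of \Cref{theorem:parallelLowerBoundNN} pass from $(\Gamma^{\NN}_S)_i$ on a block to $(\Gamma^{\NN}_i)$ restricted to that block.
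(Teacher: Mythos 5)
Your proof is correct and follows essentially the same route as the paper's: nonzero entries of $\Gamma^{\NN}_S$ force the two inputs to differ in exactly one coordinate lying in $S$, hence agree on $S^\complement$, giving the block-diagonal structure indexed by assignments to $S^\complement$, and each block inherits symmetry and the vanishing condition from $\Gamma^{\NN}$ as an adversary matrix for the corresponding restriction of $f$. Your extra remarks about the constant-restriction degenerate case and the index identification are sensible bookkeeping that the paper leaves implicit.
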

\begin{proof}
Since we can relabel the inputs arbitrarily, we can assume, without loss of generality, that $S$ contains the last $p$ indices, that is $S = [N-p+1, N]$. Thus, we can view $\Gamma^{\NN}_S$ as a block matrix where each block, indexed by an assignment of $S^\complement$, is of size $2^p \times 2^p$. Recall that $\Gamma^{\NN}_S[x,y] \neq 0$ only if $x$ and $y$ differ at a single bit $i$ with $i \in S$. This means that for any $x,y$ with $x_{S^\complement} \neq y_{S^\complement}$, we have $\Gamma^{\NN}_S[x,y] = 0$. This completes the proof of \cref{part:NN1}.

For any assignment $b$ of $S^\complement$, we define the induced function $f_b:\{0,1\}^p \to \{0,1\}$ as
\begin{equation*}
    f_b(z) \defeq f(z \concat b) 
\end{equation*}
where $z \concat b$ corresponds to an input $x$ with $x_S = z$ and $x_{S^\complement} =  b$.

Clearly, \(f_b \in \mathcal{F}_{p-\text{res}}^{(f)}\). Let \(\left ( \Gamma^{\NN}_S \right )^{(b)} \) be the diagonal block of \(\left ( \Gamma^{\NN}_S \right )\) associated with the assignment $b$. It remains to verify that \(\left ( \Gamma^{\NN}_S \right )^{(b)} \) is an adversary matrix for $f_b$. We know that \(\left (\Gamma_S^{\NN} \right )^{(b)}\) is symmetric since \(\Gamma_S^{\NN}\) is symmetric. Moreover, for any $z, z' \in \{0,1\}^p$, we have
\begin{equation*}
    f_{b}(z) = f_b(z') \implies f(z \concat b) = f(z' \concat b) \implies (\Gamma_S^{\NN})_{z \concat b, z' \concat b} = 0 \implies \left ((\Gamma_S^{\NN})^{(b)} \right )_{z,z'} = 0.
\end{equation*}
This proves \cref{part:NN2}.
\end{proof}

\section*{Acknowledgments}
The authors thank Laxman Dhulipala for suggesting to investigate the relationship between quantum algorithms and parallelism, Luke Schaeffer and Chaitanya Karamchedu for helpful discussions, and Andrew Childs for valuable feedback on an earlier draft. We also thank an anonymous reviewer for largely simplifying our constructions in \Cref{sec:unbounded_genuine_separations}. ASG additionally thank Stacey Jeffery and Ronald de Wolf for helpful discussions. 

JC is supported by the US Department of Energy grant no.\ DESC0020264.
ASG is supported by the U.S. Department of Energy, Office of Science, Office of Advanced Scientific Computing Research, Accelerated Research in Quantum Computing and Quantum Testbed Pathfinder programs (award numbers DE-SC0020312 and DE-SC0019040). MV is supported by the US Department of Energy grant no.\ DESC0020264 as well as the US Department of Energy Nuclear Energy University Programs award number DE-NE0009417. 
\newpage

\bibliography{main}

\newcommand{\etalchar}[1]{$^{#1}$}
\begin{thebibliography}{ABDK{\etalchar{+}}21}

\bibitem[AA15]{Aaronson15}
Scott Aaronson and Andris Ambainis.
\newblock Forrelation: A problem that optimally separates quantum from classical computing.
\newblock In {\em Proceedings of the Forty-Seventh Annual ACM Symposium on Theory of Computing}, page 307–316, 2015.
\newblock \href {https://doi.org/10.1145/2746539.2746547} {\path{doi:10.1145/2746539.2746547}}.

\bibitem[Aar10]{Aaronson10}
Scott Aaronson.
\newblock Bqp and the polynomial hierarchy.
\newblock In {\em Proceedings of the Forty-Second ACM Symposium on Theory of Computing}, page 141–150, 2010.
\newblock \href {https://doi.org/10.1145/1806689.1806711} {\path{doi:10.1145/1806689.1806711}}.

\bibitem[ABB{\etalchar{+}}17]{Ambainis17}
Andris Ambainis, Kaspars Balodis, Aleksandrs Belovs, Troy Lee, Miklos Santha, and Juris Smotrovs.
\newblock Separations in query complexity based on pointer functions.
\newblock {\em Journal of the ACM}, 64(5), 2017.
\newblock \href {https://doi.org/10.1145/3106234} {\path{doi:10.1145/3106234}}.

\bibitem[ABBD{\etalchar{+}}16]{Anshu16}
Anurag Anshu, Aleksandrs Belovs, Shalev Ben-David, Mika Göös, Rahul Jain, Robin Kothari, Troy Lee, and Miklos Santha.
\newblock Separations in communication complexity using cheat sheets and information complexity.
\newblock In {\em 2016 IEEE 57th Annual Symposium on Foundations of Computer Science (FOCS)}, pages 555--564, 2016.
\newblock \href {https://doi.org/10.1109/FOCS.2016.66} {\path{doi:10.1109/FOCS.2016.66}}.

\bibitem[ABDK{\etalchar{+}}21]{BetterbealsAaronson21}
Scott Aaronson, Shalev Ben-David, Robin Kothari, Shravas Rao, and Avishay Tal.
\newblock Degree vs. approximate degree and quantum implications of huang’s sensitivity theorem.
\newblock In {\em Proceedings of the 53rd Annual ACM SIGACT Symposium on Theory of Computing}, page 1330–1342, 2021.
\newblock \href {https://doi.org/10.1145/3406325.3451047} {\path{doi:10.1145/3406325.3451047}}.

\bibitem[ABK16]{Aaronson16}
Scott Aaronson, Shalev {Ben-David}, and Robin Kothari.
\newblock Separations in query complexity using cheat sheets.
\newblock In {\em Proceedings of the Forty-Eighth Annual ACM Symposium on Theory of Computing}, page 863–876. Association for Computing Machinery, 2016.
\newblock \href {https://doi.org/10.1145/2897518.2897644} {\path{doi:10.1145/2897518.2897644}}.

\bibitem[ACC{\etalchar{+}}23]{arora23depth}
Atul~Singh Arora, Andrea Coladangelo, Matthew Coudron, Alexandru Gheorghiu, Uttam Singh, and Hendrik Waldner.
\newblock Quantum depth in the random oracle model.
\newblock In {\em Proceedings of the 55th Annual ACM Symposium on Theory of Computing}, STOC 2023, page 1111–1124, New York, NY, USA, 2023. Association for Computing Machinery.
\newblock \href {https://doi.org/10.1145/3564246.3585153} {\path{doi:10.1145/3564246.3585153}}.

\bibitem[Amb02]{Ambainis02}
Andris Ambainis.
\newblock Quantum lower bounds by quantum arguments.
\newblock {\em Journal of Computer and System Sciences}, 64(4):750–767, 2002.
\newblock \href {https://doi.org/10.1006/jcss.2002.1826} {\path{doi:10.1006/jcss.2002.1826}}.

\bibitem[BACS07]{berry07hamiltonian}
Dominic~W. Berry, Graeme Ahokas, Richard Cleve, and Barry~C. Sanders.
\newblock Efficient quantum algorithms for simulating sparse hamiltonians.
\newblock {\em Communications in Mathematical Physics}, 270(2):359--371, 2007.
\newblock \href {https://doi.org/10.1007/s00220-006-0150-x} {\path{doi:10.1007/s00220-006-0150-x}}.

\bibitem[BBC{\etalchar{+}}98]{Beals98}
Robert Beals, Harry Buhrman, Richard Cleve, Michele Mosca, and Ronald de~Wolf.
\newblock Quantum lower bounds by polynomials.
\newblock In {\em Proceedings of the 39th Annual Symposium on Foundations of Computer Science}, 1998.
\newblock \href {https://doi.org/10.1145/502090.502097} {\path{doi:10.1145/502090.502097}}.

\bibitem[Bd02]{Buhrman2002Complexitymeasures}
Harry Buhrman and Ronald {de Wolf}.
\newblock Complexity measures and decision tree complexity: a survey.
\newblock {\em Theoretical Computer Science}, 288(1):21--43, 2002.
\newblock Complexity and Logic.
\newblock \href {https://doi.org/10.1016/S0304-3975(01)00144-X} {\path{doi:10.1016/S0304-3975(01)00144-X}}.

\bibitem[BDB20]{Ben-David20}
Shalev Ben-David and Eric Blais.
\newblock A tight composition theorem for the randomized query complexity of partial functions.
\newblock In {\em 2020 IEEE 61st Annual Symposium on Foundations of Computer Science (FOCS)}, pages 240--246, 2020.
\newblock \href {https://doi.org/10.1109/FOCS46700.2020.00031} {\path{doi:10.1109/FOCS46700.2020.00031}}.

\bibitem[BGK18]{BGK18}
Sergey Bravyi, David Gosset, and Robert Koenig.
\newblock Quantum advantage with shallow circuits.
\newblock {\em Science}, 362(6412):308--311, 2018.
\newblock URL: \url{https://www.science.org/doi/abs/10.1126/science.aar3106}, \href {https://arxiv.org/abs/https://www.science.org/doi/pdf/10.1126/science.aar3106} {\path{arXiv:https://www.science.org/doi/pdf/10.1126/science.aar3106}}, \href {https://doi.org/10.1126/science.aar3106} {\path{doi:10.1126/science.aar3106}}.

\bibitem[BLZ21]{blocki21posw}
Jeremiah Blocki, Seunghoon Lee, and Samson Zhou.
\newblock {On the Security of Proofs of Sequential Work in a Post-Quantum World}.
\newblock In Stefano Tessaro, editor, {\em 2nd Conference on Information-Theoretic Cryptography (ITC 2021)}, volume 199 of {\em Leibniz International Proceedings in Informatics (LIPIcs)}, pages 22:1--22:27, Dagstuhl, Germany, 2021. Schloss Dagstuhl -- Leibniz-Zentrum f{\"u}r Informatik.
\newblock URL: \url{https://drops.dagstuhl.de/entities/document/10.4230/LIPIcs.ITC.2021.22}, \href {https://doi.org/10.4230/LIPIcs.ITC.2021.22} {\path{doi:10.4230/LIPIcs.ITC.2021.22}}.

\bibitem[BS04]{Barnum04}
Howard Barnum and Michael Saks.
\newblock A lower bound on the quantum query complexity of read-once functions.
\newblock {\em Journal of Computer and System Sciences}, 69(2):244--258, 2004.
\newblock \href {https://doi.org/10.1016/j.jcss.2004.02.002} {\path{doi:10.1016/j.jcss.2004.02.002}}.

\bibitem[BS13]{Belovs13}
Aleksandrs Belovs and Robert Spalek.
\newblock Adversary lower bound for the k-sum problem.
\newblock In {\em Proceedings of the 4th Conference on Innovations in Theoretical Computer Science}, page 323–328, 2013.

\bibitem[BS21]{Bansal21}
Nikhil Bansal and Makrand Sinha.
\newblock K-forrelation optimally separates quantum and classical query complexity.
\newblock In {\em Proceedings of the 53rd Annual ACM SIGACT Symposium on Theory of Computing}, page 1303–1316. Association for Computing Machinery, 2021.
\newblock \href {https://doi.org/10.1145/3406325.3451040} {\path{doi:10.1145/3406325.3451040}}.

\bibitem[Bur19]{Burchard19}
Paul Burchard.
\newblock Lower bounds for parallel quantum counting.
\newblock 2019.
\newblock \href {https://arxiv.org/abs/1910.04555} {\path{arXiv:1910.04555}}.

\bibitem[CCD{\etalchar{+}}03]{Childs03}
Andrew Childs, Richard Cleve, Enrico Deotto, Edward Farhi, Sam Gutmann, and Daniel~A. Spielman.
\newblock Exponential algorithmic speedup by a quantum walk.
\newblock In {\em Proceedings of the Thirty-Fifth Annual ACM Symposium on Theory of Computing}, page 59–68, 2003.
\newblock \href {https://doi.org/10.1145/780542.780552} {\path{doi:10.1145/780542.780552}}.

\bibitem[CCL23]{Chia23}
Nai-Hui Chia, Kai-Min Chung, and Ching-Yi Lai.
\newblock On the need for large quantum depth.
\newblock {\em J. ACM}, 70(1), jan 2023.
\newblock \href {https://doi.org/10.1145/3570637} {\path{doi:10.1145/3570637}}.

\bibitem[CFHL21]{Chung21}
Kai-Min Chung, Serge Fehr, Yu-Hsuan Huang, and Tai-Ning Liao.
\newblock On the compressed-oracle technique, and post-quantum security of proofs of sequential work.
\newblock In Anne Canteaut and Fran{\c{c}}ois-Xavier Standaert, editors, {\em Advances in Cryptology -- EUROCRYPT 2021}, pages 598--629, Cham, 2021. Springer International Publishing.

\bibitem[CH22]{chia2022classical}
Nai-Hui Chia and Shih-Han Hung.
\newblock Classical verification of quantum depth, 2022.
\newblock \href {https://arxiv.org/abs/2205.04656} {\path{arXiv:2205.04656}}.

\bibitem[CH23]{chia23noninteractive}
Nai-Hui Chia and Shih-Han Hung.
\newblock Non-interactive classical verification of quantum depth: A fine-grained characterization.
\newblock Cryptology ePrint Archive, Paper 2023/1911, 2023.
\newblock \url{https://eprint.iacr.org/2023/1911}.
\newblock URL: \url{https://eprint.iacr.org/2023/1911}.

\bibitem[CKM{\etalchar{+}}23]{Chakraborty23}
Sourav Chakraborty, Chandrima Kayal, Rajat Mittal, Manaswi Paraashar, Swagato Sanyal, and Nitin Saurabh.
\newblock {On the Composition of Randomized Query Complexity and Approximate Degree}.
\newblock In {\em Approximation, Randomization, and Combinatorial Optimization. Algorithms and Techniques (APPROX/RANDOM 2023)}, volume 275, pages 63:1--63:23, 2023.
\newblock \href {https://doi.org/10.4230/LIPIcs.APPROX/RANDOM.2023.63} {\path{doi:10.4230/LIPIcs.APPROX/RANDOM.2023.63}}.

\bibitem[CM20]{coudron20depth}
Matthew Coudron and Sanketh Menda.
\newblock Computations with greater quantum depth are strictly more powerful (relative to an oracle).
\newblock In {\em Proceedings of the 52nd Annual ACM SIGACT Symposium on Theory of Computing}, STOC 2020, page 889–901, New York, NY, USA, 2020. Association for Computing Machinery.
\newblock \href {https://doi.org/10.1145/3357713.3384269} {\path{doi:10.1145/3357713.3384269}}.

\bibitem[CP18]{cohen18posw}
Bram Cohen and Krzysztof Pietrzak.
\newblock Simple proofs of sequential work.
\newblock In Jesper~Buus Nielsen and Vincent Rijmen, editors, {\em Advances in Cryptology -- EUROCRYPT 2018}, pages 451--467, Cham, 2018. Springer International Publishing.

\bibitem[CW00]{cleve00parallel}
R.~Cleve and J.~Watrous.
\newblock Fast parallel circuits for the quantum fourier transform.
\newblock In {\em Proceedings 41st Annual Symposium on Foundations of Computer Science}, pages 526--536, 2000.
\newblock \href {https://doi.org/10.1109/SFCS.2000.892140} {\path{doi:10.1109/SFCS.2000.892140}}.

\bibitem[DJ92]{Deutsch92}
David Deutsch and Richard Jozsa.
\newblock Rapid solution of problems by quantum computation.
\newblock {\em Proceedings of the Royal Society of London. Series A: Mathematical and Physical Sciences}, 439(1907):553--558, 1992.
\newblock \href {https://doi.org/10.1098/rspa.1992.0167} {\path{doi:10.1098/rspa.1992.0167}}.

\bibitem[GHMP02]{green02circuits}
Frederic Green, Steven Homer, Cristopher Moore, and Christopher Pollett.
\newblock Counting, fanout and the complexity of quantum acc.
\newblock 2(1):35–65, dec 2002.

\bibitem[GR04]{Grover04}
Lov~K. Grover and Jaikumar Radhakrishnan.
\newblock Quantum search for multiple items using parallel queries.
\newblock 2004.
\newblock URL: \url{https://api.semanticscholar.org/CorpusID:16025437}.

\bibitem[Gro96]{Grover96}
Lov Grover.
\newblock A fast quantum mechanical algorithm for database search.
\newblock In {\em Proceedings of the Twenty-Eighth Annual ACM Symposium on Theory of Computing}, page 212–219, 1996.
\newblock \href {https://doi.org/10.1145/237814.237866} {\path{doi:10.1145/237814.237866}}.

\bibitem[GSTW23]{girish2023power}
Uma Girish, Makrand Sinha, Avishay Tal, and Kewen Wu.
\newblock The power of adaptivity in quantum query algorithms, 2023.
\newblock \href {https://arxiv.org/abs/2311.16057} {\path{arXiv:2311.16057}}.

\bibitem[HLS07]{Hoyer07}
Peter Hoyer, Troy Lee, and Robert Spalek.
\newblock Negative weights make adversaries stronger.
\newblock In {\em Proceedings of the thirty-ninth annual ACM symposium on Theory of computing}, pages 526--535, 2007.

\bibitem[H{\v S}05]{hoyer05fanout}
Peter H{\o}yer and Robert {\v S}palek.
\newblock Quantum fan-out is powerful.
\newblock {\em Theory of Computing}, 1(5):81--103, 2005.
\newblock URL: \url{https://theoryofcomputing.org/articles/v001a005}, \href {https://doi.org/10.4086/toc.2005.v001a005} {\path{doi:10.4086/toc.2005.v001a005}}.

\bibitem[JMW17]{Jeffery17}
Stacey Jeffery, Frederic Magniez, and Ronald Wolf.
\newblock Optimal parallel quantum query algorithms.
\newblock {\em Algorithmica}, 79(2):509–529, 2017.
\newblock \href {https://doi.org/10.1007/s00453-016-0206-z} {\path{doi:10.1007/s00453-016-0206-z}}.

\bibitem[Kim12]{Kimmel12}
Shelby Kimmel.
\newblock Quantum adversary (upper) bound.
\newblock In {\em Automata, Languages, and Programming}, pages 557--568, 2012.

\bibitem[KNTSZ01]{Klauck01}
Hartmut Klauck, Ashwin Nayak, Amnon Ta-Shma, and David Zuckerman.
\newblock Interaction in quantum communication and the complexity of set disjointness.
\newblock In {\em Proceedings of the Thirty-Third Annual ACM Symposium on Theory of Computing}, page 124–133, 2001.
\newblock \href {https://doi.org/10.1145/380752.380786} {\path{doi:10.1145/380752.380786}}.

\bibitem[Mon10]{montanaro2010nonadaptive}
Ashley Montanaro.
\newblock Nonadaptive quantum query complexity.
\newblock {\em Information Processing Letters}, 110(24):1110--1113, 2010.

\bibitem[Mon13]{Montanaro13}
Ashley Montanaro.
\newblock A composition theorem for decision tree complexity.
\newblock {\em Chicago Journal of Theoretical Computer Science}, 20, 2013.
\newblock \href {https://doi.org/10.4086/cjtcs.2014.006} {\path{doi:10.4086/cjtcs.2014.006}}.

\bibitem[PRS01]{ponzio01pointers}
Stephen Ponzio, Jaikumar Radhakrishnan, and Venkatesh Srinivasan.
\newblock The communication complexity of pointer chasing.
\newblock {\em J. Comput. Syst. Sci.}, 62:323--355, 03 2001.
\newblock \href {https://doi.org/10.1006/jcss.2000.1731} {\path{doi:10.1006/jcss.2000.1731}}.

\bibitem[Rei11a]{Reichardt11}
Ben~W. Reichardt.
\newblock Faster quantum algorithm for evaluating game trees.
\newblock In {\em Proceedings of the Twenty-Second Annual ACM-SIAM Symposium on Discrete Algorithms}, page 546–559, 2011.

\bibitem[Rei11b]{Reichardt11b}
Ben~W. Reichardt.
\newblock {\em Reflections for quantum query algorithms}, pages 560--569.
\newblock 2011.
\newblock \href {https://doi.org/10.1137/1.9781611973082.44} {\path{doi:10.1137/1.9781611973082.44}}.

\bibitem[Sho94]{Shor94}
Peter Shor.
\newblock Algorithms for quantum computation: discrete logarithms and factoring.
\newblock In {\em Proceedings 35th Annual Symposium on Foundations of Computer Science}, pages 124--134, 1994.
\newblock \href {https://doi.org/10.1109/SFCS.1994.365700} {\path{doi:10.1109/SFCS.1994.365700}}.

\bibitem[Sim97]{Simon97}
Daniel Simon.
\newblock On the power of quantum computation.
\newblock {\em SIAM Journal on Computing}, 26(5):1474--1483, 1997.
\newblock \href {https://doi.org/10.1137/S0097539796298637} {\path{doi:10.1137/S0097539796298637}}.

\bibitem[SSW21]{Sherstov21}
Alexander~A. Sherstov, Andrey~A. Storozhenko, and Pei Wu.
\newblock An optimal separation of randomized and quantum query complexity.
\newblock In {\em Proceedings of the 53rd Annual ACM SIGACT Symposium on Theory of Computing}, page 1289–1302, 2021.
\newblock \href {https://doi.org/10.1145/3406325.3451019} {\path{doi:10.1145/3406325.3451019}}.

\bibitem[Tal13]{Tal13}
Avishay Tal.
\newblock Properties and applications of boolean function composition.
\newblock In {\em Proceedings of the 4th Conference on Innovations in Theoretical Computer Science}, page 441–454, 2013.
\newblock \href {https://doi.org/10.1145/2422436.2422485} {\path{doi:10.1145/2422436.2422485}}.

\bibitem[TT16]{takahashi16collapse}
Yasuhiro Takahashi and Seiichiro Tani.
\newblock Collapse of the hierarchy of constant-depth exact quantum circuits.
\newblock {\em computational complexity}, 25(4):849--881, 2016.
\newblock \href {https://doi.org/10.1007/s00037-016-0140-0} {\path{doi:10.1007/s00037-016-0140-0}}.

\bibitem[WKST19]{watts19unbounded}
Adam~Bene Watts, Robin Kothari, Luke Schaeffer, and Avishay Tal.
\newblock Exponential separation between shallow quantum circuits and unbounded fan-in shallow classical circuits.
\newblock In {\em Proceedings of the 51st Annual ACM SIGACT Symposium on Theory of Computing}, STOC 2019, page 515–526, New York, NY, USA, 2019. Association for Computing Machinery.
\newblock \href {https://doi.org/10.1145/3313276.3316404} {\path{doi:10.1145/3313276.3316404}}.

\bibitem[WP23]{watts2023unconditional}
Adam~Bene Watts and Natalie Parham.
\newblock {Unconditional Quantum Advantage for Sampling with Shallow Circuits}.
\newblock 1 2023.
\newblock \href {https://arxiv.org/abs/2301.00995} {\path{arXiv:2301.00995}}.

\bibitem[Zal99]{Zalka99}
Christof Zalka.
\newblock Grover's quantum searching algorithm is optimal.
\newblock {\em Phys. Rev. A}, 60:2746--2751, Oct 1999.
\newblock URL: \url{https://link.aps.org/doi/10.1103/PhysRevA.60.2746}, \href {https://doi.org/10.1103/PhysRevA.60.2746} {\path{doi:10.1103/PhysRevA.60.2746}}.

\bibitem[Zha04]{Zhang04}
Shengyu Zhang.
\newblock On the power of ambainis's lower bounds.
\newblock In {\em Automata, Languages and Programming}, pages 1238--1250, 2004.

\end{thebibliography}

\newpage

\appendix
\addtocontents{toc}{\protect\setcounter{tocdepth}{1}} 

\section{Composition theorems for parallel query complexity}
\label{sec:AppendixCompositionTheorem}
In this section, we will present composition theorems for parallel quantum query complexity that will be useful for us.

\subsection{A composition theorem for deterministic parallel query complexity} \label{sec:det_comp_thm}

In this section, we prove a deterministic  parallel composition theorem that is useful for showing deterministic parallel query lower bounds in \Cref{sec:unbounded_genuine_separations}. In particular, we show a tight (up to constant factors) composition theorem for the case when the deterministic query complexity of the inner function is close to being full. The same technique would apply for the case when the outer function has $\Theta(n)$ deterministic query complexity. We leave (dis)proving a general deterministic parallel composition theorem as an open problem.

\begin{theorem} \label{thm:det_parallel_comp}
    Let $f: \mathcal{D}_f \to \{0,1\}$ and $g: \mathcal{D}_g \to \{0,1\}$ be (partial) functions, where \(\mathcal D_f \subseteq \{0,1\}^N, \mathcal D_g \subseteq \{0,1\}^M\) such that $\emph{\Det}(g) = \Theta(M)$ and $p \in [MN]$. Then, $\emph{\Det}^{p \parallel}(f \circ g) = \Theta\left(\ceil{\frac{M}{p}} \emph{\Det}^{\ceil{p/M} \parallel}(f) \right)$.
\end{theorem}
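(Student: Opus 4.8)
## Proof Proposal

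The plan is to prove matching upper and lower bounds on $\Det^{p\parallel}(f\circ g)$. Write $p = qM + r$ with $0 \le r < M$ so that $\ceil{p/M} = q$ (or $q+1$ if $r>0$; I will treat $q = \ceil{p/M}$ up to constants). The target quantity $\ceil{M/p}\cdot \Det^{\ceil{p/M}\parallel}(f)$ should be read as: each ``super-round'' of the composed algorithm consists of $\ceil{M/p}$ ordinary $p$-parallel rounds, and in that amount of time one can resolve one full layer of $q = \ceil{p/M}$ parallel queries to $f$ (by spending the $p$ parallel slots on evaluating $q$ copies of $g$, each of which needs $\Det(g) = \Theta(M)$ queries, hence $\ceil{M/p}$ rounds to read out $\ceil{p/M}$ copies of $g$ in parallel). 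Iterating over the $\Det^{q\parallel}(f)$ layers needed by the optimal $q$-parallel algorithm for $f$ gives the upper bound $\Det^{p\parallel}(f\circ g) = O\left(\ceil{M/p}\cdot \Det^{\ceil{p/M}\parallel}(f)\right)$.

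For the lower bound I would use an adversary (query-elimination) argument, combining the standard deterministic adversary for $f$ with the fact that $\Det(g) = \Omega(M)$. Fix any deterministic $p$-parallel algorithm $\mathcal A$ for $f\circ g$ running in $T$ rounds, so it makes $Tp$ queries total, touching $T$ inputs of $g$ worth of positions in each of the $N$ blocks (in aggregate). The adversary for $f$ maintains, for each block $i$, a ``hard'' sub-instance of $g$ that has not yet been forced to a value; an inner block is ``resolved'' only once the algorithm has spent $\Omega(M)$ queries inside it (this is where $\Det(g) = \Omega(M)$ is used — fewer than $\Det(g)$ queries to a block leave a consistent $0$-instance and a consistent $1$-instance of $g$, so that block's value to $f$ is undetermined). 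Over $T$ rounds the algorithm has $Tp$ queries; the number of inner blocks it can resolve is at most $O(Tp/M)$. On the other hand, to determine $f\circ g$ the algorithm must force enough inner blocks to determine $f$ against its adversary strategy, and moreover it must do so within the round structure: each round of $\mathcal A$ touches at most $p$ positions, hence at most $O(p/M)$ inner blocks can be newly resolved per round, i.e.\ at most $\ceil{p/M} = q$ ``queries to $f$'' are answered per round of $\mathcal A$. Grouping consecutive rounds, $\mathcal A$ effectively simulates a $q$-parallel deterministic algorithm for $f$ with $T/\Omega(\ceil{M/p})$ rounds — wait, more carefully, a consistent adversary for $f$ forces that $f$ is undetermined until $\Det^{q\parallel}(f)$ layers have been answered, and since at most $q$ are answered per round, $T \ge \Det^{q\parallel}(f)$; but additionally, each such layer requires $\Omega(M)$ queries spread across the $q$ blocks it involves, i.e.\ $\Omega(M/p)$ rounds per layer when $p < qM$. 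Combining, $T = \Omega\left(\ceil{M/p}\cdot \Det^{\ceil{p/M}\parallel}(f)\right)$.

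The main obstacle I anticipate is making the round-accounting in the lower bound fully rigorous, in particular the interplay between the two bottlenecks: (a) only $q = \ceil{p/M}$ inner blocks' worth of ``$f$-queries'' can be answered per round, and (b) answering any one $f$-query (resolving an inner block) costs $\Omega(M)$ queries, so that even with all $p$ slots dedicated to inner blocks one needs $\ceil{M/p}$ rounds per $f$-layer when $p \le M$. One must show the adversary for $g$ can always maintain an undetermined inner instance until $\Omega(M)$ queries hit it regardless of how $\mathcal A$ interleaves queries across rounds and across the $N$ blocks, and that the adversary for $f$ can be composed on top without conflict (the standard composition-of-adversaries bookkeeping, but with the parallel round structure layered in). The regimes $p \le M$ and $p > M$ may need to be handled slightly separately — when $p \ge M$ the ``$\ceil{M/p}$'' factor is $1$ and the statement reduces to $\Det^{p\parallel}(f\circ g) = \Theta(\Det^{\ceil{p/M}\parallel}(f))$, which is cleaner. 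I would also double-check the upper bound's constant-factor claim in the boundary case where $p$ does not divide $M$, using $\ceil{\cdot}$ throughout.
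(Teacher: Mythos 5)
Your proposal is correct and follows essentially the same route as the paper: the upper bound is the identical simulation (spend $\ceil{M/p}$ rounds of $p$-parallel queries to fully read out $\ceil{p/M}$ inner $g$-instances per layer of the optimal $\ceil{p/M}$-parallel algorithm for $f$), and the lower bound is the same adversary argument split into the regimes $p \le M$ (total-query counting via $\Det(f\circ g)=\Theta(\Det(f)\Det(g))$) and $p > M$ (at most $O(p/M)$ inner blocks can be resolved per round since each costs $\Omega(M)$ queries, reducing to the $\ceil{p/M}$-parallel lower bound for $f$). The round-accounting subtlety you flag is real, and the paper handles it with a comparable level of informality by granting the algorithm a strengthened oracle that reveals $p/M$ full $g$-inputs per round.
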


\begin{proof}
    We first describe a deterministic $p$-parallel algorithm. Let $\mathcal{A}_f$ be an optimal deterministic $\ceil{p/M}$-parallel query algorithm for $f$. We will make queries in rounds where in each round, we will choose $\ceil{p/M}$ inputs for $f$ that the algorithm $\mathcal{A}$ would have chosen to compute and make $\ceil{M/p} p$ non-adaptive queries to completely determine the inputs of $g$ corresponding to these $\ceil{p/M}$ inputs. Thus, each round takes $\ceil{M/p}$ $p$-parallel queries to compute these $\ceil{p/M}$ inputs to $f$. Due to the optimality of $\mathcal{A}$, the total number of rounds needed to compute $f \circ g$ is $\Det^{\ceil{p/M}}(f)$ and, therefore, $\ceil{M/p} \Det^{\ceil{p/M}}(f)$ queries suffices.  

    For the lower bound, we will use the adversarial argument against a deterministic $p$-parallel algorithm $\mathcal{A}$ computing $f \circ g$. We consider two cases: i) $p \leq M$ and, ii) $p > M$. The first case is trivial since any $p$-parallel deterministic query algorithm must make $\Omega\left(\frac{\Det(f) \cdot \Det(g)}{p}\right) = \Omega\left(\frac{M}{p} \Det(f)\right)$ $p$-parallel queries to compute $f \circ g$ against an adversarial oracle. For the second case, our oracle will allow $\mathcal{A}$ to make $2p$-parallel queries as long as for each such query, $\mathcal{A}$ uses $p$ of these non-adaptive queries to find all the inputs to $p/M$ of the $g$ functions of its choice and, thus, determining $p/M$ inputs to $f$. This oracle is at least as powerful as a regular $p$-parallel query oracle since it allows computing $p/M$ inputs to $f$ along with making $p$ additional non-adaptive queries to the inputs to $g$. Thus, it suffices to lower bound the number of queries to this oracle. Since $\mathcal{A}$ is required to (non-adaptively) compute $p/M$ inputs to $f$ for each $2p$-parallel query that it makes, the minimal number of total inputs to $f$ that $\mathcal{A}$ must compute is $p/M \cdot \Det^{p/M \parallel}$. This means that $\mathcal{A}$ must query at least $p/M \cdot \Det(g) \Det^{p/M \parallel} = \Omega(p \Det^{p/M \parallel}$ total bits of input of $f \circ g$. It follows that $\mathcal{A}$ would need $\Omega(\Det^{p/M \parallel}$ $2p$-parallel queries to compute $f \circ g$.   
\end{proof}

We state the implication of this theorem that we use in \Cref{sec:unbounded_genuine_separations}.

\begin{corollary}
    Let $f: \mathcal{D}_f \to \{0,1\}$ be a (partial) function, where \(\mathcal D_f \subseteq \{0,1\}^N\) and $p \in [N^3]$. Then, $\emph{\Det}^{p \parallel}(f \circ (\andor)_{N^2}) = \Theta\left(\ceil{\frac{N^2}{p}} \emph{\Det}^{\ceil{p/N^2}}(f) \right)$.
\end{corollary}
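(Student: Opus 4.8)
The plan is to derive this as an immediate instantiation of \Cref{thm:det_parallel_comp}. First I would set the inner function to be $g = (\andor)_{N^2}$, i.e.\ $\textsf{And}_N \circ \textsf{Or}_N$, which is a total Boolean function on $M = N^2$ input bits (it reads $N$ blocks of $N$ bits each), and verify that it meets the hypotheses of \Cref{thm:det_parallel_comp}. By the deterministic composition theorem (\Cref{thm:QueryComplexityComposition}), $\Det((\andor)_{N^2}) = \Det(\textsf{And}_N)\cdot\Det(\textsf{Or}_N) = N \cdot N = N^2 = \Theta(M)$, so the condition $\Det(g) = \Theta(M)$ holds; and the hypothesis $p \in [N^3]$ is precisely $p \in [MN]$ with $M = N^2$.

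With these checks in place, applying \Cref{thm:det_parallel_comp} verbatim yields
\[
\Det^{p \parallel}\big(f \circ (\andor)_{N^2}\big) = \Theta\!\left(\ceil{\tfrac{N^2}{p}}\,\Det^{\ceil{p/N^2}\,\parallel}(f)\right),
\]
which is exactly the claimed statement. I do not anticipate any obstacle beyond confirming the input size and the deterministic query complexity of $(\andor)_{N^2}$; all of the substantive work lies in \Cref{thm:det_parallel_comp} itself — the upper bound, which batches $\ceil{p/N^2}$ instances of $f$ and spends $\ceil{N^2/p}$ parallel rounds per batch reading out the corresponding $\andor$-inputs, and the matching adversary lower bound, which uses $\Det(\andor) = \Theta(N^2)$ to force any algorithm to fully query almost every relevant $\andor$ block before it can learn that block's value, so that it cannot avoid the $\Theta(\ceil{p/N^2}\cdot N^2 \cdot \Det^{\ceil{p/N^2}\parallel}(f))$ total query cost.
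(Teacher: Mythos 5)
Your proposal is correct and matches the paper exactly: the corollary is stated as an immediate instantiation of \Cref{thm:det_parallel_comp} with $g = (\andor)_{N^2}$ and $M = N^2$, and the only verification needed is that $\Det((\andor)_{N^2}) = N^2 = \Theta(M)$, which you do via \Cref{thm:QueryComplexityComposition}.
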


We believe that the above theorem could be generalized to prove the following conjecture.
\begin{conjecture}
    Let $f: \mathcal{D}_f \to \{0,1\}$ and $g: \mathcal{D}_g \to \{0,1\}$ be (partial) functions, where \(\mathcal D_f \subseteq \{0,1\}^N, \mathcal D_g \subseteq \{0,1\}^M\)  and $p \in [MN]$. Then, $\emph{\Det}^{p \parallel}(f \circ g) = \Theta\left(\min_{q \in [M]} \emph{\Det}^{q \parallel}(g) \emph{\Det}^{\ceil{p/q} \parallel}(f) \right)$.
\end{conjecture}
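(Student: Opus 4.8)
The plan is to prove the two directions separately: the upper bound by building a two-level parallel algorithm, and the lower bound by a combined adversary argument refining the proof of \Cref{thm:det_parallel_comp} (which is the case $\Det(g)=\Theta(M)$). Throughout one would use the elementary robustness fact that $\Det^{\lambda\parallel}(\cdot)$ changes by at most a constant factor when $\lambda$ changes by a constant factor (a $c\lambda$-parallel $T$-round algorithm is simulated by a $\lambda$-parallel $cT$-round algorithm by splitting each round into $c$ sub-rounds), so the ceilings and roundings in the statement only cost constants. It is also convenient to take the minimization over $q\in[\min(p,M)]$, since for $q>p$ the two-level algorithm below is not realizable with parallelism $p$; one checks that this does not change the minimum up to constants, because for $q$ in that range $\Det^{q\parallel}(g)\Det^{\lceil p/q\rceil\parallel}(f)=\Det^{q\parallel}(g)\Det(f)$ and, using $\Det^{\lambda\parallel}(f)=\Omega(\Det(f)/\lambda)$ together with $\Det^{q\parallel}(g)=\Omega(\Det(g)/q)$, this is already $\Omega$ of what is achieved at $q=\min(p,M)$.

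For the upper bound, fix $q\in[\min(p,M)]$, let $\mathcal A_g$ be an optimal deterministic $q$-parallel algorithm for $g$ (running in $\Det^{q\parallel}(g)$ rounds), set $r=\lfloor p/q\rfloor\ge 1$, and let $\mathcal A_f$ be an optimal deterministic $r$-parallel algorithm for $f$ (running in $\Det^{r\parallel}(f)$ rounds). Simulate $\mathcal A_f$ on $f\circ g$: whenever $\mathcal A_f$ issues a batch of at most $r$ queries to coordinates of $f$'s input, these coordinates are the outputs of $r$ copies of $g$, which are evaluated simultaneously by running $r$ independent copies of $\mathcal A_g$; this takes $\Det^{q\parallel}(g)$ rounds, each consisting of at most $rq\le p$ parallel queries to the input of $f\circ g$. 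Hence $\Det^{p\parallel}(f\circ g)=O\!\left(\Det^{q\parallel}(g)\cdot\Det^{\lceil p/q\rceil\parallel}(f)\right)$ (using the robustness fact to pass between $\lfloor p/q\rfloor$ and $\lceil p/q\rceil$), and minimizing over $q$ gives the upper bound.

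For the lower bound, fix a deterministic $p$-parallel algorithm $\mathcal A$ for $f\circ g$ using $T$ rounds; the goal is to exhibit a $q$ with $\Det^{q\parallel}(g)\cdot\Det^{\lceil p/q\rceil\parallel}(f)=O(T)$. Run $\mathcal A$ against a combined adversary: maintain an adversary $\mathcal S_f$ for $f$, and, for each copy $i\in[N]$ of $g$, an adversary $\mathcal S_g^{(i)}$ for $g$ that keeps a $0$-consistent and a $1$-consistent input alive as long as possible; answer each index query in a round via the relevant $\mathcal S_g^{(i)}$, and after each round declare a copy \emph{resolved} once the queries it has received pin down its value, at which point that value is handed to $\mathcal S_f$ (set up, as in the sequential composition lower bound, so that $\mathcal S_f$ can steer a freshly resolved copy to either bit while $f$ is still undetermined). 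Since $\mathcal A$ must eventually determine $f\circ g$, the sequence $R_1,\dots,R_T$ of resolved-copy sets must force $\mathcal S_f$; writing $r=\max_t|R_t|$ this gives $T=\Omega(\Det^{r\parallel}(f))$. Complementarily, one wants: a copy that receives at most $q$ queries in each round is not resolved before round $\Omega(\Det^{q\parallel}(g))$. Taking $q$ to be (roughly) the largest per-round query load placed on a single copy, so that $r\lesssim p/q$, and combining the two bounds, should yield $T=\Omega\!\left(\Det^{q\parallel}(g)\,\Det^{\lceil p/q\rceil\parallel}(f)\right)$ for this $q$.

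The main obstacle — and the reason this is only a conjecture — is the bookkeeping in that last step: controlling how the $p$ queries of a round may be spread across the $N$ copies, adaptively over $T$ rounds. In \Cref{thm:det_parallel_comp} the assumption $\Det(g)=\Theta(M)$ makes ``learning a copy'' essentially atomic, so at most $O(p/M)$ copies resolve per round, the ``generous oracle'' that hands out full copies at price $\Theta(\Det(g))$ each is justified, and the tradeoff collapses to one regime. In general a copy can be resolved in as few as $\Det^{q\parallel}(g)$ rounds at $q$ queries per round, and $\mathcal A$ may freely mix ``many cheap copies advanced slowly'' with ``few copies advanced fast,'' placing different copies in different regimes within a single round, so there is no single clean ``generous oracle.'' Making this rigorous seems to require one of: (i) a normal-form lemma showing any $\mathcal A$ can be converted, at $O(1)$ round overhead, into one that advances all currently-active copies at a common rate; (ii) an LP-duality argument over per-round query allocations that extracts the extremal $q$; or (iii) a potential-function argument tracking $\sum_i \phi(\text{progress on copy } i)$ whose per-round increase is at most $p$ and whose final value must reach a threshold governed by $\Det^{\lceil p/q\rceil\parallel}(f)$. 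A secondary and more routine point is ensuring each $\mathcal S_g^{(i)}$ can always commit a resolved copy to the bit that $\mathcal S_f$ prefers; this is handled exactly as in the sequential composition lower bound by the two-inputs-alive invariant mentioned above.
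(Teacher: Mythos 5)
The statement you are proving is not proved in the paper at all: it appears there verbatim as an open conjecture, stated immediately after \Cref{thm:det_parallel_comp} with the remark that the authors ``leave (dis)proving a general deterministic parallel composition theorem as an open problem.'' So there is no paper proof to match, and your proposal should be judged as an attempt at an open problem. Your upper-bound half is sound: for any $q \le \min(p,M)$ the two-level simulation (run $\mathcal{A}_f$ at parallelism $\lfloor p/q\rfloor$, expanding each of its rounds into $\Det^{q\parallel}(g)$ rounds of $\lfloor p/q\rfloor$ synchronized copies of $\mathcal{A}_g$) is correct, and the floor-versus-ceiling issue is genuinely absorbed by the constant-factor robustness of $\Det^{\lambda\parallel}$ in $\lambda$. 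However, your claim that restricting the minimization to $q \le \min(p,M)$ loses only constants is false, and this matters for the statement as written. Take $f$ to be a dictator (so $\Det^{\lambda\parallel}(f)=1$ for all $\lambda$) and $g=\textsc{Or}_M$: then $\Det^{p\parallel}(f\circ g)=\Theta(\lceil M/p\rceil)$, while the minimum over all $q\in[M]$ of $\Det^{q\parallel}(g)\cdot\Det^{\ceil{p/q}\parallel}(f)$ is attained at $q=M$ and equals $1$. Your justification (lower-bounding $\Det^{q\parallel}(g)$ by $\Det(g)/q$ and comparing to the $q=\min(p,M)$ term) asks the larger quantity $\Det^{p\parallel}(g)$ to be $O(\Det(g)/q)$ for $q\ge p$, which fails whenever $g$ parallelizes perfectly. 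So restricting the range of $q$ is not a harmless normalization but a necessary amendment to the conjecture; with the range $[M]$ the $O$-direction is simply false.

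The substantive gap is the one you yourself flag: the lower bound. Everything after ``Complementarily, one wants'' is a wish list, not an argument. The difficulty is exactly as you describe — a $p$-parallel algorithm may, within a single round and adaptively across rounds, split its budget so that different copies of $g$ sit in different parallelism regimes, and no single ``generous oracle'' (the device that makes the proof of \Cref{thm:det_parallel_comp} work when $\Det(g)=\Theta(M)$, since there resolving a copy is atomic and costs $\Theta(M)$ queries regardless of how they are scheduled) covers all allocations at once. Your candidate repairs (a normal-form lemma, an LP-duality argument over allocations, or a potential function) are reasonable directions, but none is carried out, and the two-inputs-alive invariant you invoke for the per-copy adversaries $\mathcal{S}_g^{(i)}$ is itself delicate for general partial $g$ (an adversary for $\Det^{q\parallel}(g)$ guarantees that \emph{some} value survives $\Det^{q\parallel}(g)-1$ rounds, not that both values survive, which is what handing the freshly resolved bit to $\mathcal{S}_f$ requires). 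As it stands the proposal establishes only the $O$-direction for the corrected range of $q$; the $\Omega$-direction — the entire content of the conjecture beyond \Cref{thm:det_parallel_comp} — remains open.
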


\subsection{A composition theorem for randomized parallel query complexity} \label{sec:rand_comp_thm}

Below, we provide a randomized parallel composition theorem that will help us establish our separations in \Cref{sec:unbounded_genuine_separations}. This result is analogous to and inspired by Theorem 5 of \cite{Aaronson16}, which shows the randomized composition theorem for sequential query complexity with $\andfunc$ or $\orfunc$ (or $\andor$) being the inner function. Even though we restrict the inner function to be $\ksum$ or $\blockksum$ (or $\bkk$), our proof technique works for any problem that the $\andfunc$ or $\orfunc$ problem could be directly reduced to. Intuitively, the idea is to construct an algorithm for computing $f$ using an algorithm that computes \(f \circ \bkk\) with lesser parallelism but not too many more queries by choosing the instance of $\bkk$ carefully. 

\begin{theorem} \label{thm:rand_comp}
Let \(f: \mathcal D \to \{0,1\}\) be a (partial) function where \(\mathcal D \subseteq \{0,1\}^N\), and let \(\ksum_M, \blockksum_M\) and \(\bkk_{M^2}\) be defined as in \Cref{prob:ksum}, \Cref{prob:blockKsum} and \Cref{prob:bkk}, taking $k=\log N$. Then, \(R^{p\parallel }(f \circ \ksum_M) = \tilde \Omega \left ( \left \lceil \frac M p \right \rceil R^{\lceil p/M \rceil \parallel }(f) \right )\) and \(R^{p\parallel }(f \circ \blockksum_M) = \tilde \Omega \left ( \left \lceil \frac M p \right \rceil R^{\lceil p/M \rceil \parallel }(f) \right )\). Therefore, \(R^{p\parallel }(f \circ \bkk_{M^2}) = \tilde \Omega \left ( \left \lceil \frac{M^2}{p} \right \rceil R^{\left \lceil p/M^2 \right \rceil \parallel }(f) \right )\).
\label{thm:bkkscrandparallellowerbound} 
\end{theorem}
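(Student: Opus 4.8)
The plan is to prove the lower bound for $f \circ \ksum_M$ and $f \circ \blockksum_M$ directly, and then deduce the $\bkk_{M^2}$ statement as an immediate corollary using $\bkk_{M^2} = \blockksum_{M^2} \circ \ksum_{M^2}$ (so that $f \circ \bkk_{M^2} = (f \circ \blockksum_{M^2}) \circ \ksum_{M^2}$, and one applies the $\ksum$-version twice, or the $\blockksum$-version once with the observation that $\blockksum$ itself contains an internal $\ksum$ whose reduction carries through). The heart of the argument is a reduction in the spirit of Theorem 5 of \cite{Aaronson16}: given a $p$-parallel randomized algorithm $\mathcal A$ for $f \circ \ksum_M$ using $T$ rounds, I will build a $\lceil p/M\rceil$-parallel randomized algorithm $\mathcal B$ for $f$ using $O(T \log N)$ rounds (after error reduction), which forces $T \cdot \polylog(N) = \Omega\big(\lceil M/p\rceil\, R^{\lceil p/M\rceil\parallel}(f)\big)$ after dividing through. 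Wait — more carefully: I want $T = \tilde\Omega\big(\lceil M/p\rceil R^{\lceil p/M\rceil\parallel}(f)\big)$, so $\mathcal B$ should make roughly $T / \lceil M/p\rceil$ parallel rounds of its own. So the reduction must be: each $p$-parallel query of $\mathcal A$ touches at most $p$ input bits, hence at most $\lceil p/M \rceil$ of the inner $\ksum_M$ blocks are ``fully'' touchable per round; grouping $\lceil M/p\rceil$ consecutive rounds of $\mathcal A$ lets $\mathcal B$ simulate one $\lceil p/M\rceil$-parallel query to $f$. This is where the block structure of $\ksum$ is used: $\mathcal B$ embeds, for each $f$-variable $i$, a fresh random $\ksum_M$ instance whose value equals the $i$-th input bit of $f$ that $\mathcal B$ wishes to query, planted so that no partial set of $O(\lceil M/p \rceil \cdot p) = O(M)$ queries within a block-group reveals the planted answer with non-negligible bias.

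The key steps, in order, are: (1) Set up the hard-instance embedding for $\ksum_M$: show that for each bit value $b \in \{0,1\}$ there is a distribution $\mathcal D_b$ over $\ksum_M$ instances such that $\ksum_M \equiv b$ on its support, and such that any set of $o(M)$ queried block-values is statistically close to uniform — i.e. reveals negligible information about $b$. This is the randomized hardness of $\ksum$ (more precisely of $\orfunc$/$\andfunc$ reducible into it) used in \cite{Aaronson16,Belovs13}. (2) Given $\mathcal A$, construct $\mathcal B$: $\mathcal B$ maintains a consistent random $\ksum_M$ instance for each $f$-coordinate it has committed to, planted with value equal to the true $f$-input bit (which $\mathcal B$ learns by making a $\lceil p/M\rceil$-parallel query to its own oracle the first time a group of rounds of $\mathcal A$ demands a ``new'' coordinate), and answers $\mathcal A$'s bit-queries by lazily sampling these instances. (3) Round accounting: argue that in $\lceil M/p \rceil$ consecutive rounds $\mathcal A$ can fully resolve at most $O(\lceil p/M\rceil)$ inner instances, so each batch of $\lceil M/p\rceil$ rounds of $\mathcal A$ costs $\mathcal B$ one $\lceil p/M \rceil$-parallel round; hence $\mathcal B$ uses $O\big(T/\lceil M/p\rceil + 1\big)$ rounds, and since $\mathcal B$ computes $f$ with bounded error we get $T = \Omega\big(\lceil M/p\rceil \cdot R^{\lceil p/M\rceil\parallel}(f)\big)$, with a $\polylog N$ loss absorbed into $\tilde\Omega$ from error-amplifying $\mathcal A$ so the planted-instance simulation is faithful. (4) Repeat verbatim with $\blockksum_M$ in place of $\ksum_M$ — the only change is that the planted instances live on balanced blocks and the ``all other blocks have more $1$s than $0$s'' condition is satisfied by construction, which does not affect the indistinguishability argument. (5) Compose: $\bkk_{M^2} = \blockksum_{M^2}\circ \ksum_{M^2}$ has a hard distribution obtained by composing the two, and either apply the $\blockksum$ result with the observation that its internal $\ksum$ layer is where the planting happens (giving parallelism reduction $p \mapsto \lceil p/M^2\rceil$ since $\bkk_{M^2}$ has $M^2$ input bits per inner instance), or apply the two single-layer results in sequence.

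I expect the main obstacle to be step (3), the round-accounting together with making the lazy-sampling simulation of $\mathcal B$ genuinely \emph{online and consistent}: a $p$-parallel query of $\mathcal A$ may spread its $p$ bit-queries across many different inner instances (not just $\lceil p/M\rceil$ of them fully), so I must be careful that ``partially probed'' instances never accumulate enough queries across the whole run to leak their planted value — this is exactly why the grouping into $\lceil M/p\rceil$-round batches and the $o(M)$-query indistinguishability bound of step (1) must be quantitatively matched, and why the reduction only yields a bound tight up to $\polylog$ factors. A secondary subtlety is that $R^{\lceil p/M\rceil \parallel}(f)$ must be interpreted with the convention $\lceil p/M\rceil \ge 1$, and the edge cases $p \ge M$ (resp. $p \ge M^2$) where the claimed bound degenerates to the trivial $R(f)/\lceil p/M\rceil$-type statement should be checked separately but are straightforward.
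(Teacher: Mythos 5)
Your high-level architecture matches the paper's — simulate a $p$-parallel algorithm $\mathcal A$ for $f\circ\ksum_M$ by an algorithm $\mathcal B$ for $f$ that plants hidden inner instances, only queries a coordinate of $f$'s input when $\mathcal A$ has ``earned'' it, and absorbs the ceiling overhead in the round accounting; your treatment of $\blockksum$ as a verbatim repeat and of $\bkk_{M^2}$ by composing the two layers also matches. However, the two steps you yourself flag as the crux are exactly where your version does not go through as written. First, the paper does not use a ``statistically close to uniform until $o(M)$ queries'' hard distribution for $\ksum_M$. For each coordinate $x_i$ it builds an instance that is entirely fixed and known \emph{except} for a single subblock at a uniformly random position $z$, whose content is $\overline{x_i}$; the instance evaluates to $x_i$, and every query outside position $z$ is answered independently of $x_i$. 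This makes the lazy simulation trivial ($\mathcal B$ queries $x_i$ exactly when $\mathcal A$ hits position $z$ of block $i$) and reduces the entire quantitative content to an unstructured-search statement (\Cref{lem:Expectation_Y}): over the random placements, $tp$ queries hit at most $O(tp/M)$ of the $N$ hidden positions except with probability $1/10$. Your more general indistinguishability lemma would additionally require a coupling argument to make the lazy simulation consistent, which the single-planted-position construction sidesteps.

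Second, your round accounting has a genuine gap. The claim that ``in $\lceil M/p\rceil$ consecutive rounds $\mathcal A$ can fully resolve at most $O(\lceil p/M\rceil)$ inner instances'' is false for adaptive algorithms: an instance can accumulate its queries spread across many batches, and many instances can cross whatever resolution threshold you set within the \emph{same} batch, so $\mathcal B$'s commitments need not be evenly distributed over your batches. Relatedly, your trigger for when $\mathcal B$ queries $x_i$ (``the first time a group of rounds demands a new coordinate'') is either too aggressive — if ``demands'' means touches the instance at all, $\mathcal B$ makes up to $tp$ queries — or left undefined. The paper avoids batching entirely: if $r_i$ planted positions are hit in round $i$ of $\mathcal A$, then $\mathcal B$ spends $\lceil r_i M/p\rceil$ rounds of $\lceil p/M\rceil$-parallel queries, and
\begin{equation*}
\sum_{i=1}^{t}\left\lceil \frac{r_i}{p/M}\right\rceil \;\le\; t+\frac{M}{p}\sum_{i=1}^{t} r_i \;=\; t+\frac{M}{p}\cdot\tilde O\!\left(\frac{tp}{M}\right) \;=\;\tilde O(t),
\end{equation*}
with the case $p\le M$ handled separately by running $\mathcal B$ fully sequentially on its $\tilde O(tp/M)$ total queries. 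To repair your proof, replace the statistical-closeness lemma and the per-batch claim with a single-planted-position construction (or prove that the \emph{total} number of threshold crossings is $\tilde O(tp/M)$ with constant probability and then apply the per-round ceiling absorption above).
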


\begin{proof} We give the proof for \(\ksum_M\) here; the proof for \(\blockksum_M\) proceeds exactly in the same way. Consider a randomized algorithm \(\mathcal A\) that computes \(f \circ \ksum_M\) on string \(Y\) in \(t\) many $p$-parallel queries to \(Y\), where \(\vert Y \vert = MN\). Then we claim that there exists a randomized algorithm \(\mathcal B\) which can compute \(f\) on any string \(X\) using \(\tilde O(tp/M)\) total queries to \(X\), where \(\vert X \vert = N\). To show this claim, consider an algorithm \(\mathcal A\) that can compute \(f \circ \ksum\) in \(t\) many \(p\)-parallel queries. 

On input \(X:=x_1x_2...x_N\), \(\mathcal B\) proceeds as follows:
\begin{enumerate}
\item Construct an input \(Y\), a bit string of length \(MN\) in the following way. Divide \(Y\) into \(N\) blocks \(Y_i\) each of size \(M\). Further divide each \(Y_i\) into sub blocks of size \(10 k \log M\) each and interpret each sub block as a number in the range \(1\) to \(\Omega(M^k)\). We represent the \(j^{th}\) sub block of \(Y_i\) as \(Y_{ij}\). For each \(i\) from \(1\) to \(N\), do the following:
\begin{enumerate}
    \item For each \(j \in \{1, 2, ..., k-1\}\), set \(Y_{ij}\) as \(0\).
    \item Select \(z\) from \(\left \{k, k+1, ..., \frac M {10 k \log M} \right \}\) uniformly at random. Set \(Y_{iz}\) as \(\overline{x_i}\)
    \item For all \(j \notin \{1, 2, ..., k-1\}\) and \(j \neq z\), set \(Y_{ij}\) as \(1\) 
\end{enumerate}
\item Run algorithm \(\mathcal A\) on \(Y\) and return the output.

\end{enumerate}

Notice that \(\ksum_M(Y_i) = x_i\) which implies \(f \circ \ksum_M(Y) = f(X)\). For every input \(\mathcal B\) prepares, \(\mathcal A\) correctly computes \(f \circ \ksum_M(Y)\) with probability \(\geq 2/3\), hence \(\mathcal B\) also correctly computes \(f(X)\) with probability \(2/3\) where the probability is taken over both \(\mathcal A\)'s randomness as well as the randomness in preparing the inputs.

Since algorithm \(\mathcal A\) makes \(t\) many \(p\)-parallel queries to \(Y\), the total number of sub blocks \(Y_{ij}\) queried is at most \(tp\). Let \(Y'\) be the string made from \(Y\) ignoring the first \(k-1\) sub blocks in every block. Now, we invoke \Cref{lem:Expectation_Y} for string \(Y'\) setting the variables \(n = N\), \(m = \frac {M}{10k \log M} - (k-1)\), \(l = tp\) and the \(*\)'s in the position of the \(x_i\)'s. It follows that with probability \(\geq 9/10\) over the inputs that \(\mathcal B\) prepares, \(\tilde O(tp/M)\) total queries are made to the \(x_i\)'s. Since for every input \(\mathcal B\) prepares, \(\mathcal A\) computes \(f \circ \ksum_M\) correctly with probability \(2/3\), the probability that \(\mathcal B\) computes \(f\) correctly, while also making \(\tilde O(tp/M)\) total queries to \(X\) is \(\frac 2 3 \times \frac 9 {10} = \frac 3 5\). This is constant away from 1/2 and this proves the claim.

It remains to show that this algorithm can be simulated in low depth with the requisite amount of parallelism. We have two cases, (1) \(p \leq M\) and (2) \(p > M\).
\paragraph{Case 1}
When \(p \leq M\), the statement becomes \(\Rand^{p\parallel }(f \circ \ksum_M) = \tilde \Omega \left ( \frac {M \Rand(f)} {p}\right )\). We know algorithm \(\mathcal B\) makes \(\tilde O (tp/M)\) total queries to \(X\) to compute \(f\), so we can just do them all sequentially. Thus, 
\begin{align*}
\frac{tp}{M} = \tilde \Omega \left (\Rand(f) \right ) \implies t = \tilde \Omega \left ( \frac {M\Rand(f)}{p} \right )
\end{align*}
which implies the statement.

\paragraph{Case 2} When \(p > M\), the statement becomes \(\Rand^{p\parallel }(f\circ \ksum_M) = \tilde \Omega \left ( \Rand^{p/M\parallel }(f) \right )\). Here, we simulate algorithm \(\mathcal B\) using \(t\) many \(p/M\)-parallel queries. Let \(r_1, r_2, ..., r_t\) be the number of queries in every layer of \(\mathcal B\) made to an element of \(X\). Observe that to simulate layer \(i\) using \(p/M\) parallelism, we require \(\left \lceil \frac {r_i}{p/M} \right \rceil\) many \(p/M\)-parallel queries. Thus, the total number of \(p/M\)-parallel queries we require is:
\begin{align*}
\sum_{i=1}^t \left \lceil \frac{r_i}{p/M} \right \rceil \leq \sum_{i = 1}^t \left (1 + \frac {r_i} {p/M} \right ) \leq t + \frac {\sum_i r_i} {p/M} \leq t + \frac {\tilde O (tp/M)} {p/M} = \tilde O(t).
\end{align*}
Thus, we must have \(t = \tilde \Omega \left (\Rand^{p/M\parallel }(f) \right ) \) which implies the statement.

\end{proof}

Below is a helper lemma we require to prove \Cref{thm:rand_comp}.

\begin{lemma}
\label{lem:Expectation_Y}
    Let $F_i \in [M]^m$ be any fixed string for $i \in [n]$, and construct $Y_i \in ([M] \cup \{*\})^m$ by replacing a random character in $F_i$ with a $*$. Further define $Y = Y_1\parallel ...\parallel Y_n$. Any randomized algorithm (even ones that depend on $F_i$) that makes $l$ queries to $Y$ will query at most $20l/m$ many $*$'s with probability at least $9/10$.
\end{lemma}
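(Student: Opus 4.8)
The plan is a first-moment (Markov) argument on the random variable $Z$ = number of queries that land on a $*$. Note first that what the values in $[M]$ or the strings $F_i$ actually are is irrelevant: all that matters is that the hidden $*$ of block $i$ sits at a uniformly random one of the $m$ coordinates of that block, independently across the $n$ blocks. By averaging over the algorithm's internal coins it suffices to prove the statement for deterministic algorithms $\mathcal A$, and since a repeated query to $Y$ returns the same answer and conveys no new information we may assume $\mathcal A$ never queries the same coordinate twice; then $Z = \sum_{i=1}^{n} Z_i$ with $Z_i = \mathbb{I}[\text{the } * \text{ of block } i \text{ is ever queried}]$. The target is $\mathbb E[Z] \le 2l/m$, since then Markov's inequality gives $\Pr[Z \ge 20l/m] \le \tfrac{2l/m}{20l/m} = \tfrac{1}{10}$.

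To compute $\mathbb E[Z_i]$ I would fix the $*$-positions $\sigma_{-i}$ of all blocks except $i$ (using independence of the placements, $\sigma_i$ remains uniform given this conditioning) and look at the unique execution of $\mathcal A$ in which every query into block $i$ returns a non-$*$ value — the ``all-miss path''. Let $r_1,\dots,r_{K_i}$ be the distinct positions it queries in block $i$ (so $K_i \le m$). Tracing the decision tree: if $\sigma_i = r_t$, the true run agrees with the all-miss path up to and including the $t$-th block-$i$ query, which is a hit; and if $\sigma_i \notin \{r_1,\dots,r_{K_i}\}$ the true run is exactly the all-miss path and the $*$ is never queried. Hence $Z_i = 1 \iff \sigma_i \in \{r_1,\dots,r_{K_i}\}$, so $\Pr[Z_i = 1 \mid \sigma_{-i}] = K_i/m$ and $\mathbb E[Z] = \tfrac{1}{m}\sum_i \mathbb E[K_i]$.

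It remains to bound $\sum_i \mathbb E[K_i] \le 2l$. I expect this to be the main obstacle: the naive per-query bound $\Pr[\text{query } j \text{ hits} \mid \text{history}] \le \tfrac{1}{m - t_j}$ (with $t_j$ the number of earlier queries to the same block) summed over $j$ is too lossy, because $\mathcal A$ can pile many queries onto one block and the harmonic sum $\sum_t \tfrac{1}{m-t}$ blows up. The fix is to compare $K_i$ against $\tau_i$, the number of queries $\mathcal A$ actually makes into block $i$ before discovering its $*$ (set $\tau_i$ equal to the total block-$i$ query count $k_i$ if the $*$ is never discovered). From the same trace, for uniform $\sigma_i$ one gets $\Pr[\tau_i = t \mid \sigma_{-i}] = 1/m$ for $t < K_i$ and $\Pr[\tau_i = K_i \mid \sigma_{-i}] = (m - K_i + 1)/m$, so a one-line computation yields $\mathbb E[\tau_i \mid \sigma_{-i}] = \tfrac{K_i(2m - K_i + 1)}{2m} \ge \tfrac{K_i}{2}$ using $K_i \le m$.

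Finally, $\tau_i \le k_i$ always and $\sum_i k_i \le l$ since $\mathcal A$ makes $l$ queries in total, so $\sum_i \mathbb E[K_i] \le 2\sum_i \mathbb E[\tau_i] \le 2\,\mathbb E\big[\sum_i k_i\big] \le 2l$, giving $\mathbb E[Z] \le 2l/m$ and hence the claim by Markov. The only delicate point throughout is making the ``all-miss path'' tracing rigorous given that $\mathcal A$ is adaptive and interleaves queries across blocks arbitrarily; once that is set up carefully, the remaining steps are routine.
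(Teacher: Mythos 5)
Your proposal is correct and follows essentially the same route as the paper's proof: both reduce to bounding the expected number of $*$'s found by $2l/m$ via a per-block argument showing that the number of positions the algorithm would probe in a block (your $K_i$, the paper's caterpillar-tree height) is at most twice the expected number of queries it actually spends there, followed by Markov's inequality. Your "all-miss path" formulation makes the adaptivity/interleaving step that the paper handles somewhat informally fully rigorous, but the underlying argument is the same.
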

\begin{proof}
    Let $\mathcal{A}$ be some randomized algorithm making $l$ queries to $Y$, and consider a randomly chosen $Y_i$. WLOG we can assume that $\mathcal{A}$ does not query a $Y_i$ after it has queried on a $*$--- any algorithm that continued to query past this point could be given answers from $F_i$ without more queries to $Y_i$. If $\mathcal{A}$ makes $l$ queries total then in expectation it makes $l/n$ queries to $Y_i$.
    
    We can write the strategy of $\mathcal{A}$ on $Y_i$ as a distribution $\mathcal{D}_T$ over decision trees (where the randomness now ranges over the randomness of $\mathcal{A}$ as well as the $*$ placement in other $Y_{j \neq i}$). Further, every node $v$ in decision tree $T$ from $\mathcal{D}_T$ has only two possibilities: the queried value is a $*$, or it is from $F_i$. $\mathcal{A}$ stops querying after finding a $*$, so the tree is a caterpillar tree of height $h$, and over the randomness of placing the $*$ the expected queries will be $(h+1)/2$. We therefore have \begin{align*}
        \mathbb{E}_{T \sim \mathcal{D}_T, Y}[\text{height}(T)] &= 2\mathbb{E}_{T \sim \mathcal{D}_T, Y}[\text{queries}_T(Y_i)] -1 \\
        &\leq \frac{2l}{n}
    \end{align*}
    A fixed tree $T$ of height $h$ finds a $*$ with probability at most $h/m$ over random inputs, as the problem is an unstructured search problem. Hence the probability of finding the star in block $Y_i$ is at most \begin{align*}
        \Pr_{T \sim \mathcal{D}_T, Y}[\text{star found in }Y_i] &\leq \frac{\mathbb{E}_{T \sim \mathcal{D}_T, Y}[\text{height}(T)]}{m} \\
        &\leq \frac{2l}{nm}
    \end{align*}
    By linearity of expectation, the expected total number of $1$'s found over all blocks is at most $2l/m$. A straightforward application of Markov's inequality now proves the claim.
\end{proof}

\section{Parallel query complexity in the cheatsheet framework}
\label{sec:AppendixCheatsheetFramework}

In this section, we present lower bounds for query complexity measures in the cheatsheet framework.

\subsection{Deterministic parallel lower bound} \label{sec:cheatsheet_det_parallel}

In this section, we will show that having access to a cheatsheet for a function $f$ is not too helpful for a $p$-parallel deterministic algorithm for computing $f$. Intuitively, the size of the cheatsheet is big enough that searching through it will already be too expensive for any such algorithm. Our result is a straightforward adaptation of Lemma 21 of \cite{Aaronson16} but we repeat it here for completeness. 

\begin{theorem} \label{thm:cheatsheetdetparallelimpact}
    Let $f :\mathcal{D}\rightarrow \{0,1\}$ be a partial function, where $\mathcal{D}\subset [M]^N$, and let $f_{\cheatsheet}$ be the cheat sheet version of $f$ with $c=10 \log N$ independent copies of $f$. Then $\emph{\Det}^{p\parallel }(f_{\cheatsheet}) \geq \emph{\Det}^{p\parallel }(f)$.
\end{theorem}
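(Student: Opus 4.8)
The plan is to show that any $p$-parallel deterministic algorithm computing the cheat sheet function $f_{\cheatsheet}$ can be converted into a $p$-parallel deterministic algorithm for $f$ itself that makes no more rounds. The idea mirrors Lemma 21 of \cite{Aaronson16}: we embed an arbitrary instance $x$ of $f$ into the cheat sheet input by placing $x$ in one of the $c$ address blocks, filling the remaining $c-1$ address blocks with fixed hard-coded instances of $f$ of our choosing, and filling the entire data region with hard-coded ``correct'' cheat-sheet contents that we can compute offline. Since $f_{\cheatsheet}$ has output $1$ iff all $c$ address blocks are in $\mathcal{D}$ and the indexed cheat sheet $y^{(\ell)}$ certifies this, by choosing the other $c-1$ blocks to be valid instances and writing down all $2^c$ cheat sheets correctly (one genuinely certifying, the rest arbitrary, but note $c = 10\log N$ so $2^c = \poly(N)$ is only polynomially many), we force the value of $f_{\cheatsheet}$ on the embedded input to depend only on $f(x)$.

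First I would fix $x \in \mathcal{D}$, pick (arbitrary but fixed) instances $x^{(2)}, \dots, x^{(c)} \in \mathcal{D}$, and set $x^{(1)} = x$. For each of the $2^c$ strings $v \in \{0,1\}^c$, define $y^{(v)}$: if $v$ is one of the two strings consistent with the known values $f(x^{(2)}), \dots, f(x^{(c)})$ (i.e. agreeing on coordinates $2, \dots, c$), write a genuine certificate of condition~\ref{part:cscondition1} (together with the claim $\ell_i = f(x^{(i)})$) — which we can compute without any queries since the other blocks are hard-coded and we simply include whichever bit of $x$ would be needed; for all other $v$, put arbitrary data. Concretely we want: when $f(x) = 0$, the string $\ell = (0, f(x^{(2)}), \dots, f(x^{(c)}))$ points to a valid certificate, and when $f(x) = 1$, $\ell = (1, f(x^{(2)}), \dots, f(x^{(c)}))$ points to a valid certificate; both can be prepared offline. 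Then $f_{\cheatsheet}$ evaluated on this combined input equals $1$ iff $x \in \mathcal{D}$ and the certificate is valid, which (by construction, and since $x \in \mathcal{D}$) holds iff... — here one must be slightly careful: we actually want $f_{\cheatsheet}$ to output something that reveals $f(x)$. The cleanest fix is to arrange the data so that $f_{\cheatsheet}$ outputs $1$ exactly when $f(x) = 1$: make the cheat sheet pointed to by $\ell$ with $\ell_1 = 1$ a genuine certificate, and the one with $\ell_1 = 0$ point to a ``certificate'' that is deliberately inconsistent (e.g. claims a bit of $x$ has the wrong value), so it fails to certify. Then $f_{\cheatsheet} = 1 \iff f(x) = 1$.

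Next I would run the assumed $p$-parallel deterministic algorithm $\mathcal{A}$ for $f_{\cheatsheet}$ on this embedded input. Every query $\mathcal{A}$ makes falls into one of three categories: a bit of $x^{(1)} = x$ (simulate by a real query to our $f$-oracle), a bit of some $x^{(i)}$ for $i \geq 2$ (answer from the hard-coded value, no oracle query needed), or a bit of the data region $y^{(v)}$ (answer from the hard-coded value, no oracle query needed). Crucially, a $p$-parallel query batch of $\mathcal{A}$ translates into a $p$-parallel query batch for $f$ (indeed at most $p$ actual queries to $x$, since the non-$x$ queries are free), so the round count is preserved exactly. When $\mathcal{A}$ halts with output $b$, our algorithm for $f$ outputs $b$, which is correct since $f_{\cheatsheet} = f(x)$ on this input. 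Hence $\Det^{p\parallel}(f) \leq \Det^{p\parallel}(f_{\cheatsheet})$, which is the claim.

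**The main obstacle** is bookkeeping the certificate structure correctly — ensuring that the offline-computed cheat sheets genuinely satisfy (or fail) conditions \ref{part:cscondition1}–\ref{part:cscondition2} of \Cref{def:cheatsheet} in exactly the way that makes $f_{\cheatsheet}$'s output a deterministic function of $f(x)$, while using only hard-coded data and no queries beyond those to $x$. The subtlety is that the certificate for condition~\ref{part:cscondition1} must certify $x^{(1)} = x \in \mathcal{D}$, and a certificate of $x \in \mathcal{D}$ generically reads some bits of $x$ — but since we are building the \emph{data} offline and are free to choose it adversarially, we simply never need a genuine certificate of $x$: we can make the "relevant" cheat sheet assert a specific (fixed) certificate pattern and make $f_{\cheatsheet}$ reject whenever $x$ disagrees with the asserted pattern. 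The only case that matters is $x \in \mathcal{D}$ (the algorithm must still be correct on $\mathcal{D}$), so pinning down the two relevant cheat sheets — one for each possible value of $f(x)$ — suffices. I expect the rest (simulating the parallel oracle, counting rounds) to be entirely routine.
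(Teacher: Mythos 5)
Your reduction has a genuine gap at exactly the point you flag as "the main obstacle," and the fix you propose there does not work. For the simulation to yield a correct algorithm for $f$, you need the embedded input $z(x)$ to satisfy $f_{\cheatsheet}(z(x)) = f(x)$ (or a fixed bijection thereof) for \emph{every} $x \in \mathcal{D}$, with every bit of $z(x)$ being either a constant or a single bit of $x$. But by condition \ref{part:cscondition2} of \Cref{def:cheatsheet}, the cheat sheet $y^{(\ell)}$ must certify that $x^{(1)} = x \in \mathcal{D}$ and that $\ell_1 = f(x)$; such a certificate is a set of positions of $x$ together with claimed values, and it necessarily varies with $x$. A single hard-coded certificate pattern is consistent with only a tiny subset of $f^{-1}(1)$. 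Your patch --- "make $f_{\cheatsheet}$ reject whenever $x$ disagrees with the asserted pattern" --- means that on a typical $x$ with $f(x)=1$ the embedded input evaluates to $0$, so your algorithm outputs $0$ and is simply wrong. In other words, the function $x \mapsto f_{\cheatsheet}(z(x))$ you actually compute is "$f(x)=1$ \emph{and} $x$ matches the fixed pattern," which is not $f$ and is far easier than $f$. This is not a bookkeeping issue but the structural reason cheat-sheet lower bounds cannot be proven by projection-style embeddings: the data region is required to depend on the address region in a non-local way.

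The paper proves the statement by an adversary argument instead (and this is also what Lemma 21 of Aaronson et al.\ does, contrary to your description of it). The adversary answers every query landing in any of the $c$ address blocks according to a $p$-parallel deterministic adversary for $f$, and answers every query landing in the data region with $0$. After fewer than $\Det^{p\parallel}(f)$ rounds, no copy of $f$ is determined, and since $2^c = \Omega(N^{10})$ while only $p \cdot \Det^{p\parallel}(f) \le$ (input length) positions have been touched, almost all cheat sheets are entirely unqueried; hence the partial input can still be completed both to a $0$-instance and to a $1$-instance (in the latter case by routing $\ell$ to an untouched cheat sheet and filling it with a genuine certificate). If you want to keep a reduction-flavored proof you would have to let the simulator answer data queries lazily and argue consistency at the end, at which point you have reconstructed the adversary argument; I would recommend switching to that framing outright.
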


\begin{proof}
    We will use the adversarial argument against a $p$-parallel deterministic algorithm $\mathcal{A}$ making computing $f_\cheatsheet$. Suppose that $\mathcal{A}$ makes fewer than $\Det^{p\parallel }(f)$ queries. For all the indices of a $p$-parallel query made by $\mathcal{A}$ that belongs to an input to one of the $c$ copies of $f$, we will answer them as an adversary against a $p$-parallel deterministic algorithm for $f$ would, and for all the indices of a $p$-parallel query made by $\mathcal{A}$ that belongs to a cheatsheet, we will answer a $0$. Since $p D^{p \parallel}(f) \leq N$ and the number of cheatsheets are $2^c = \Omega(N^{10})$, $\mathcal{A}$ must not have queried a single bit in most of the cheatsheets. Furthermore, since $\mathcal{A}$ made less than $\mathcal{A}$ $\Det^{p\parallel}(f)$ $p$-parallel queries in total, it must not have made at least $\Det^{p\parallel}(f)$ $p$-parallel queries to any input to any $f$. Thus, it must not know the output to any $f$. Therefore, the partially revealed instance for $f_\cheatsheet$ could be completed to form a $0$-instance and also completed to form a $1$-instance. It follows that $\mathcal{A}$ must make at least $\Det^{p\parallel}(f)$ queries. 
\end{proof}

\subsection{Cheatsheet randomized parallel lower bound} \label{sec:cheatsheet_rand_parallel}

Similar to the previous section, we prove in this section that having access to a cheatsheet for a function $f$ is not too helpful for a $p$-parallel randomized algorithm for computing $f$. 

Our proof of this result (\Cref{thm:cheatsheetrandparallelimpact}) is a straightforward adaptation of Lemma 6 of \cite{Aaronson16} but we repeat it here for completeness. We begin with the following observation.

\begin{observation} \label{obs:cheatsheet}
    Let $f :\mathcal{D}\rightarrow \{0,1\}$ be a partial function, where $\mathcal{D}\subset [M]^n$, and let $f_{\cheatsheet}$ be the cheat sheet version of $f$ with $c=10 \log N$ copies of $f$. Let $\mathcal{A}$ be a $p$-parallel bounded-error randomized algorithm for $f_{\cheatsheet}$. Let $x^1, x^2, \cdots, x^c \in \mathcal{D}$ and $z$ be an input to $f_{\cheatsheet}$ comprised of inputs $x^1, x^2, \cdots, x^c \in \mathcal{D}$ to $f$ with an all-zero cheat sheet\footnote{For any input $z$ to $f_{\cheatsheet}$, we assume, without loss of generality, that if the cheat sheet corresponding to $f^{\times c}(z)$ is all-zero, then $f_{\cheatsheet}(z) = 0$. As argued in \cite{Aaronson16}, this can be ensured by requiring all valid cheat sheets to begin with a $1$.}. Let $\ell$ index the cheat sheet associated with the string $f(x^1)$, $f(x^2), \cdots f(x^c)$. Then, $\mathcal{A}$ must query at least 1 bit in the $\ell^{th}$ cheat sheet with probability at least $1/3$.  
\end{observation}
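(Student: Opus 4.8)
The plan is to run a standard ``information‑theoretic'' style argument against the randomized algorithm: if $\mathcal{A}$ rarely queried the distinguished cheat sheet $y^{(\ell)}$, then its behaviour would be (almost) independent of the contents of $y^{(\ell)}$, and this would let us flip $\mathcal{A}$'s answer by swapping in a valid certificate, contradicting bounded error. Concretely, fix $x^{(1)},\dots,x^{(c)}\in\mathcal{D}$ and let $z$ be the input with these address blocks and an \emph{all‑zero} data section, so that $f_{\cheatsheet}(z)=0$ by the convention in the footnote. Let $\ell\in\{0,1\}^c$ with $\ell_i=f(x^{(i)})$, and let $z'$ be obtained from $z$ by overwriting only the block $y^{(\ell)}$ with a valid certificate for the inputs $x^{(1)},\dots,x^{(c)}$ (certifying both membership in $\mathcal{D}$ and the values $f(x^{(i)})$). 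By \Cref{def:cheatsheet}, $f_{\cheatsheet}(z')=1$. Note $z$ and $z'$ differ \emph{only} inside $y^{(\ell)}$.

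First I would condition on $\mathcal{A}$'s internal randomness; for each fixed random string $r$, $\mathcal{A}$ becomes a deterministic $p$‑parallel decision procedure. The key step is the coupling observation: run $\mathcal{A}_r$ on $z$ and on $z'$ in lockstep. Since the two inputs agree everywhere outside $y^{(\ell)}$, the two runs proceed identically — producing the same sequence of $p$‑parallel query batches and the same output — \emph{up until the first round in which $\mathcal{A}_r$ queries some index inside $y^{(\ell)}$}. Hence, on the event $E_r$ that $\mathcal{A}_r$ queries no bit of $y^{(\ell)}$ at all, $\mathcal{A}_r(z)=\mathcal{A}_r(z')$. Let $q$ be the probability (over $r$) that $\mathcal{A}$ queries at least one bit of $y^{(\ell)}$ when run on $z$ — this is exactly the quantity we want to lower bound. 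On the complementary event (probability $1-q$) the outputs on $z$ and $z'$ coincide, so
\begin{align*}
\Pr[\mathcal{A}(z)=0] + \Pr[\mathcal{A}(z')=1] \;\le\; 2q + \big(1-q\big)\cdot 1 \;=\; 1+q,
\end{align*}
where the $2q$ bounds the contribution of the ``bad'' event in both experiments and the $(1-q)$ term uses that on the good event the two indicator events $\{\mathcal{A}(z)=0\}$ and $\{\mathcal{A}(z')=1\}$ are disjoint (outputs are equal and cannot be both $0$ and $1$). Since $\mathcal{A}$ is a bounded‑error algorithm for $f_{\cheatsheet}$ and $f_{\cheatsheet}(z)=0$, $f_{\cheatsheet}(z')=1$, the left side is at least $\tfrac23+\tfrac23=\tfrac43$. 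Therefore $q\ge \tfrac13$, which is the claim. (One subtlety to double‑check: $z$ and $z'$ must both be legitimate inputs on which the error guarantee applies — $z$ is, since it has an all‑zero sheet and a valid $f$‑tuple; $z'$ is, since a genuine certificate exists because each $x^{(i)}\in\mathcal{D}$.)

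I don't expect a serious obstacle here — this is essentially Lemma 6 of \cite{Aaronson16} transported verbatim to the $p$‑parallel setting, and the ``$p$‑parallel'' aspect changes nothing because the lockstep/coupling argument only cares about \emph{which indices} are touched, not how many per round. The one place to be careful is the constant: I should make sure the bounded‑error constant ($2/3$) is what the paper assumes, adjust the ``$1/3$'' accordingly if needed, and state explicitly that the event being bounded is ``$\mathcal{A}$ queries $\ge 1$ bit of the $\ell^{\text{th}}$ cheat sheet,'' matching the phrasing of the observation. A secondary point is to confirm that the all‑zero‑sheet convention (footnote) really does force $f_{\cheatsheet}(z)=0$ regardless of the $x^{(i)}$'s, so that $z$ is a valid $0$‑input; this is exactly the role of requiring valid cheat sheets to begin with a $1$.
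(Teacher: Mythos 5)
Your proposal is correct and follows essentially the same route as the paper: construct the companion input $z'$ that differs from $z$ only in the $\ell^{\text{th}}$ cheat sheet (valid certificate vs.\ all zeros), note $f_{\cheatsheet}(z)=0$ and $f_{\cheatsheet}(z')=1$, and conclude via indistinguishability that the algorithm must touch that sheet with probability at least $1/3$. The paper states this more tersely ("the desired claim follows"), while you spell out the lockstep coupling and the $1+q\ge 4/3$ arithmetic, which is a faithful elaboration rather than a different argument.
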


\begin{proof}
    Let $y$ be an input to $f_{\cheatsheet}$ comprised of inputs $x_1, x^2, \cdots, x^c \in \mathcal{D}$ to $f$ and the same cheat sheet as $z$ except for the $\ell^{th}$ cheat sheet being valid. That is, $f_{\cheatsheet}(y) = 1$. Since $\mathcal{A}$ computes $f_{\cheatsheet}$ with probability at least $2/3$, $f_{\cheatsheet}(y) \neq f_{\cheatsheet}(z)$ and $y$ and $z$ only differ at the $\ell^{th}$ cheat sheet, our desired claim follows.
\end{proof}

\begin{theorem} \label{thm:cheatsheetrandparallelimpact}
    Let $f :\mathcal{D}\rightarrow \{0,1\}$ be a partial function, where $\mathcal{D}\subset [M]^n$, and let $f_{\cheatsheet}$ be the cheat sheet version of $f$ with $c=10 \log N$ copies of $f$. Then $\emph{\Rand}^{p\parallel }(f_{\cheatsheet})=\Omega(\emph{\Rand}^{p\parallel }(f)/c^2) = \tilde{\Omega}(\emph{\Rand}^{p\parallel }(f))$.
\end{theorem}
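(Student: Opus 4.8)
## Proof plan for \Cref{thm:cheatsheetrandparallelimpact}

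The plan is to adapt the sequential argument of Aaronson et al.\ (\cite{Aaronson16}, Lemma 6) to the $p$-parallel setting, using \Cref{obs:cheatsheet} as the key structural fact. Let $\mathcal{A}$ be a $p$-parallel randomized algorithm for $f_{\cheatsheet}$ making $T$ many $p$-parallel queries. I want to extract from $\mathcal{A}$ a $p$-parallel randomized algorithm $\mathcal{B}$ for $f$ making $O(Tc^2)$ many $p$-parallel queries, which gives $T = \Omega(\Rand^{p\parallel}(f)/c^2)$, and since $c = 10\log N = \polylog$ this is $\tilde\Omega(\Rand^{p\parallel}(f))$.

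First I would set up the reduction. Given an input $x$ to $f$, the algorithm $\mathcal{B}$ picks a uniformly random index $j^\star \in [c]$ and a uniformly random bit string $b \in \{0,1\}^c$, and builds an input $z$ to $f_{\cheatsheet}$ as follows: in the $j^\star$-th address slot it places $x$; in every other address slot $i \neq j^\star$ it places a \emph{fixed, known} input $w^i \in \mathcal{D}$ with $f(w^i) = b_i$ (these exist and require no queries to specify — $\mathcal{B}$ hardcodes them); and it fills every cheat sheet with all zeros. By \Cref{obs:cheatsheet} (applied with the $x^i$ taken to be $x$ in slot $j^\star$ and the $w^i$ elsewhere), running $\mathcal{A}$ on this $z$ forces $\mathcal{A}$ to query at least one bit of the cheat sheet $y^{(\ell)}$ indexed by $\ell = f(w^1)\cdots f(x)\cdots f(w^c)$ with probability $\geq 1/3$. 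Crucially, the bit $f(x)$ is exactly the $j^\star$-th coordinate of $\ell$, so if $\mathcal{B}$ learns which cheat sheet $\mathcal{A}$ first queries into, reading off coordinate $j^\star$ of its index reveals $f(x)$.

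Next comes the simulation step, which is the main obstacle and where the parallel setting differs from the sequential one. Since $\mathcal{A}$ makes $T$ $p$-parallel queries and $pT \le N \ll 2^c$ is dwarfed by the number $2^c$ of cheat sheets, $\mathcal{A}$ touches almost no cheat sheets; but to run $\mathcal{A}$, $\mathcal{B}$ must answer queries into the address slots $w^i$ (free, hardcoded) and into slot $j^\star$ (these cost real queries to $x$, passed through one-for-one within the same parallel layer), and it must answer cheat-sheet queries — and here $\mathcal{B}$ cannot simply answer $0$ forever, because once $\mathcal{A}$ queries a bit of the genuine cheat sheet $y^{(\ell)}$, $\mathcal{B}$ must stop: that is the signal. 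I would have $\mathcal{B}$ halt the simulation of $\mathcal{A}$ the first time $\mathcal{A}$ issues a query touching \emph{any} cheat-sheet location whose block index $m$ has the property that, reading off coordinate $j^\star$, outputting $m_{j^\star}$ is consistent — more carefully, $\mathcal{B}$ runs $\mathcal{A}$ answering all cheat-sheet bits as $0$, and simultaneously watches the stream of cheat-sheet \emph{indices} queried; with probability $\ge 1/3$ one of the first touched indices equals $\ell$, and on that event $\mathcal{B}$ halts and outputs coordinate $j^\star$ of that index. The randomization over $j^\star$ and $b$ is what makes the fixed inputs $w^i$ indistinguishable to $\mathcal{A}$ from a ``real'' hard instance, so $\mathcal{A}$'s behavior on $z$ is governed by \Cref{obs:cheatsheet}; and amplification over $O(c^2)$ independent repetitions (paying a factor $c$ to boost the $1/3$ success probability of hitting $\ell$ and another factor $c$ because $\mathcal{A}$ may query up to $c$ different cheat-sheet indices before $\ell$, forcing a majority/consistency check to pin down the right one) brings the overall error below $1/3$. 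Each repetition is $T$ parallel layers, so $\mathcal{B}$ uses $O(Tc^2)$ $p$-parallel queries to $x$, giving $T = \Omega(\Rand^{p\parallel}(f)/c^2) = \tilde\Omega(\Rand^{p\parallel}(f))$ as claimed. The delicate point to verify carefully is that answering cheat-sheet bits as $0$ does not change $\mathcal{A}$'s query distribution before it first touches $y^{(\ell)}$ — this holds because the all-zero-cheat-sheet input $z$ is a genuine legal input to $f_{\cheatsheet}$ (with value $0$), so $\mathcal{A}$'s conditional behavior up to that first hit is exactly its behavior on $z$, and \Cref{obs:cheatsheet} bounds the hitting probability from below.
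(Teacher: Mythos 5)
Your strategy --- embed $x$ in a random slot $j^\star$, pad the other slots with known inputs $w^i$, run $\mathcal{A}$ with all-zero cheat sheets, and decode $f(x)$ from which cheat sheet $\mathcal{A}$ touches --- is genuinely different from the paper's proof, and it has a gap at the decoding step. \Cref{obs:cheatsheet} only guarantees that $\mathcal{A}$ queries the \emph{correct} cheat sheet $y^{(\ell)}$ with probability at least $1/3$; it says nothing about which \emph{other} cheat sheets $\mathcal{A}$ touches. Since $\mathcal{B}$ knows $b_i=f(w^i)$ for all $i\neq j^\star$, it knows $\ell$ up to the single bit $f(x)$, so decoding reduces to: did $\mathcal{A}$ query $y^{(\ell)}$ or its sibling $y^{(\ell')}$ (the index differing only in coordinate $j^\star$)? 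But a legitimate $p$-parallel algorithm may query \emph{both} siblings with probability $1$ --- for instance, compute $c-1$ of the address values and then, precisely to save a round, query one bit in each of the two remaining candidate cheat sheets within the same parallel layer. Against such an $\mathcal{A}$ the rule ``read off coordinate $j^\star$ of the touched index'' yields no information. Your fallback, a ``majority/consistency check'' over the touched indices, is both unspecified and based on a false premise: $\mathcal{A}$ can touch up to $pT$ distinct cheat sheets, not $c$. Relatedly, repeating $O(c^2)$ times does not amplify here, because $1/3$ is a lower bound on the probability of hitting the right cheat sheet, not an upper bound on hitting wrong ones, so the repetitions have no reliable signal to vote on.

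The paper sidesteps decoding entirely with a hybrid argument (adapting Lemma 6 of \cite{Aaronson16}): it takes hard distributions $\mathcal{D}^0,\mathcal{D}^1$ for $f$, forms the $2^c$ product distributions $\mathcal{D}^j$, and lets $q^j_i$ be the probability that $\mathcal{A}$ touches cheat sheet $i$ on an input drawn from $\mathcal{D}^j$ with all-zero data. A counting argument produces an index $k$ with $q^{0^c}_k\le 1/6$, while \Cref{obs:cheatsheet} gives $q^k_k\ge 1/3$, so along a length-$c$ chain of neighbouring indices from $0^c$ to $k$ some adjacent pair satisfies $q^{k_i}_k-q^{k_{i-1}}_k\ge 1/6c$; the event ``cheat sheet $k$ was touched'' then distinguishes $\mathcal{D}^{k_{i-1}}$ from $\mathcal{D}^{k_i}$, hence $\mathcal{D}^0$ from $\mathcal{D}^1$ on one coordinate, with advantage $1/6c$, which forces the $\Omega(\Rand^{p\parallel}(f)/c^2)$ bound. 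To rescue your direct reduction you would have to prove the missing claim that any algorithm which locates the sibling pair $\{\ell,\ell'\}$ without querying $y^{(\ell)}$ noticeably more often than $y^{(\ell')}$ must already have spent $\Omega(\Rand^{p\parallel}(f))$ rounds; the hybrid route avoids ever needing this.
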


\begin{proof}
    Let $\mathcal{D}^0$ and $\mathcal{D}^1$ be $0$-input and $1$-input distributions for $f$ that is hard to distinguish for a $p$-parallel bounded-error randomized algorithm. In particular, the $p$-parallel randomized query complexity of distinguishing them with probability at least $1/2 + 1/12c$ is $\Omega(R^{p\parallel }(f)/c^2)$. Let $\mathcal{A}$ be any $p$-parallel bounded-error randomized algorithm for $f_{\cheatsheet}$. 

    For each input $z$ to $f_{\cheatsheet}$ with an all-zero cheat sheet and for each index $i \in [2^c]$ of a cheat sheet, let $p^z_i$ be the probability that $\mathcal{A}$ queries the $i^{th}$ cheat sheet on input $z$. Let $\ell_z$ index the cheat sheet associated with the string $f^{\times c}(z)$. From \Cref{obs:cheatsheet}, we know that $p^z_{\ell_z} \geq 1/3$. Furthermore, $\sum_{i \in [2^c]} p^z_i \leq p \Rand^{p\parallel }(f)$.

    For each $j \in [2^c]$, let $\mathcal{D}^j = \mathcal{D}^{j_1} \times \mathcal{D}^{j_2} \times \cdots \times \mathcal{D}^{j_c}$. Let $q^j_i = \mathbb{E}_{z \sim \mathcal{D}^j} p^z_i$ for all $i,j \in [2^c]$. By the arguments made above, we can infer that $q^j_j \geq 1/3$ and $\sum_{i \in [2^c]} q^j_i \leq pR^{p\parallel }(f)$. This means that for all but $2^{c/2} = N^5$ many $i \in [2^c]$, $q^j_i \leq 1/(p\Rand^{p\parallel }(f))^5$. In particular, for each $j \in [2^c]$, there is an $i \in [2^c]$ such that $q^j_i \leq 1/6$.

    Let $k \in [2^c]$ be an index such that $q^{0^c}_k \leq 1/6$. We argued above that $q^k_k \geq 1/3$. Let $0^c = k_0, k_1, \cdots, k_c = k$ be a sequence of indices such that $|k_{i-1} \oplus k_i| \leq 1$ for all $i \in [c]$. Therefore, there must exist an $i \in [c]$ with $q^{k_i}_k - q^{k_{i-1}}_k \geq 1/6c$ and $q^{k_{i-1}}_k \leq 1/3$. Without loss of generality, we can assume that $|k_i| -|k_{i-1}| = 1$.

    Using $\mathcal{A}$, we now give an algorithm to distinguish $\mathcal{D}^0$ and $\mathcal{D}^1$ with probability at least $1/2 + 1/12c$. Given an input from $\mathcal{D}_0$ or $\mathcal{D}_1$, our algorithm samples an input from $\mathcal{D}^{k_{i-1}}$ or $\mathcal{D}^{k_i}$; this can be done since $|k_{i-1} \oplus k_i| \leq 1$. Therefore, deciding if the sampled input is from $\mathcal{D}^{k_{i-1}}$ or from $\mathcal{D}^{k_i}$ lets us decide whether the given input is from $\mathcal{D}_0$ or from $\mathcal{D}_1$.

    We run the algorithm $\mathcal{A}$ on the sampled input with an all-zero cheat sheet, and if it queries a bit in the $k^{th}$ cheat sheet, we output $1$ and otherwise, we output $1$ with probability $p = \max\{0, (1-a-b)/(2-a-b)\}$ where $a = q^{k_i}_k$ and $b = q^{k_{i-1}}_k$. If $p = 0$, then since $b \leq 1/3$, we have $a \geq 2/3$ so our algorithm succeeds with probability at least $2/3$. Otherwise, our algorithm succeeds with probability at least $1/2 + (a-b)/2 \geq 1/2 + 1/12c$: if the given input is sampled from $\mathcal{D}^1$, then the success probability is $a + (1-a)\frac{1-a-b}{2-a-b} \geq (1+a-b)/2$; otherwise, it is $(1-b) \frac{1}{2-a-b} \geq (1+a-b)/2$. But this means that our algorithm must make $\Omega(\Rand^{p\parallel }(f)/c^2)$ $p$-parallel randomized queries. Hence, $\mathcal{A}$ needs to make $\Omega(\Rand^{p\parallel }(f)/c^2)$ $p$-parallel randomized queries, which implies the desired result. 
\end{proof}

\section{Pointer Chasing bounds}
\label{sec:AppendixPointer-chasing-proofs}

In this section, we present a proof for each of the parts of \Cref{thm:pointer_chasing}, beginning with a deterministic algorithm.

\paragraph{Deterministic parallel upper bound}
\begin{lemma}
    \label{lemma:pointer-chasing-det-ub}
    $\emph{\Det}^{p\parallel}(\textsc{Pointer Chasing}) \leq \min(k, N/p)$.
\end{lemma}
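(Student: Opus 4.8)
The plan is to prove the two upper bounds $\Det^{p\parallel}(\textsc{Pointer Chasing}) \leq k$ and $\Det^{p\parallel}(\textsc{Pointer Chasing}) \leq N/p$ separately, and conclude by taking the minimum. Recall that the input is an oracle for $X:\{0,1\}^n \to \{0,1\}^n$, represented as a string of length $nN$ (so block $i$ occupies $n$ consecutive bits encoding the pointer $X(i)$), and the goal is to output the last bit of $X^k(0^n)$.

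For the bound $\Det^{p\parallel}(\textsc{Pointer Chasing}) \leq k$, I would describe the naive sequential pointer-following algorithm and observe it uses at most $k$ rounds even when $p=1$: start at the block $v_0 = 0^n$; in round $t$, read the $n$ bits encoding the pointer stored in block $v_{t-1}$ (this is a single query, or $n$ queries if we count bit queries, but in the standard convention a block read is folded into $O(1)$ queries / one round since $p \geq 1$ — here I would follow the convention used elsewhere in the paper that reading one length-$n$ block costs one query up to the $\polylog$ factors absorbed by $\tilde O$, matching the $\tilde O$ in \Cref{thm:pointer_chasing}), obtaining $v_t = X(v_{t-1})$; after round $k$ we have $v_k = X^k(0^n)$ and we output its last bit. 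This uses exactly $k$ rounds regardless of $p$, giving $\Det^{p\parallel}(\textsc{Pointer Chasing}) \leq k$.

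For the bound $\Det^{p\parallel}(\textsc{Pointer Chasing}) \leq N/p$, the idea is that with parallelism $p$ we can read the entire pointer table in $\lceil N/p \rceil$ rounds and then compute the answer with no further queries: in round $t$ (for $t = 1, \dots, \lceil N/p \rceil$) query the blocks $X((t-1)p+1), \dots, X(tp)$ (identifying $[N]$ with $\{0,1\}^n$), using $p$ parallel block-reads per round; after $\lceil N/p \rceil \leq N/p$ rounds the whole function $X$ is known, so $X^k(0^n)$ and its last bit can be computed offline. Combining the two algorithms, the algorithm achieving the minimum of the two round counts proves $\Det^{p\parallel}(\textsc{Pointer Chasing}) \leq \min(k, N/p)$, and since $\Rand$ and $\Quant$ are bounded above by $\Det$ this also yields the corresponding bounds claimed in part (i) of \Cref{thm:pointer_chasing}.

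I do not expect any real obstacle here — both algorithms are elementary. The only point requiring mild care is the bookkeeping of how a length-$n$ block read is charged against the query/round count and whether logarithmic factors appear; this is exactly why \Cref{thm:pointer_chasing} states the bound with $\tilde O$, and in this standalone lemma (stated without a tilde) one should note that a ``query'' is taken to return a full block (an $n$-bit value), which is the natural oracle model for a function $X:\{0,1\}^n\to\{0,1\}^n$, so no overhead is incurred. With that convention fixed, the proof is just the description of the two algorithms above.
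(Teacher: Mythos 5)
Your proposal is correct and matches the paper's proof essentially exactly: the paper also splits into the two cases (follow the chain for $k$ rounds, or read the whole table in $N/p$ parallel rounds) and takes the minimum, implicitly adopting the same block-query convention you spell out. No further comment is needed.
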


\begin{proof}
    We will consider two cases.
    \begin{enumerate}
        \item $k \leq N/p$. For each $i \in [k]$, we can iteratively compute $X^i$ using $k$ sequential queries to the oracle for the input $X$. Thus, we will know the last bit of $X^k(0^n)$.     
        \item $k > N/p$. Using $N/p$ $p$-parallel queries, we can compute the whole input $X$ (and in particular, can compute the last bit of $X^k(0^n)$). 
    \end{enumerate}
\end{proof}
\vspace{-1pc}

\paragraph{Deterministic parallel lower bound}
\begin{lemma}
    \label{lemma:pointer-chasing-det-lb}
    $\emph{\Det}^{p\parallel}(\textsc{Pointer Chasing}) = \Omega(\min(k, N/p))$.
\end{lemma}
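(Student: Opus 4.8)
The plan is to run an adversary argument against any deterministic $p$-parallel algorithm $\mathcal{A}$ that claims to compute $\textsc{pointer}_{N,k}$ in fewer than $\frac{1}{2}\min(k, N/p)$ rounds. The key structural fact is that, before the algorithm has "followed the chain" all the way from $0^n$, the value $X^k(0^n)$ is not determined: the adversary retains the freedom to route the chain through a block whose pointer has not yet been queried, and then set the last bit of that final block to either value. So the heart of the argument is to show that the chain $0^n \to X(0^n) \to X^2(0^n) \to \cdots$ can only advance by $O(1)$ links per round, until either $k$ links have been traversed or the algorithm has queried essentially the whole input.

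Concretely, I would maintain a partial assignment to the blocks of $X$ (each block $i$ holding the pointer $X(i) \in \{0,1\}^n$), answering each $p$-parallel query as follows. The adversary keeps track of the current "known prefix" of the chain — the longest initial segment $0^n = v_0, v_1, \ldots, v_t$ such that $X(v_{j-1}) = v_j$ has been committed. When the algorithm submits a round of $p$ queries, the adversary must commit the pointers of (at most) $p$ newly-queried blocks; it does so by choosing, for each such block, a fresh target that is *not* one of the blocks currently reachable as the continuation of the known chain and not among the (few) blocks already on the chain — this is possible as long as the total number of committed blocks is $< N - O(1)$, i.e. as long as fewer than roughly $N/p$ rounds have elapsed. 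Under this strategy, the known prefix length $t$ can increase by at most $1$ per round: the only way the chain extends is if the block $v_t$ gets its pointer committed this round, and by construction the adversary sets $X(v_t)$ to a still-uncommitted block, so the prefix cannot jump further. After $r$ rounds we thus have $t \le r$, and if also $rp < N - O(1)$ there remains an uncommitted block reachable beyond $v_t$; setting the rest of the chain arbitrarily and flipping the last bit of $X^k(0^n)$ gives two consistent inputs with different answers. Hence $r \ge \min(k, N/p) - O(1)$, which is the claimed $\Omega(\min(k, N/p))$ bound. The two regimes of the min come out naturally: the "$k$" term is the chain-length obstruction (you need at least $k$ rounds to traverse $k$ links when they're revealed one at a time), and the "$N/p$" term is the exhaustion obstruction (with $p$ queries per round you cannot read the whole $N$-block input in fewer than $N/p$ rounds, and reading less than all of it leaves the answer undetermined).

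The main obstacle — and the place to be careful — is verifying that the adversary's greedy choice of fresh targets never gets "stuck": at each round it must pick $p$ new pointer values avoiding a forbidden set consisting of the handful of chain vertices plus the continuation vertex $v_{t+1}$ (which must stay uncommitted). Since the forbidden set has size $O(t) = O(r)$ and the number of blocks ever touched is at most $rp + O(r)$, as long as $r \le c \cdot N/p$ for a suitable small constant $c$ there are plenty of unused blocks available, so the construction goes through; and once $r$ exceeds that threshold we are already past the $N/p$ bound and nothing needs to be proven. A secondary point to check is the boundary case $k > N/p$, where one must confirm that even reading all but one block of $X$ leaves $X^k(0^n)$ (and in particular its last bit) genuinely free — this follows because the one unread block can be placed on the chain and its last bit toggled. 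Assembling these pieces yields $\Det^{p\parallel}(\textsc{pointer}_{N,k}) = \Omega(\min(k, N/p))$, matching the upper bound of \Cref{lemma:pointer-chasing-det-ub}.
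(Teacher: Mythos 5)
Your proposal is correct and follows essentially the same route as the paper's proof: an adversary argument that answers each round of $p$ parallel queries so that the revealed chain prefix grows by at most one link per round (by always routing the next pointer to a block outside the current query set and the already-committed blocks), leaving both a $0$- and a $1$-completion consistent until roughly $\min(k, N/p)$ rounds have elapsed. The only cosmetic difference is bookkeeping — the paper fixes all non-chain queried values to $0$ and only carefully chooses the chain pointers, whereas you assign fresh targets to every queried block — but the invariant and the conclusion are identical.
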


\begin{proof}
    Let $\mathcal A$ be a deterministic algorithm for pointer chasing. We will construct two instances $f, g$ of pointer chasing such that $f$ is a YES instance and $g$ is a NO instance, but $\mathcal A$ does not distinguish $f$ and $g$ supposing that it makes fewer than $\min(k/10, N/10p)$ queries.

    We will iteratively construct both instances, maintaining the invariant that the algorithm cannot ``jump ahead'' in the chain. In either case, the chain will be a path with no cycles. To answer a round of $p$-parallel queries, suppose that $\mathcal A$ has queried input function $h$ at index/value pairs $(0, h(0)), (h(0), h(h(0))), ..., (h^{l-1}(0), h^{l}(0))$, as well as some other set $S$ of index/value pairs that is of size smaller than $\min(k/10, N/10p)$. Suppose by induction that $S$ contains only pairs of the form $(x, 0)$ (this is clearly true for an algorithm making no queries, and we will maintain it here). Let $B$ denote the set of positions which are being queried in this round. We will then pick the value $y = h^l(0)$ such that $y$ is not any of the indices in $B$, nor in $S$, nor does it create a chain in the path (this is always possible so long as fewer than $\min(k/10, N/10p) < N/2$ queries have been made). We thus maintain the invariant that after $q$ rounds of queries, a chain of length at most $q$ has been revealed. Therefore, if $\mathcal A$ makes $k-1$ rounds of queries (and supposing that it makes fewer than $N/10p$ total $p$-parallel queries), we can construct $f$ in the manner outlined aboved, and finally choose $f^k(0)$ so the last bit is $1$. Similarly for $g$, but choosing $g^k(0)$ such that the last bit is $0$. Any input/value pairs not fixed by this procedure we can simply set the value to $0$.
\end{proof}

\paragraph{Randomized parallel lower bound}
\begin{lemma}
    \label{lemma:pointer-chasing-randomized-lb}
    $\emph{\Rand}^{p\parallel}(\textsc{Pointer Chasing}) = \Omega(\min(k, N/pk))$.
\end{lemma}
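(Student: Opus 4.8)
The plan is to prove the randomized $p$-parallel lower bound $\Rand^{p\parallel}(\textsc{pointer}_{N,k}) = \Omega(\min(k, N/pk))$ via Yao's principle, reducing to an average-case lower bound against deterministic algorithms under a carefully chosen hard distribution on inputs. The natural hard distribution is to pick $X:\{0,1\}^n \to \{0,1\}^n$ so that the chain $0^n, X(0^n), X^2(0^n), \dots, X^k(0^n)$ is a uniformly random non-repeating walk of length $k$ (equivalently, a uniformly random injective partial function along the chain, with all off-chain values filled in uniformly at random, or set to $0^n$ — either works), and then set the answer bit from the last bit of $X^k(0^n)$, which is a uniform coin independent of everything a short algorithm sees. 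The point is that to output the correct bit with advantage, an algorithm must essentially learn the $k$-th pointer, and the only way to do that without ``guessing'' a block in the chain is to follow the chain link by link.

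First I would set up the information-theoretic core: define, for a transcript of queries, the event that the algorithm has ``discovered'' the chain up to position $j$, meaning it has queried all of $0^n, X(0^n), \dots, X^{j-1}(0^n)$. I would argue that in each of the $q$ parallel rounds, the number of newly discovered chain positions increases by at most $1$ in expectation up to a small slack — unless the algorithm gets lucky and one of its $p$ queries in that round happens to land on an as-yet-undiscovered chain block. Since at any point the undiscovered chain blocks are (conditioned on the transcript so far) essentially uniformly distributed among the $\Omega(N)$ unqueried blocks, each of the $p$ queries in a round hits an undiscovered chain block with probability $O(k/N)$, so the expected number of such ``jumps'' across all $q$ rounds is $O(qpk/N)$. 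Formally this is a union bound / martingale-type argument: if the algorithm makes $q < c\min(k, N/pk)$ parallel rounds, then with probability $\geq 2/3$ it never jumps ahead, and without jumps it discovers at most $q < k$ chain positions, hence never learns $X^k(0^n)$, hence its output is independent of the answer bit and it succeeds with probability $1/2$. Averaging over the $1/2$-$1/2$ choice of the final bit, the overall success probability is at most $1/2 + 1/6 < 2/3$, contradicting correctness. Rescaling constants gives the bound.

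The main technical obstacle — and the step I would be most careful about — is making precise the claim that ``the undiscovered chain blocks look uniformly random among unqueried blocks, conditioned on the transcript.'' This is where one must be careful about adaptivity: the algorithm chooses round $i+1$'s $p$ queries as a function of the answers to rounds $1,\dots,i$. The clean way is to use a deferred-decision / adversary-style coupling: reveal the chain lazily, so that when the algorithm queries a block $b$ that is not yet determined, we flip a biased coin to decide whether $b$ is the ``next'' undiscovered chain block (probability proportional to $1/(\#\text{undetermined blocks})$, and only if $b$ could consistently be placed there) or an off-chain block, and we answer accordingly; off-chain answers are uniform (or $0^n$). One checks this lazy process produces exactly the target distribution, and then the ``no jump in round $i+1$'' event has probability $\geq 1 - p \cdot \frac{k}{N - (\text{queries so far})} \geq 1 - O(pk/N)$ given any history with few queries. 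Summing the failure probabilities over the $q$ rounds and also bounding the ``discover more than one per round without a jump'' contingency (which actually cannot happen — following the chain deterministically reveals at most one new link per round since each new link's identity is only revealed by querying the previous one), gives total failure probability $O(qpk/N) + (\text{the } q \geq k \text{ branch})$, which is $o(1)$ for $q = c\min(k, N/pk)$ with $c$ small. I would also remark that the two cases $k \le \sqrt{N/p}$ and $k > \sqrt{N/p}$ correspond to which of $k$ and $N/pk$ is the binding term, and that the bound is consistent with (and weaker than) the deterministic bound $\Omega(\min(k,N/p))$ from Lemma~\ref{lemma:pointer-chasing-det-lb}, as it must be.
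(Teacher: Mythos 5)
Your proposal is correct and follows essentially the same route as the paper: Yao's principle over a random-chain distribution, a ``discovered prefix of the chain'' invariant maintained via lazy revelation of pointers, a per-round probability $O(pk/N)$ of jumping ahead, and a union bound over the $q<k$ rounds yielding $q=\Omega(\min(k,N/pk))$. Your explicit deferred-decision coupling is a slightly more careful treatment of the conditioning issue that the paper's proof handles informally, but the argument is the same.
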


\begin{proof}
    From Yao's lemma, it suffices to consider deterministic algorithms on some distribution of inputs. We will consider inputs that are random permutations, i.e. let $f:\{0,1\}^n \rightarrow \{0,1\}^n$ be a random bijective function. To reason about a deterministic algorithm we will consider it's query transcript, i.e. a list $(x, f(x))$ of all points that it has queried. After $q$ rounds of $p$-parallel queries, this transcript is of size $pq$. 
    
    It suffices to consider $k-1$-round $p$-parallel algorithms; we will show that such algorithms must make at least $\Omega(N/k)$ total queries to succeed with constant probability. Suppose that the chain $(0, f(0)), ..., (f^{l-1}(0), f^l(0))$ of length $l$ appears in the transcript, and further that no $f^j(0)$ for $l < j \leq k$ appears (e.g. the algorithm has learned the first $l$ elements of the chain, but no more). Because $f^k(0)$ determines the final output, clearly such a deterministic algorithm cannot decide $f$ with probability exceeding $1/2$. Note that this invariant form of the transcript trivially holds with $l=0$ for an algorithm making no queries. In a round of $p$-parallel queries, we will bound the probability that the algorithm transitions from a transcript satisfying the invariant, to one that does not.
    
    A round of $p$-parallel queries selects $p$ positions to query, out of $N$ total. The remaining $k-l$ elements in the chain do not occur in the list of $q$ many queried elements, and so we can consider placing them randomly among the remaining $N-q$, which satisfies $N-q \geq N/2$ by assumption. We therefore have probability $2p(k-l) / N \leq 2pk / N$ probability that the remaining elements in this chain do not intersect with the query set. If there is no intersection, then the chain length will increase by exactly $1$ in this round of queries. Hence, in $k-1$ rounds we will break the invariant with probability $(k-1) * 2pk /N = O(pk^2 / N)$ by the union bound. For this to be a constant we need $p = \Omega(N/k^2)$, which gives $\Omega(N/k)$ total queries.
\end{proof}

\paragraph{Quantum lower bound}

\begin{lemma}
    \label{lemma:pointer-chasing-quant-lb}
    $\emph{\Quant}(\textsc{Pointer Chasing}) = \Omega(k)$.
\end{lemma}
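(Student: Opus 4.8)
The plan is to prove the quantum lower bound $\Quant(\textsc{pointer}_{N,k}) = \Omega(k)$ by a hybrid (adversary-style) argument, reducing to the observation that after $t$ quantum queries an algorithm can only have learned about a ``chain'' of roughly $t$ pointers deep. First I would set up a distribution on inputs analogous to the randomized lower bound: let $X$ be the input function $\{0,1\}^n \to \{0,1\}^n$ chosen so that $0^n, X(0^n), X^2(0^n), \dots, X^k(0^n)$ are distinct (for instance, a uniformly random injective chain of length $k$, with all remaining pointers set arbitrarily, say to $0^n$), and let the last bit of $X^k(0^n)$ be a uniform bit that determines the answer. Since $X^k(0^n)$ itself is never in the range of any queried chain element unless the algorithm has already ``reached'' position $k-1$, flipping the last bit of $X^k(0^n)$ yields a YES/NO pair of inputs that differ only in one position.

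The core step is to track, for a $t$-query quantum algorithm with state $\ket{\psi_t}$, the amount of amplitude on basis states querying positions ``far along'' the chain. Concretely I would define, for each $j \in \{0,1,\dots,k\}$, the position $c_j = X^j(0^n)$ (the $j$-th chain element), and let $\Pi_{\geq j}$ project onto query registers pointing at $c_j, c_{j+1}, \dots, c_k$. One shows by induction on $t$ that the component of the algorithm's state that ``knows'' $c_j$ grows slowly: before querying any position in $\{c_0,\dots,c_{j-1}\}$, the location $c_j$ is information-theoretically hidden (it is a fresh uniformly random value among the unqueried positions), so the algorithm has no way to place amplitude on $c_j$ beyond what a random guess would give. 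Making this rigorous is cleanest via a hybrid argument: define hybrid inputs $X^{(j)}$ that agree with $X$ on the first $j$ chain links but have the remaining chain ``rerouted'' (or replaced by the all-$0^n$ map) so that the answer becomes independent of the original randomness; then bound $\norm{\ket{\psi_t(X)} - \ket{\psi_t(X^{(j)})}}$ by the total query magnitude the algorithm places on positions $c_j, \dots, c_k$ up to time $t$, using the standard hybrid/BBBV-type inequality $\norm{\ket{\psi_T(X)} - \ket{\psi_T(X')}} \leq 2\sum_{t} \norm{\Pi_{X \neq X'}\ket{\psi_t}}$.

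The counting heart of the argument is then: if the algorithm makes only $T = o(k)$ queries, the total magnitude it can place on the ``undiscovered tail'' of the chain is small, because each query that is not on an already-revealed chain element hits $c_j$ (for the smallest undiscovered $j$) with probability at most $1/(N - T)$ over the random placement, so in expectation over $X$ the chain can only advance by $o(1)$ per query, hence by $o(k)$ total — so with high probability the algorithm never queries $c_{k-1}$ and therefore $\ket{\psi_T(X)}$ is $o(1)$-close to a state independent of the final answer bit. By the standard argument that distinguishing two states requires them to be far apart, this forces $T = \Omega(k)$. I would carry the steps out in this order: (1) fix the hard distribution and the hybrid inputs $X^{(0)}, \dots, X^{(k)}$; (2) state the hybrid-distance inequality and reduce success probability to a sum of query-magnitude terms; (3) bound each term using the randomness in the chain placement and a union bound over the $T$ query rounds; (4) conclude.

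The main obstacle I expect is step (3): carefully controlling the query magnitude on the ``next'' chain element when the algorithm's amplitude is spread adversarially across all of $\{0,1\}^n$. In the purely classical setting one simply argues the next element is uniform among unqueried positions; quantumly one must average the squared query magnitude over the random injective chain and argue that, conditioned on the transcript-so-far revealing only the first $j$ links, the position $c_{j}$ is uniform over the remaining $N - O(T)$ positions, so $\E_X[\norm{\Pi_{=c_j}\ket{\psi_t}}^2] \leq 1/(N-O(T))$. Threading this conditioning through the hybrid argument — i.e. making sure the ``amount advanced'' is a well-defined stopping-time-like quantity and that the expectations compose — is the delicate part; everything else is routine once that estimate is in hand. (This is essentially the pointer-chasing quantum lower bound of Klauck et al.\ / the BBBV hybrid method adapted to the iterated-function structure, so I would expect the final bound $\Omega(k)$ with no logarithmic loss.)
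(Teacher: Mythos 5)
Your approach is genuinely different from the paper's, and as sketched it has a gap at its core. The paper proves this lemma by a short reduction from \textsc{Parity} on $k/2$ bits: each parity bit $X_i$ controls whether the $i$-th link of the chain preserves or swaps the parity of the index (setting $f(2i), f(2i+1)$ to $2i+2$ or $2i+3$ accordingly), so that the last bit of $f^{k/2}(0)$ equals $\bigoplus_i X_i$; the $\Omega(k)$ bound then follows from the known quantum lower bound for parity, with no probabilistic or hybrid reasoning at all. Your BBBV-style plan is the approach one is forced into for the \emph{parallel} or cryptographic versions of this problem, but for the plain sequential statement it is much heavier than necessary.

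The gap is in your step (3), and it is not merely a matter of bookkeeping. The telescoping you propose --- bounding $\norm{\ket{\psi_T(X)} - \ket{\psi_T(X^{(j)})}}$ by the cumulative query magnitude on $c_j,\dots,c_k$ up to time $T$ --- is vacuous, because under the hybrid input $X^{(j)}$ the position $c_j = X^{(j)}(c_{j-1})$ is \emph{not} hidden: it is reachable by legitimately following the chain for $j$ queries, so for $t > j$ the algorithm may place query magnitude $1$ on $c_j$, and the per-hybrid distance at time $T$ is of order $T - j$ rather than $o(1)$. Summing consecutive hybrid distances then gives a bound of order $(T-j)^2$, which is useless. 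What is actually needed is a progress-measure induction of the form ``the query mass placed on $c_j$ by time $t$ is controlled by the query mass placed on $c_{j-1}$ by time $t-1$, plus an $O(1/\sqrt{N})$ error,'' unrolled over $j$; this (or a compressed-oracle recording argument) is precisely the technical content of the quantum proof-of-sequential-work lower bounds, and it is absent from your sketch. Relatedly, your claim that ``the chain can only advance by $o(1)$ per query'' cannot be right as stated, since a query to the current chain endpoint advances it by exactly $1$; the correct statement is that the chain cannot advance by more than $1 + o(1)$ per query, and making even that precise quantumly is the whole difficulty. If you want an elementary proof of this lemma, the parity reduction is the way to go.
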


\begin{proof}
    We can prove this result by a straightforward reduction to parity on $k/2$ bits (let $k$ even), which is known to require $\Omega(k)$ quantum queries \cite{Beals98}. This reduction is inspired by the well known no-fast forwarding theorem \cite{berry07hamiltonian}. Let $X \in \{0,1\}^{k/2}$ denote such an instance of parity. We will construct $f$ the first $k$ index/value pairs of $f$ as follows, for all $i \in [0, ..., k/2-1]$. \begin{align}
         f(2i) &= \begin{cases}
            2i+2 & \text{ if $X_i = 0$} \\
            2i+3 & \text{ if $X_i = 1$}
        \end{cases} \\
         f(2i+1) &= \begin{cases}
            2i+3 & \text{ if $X_i = 0$} \\
            2i+2 & \text{ if $X_i = 1$}
        \end{cases}
    \end{align}
    and set the remaining inputs to be all $0$. In this way, we have that $f^{k/2}(0)$ is equal to the parity of $X$. In particular, there is a path $P$ of length $k/2$ starting at $(0, f(0))$. The $i$-th link in this path goes from an index of even/odd parity to one of (a) even/odd parity if $X_i=0$, and (b) odd/even parity if $X_i=1$. Hence the parity of the full string $X$ is the same as the last bit of $f^k(0)$.
\end{proof}

\end{document}